\newcommand{\y}{\mathbf{y}}
\newcommand{\x}{\mathbf{x}}
\newcommand{\X}{\mathbf{X}}
\newcommand{\W}{\mathbf{W}}
\newcommand{\U}{\mathbf{U}}
\def \A {\mathbf{A}}
\def \B {\mathbf{B}}
\def \C {\mathbf{C}}
\def \D {\mathbf{D}}
\def \d {\mathbf{d}}
\def \e {\mathbf{e}}
\def \G {\mathbf{G}}
\def \g {\mathbf{g}}
\def \h {\mathbf{h}}
\def \I {\mathbf{I}}
\def \N {\mathbf{N}}
\def \Q {\mathbf{Q}}
\def \R {\mathbf{R}}
\def \s {\mathbf{s}}
\def \S {\mathbf{S}}
\def \U {\mathbf{U}}
\def \u {\mathbf{u}}
\def \W {\mathbf{W}}
\def \w {\mathbf{w}}
\def \x {\mathbf{x}}
\def \X {\mathbf{X}}
\def \y {\mathbf{y}}
\def \Z {\mathbf{Z}}
\def \z {\mathbf{z}}
\def \Dcal {\mathcal{D}}
\def \Fcal {\mathcal{F}}
\def \Hcal {\mathcal{H}}
\def \Ncal {\mathcal{N}}
\def \Ocal {\mathcal{O}}
\def \Vcal {\mathcal{V}}
\def \Cbb {\mathbb{C}}
\def \Ebb {\mathbb{E}}
\def \Nbb {\mathbb{N}}
\def \Pbb {\mathbb{P}}
\def \Rbb {\mathbb{R}}
\def \Zbb {\mathbb{Z}}
\def \epsilonbs {\boldsymbol{\epsilon}}
\def \omegabs {\boldsymbol{\omega}}
\def \xibs {\boldsymbol{\xi}}
\def \Deltabs {\boldsymbol{\Delta}}
\def \Gammabs {\boldsymbol{\Gamma}}
\def \Lambdabs {\boldsymbol{\Lambda}}
\def \Omegabs {\boldsymbol{\Omega}}
\def \Psibs {\boldsymbol{\Psi}}
\def \Sigmabs {\boldsymbol{\Sigma}}
\def \Thetabs {\boldsymbol{\Theta}}
\def \Xibs {\boldsymbol{\Xi}}
\def \Upsilonbs {\boldsymbol{\Upsilon}}
\def \Phibs {\boldsymbol{\Phi}}
\def \det {\mathrm{det\ }}
\def \Tr {\mathrm{tr}\ }
\def \Prob {\mathbb{P}}
\def \diag {\mathrm{diag}}
\newtheorem{theorem}{Theorem}[section]
\newtheorem{corollary}{Corollary}[section]
\newtheorem{lemma}{Lemma}[section]
\newtheorem{proposition}{Proposition}[section]
\newtheorem{assumption}{Assumption}[section]
\newtheorem{remark}{Remark}[section]
\newtheorem{definition}{Definition}[section]
\numberwithin{equation}{section}
\DeclareMathOperator{\Supp}{Supp}
\DeclareMathOperator{\Var}{Var}
\DeclareMathOperator*{\argmin}{argmin}
\renewcommand{\Im}{\mathrm{Im}}
\renewcommand{\Re}{\mathrm{Re}}
\newcommand\barbelow[1]{\stackunder[1.2pt]{$#1$}{\rule{.8ex}{.075ex}}}
\newcommand*\diff{\mathop{}\!\mathrm{d}}
\begin{document}

\begin{frontmatter}

% "Title of the Paper"
\title{Properties of linear spectral statistics of frequency-smoothed estimated spectral coherence matrix of high-dimensional Gaussian time series}

\runtitle{Coherency matrices of high-dimensional time series}

% indicate corresponding author with \corref{}
% \author{\fnms{John} \snm{Smith}\thanksref{t2}\corref{}\ead[label=e1]{smith@foo.com}\ead[label=e2,url]{www.foo.com}}
% \thankstext{t2}{Thanks to somebody} 
% \address{line 1\\ line 2\\ \printead{e1}\\ \printead{e2}}

\author{\fnms{Philippe} \snm{Loubaton}\thanksref{t1}\ead[label=e2]{philippe.loubaton@univ-eiffel.fr}}
\and
\author{\fnms{Alexis} \snm{Rosuel}\thanksref{t1}\ead[label=e1]{alexis.rosuel@univ-eiffel.fr}}
\address{Laboratoire d’Informatique Gaspard Monge, UMR 8049, \\
Université Paris-Est Marne la Vallée, France\\ 
\printead{e1,e2}}

\thankstext{t1}{This work is funded by ANR Project HIDITSA, reference ANR-17-CE40-0003.} 

\runauthor{P. Loubaton and A. Rosuel}
\begin{abstract}
The asymptotic behaviour of Linear Spectral Statistics (LSS) of the smoothed periodogram estimator of the spectral coherency matrix  of a complex Gaussian high-dimensional time series $(\y_n)_{n \in \mathbb{Z}}$ with independent components is studied under the asymptotic regime where the sample size $N$ converges towards $+\infty$ while the dimension $M$ of $\y$ and the smoothing span of the estimator grow to infinity at the same rate in such a way that $\frac{M}{N} \rightarrow 0$. It is established that, at each frequency, the estimated spectral coherency matrix is close from the sample covariance matrix of an independent identically $\mathcal{N}_{\mathbb{C}}(0,\I_M)$ distributed sequence, and that its empirical eigenvalue distribution converges towards the Marcenko-Pastur distribution. This allows to conclude that each LSS has a deterministic behaviour that can be evaluated explicitly. Using concentration inequalities, it is shown that the order of magnitude of the supremum over the frequencies of the deviation of each LSS from its deterministic approximation is of the order of $\frac{1}{M} + \frac{\sqrt{M}}{N}+ (\frac{M}{N})^{3}$ where $N$ is the sample size. Numerical simulations supports our results. 
\end{abstract}

\begin{keyword}[class=MSC]
\kwd[Primary ]{60B20}
\kwd{62H15}
\kwd[; secondary ]{62M15}
\end{keyword}

\begin{keyword}
\kwd{Random Matrices}
\kwd{Spectral Analysis}
\kwd{High Dimensional Statistics}
\kwd{Time Series}
\kwd{Independence Test}
\end{keyword}

% history:
% \received{\smonth{1} \syear{0000}}

\tableofcontents

\end{frontmatter}

\sloppy

\section{Introduction}
\subsection{The addressed problem and the results}
\label{subsec:results}
We consider an $M$--variate zero-mean complex Gaussian stationary time series \footnote{any finite linear combination $x$ of the components of $({\bf y}_n)_{n \in \mathbb{Z}}$ is a complex Gaussian random variable, i.e. $\mathrm{Re}(x)$ and $\mathrm{Im}(x)$ are independent zero-mean Gaussian random variables having the same variance} $(\y_n)_{n \in \mathbb{Z}}$ and assume that the samples $\y_1, \ldots, \y_N$ are available. We introduce the traditional frequency smoothed periodogram estimate $\hat{\S}(\nu)$ of the spectral density of $\y$ at frequency $\nu$ defined by
\begin{equation}
\label{eq:def-spectral-estimate}
\hat{\S}(\nu) = \frac{1}{B+1} \sum_{b=-B/2}^{B/2} \xibs_\y\left(\nu + \frac{b}{N}\right) \, \xibs_\y\left(\nu + \frac{b}{N}\right)^{*}
\end{equation}
where $B$ is an even integer, which represents the smoothing span, and
\begin{equation}
    \label{eq:def-fourier-transform-xi}
    \xibs_\y(\nu) = \frac{1}{\sqrt{N}} \sum_{n=1}^{N} \y_n e^{-2i\pi(n-1)\nu}
\end{equation}
is the renormalized Fourier transform of $(\y_n)_{n=1\ldots,N}$. The corresponding estimated spectral coherency matrix is defined as:
\begin{equation}
\label{equation:definition_coherency}
\hat{\C}(\nu) = \diag\left(\hat{\S}(\nu)\right)^{-\frac{1}{2}} \hat{\S}(\nu) \diag\left(\hat{\S}(\nu)\right)^{-\frac{1}{2}}
\end{equation}
where $\diag(\hat{\S}(\nu))=\hat{\S}(\nu)\odot \I_M$, with $\odot$ denoting the Hadamard product (ie. entrywise product) and $\I_M$ is the $M$--dimensional identity matrix. Under the hypothesis $\Hcal_0$ that the $M$ components  $(y_{1,n})_{n \in \mathbb{Z}}, \ldots,  (y_{M,n})_{n \in \mathbb{Z}}$ of $\y$ are mutually uncorrelated, we evaluate the 
behaviour of certain Linear Spectral Statistics (LSS) of the eigenvalues of $\hat{\C}(\nu) $ in 
asymptotic regimes where $N \rightarrow +\infty$ and both $M = M(N)$ and $B = B(N)$ 
converge towards $+\infty$ in such a way that $ M(N)= \Ocal(N^{\alpha})$ 
for $\alpha \in (1/2,1)$ and $c_N = \frac{M(N)}{B(N)} \rightarrow c$ where $c \in (0,1)$. 
We denote by $\mu_{MP}^{(c)}$ the Marcenko-Pastur distribution 
with parameter $c < 1$ defined by 
\[
    d\mu_{MP}^{(c)}(\lambda)=  \frac{\sqrt{(\lambda_+-\lambda)(\lambda-\lambda_-)}}{2\pi c\lambda}\mathds{1}_{\lambda\in[\lambda_-;\lambda_+]}(\lambda)\diff\lambda, \quad \lambda_\pm=(1\pm\sqrt{c})^2
\]
and define the sequences $(u_N)_{N \geq 1}$ and $(v_N)_{N \geq 1}$ by 
\begin{equation}
    \label{eq:def-uN}
    u_N = \frac{1}{B} +  \frac{\sqrt{B}}{N} +  \left(\frac{B}{N}\right)^{3}
\end{equation}
and 
\begin{equation}
    \label{eq:def-vN}
    v_N = \frac{1}{B+1} \sum_{b=-B/2}^{B/2} \left( \frac{b}{N} \right)^{2}.
\end{equation}
We notice that 
\begin{equation}
    \label{eq:carac-uN}
    u_N = \Ocal\left( \frac{1}{B}\right) \mathbf{1}_{\frac{1}{2} \leq \alpha \leq \frac{2}{3}} + \Ocal\left( \frac{\sqrt{B}}{N}\right) \mathbf{1}_{\frac{2}{3} \leq \alpha \leq \frac{4}{5}} + \Ocal\left(\frac{B}{N}\right)^{3} \mathbf{1}_{\alpha \geq \frac{4}{5}}
\end{equation}
and $v_N = \Ocal\left( (\frac{B}{N})^{2} \right)$, as well as $\frac{u_N}{v_N} \rightarrow 0$ if $\alpha > 2/3$ 
and $\frac{u_N}{v_N} \rightarrow +\infty$ if $\alpha < 2/3$. 
Then, if $(s_m)_{m=1, \ldots, M}$ represent  the spectral densities of the scalar time series $((y_{m,n})_{n \in \mathbb{Z}})_{m=1, \ldots, M}$, for each function $f$ defined on $\mathbb{R}^{+}$ and $\mathcal{C}^{\infty}$ in a neighbourhood of the support $[\lambda_-;\lambda_+]$ of $\mu_{MP}^{(c)}$, it holds that for each
$\epsilon > 0$, there exists a $\gamma(\epsilon):=\gamma>0$ such that for each $N$ large enough:
\begin{multline}
    \label{eq:main-result}
    \Prob\left[\sup_{\nu \in [0,1]} \left|\frac{1}{M} \mathrm{Tr}\left( f(\hat{\C}(\nu))\right) -  \int_{\Rbb^{+}}f\diff\mu_{MP}^{(c_N)}  -  r_N(\nu)  \; \phi_N(f) \; v_N \; \mathbf{1}_{\alpha > 2/3} \right| \right. \\ \left. > N^\epsilon u_N\right]  \le \exp-N^{\gamma}
\end{multline}
where $r_N(\nu)$ is defined by 
\begin{equation}
    \label{def-eq-rN}
    r_N(\nu) = \left( \frac{1}{M} \sum_{m=1}^{M} \frac{s_m'(\nu)}{s_m(\nu)} \right)^{2}
\end{equation}
and where $\phi_N(f)$ is a deterministic $\Ocal(1)$ term which coincides with the action of the function $f$ on a certain compactly supported distribution $D_N$ (to be made precised later) depending on the Marcenko-Pastur distribution $\mu_{MP}^{(c_N)}$. In other words, under $\Hcal_0$, uniformly w.r.t. the frequency $\nu$, $\frac{1}{M} \mathrm{Tr}\left( f(\hat{\C}(\nu))\right)$ behaves as $\int_{\Rbb}f\diff\mu_{MP}^{(c_N)}$. If $\alpha \leq 2/3$, with high probability, the order of magnitude of the corresponding error is not larger than  $u_N = \frac{1}{B} = \Ocal(\frac{1}{N^{\alpha}})$. If $\alpha > 2/3$, 
$\frac{1}{M} \mathrm{Tr}\left( f(\hat{\C}(\nu))\right) - \int_{\Rbb}f\diff\mu_{MP}^{(c_N)}$
behaves as the deterministic $\Ocal(\frac{B}{N})^{2}$ term 
$r_N(\nu) \; \phi_N(f) \; v_N$, and the rate of convergence towards $0$ of the corrected statistics $\frac{1}{M} \mathrm{Tr}\left( f(\hat{\C}(\nu))\right) -  \int_{\Rbb^{+}}f\diff\mu_{MP}^{(c_N)}  - r_N(\nu) \; \phi_N(f) \; v_N \; \mathbf{1}_{\alpha > 2/3}$
appears to be $u_N$ which satisfies $\frac{u_N}{v_N} \rightarrow 0$. \\
%$u_N \ll v_N$\footnote{For two sequences $(x_n)_{n \geq 1}, (y_n)_{n \geq 1}$, we denote by %$x_n \ll y_n$ if there exists $k > 0$ such that $|x_n| \leq k |y_n|$ for all large $n$.}. \\

Our approach is based on the observation that in the above asymptotic regime, $\hat{\S}(\nu)$ can be interpreted as the sample covariance matrix of the large vectors 
$( \xibs_\y(\nu + \frac{b}{N}))_{b=-B/2, \ldots, B/2}$. Classical time series analysis results 
suggest that the vectors 
$( \xibs_\y(\nu + \frac{b}{N}))_{b=-B/2, \ldots, B/2}$ appear as "nearly" i.i.d. zero mean complex random vectors with covariance matrix $\S(\nu)$ where $\S(\nu) = \diag\left(s_1(\nu), \ldots, s_M(\nu)\right)$. $\hat{\C}(\nu)$ can be interpreted as the sample autocorrelation matrix of the above vectors. As it is well-known that the empirical eigenvalue distribution of the sample autocorrelation matrix of i.i.d. large random vectors converges towards the Marcenko-Pastur distribution 
(see e.g. \cite{jiang2004}), it is not surprising that
$\frac{1}{M} \mathrm{Tr}\left( f(\hat{\C}(\nu))\right)$ behaves as  $\int_{\Rbb^{+}}f\diff\mu_{MP}^{(c_N)}$. Our main results are thus obtained using tools borrowed from large random matrix theory 
(see e.g. \cite{pasturshcherbina2011}, \cite{bai2010spectral}) and from frequency domain time series analysis techniques (see e.g. \cite{brillinger1981time}).

\subsection{Motivation}
This paper is motivated by the problem of testing whether the components of $\y$ 
are uncorrelated or not when the dimension $M$ of $\y$ is large and the number of observations $N$ is significantly larger than $M$. For this, a possible way
would be to estimate the spectral coherency matrix, equal to $\I_M$ at each frequency $\nu$ under $\Hcal_0$, by the standard estimate $\hat{\C}(\nu)$ defined by (\ref{equation:definition_coherency}) for a relevant choice of $B$, and to compare, for example, the supremum over $\nu$ of the spectral norm $\|\hat{\C}(\nu) - \I_{M}\|$ to a threshold. To understand the conditions under which such an approach should provide satisfying results, we mention that under some mild extra assumptions, it can be shown that
$$
\sup_{\nu} \| \hat{\S}(\nu) - \S(\nu) \| \xrightarrow[N\to+\infty]{a.s.} 0
$$
as well as 
$$
\sup_{\nu} \| \hat{\C}(\nu) - \I_M \| \xrightarrow[N\to+\infty]{a.s.} 0
$$
in asymptotic regimes where $N,B,M$ converge towards $+\infty$ in such a way that 
$\frac{B}{N} \rightarrow 0$ and $\frac{M}{B} \rightarrow 0$. Therefore, 
$\hat{\C}(\nu)$ is likely to be close to $\I_M$ for each $\nu$ if both 
$\frac{B}{N}$ and $\frac{M}{B}$ are small enough. However, if $M$ is large and the number of available samples $N$ is not arbitrarily large w.r.t. $M$, it may be impossible
to choose the smoothing span $B$ in such a way that $\frac{B}{N} \ll 1$ and $\frac{M}{B} \ll 1$.
In such a context, the predictions provided by the asymptotic regime 
$\frac{B}{N} \rightarrow 0$ and $\frac{M}{B} \rightarrow 0$ will not be accurate, and  
any test comparing $\hat{\C}(\nu)$ to $\I_M$ for each $\nu$ will provide poor results. 
To solve this issue, we propose to choose $B$ of the same order of
magnitude as $M$. In this case,  $\hat{\C}(\nu)$ has of course no reason to be close to $\I_M$ for each $\nu$. If $\frac{M}{N}$, or equivalently if $\frac{B}{N}$ is small enough, the asymptotic regime where both $M$ and $B$ converge towards $+\infty$ at the same rate appears relevant to understand the behaviour of $\hat{\C}(\nu)$. We mention in particular that the condition $\alpha > 1/2$ 
implies that the rate of convergence of $\frac{M}{N}$ towards $0$ is moderate, which is in accordance with practical situations in which the sample size is not arbitrarily large. 
Our asymptotic results thus suggest that if $\frac{M}{N}$ is small enough and if $B$ is chosen of the same order of magnitude as $M$, then it seems reasonable to test that the components of $\y$ are uncorrelated by comparing 
$$
\frac{1}{u_N} \sup_{\nu \in [0,1]} \left| \frac{1}{M} \mathrm{Tr}\left( f(\hat{\C}(\nu)\right) -  \int_{\Rbb^{+}}f\diff\mu_{MP}^{(c_N)} - \hat{r}_N(\nu) \; \phi_N(f) \; v_N \; \mathbf{1}_{\alpha > 2/3} \right|
$$
to a well 
chosen threshold, where $\hat{r}_N(\nu)$ represents an estimate of $r_N(\nu)$ accurate enough to keep equal to $u_N$ the convergence rate towards $0$ of the modified statistics.
We notice that our results just characterize the order of magnitude of the above statistics under $\mathcal{H}_0$, and that we do not provide asymptotic approximation of its distribution. While the derivation of such an approximation would be quite useful to design a well defined statistical test and to study and compare its performance with existing approaches, our results represent a first necessary step that has its own interest. We notice that we consider the supremum on the whole frequency interval $[0,1]$ because, compared to a solution where the maximum is over a low number of fixed frequencies, this allows to increase the power of the test in contexts of alternatives for which, under $\mathcal{H}_1$, 
\begin{equation}
    \label{eq:function-statistics}
\nu \rightarrow  \left| \frac{1}{M} \mathrm{Tr}\left( f(\hat{\C}(\nu))\right) -  \int_{\Rbb^{+}}f\diff\mu_{MP}^{(c_N)} - \hat{r}_N(\nu) \; \phi_N(f) \; v_N \; \mathbf{1}_{\alpha > 2/3} \right|
\end{equation}
exhibits narrow peaks that would not be visible on a low density frequency grid. We also mention that other statistics could also be considered, e.g. the integral on the frequency domain of the function (\ref{eq:function-statistics}) or of the square of this function. \\

We finally remark that the most usual asymptotic regime considered in the context of large random
matrices is $M \rightarrow +\infty, N \rightarrow +\infty$ in such a way that $\frac{M}{N}$ converges towards a non zero constant. In this regime, it is still possible to develop large random matrix-based approaches testing that the components of $\y$ are uncorrelated or not, see e.g. the contribution \cite{pangaoyang2014jasa} to be presented below which, under the extra assumption that the components of $\y$ share the same spectral density, is based on a Gaussian approximation of linear spectral statistics of the empirical covariance matrix $\hat{\R}_N$ defined by 
\begin{equation}
    \label{eq:def-hatR}
    \hat{\R}_N  = \frac{1}{N} \sum_{n=1}^{N} \y_n \y_n^{*}
\end{equation}
under $\mathcal{H}_0$. However, when the ratio $\frac{M}{N}$ is small enough, the asymptotic regime considered in the present paper seems more relevant than the standard large random matrix regime  $M \rightarrow +\infty, N \rightarrow +\infty$, and test statistics that depend on the estimated spectral coherency matrix $\hat{\C}(\nu)$ should provide better performance than functionals of the matrix $\hat{\R}_N$.

\subsection{On the literature}
The problem of testing whether various jointly stationary and jointly Gaussian time series are uncorrelated is an important problem that was extensively addressed in the past. Apart from a few works that will be discussed later, almost all the previous contributions addressed the case where the number $M$ of available time series remains finite as the sample size increases.  Two classes of methods were mainly studied. The first class uses lag domain approaches based on the observation that $M$ jointly stationary time series $(y_{1,n})_{n \in \mathbb{Z}}, \ldots,  (y_{M,n})_{n \in \mathbb{Z}}$ are mutually uncorrelated if and only if for each integer $L$, the covariance matrix of the $ML$ dimensional 
 vector $\y^{(L)}_n$ defined by 
 $$
 \y^{(L)}_n = (y_{1,n}, \ldots, y_{1,n+L-1}, \ldots, y_{M,n}, \ldots, y_{M,n+L-1})^{T}
$$
is block diagonal. The lag domain 
approach was in particular used in \cite{haugh1976checking} for $M=2$, and extended and developed in 
\cite{koch1986method}, \cite{li1994robust}, \cite{himdi1997tests}, \cite{hong1996testing}, \cite{duchesne2003robust} and \cite{elhimdiduchesneroy2003}. \\

The second approach is based on the observation 
that the $M$ jointly stationary time series $(y_{1,n})_{n \in \mathbb{Z}}, \ldots,  (y_{M,n})_{n \in \mathbb{Z}}$ are uncorrelated if and only the spectral density matrix 
$\S(\nu)$ of $\y_n = (y_{1,n}, \ldots, y_{M,n})^{T}$ is diagonal for each frequency 
$\nu$, or equivalently, if its spectral coherence matrix $\C(\nu)$ is reduced to 
$\I_M$ for each $\nu$. \cite{wahba1971some} is one of the first related contribution. This work was followed by \cite{eichler2007frequency}, \cite{taniguchi1996nonparametric},  as well as \cite{eichler2008testing}. \\

We now review the existing works devoted to the case where the number $M$ of time series converges towards $+\infty$. The particular context where the observations $\y_1, \ldots, \y_N$  are i.i.d. and where the ratio $\frac{M}{N}$ converges towards a constant $d \in (0,1)$ is the most popular. In contrast to the asymptotic regime considered in the present paper, $M$ and $N$ are of the same order of magnitude. This is because, in this context, the time series are mutually uncorrelated if and only the covariance matrix $\Ebb[\y_n \y_n^{*}]$ is diagonal. Therefore, it is reasonable to consider test statistics that are functionals of the sample covariance matrix $\hat{\R}_N$ defined by (\ref{eq:def-hatR}). In particular, when the observations are Gaussian random vectors, the generalized likelihood ratio test (GLRT) consists in comparing the test statistics $\log \mathrm{det}(\hat{\C}_N)$ to a threshold, where $\hat{\C}_N$ represents the sample autocorrelation matrix. 
\cite{jiang2004} proved that under $\Hcal_0$, the empirical eigenvalue distribution of $\hat{\C}_N$ 
converges almost surely towards the Marcenko-Pastur distribution $\mu_{MP}^{(d)}$ and therefore, that 
$\frac{1}{M} \mathrm{Tr}\left(f(\hat{\C}_N)\right)$ converges towards $\int f d\mu_{MP}^{(d)}$ for each bounded continuous function $f$. In the Gaussian case, \cite{jiangyang2013} also established a central limit theorem (CLT) for $\log \mathrm{det}(\hat{\C}_N)$ under  $\Hcal_0$  using the moment method.
In the real Gaussian case, \cite{dette2020likelihood} remarked that $\left( \mathrm{det} \hat{\C}_N \right)^{N/2}$ is the product of independent beta distributed random variables. Therefore, $\log\det(\hat{\C}_N)$ appears as the sum of independent random variables, thus deducing the CLT. More recently, in \cite{mestre2017correlation} is established a CLT on LSS of $\hat{\C}_N$ in the Gaussian case using large random matrix techniques when the covariance matrix $\Ebb[\y_n \y_n^{*}]$ is not necessarily diagonal. This allows studying the asymptotic performance of the GLRT under a certain class of alternatives. We also mention that \cite{jiangmaxcorr2004} studied the behaviour of $\max_{i,j} |(\hat{\C}_{N})_{i,j}|$ under $\Hcal_0$, and established that  $\max_{i,j} |(\hat{\C}_{N})_{i,j}|$, after recentering and appropriate normalization, converges in distribution towards a Gumbel distribution, which, of course, allows to test the hypothesis $\Hcal_0$. This first contribution was extended later in several works, in particular in \cite{chenliu2018} who considered the case where the samples $\y_1, \ldots, \y_N$ have some specific correlation pattern. Still, in the asymptotic regime $\frac{M}{N} \rightarrow d$, \cite{pangaoyang2014jasa} proposed to test hypothesis $\mathcal{H}_0$ when the components of $\y$ share the same spectral density. In this case, the rows of the $M \times N$ matrix $(\y_1, \ldots, \y_N)$ are independent and identically distributed under $\mathcal{H}_0$. \cite{pangaoyang2014jasa} established a central limit theorem for linear spectral statistics of the empirical covariance matrix $\hat{\R}_N$ defined by (\ref{eq:def-hatR}), and used this test statistics to check whether $\mathcal{H}_0$ holds or not. We notice that the results of \cite{pangaoyang2014jasa} are valid in the non-Gaussian case. \\

In our knowledge, no existing work studied the behaviour of linear spectral statistics
of the matrix $\hat{\C}(\nu)$ in the asymptotic regime defined in the present paper. However, we mention that this 
regime was considered in \cite{bohm2009shrinkage} to solve a completely different problem, i.e. the use of shrinkage in the frequency domain in order to enhance the performance of the spectral density estimate (\ref{eq:def-spectral-estimate}) when the components of $\y$ are not uncorrelated. We notice that $\frac{B^{3/2}}{N}$ is supposed to converge towards $0$ in \cite{bohm2009shrinkage}. When $B = \Ocal(N^{\alpha})$, this condition is equivalent to $\alpha < 2/3$, while we rather study situations where $\alpha > 1/2$. We finally mention that our works \cite{loubatonrosuelvallet2021jmva} and 
\cite{rosuelvalletloubatonmestre2021ieeesp} also consider the present asymptotic regime and study 
respectively the behaviour of $\sup_{i<j, \nu \in \mathcal{G}_N} |\hat{\C}_{i,j}(\nu)|$ ($\mathcal{G}_N$
is the set $\{ k \frac{B+1}{N}, k=0, \ldots, \frac{N}{B+1}\}$)
and the largest eigenvalues of $\hat{\C}(\nu)$ in the presence of an extra signal, independent from $\y$, and having a low-rank spectral density matrix.

\subsection{General approach}
To simplify the notations, we denote by $\psi_N(f,\nu)$ the statistics defined by 
\begin{equation}
    \label{eq:def-psi(f)}
    \psi_N(f,\nu) = \frac{1}{M} \mathrm{Tr}\left( f(\hat{\C}(\nu))\right) -  \int_{\Rbb^{+}}f\diff\mu_{MP}^{(c_N)}  -  r_N(\nu) \; \phi_N(f) \; v_N \; \mathbf{1}_{\alpha > 2/3}.
\end{equation}
To study the behaviour of $\sup_{\nu} |\psi_N(f,\nu)|$, we establish exponential concentration inequalities that allow to evaluate $\Pbb(|\psi_N(f,\nu)| > N^{\epsilon} u_N)$ for each $\nu$ as well as $\Pbb(\sup_{\nu \in \Vcal_N}|\psi_N(f,\nu)| > N^{\epsilon} u_N)$
for some relevant finite discrete grid $\Vcal_N$ of the interval $[0,1]$. (\ref{eq:main-result}) is then obtained by using Lipschitz properties of the function $\nu \rightarrow \psi_N(f,\nu)$. \\

To evaluate $\Pbb(|\psi_N(f,\nu)| > N^{\epsilon} u_N)$ for each $\nu$, we use the following approach:
\begin{itemize}
    \item We first study the behaviour of the modified sample spectral coherency matrix 
    $\tilde{\C}(\nu)$ defined by 
    \begin{equation}
        \label{eq:def-tildeC}
     \tilde{\C}(\nu) =  \diag\left(\S(\nu)\right)^{-\frac{1}{2}} \hat{\S}(\nu) \diag\left(\S(\nu)\right)^{-\frac{1}{2}}.
    \end{equation}
    We notice that 
    $\tilde{\C}(\nu)$ is obtained from $\hat{\C}(\nu)$ by replacing the estimated 
    diagonal matrix $\diag\left(\hat{\S}(\nu)\right)$ by its true value 
    $\diag\left(\S(\nu)\right)$. Using classical results of \cite{brillinger1981time}, we establish that for each $\nu$, $\tilde{\C}(\nu)$ 
    can be represented as 
    \begin{equation}
        \label{eq:representation-tildeC-intro}
        \tilde{\C}(\nu) = \frac{\X(\nu)\X^*(\nu)}{B+1}+ \tilde{\Deltabs}(\nu)
    \end{equation}
    where $\X(\nu)$ is an $M\times(B+1)$ random matrix with $\Ncal_\Cbb(0,1)$ i.i.d. entries, and $\tilde{\Deltabs}(\nu)$ is another matrix such that, for any $\epsilon>0$, there exists $\gamma > 0$, independent from $\nu$, such that for each large enough $N\in\Nbb$:
$$ \Prob\left[\|\tilde{\Deltabs}(\nu)\|>N^\epsilon \, \frac{B}{N} \right]\le\exp-N^{\gamma}. $$
We deduce from (\ref{eq:representation-tildeC-intro}) that $\hat{\C}(\nu)$ can be written as
\begin{equation}
        \label{eq:representation-hatC}
        \hat{\C}(\nu) = \frac{\X(\nu)\X^*(\nu)}{B+1}+ \Deltabs(\nu)
    \end{equation}
    where $\Deltabs(\nu)$ satisfies the concentration inequality
$$ \Prob\left[\|\Deltabs(\nu)\|>N^\epsilon \, \left( \frac{1}{\sqrt{B}} + \frac{B}{N} \right) \right]\le\exp-N^{\gamma} $$
for each $\epsilon > 0$, where $\gamma$ does not depend on $\nu$. Using (\ref{eq:representation-tildeC-intro}) and (\ref{eq:representation-hatC}), we establish that the eigenvalues of $\tilde{\C}(\nu)$ and $\hat{\C}(\nu)$ are localized with high probability in a neighbourhood of the support of the Marcenko-Pastur distribution $\mu_{MP}^{(c)}$. 

$\tilde{\C}(\nu)$ appears as a useful intermediate matrix because the study of 
$\frac{1}{M} \mathrm{Tr}\left( f(\hat{\C}(\nu))\right) -  \int_{\Rbb^{+}}f\diff\mu_{MP}^{(c_N)}$
is based on the evaluation of each term of the following decomposition:
\begin{align}
    \nonumber 
    \frac{1}{M} \mathrm{Tr}\left( f(\hat{\C}(\nu))\right) -  \int_{\Rbb^{+}}f\diff\mu_{MP}^{(c_N)} = \frac{1}{M} \mathrm{Tr}\left( f(\hat{\C}(\nu))\right) - \frac{1}{M} \mathrm{Tr}\left( f(\tilde{\C}(\nu))\right) + \\ 
    \nonumber \frac{1}{M} \mathrm{Tr}\left( f(\tilde{\C}(\nu))\right) - \mathbb{E} \left[ \frac{1}{M} \mathrm{Tr}\left( f(\tilde{\C}(\nu))\right) \right] + \\
    \nonumber \mathbb{E} \left[ \frac{1}{M} \mathrm{Tr}\left( f(\tilde{\C}(\nu))\right) 
    - \frac{1}{M} \mathrm{Tr}\left( f(\frac{\X(\nu)\X^*(\nu)}{B+1}) \right) \right] + \\ 
    \label{eq:fundamental-decomposition}
    \mathbb{E} \left[  \frac{1}{M} \mathrm{Tr}\left( f(\frac{\X(\nu)\X^*(\nu)}{B+1}) \right) \right] - \int_{\Rbb^{+}}f\diff\mu_{MP}^{(c_N)}.
\end{align}
Using the above-mentioned results related to the localization of the eigenvalues of  $\tilde{\C}(\nu)$ and $\hat{\C}(\nu)$, we also argue that it is sufficient to do so when $f$ is compactly supported. 
\item The term $\frac{1}{M} \mathrm{Tr}\left( f(\hat{\C}(\nu))\right) - \frac{1}{M} \mathrm{Tr}\left( f(\tilde{\C}(\nu))\right)$ is studied using the Helffer-Sjöstrand formula which allows, in a certain sense, to be back to the study of  $\frac{1}{M} \mathrm{Tr}\left( \hat{\Q}(z) - \tilde{\Q}(z) \right)$
for $z \in \mathbb{C}^{+}$, where $\hat{\Q}(z)$ and $\tilde{\Q}(z)$ represent the resolvents of 
matrices $\hat{\C}(\nu)$ and $\tilde{\C}(\nu)$ (see below for a formal definition). Using (\ref{eq:representation-tildeC-intro}) and (\ref{eq:representation-hatC}), we express $\frac{1}{M} \mathrm{Tr}\left( \hat{\Q}(z) - \tilde{\Q}(z) \right)$ in terms of the resolvent $\Q(z)$ of the matrix $\frac{\X(\nu)\X^*(\nu)}{B+1}$. As the matrix $\X(\nu)$ is Gaussian, it is possible to use standard Gaussian tools (Poincaré-Nash inequality and the integration by parts formula) to have a good understanding of the behaviour of $\Q(z)$, and to prove that for each $\epsilon > 0$, there exists $\gamma$ independent from $\nu$ such that 
\begin{multline*}
\Prob\left( \left| \frac{1}{M} \mathrm{Tr}\left( f(\hat{\C}(\nu))\right) - \frac{1}{M} \mathrm{Tr}\left( f(\tilde{\C}(\nu))\right) - \right. \right.\\ \left.\left.  \left( \frac{1}{2M} \sum_{m=1}^{M} \frac{s_m''(\nu)}{s_m(\nu)} \right) \; \tilde{\phi}_N(f) \; v_N \; \mathbf{1}_{\alpha > 2/3} \right| > 
N^{\epsilon} u_N \right)  \leq \exp-N^{\gamma}
\end{multline*}
where $\tilde{\phi}_N(f)$ is a deterministic term defined as the action of $f$ on a compactly supported 
distribution $\tilde{D}_N$ depending on $\mu_{MP}^{(c_N)}$. 
\item Using a standard Gaussian concentration inequality as well as the structure of the matrix 
$\tilde{\C}(\nu)$, we obtain that for each $\epsilon > 0$, there exists $\gamma$ independent from 
$\nu$ such that 
\begin{equation}
\Prob\left[ \left| \frac{1}{M} \mathrm{Tr}\left( f(\tilde{\C}(\nu))\right) - \mathbb{E} \left[ \frac{1}{M} \mathrm{Tr}\left( f(\tilde{\C}(\nu))\right)\right] \right| > N^{\epsilon} \frac{1}{B} \right]  \leq \exp-N^{\gamma}
\end{equation}
for each $N$ large enough. 
\item We then analyse the deterministic term $\mathbb{E} \left[ \frac{1}{M} \mathrm{Tr}\left( f(\tilde{\C}(\nu))\right) - \frac{1}{M} \mathrm{Tr}\left( f(\frac{\X(\nu)\X^*(\nu)}{B+1}) \right) \right]$ using the Helffer-Sjöstrand formula. We first show that for each $z \in \mathbb{C}^{+}$, 
$\mathbb{E} \left[ \frac{1}{M} \mathrm{Tr}(\tilde{\Q}(z) - \Q(z))\right]$ is a $\Ocal(\frac{B}{N})^{2}$ term, a non obvious result 
because the relation (\ref{eq:representation-tildeC-intro}) just leads to the conclusion that the above term is $\Ocal(\frac{B}{N})$. Moreover, using long and very tedious Gaussian calculations, we obtain that if $\alpha > \frac{2}{3}$, it holds that 
\begin{multline*}
\mathbb{E} \left[ \frac{1}{M} \mathrm{Tr}(\tilde{\Q}(z) - \Q(z))\right] = - \left( \frac{1}{2M} \sum_{m=1}^{M} \frac{s_m''(\nu)}{s_m(\nu)} \right) \; \tilde{p}_N(z) \; v_N +  \\ \left( \frac{1}{M} \sum_{m=1}^{M} \frac{s_m'(\nu)}{s_m(\nu)} \right)^{2} \; p_N(z) \; v_N + \Ocal\left(\frac{B}{N}\right)^{3}
\end{multline*}
where $p_N$ and $\tilde{p}_N$ are the Stieltjes transforms of the compactly supported distributions 
$D_N$ and $\tilde{D}_N$ introduced previously. This immediately implies that 
if $\alpha \leq \frac{2}{3}$, then 
\begin{multline*}
    \mathbb{E} \left[ \frac{1}{M} \mathrm{Tr}\left( f(\tilde{\C}(\nu))\right) - \frac{1}{M} \mathrm{Tr}\left( f(\frac{\X(\nu)\X^*(\nu)}{B+1}) \right) \right] \\ = \Ocal\left(\frac{B}{N}\right)^{2} = o\left(\frac{1}{B}\right) = o(u_N)
\end{multline*}

while if $\alpha > \frac{2}{3}$, then, 
\begin{multline*}
    \mathbb{E} \left[ \frac{1}{M} \mathrm{Tr}\left( f(\tilde{\C}(\nu))\right) - \frac{1}{M} \mathrm{Tr}\left( f(\frac{\X(\nu)\X^*(\nu)}{B+1}) \right) \right] = \\ - \left( \frac{1}{2M} \sum_{m=1}^{M} \frac{s_m''(\nu)}{s_m(\nu)} \right) \; \tilde{\phi}_N(f) \; v_N +  \left( \frac{1}{M} \sum_{m=1}^{M} \frac{s_m'(\nu)}{s_m(\nu)} \right)^{2} \; \phi_N(f) \; v_N \\ + \Ocal(u_N)
\end{multline*}
because $(\frac{B}{N})^{3} \ll u_N$ if $2/3 < \alpha \leq 4/5$ and $(\frac{B}{N})^{3}$ is equivalent to $u_N$ if $\alpha > 4/5$.
%\asymp u_N$\footnote{For two sequences $(x_n)_{n \geq 1}, (y_n)_{n \geq 1}$, we denote by $x_n \asymp %y_n$ if there exists $k_1,k_2 > 0$ such that $k_1 |y_n| \leq |x_n| \leq k_2 |y_n|$ for all large $n$.}
 
\item Finally, classical results imply that 
$$
\mathbb{E} \left[  \frac{1}{M} \mathrm{Tr}\left( f(\frac{\X(\nu)\X^*(\nu)}{B+1}) \right) \right] - \int_{\Rbb^{+}}f\diff\mu_{MP}^{(c_N)} = \Ocal\left(\frac{1}{B^{2}}\right) = o(u_N).$$
\end{itemize}
Gathering the above approximations and using the Lipschitz properties of the function $\nu \rightarrow \psi(f,\nu)$, we finally obtain (\ref{eq:main-result}). \\

We also indicate how the use of lag window estimators of the spectral densities $(s_m)_{m=1, \ldots, M}$ allows to design an estimator $\hat{r}_N(\nu)$ of $r_N(\nu)$ defined by (\ref{def-eq-rN}) 
for which the rate of convergence towards $0$ of the statistics $\hat{\psi}_N(f,\nu)$ obtained by replacing $r_N(\nu)$ by $\hat{r}_N(\nu)$ in Eq. (\ref{eq:def-psi(f)}) is still $u_N$. In particular, 
we establish that for each $\epsilon > 0$, $\Prob\left( \sup_{\nu} |\hat{\psi}_N(f,\nu)| > N^{\epsilon} u_N \right)$ converges towards $0$ exponentially.

\subsection{Assumptions and general notations}
\label{sec:assumptions}

\begin{assumption}
\label{assumption:gaussian_y_n}
For each $m\ge1$, $(y_{m,n})_{n\in\Zbb}$ is a zero mean stationary complex Gaussian time series, ie.
\begin{enumerate}
    \item $\Ebb[y_{m,n}]=0$ for any $m\ge1$ and any $n \in \mathbb{Z}$
    \item every finite linear combination $x$ of the random variables $(y_{m,n})_{n \in \mathbb{Z}}$ is a $\Ncal_\Cbb(0,\sigma^2)$ distributed random variable for some $\sigma^{2}$, i.e. $\mathrm{Re}(x)$ and $\mathrm{Im}(x)$ are independent and $\Ncal(0,\sigma^2/2)$ distributed. 
\end{enumerate}
\end{assumption}

\begin{assumption}
\label{assumption:H0_independence}
If $m_1\neq m_2$, then the scalar time series $(y_{m_1,n})_{n\in\Zbb}$ and $(y_{m_2,n})_{n\in\Zbb}$ are independent.
\end{assumption}

We now formulate the following assumptions on the growth rate of the quantities $N,M,B$: 
\begin{assumption}
\label{assumption:rate_NBM}
$$ B,M=\Ocal(N^\alpha) \text{ where } \frac{1}{2}<\alpha<1, \quad \frac{M}{B+1}=c_N, \quad c_N\xrightarrow[N\to+\infty]{} c\in(0,1).$$
\end{assumption}
As $M = M(N)$ converges towards $+\infty$, we assume that an infinite sequence 
$(y_{1,n})_{n \in \mathbb{Z}}, (y_{2,n})_{n \in \mathbb{Z}}, \ldots, (y_{k,n})_{n \in \mathbb{Z}}, \ldots$ of mutually independent zero mean complex Gaussian time series is given.

We denote by $(s_m)_{m \geq 1}$ the corresponding sequence of spectral densities (i.e. $s_m$ coincides with the spectral density of the times series $(y_{m,n})_{n\in\Zbb}$). For each $m \geq 1$, we denote by $r_m = (r_{m,u})_{u \in \Zbb}$ the autocovariance sequence of $(y_{m,n})_{n\in\Zbb}$, i.e. $r_{m,u} = \Ebb[y_{m,n+u} y_{m,n}^{*}]$. We formulate the following assumptions on $(s_m)_{m \geq 1}$ and $(r_m)_{m \geq 1}$:
\begin{assumption}
\label{assumption:regularity}
The time series $((y_{m,n})_{n\in\Zbb})_{m\ge1}$ are such that:
\begin{equation}
    \label{eq:s-bounded-away-from-zero}
     \inf_{m\ge1}\inf_{\nu\in[0, 1]}|s_m(\nu)|>0
\end{equation}
and
\begin{equation}
    \label{eq:condition-R}
  \sup_{m\ge 1}\sum_{u\in\Zbb}(1+|u|)^{\gamma_0}|r_{m,u}|<+\infty  
\end{equation}
\end{assumption}
where $\gamma_0 \geq 3$. 
Assumption (\ref{eq:condition-R}) of course implies that the spectral densities $(s_m)_{m \geq 1}$ are $\mathcal{C}^{3}$ and that 
\begin{equation}
\label{eq:derivative-spectral-densities}
\sup_{m \geq 1} \sup_{\nu \in [0, 1]} |s_m^{(i)}(\nu)| < +\infty
\end{equation}
for $i=0,1,2,3$ ($s_m^{(i)}$ represents the derivative of order $i$ of $s_m$). We notice that (\ref{eq:condition-R}) holds as soon as we have 
$$
\sup_{m\ge1} |r_{m,u}| \leq \frac{C}{|u|^{1+\gamma_0+\delta}}
$$
for each $u \neq 0$ as well as $\sup_{m\ge1} |r_{m,0}| < \infty$
($C >0$ and  $\delta > 0$ represent constants). If $z$ represents the backward shift operator, a simple example of time series satisfying Assumption \ref{assumption:regularity} is to consider an
ARMA time series generated as 
$$
y_{m,n} = [h_m(z)]\epsilon_{m,n}
$$
where $((\epsilon_{m,n})_{n \in \mathbb{Z}})_{m \geq 1}$  are mutually independent 
i.i.d. $\mathcal{N}_{\Cbb}(0,1)$ sequences, and where $h_m(z) = \frac{b_m(z)}{a_m(z)}$,
$a_m$ and $b_m$ being 2 polynomials having no pole or zero in the closed unit disk 
$\overline{\mathbb{D}}$. 
Moreover, $\sup_{m \geq 1} \max(\mathrm{deg}(a_m), \mathrm{deg}(b_m)) < +\infty$, and if 
$(z_{k,m})_{k=1, \ldots, \mathrm{deg}(b_m)}$ and $(p_{k,m})_{k=1, \ldots, \mathrm{deg}(a_m)}$
are the zeros of $b_m$ and $a_m$, then we should have
\begin{align*}
& \inf_{m \geq 1} \inf_k \mathrm{dist}(z_{k,m}, \overline{\mathbb{D}}) > 0, \;  \inf_{m \geq 1} \inf_k \mathrm{dist}(p_{k,m}, \overline{\mathbb{D}}) > 0  \\
& \sup_{m \geq 1} \sup_k |z_{k,m}| < +\infty,  \; \sup_{m \geq 1} \sup_k |p_{k,m}| < +\infty .
\end{align*}
It is easy to check that (\ref{eq:condition-R}) holds for each $\gamma_0 > 0$, and that 
(\ref{eq:s-bounded-away-from-zero}) is satisfied as well. \\

\textbf{Notations.} A zero mean complex valued random vector $\y$ is said to be
$\Ncal_\Cbb(0,\Sigmabs)$ distributed if $\mathbb{E}(\y \y^{*}) = \Sigmabs$ and if 
each linear combination $x$ of the entries of $\y$ is a complex Gaussian random variable, 
i.e. $\mathrm{Re}(x)$ and $\mathrm{Im}(x)$ are independent Gaussian random variables
sharing the same variance. If $x$ is a random variable, we denote by $x^{\circ}$ the random variable defined by 
\begin{equation}
    \label{eq:def-xrond}
    x^{\circ} = x - \Ebb[x].
\end{equation}

If $\A$ is a $P \times Q$ matrix, $\| \A \|$ and $\| \A \|_{F}$ denote its spectral norm 
and Frobenius norm respectively. If $P=Q$ and $\A$ is 
Hermitian, $\lambda_1(\A) \geq \ldots \geq \lambda_P(\A)$ are the eigenvalues of $\A$. The spectrum of $\A$, which is here the set of its eigenvalues $(\lambda_k(\A))_{k=1, \ldots, P}$, is denoted by
$\sigma(\A)$. For $\A$ and $\B$ square Hermitian matrices, if all the eigenvalues of $\A-\B$ are non negative, we write $\A\ge\B$. We define $\Re\, \A=(\A+\A^*)/2$ and $\Im\, \A=(\A-\A^*)/2$ where $\A^*$ is the conjugate transpose of the matrix $\A$.  \\

$\mathcal{C}^p$ represents the set of all real-valued functions defined on $\mathbb{R}$ whose first $p$ derivatives exist and are continuous, and $\mathcal{C}^{p}_c$ is the set of all compactly supported functions of $\mathcal{C}^{p}$. \\

We recall that ${\bf S}(\nu)$ represents the $M \times M$ diagonal 
matrix ${\bf S}(\nu) = \diag(s_1(\nu), \ldots, s_M(\nu))$. We notice that 
${\bf S}$ depends on $M$, thus on $N$ (through $M:=M(N)$), but we often omit to mention the corresponding
dependency in order to simplify the notations. In the following, we will denote by $\y_m$ 
the $N$--dimensional vector $\y_m = (y_{m,1}, \ldots, y_{m,N})^{T}$. \\

A nice constant is a positive a constant that does not depend on the frequency $\nu$, 
the time series index $m$, the complex variable $z$ of the various resolvents and Stieltjes transforms used throughout the paper, as well as on the dimensions $B,M$ and $N$. 
A nice polynomial is a polynomial whose degree and coefficients are nice constants. If $z \in \mathbb{C}^{+}$ and if $P_1$ and $P_2$ are two nice polynomials, terms such as $P_1(z) P_2(\frac{1}{\Im z})$ play an important role in the following. $C$ and 
$C(z)$ will represent a generic notation for respectively a nice constant and a term
$P_1(z) P_2(\frac{1}{\Im z})$, and the values of $C$ and $C(z)$ may change from one line to the other. \\

If $(a_N)_{N \geq 1}$ and $(b_N)_{N \geq 1}$ are two sequences of positive real numbers, 
we write $a_N << b_N$ if $\frac{a_N}{b_N} \rightarrow 0$ when $N \rightarrow +\infty$. \\

We also recall how a function can be applied to Hermitian matrices. For an $M\times M$ Hermitian matrix $\A$ with spectral decomposition $\U\Lambdabs\U^*$ where $\Lambdabs=\diag(\lambda_m, m=1,\ldots,M)$ and the $(\lambda_m)_{m=1, \ldots, M}$ are the real eigenvalues of $\A$, then for any function $f$ defined on $\mathbb{R}$, we define $f(\A)$ as:
$$ f(\A)=\U \begin{pmatrix} f(\lambda_1) & & \\ & \ddots & \\ & & f(\lambda_M) \end{pmatrix} \U^*$$

$\mathbb{C}^{+}$ is the upper half-plane of $\mathbb{C}$, i.e. the set of all
complex numbers $z$ for which $\mathrm{Im}\,z>0$. \\

For $\mu$ a probability measure, its Stieltjes transform $s_\mu$ is the function defined on $\Cbb \setminus \Supp\mu$ as 
\begin{equation}
\label{definition:stieltjes_transform}
    s_\mu(z)=\int\frac{\diff \mu(\lambda)}{\lambda-z}.
\end{equation}

We recall that 
\begin{equation}
    \label{eq:inegalite_stieltjes_transform}
    |s_\mu(z)|\le\frac{1}{\Im\,z}
\end{equation}
 for each $z\in\Cbb^+$. Moreover, if $\mu$ is carried by $\mathbb{R}^{+}$, then for any $a > 0$, the function $-\frac{1}{z(1+as_{\mu}(z))}$ is also the Stieljes transform of a probability distribution carried by $\mathbb{R}^{+}$, a property which implies that 
 \begin{equation}
     \label{eq:control-inverse-1pluss}
     \left| \frac{1}{1+a s_{\mu}(z)} \right| \leq \frac{|z|}{\Im z}
 \end{equation}
for each $z\in\Cbb^+$ (see \cite{hachemloubatonnajim2007detequiv}, Proposition 5-1, item 4). \\

If $\lambda_1,\ldots,\lambda_M$ denote the eigenvalues of a Hermitian matrix $\A$ and if $\mu:=\frac{1}{M}\sum_{i=1}^M\delta_{\lambda_i}$ denotes the empirical eigenvalue distribution of $\A$, then we have the following relation:
$$ s_\mu(z) = \frac{1}{M} \Tr\Q_{\A}(z)$$
where $\Q_{\A}(z)$ represents the resolvent of $\A$ defined by 
\begin{equation}
    \label{eq:def-resolvent}
    \Q_{\A}(z) = ( \A - z \I_M )^{-1}.
\end{equation}

We finally mention the following useful control for the norm $\Q_{\A}$. For each $z \in \Cbb^{+}$, we have
\begin{equation}
    \label{eq:inegalite_stieltjes_transform_Q}
    \|\Q_{\A}\| \le \frac{1}{\Im\,z}.
\end{equation}

\subsection{Overview of the paper}
We first recall in Section \ref{sec:useful-results} useful technical 
tools: in Paragraph \ref{subsec:stochastic-domination}, the concept of stochastic domination adapted from \cite{erdHos2013averaging}
which allows to considerably simplify the exposition of the following 
results,  in Paragraph \ref{section:haagerup} some useful properties of 
the extreme eigenvalues and of the resolvent of large Wishart matrices, two well-known Gaussian concentration inequalities expressed using the 
stochastic domination framework in Paragraphs \ref{subsection:lipschitz_concentration} and \ref{subsection:hanson}, and the Helffer-Sjöstrand formula in Paragraph \ref{subsec:hs-formula}. We establish in Section \ref{sec:representation-tilde-hat} the stochastic representations \eqref{eq:representation-tildeC-intro} and \eqref{eq:representation-hatC} of  $\tilde{\C}(\nu)$ and $\hat{\C}(\nu)$. In Section \ref{sec:lss}, we prove for each $\nu$ the concentration of $|\psi_N(f,\nu)|$ defined by (\ref{eq:def-psi(f)}), and indicate how it is possible to estimate the term $r_N(\nu)$ in order to keep equal to $u_N$ the rate of convergence of the statistics $\hat{\psi}_N(f,\nu)$ obtained by replacing $r_N(\nu)$ by $\hat{r}_N(\nu)$ 
in  (\ref{eq:def-psi(f)}). In Section \ref{sec:lipschitz}, we establish Lipschitz properties for the functions $\nu \rightarrow \psi_N(f,\nu)$ and $\nu \rightarrow \hat{\psi}_N(f,\nu)$ that allow to establish the concentration of $\sup_{\nu} |\psi_N(f,\nu)|$ and  $\sup_{\nu} |\hat{\psi}_N(f,\nu)|$. We finally provide in Section \ref{sec:simulations} some numerical simulations that support our results.

\section{Useful technical tools}
\label{sec:useful-results}

\subsection{Stochastic domination}
\label{subsec:stochastic-domination}
We now present the concept of stochastic domination introduced in \cite{erdHos2013averaging}. A nice introduction to this tool can also be found in the lecture notes \cite{benaych2016lectures}.

\begin{definition}{\textbf{Stochastic Domination.}}
\label{definition:stochastic_domination}
Let 
$$ X=(X^{(N)}(u):N\in\Nbb, u \in U^{(N)}), \quad Y=(Y^{(N)}(u):N\in\Nbb, u \in U^{(N)})$$
be two families of nonnegative random variables, where $U^{(N)}$ is a set that may possibly depend on $N$.  
We say that $X$ is stochastically dominated by $Y$ if for all (small) $\epsilon>0$, there exists some $\gamma>0$ (which of course depends on $\epsilon$) such that:
$$\Prob\left[X^{(N)}(u)>N^\epsilon Y^{(N)}(u)\right]\le \exp-N^\gamma $$ 
for each $u \in U^{(N)}$ and for each large enough $N>N_0(\epsilon)$, where $N_0(\epsilon)$
is independent of $u$, or equivalently
\begin{equation}
\label{equation:definition_domination_stochastique}
    \sup_{u \in U^{(N)}} \Prob\left[X^{(N)}(u)>N^\epsilon Y^{(N)}(u)\right]\le \exp-N^\gamma .
\end{equation}

for each large enough $N>N_0(\epsilon)$. If $X$ is stochastically dominated by Y we use the notation $X^{(N)}(u)\prec Y^{(N)}(u)$. To simplify the notations, we will very often denote $X^{(N)} \prec Y^{(N)}$ or $X \prec Y$ when the context will be clear enough. 
Moreover, if for some complex valued family $X$ we have $|X| \prec Y$ we also write $X=\Ocal_\prec(Y)$. \\

Finally, we say that a family of events $\Xi=\Xi^{(N)}(u)$ holds with exponentially high (small) probability if there exist $N_0$ and $\gamma>0$ such that for $N\ge N_0$, $\Prob[\Xi_N(u)]>1-\exp-N^\gamma$ ($\Prob[\Xi_N(u)]<\exp-N^\gamma$) for each $u \in U^{(N)}$.
\end{definition}

\begin{remark}
\label{remark:domination_stochastique}
Suppose $(X_N)_{N\in\Nbb}$ is a sequence of positive random variables, satisfying $X_N\prec a_N N^\epsilon$ for any $\epsilon>0$ for some positive real numbers sequence $(a_N)_{N \in \Nbb}$. It turns out that this precisely means that $X_N\prec a_N$. Indeed, consider an arbitrary $\epsilon'>0$. By the stochastic domination property of $X_N$, one can take $\epsilon$ such that $0<\epsilon<\epsilon'$ and write
\[
    \Prob\left[X_N> a_N \times N^{\epsilon'}\right] \le \Prob\left[X_N> a_N \times N^{\epsilon} \times \underbrace{N^{\epsilon'-\epsilon}}_{\gg 1}\right] \le \Prob\left[X_N>a_N \times N^{\epsilon} \right]  
\]
which goes to zero exponentially since $X_N\prec a_N N^\epsilon$ for the $\epsilon$ chosen. This argument will be used in the proof of Lemma \ref{lemma:variance-zeta}.
\end{remark}

%In this paper, the rate function $r$ with often simply be $r(\epsilon)=\epsilon$. This is enough since for some fixed $\epsilon$, the control is still exponentially fast with $N$. Here are some basic properties satisfied by this notation.
\begin{lemma}
\label{lemma:algebra_domination}
Take four families of non negative random variables $X_1,X_2, Y_1$ and $Y_2$ defined as in Definition \ref{definition:stochastic_domination}. Then the following holds:
$$ X_1\prec Y_1 \text{ and } X_2\prec Y_2  \implies X_1+X_2\prec Y_1+Y_2 \text{ and } X_1X_2\prec Y_1Y_2.$$ 
\end{lemma}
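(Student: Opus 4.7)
The plan is to prove both claims by a direct union-bound argument applied to the complement of the events in Definition \ref{definition:stochastic_domination}. Nothing deeper than that is needed; the hypothesis $X_i \prec Y_i$ gives exponential control uniformly in $u\in U^{(N)}$, and stochastic domination is preserved under the algebraic operations precisely because elementary arithmetic allows us to split a ``bad'' event for $X_1+X_2$ or $X_1X_2$ into ``bad'' events for the individual $X_i$.

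For the sum, I would fix an arbitrary $\epsilon>0$ and observe the deterministic implication
\[
\{X_1^{(N)}(u)\le N^{\epsilon}Y_1^{(N)}(u)\}\cap\{X_2^{(N)}(u)\le N^{\epsilon}Y_2^{(N)}(u)\}\subset\{X_1^{(N)}(u)+X_2^{(N)}(u)\le N^{\epsilon}(Y_1^{(N)}(u)+Y_2^{(N)}(u))\},
\]
valid because $Y_1,Y_2\ge 0$. Taking complements and applying the union bound yields
\[
\Prob\bigl[X_1^{(N)}(u)+X_2^{(N)}(u)>N^{\epsilon}(Y_1^{(N)}(u)+Y_2^{(N)}(u))\bigr]\le\Prob[X_1^{(N)}(u)>N^{\epsilon}Y_1^{(N)}(u)]+\Prob[X_2^{(N)}(u)>N^{\epsilon}Y_2^{(N)}(u)].
\]
By $X_i\prec Y_i$ there exist $\gamma_1,\gamma_2>0$ and an $N_0(\epsilon)$ (all independent of $u$) such that each term on the right is bounded above by $\exp(-N^{\gamma_i})$ for $N\ge N_0(\epsilon)$, uniformly in $u$. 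Choosing any $\gamma$ with $0<\gamma<\min(\gamma_1,\gamma_2)$ absorbs the factor $2$ for $N$ large enough, and one obtains the required bound $\exp(-N^{\gamma})$ uniformly in $u$, which is exactly $X_1+X_2\prec Y_1+Y_2$.

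For the product I would use the same strategy but with the exponent split. Fix $\epsilon>0$ and note the deterministic implication
\[
\{X_1^{(N)}(u)\le N^{\epsilon/2}Y_1^{(N)}(u)\}\cap\{X_2^{(N)}(u)\le N^{\epsilon/2}Y_2^{(N)}(u)\}\subset\{X_1^{(N)}(u)X_2^{(N)}(u)\le N^{\epsilon}Y_1^{(N)}(u)Y_2^{(N)}(u)\},
\]
which holds because all variables are nonnegative. Again the union bound gives
\[
\Prob\bigl[X_1^{(N)}(u)X_2^{(N)}(u)>N^{\epsilon}Y_1^{(N)}(u)Y_2^{(N)}(u)\bigr]\le\sum_{i=1}^{2}\Prob[X_i^{(N)}(u)>N^{\epsilon/2}Y_i^{(N)}(u)],
\]
and applying the hypothesis $X_i\prec Y_i$ with the parameter $\epsilon/2>0$ produces exponential decay uniformly in $u$, giving $X_1X_2\prec Y_1Y_2$.

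There is no real obstacle here; the only point to watch is the uniformity statement baked into Definition \ref{definition:stochastic_domination}, but since the union bound operates pointwise in $u$ and the constants $\gamma_1,\gamma_2,N_0(\epsilon)$ from the two hypotheses do not depend on $u$, the resulting bound is automatically uniform in $u\in U^{(N)}$. The argument also implicitly assumes $X_1,X_2$ (and $Y_1,Y_2$) share a common index set $U^{(N)}$, which is the only way the expressions $X_1+X_2$ and $X_1X_2$ make sense as families.
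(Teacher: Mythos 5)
Your argument is correct: both parts follow from a deterministic inclusion of events plus a union bound, with the constants $\gamma_1,\gamma_2,N_0(\epsilon)$ from the hypotheses being uniform in $u\in U^{(N)}$, so the resulting bound is automatically uniform. The paper explicitly omits the proof of this lemma, and what you have written is precisely the standard elementary argument one would expect; the only mild stylistic remark is that for the product you could equally apply the hypotheses at level $\epsilon$ and land on $N^{2\epsilon}Y_1Y_2$, which suffices since $\epsilon$ is arbitrary, but splitting to $\epsilon/2$ as you do is cleaner and matches the stated form directly.
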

We omit the proof of this lemma. 

\begin{remark}
Note that Definition \ref{definition:stochastic_domination} is slightly different from the original one \cite{erdHos2013averaging} which states that the left hand side of \eqref{equation:definition_domination_stochastique} should be bounded by a quantity of order $N^{-D}$ for any finite $D>0$. In the present paper,  all the random variables are Gaussian, and exponential concentration rates can be achieved. 
\end{remark}

\subsection{Properties of the eigenvalues and of the resolvent of large Wishart matrices}
\label{section:haagerup}

In this paper we will at multiple occasion use properties of the eigenvalues of matrices 
$\frac{\X_N \X_N^{*}}{B+1}$ where $\X_N$ is an $M \times (B+1)$ complex Gaussian matrix with i.i.d. $\mathcal{N}_{\Cbb}(0,1)$ entries when $M=M(N)$ and $B=B(N)$ follow Assumption \ref{assumption:rate_NBM}.
\subsubsection{Concentration of the largest and the smallest eigenvalues}
\label{subsubsection:concentration-haagerup}
We first recall concentration results of the largest and smallest eigenvalue of $\frac{\X_N \X_N^{*}}{B+1}$ due to \cite{haagerup2003random}. We have for any $\epsilon > 0$ 
\begin{eqnarray}
\label{eq:concentration-smallest-eig-wishart}
\Prob\left[\lambda_M\left(\frac{\X_N\X_N^*}{B+1}\right)  <  (1-\sqrt{c})^2-\epsilon\right] & \le & (B+1)\exp-C(B+1)\epsilon^2 \\
\label{eq:concentration-largest-eig-wishart}
\Prob\left[\lambda_1\left(\frac{\X_N\X_N^*}{B+1}\right)  >  (1+\sqrt{c})^2+\epsilon\right] & \le & (B+1)\exp-C(B+1)\epsilon^2
\end{eqnarray}
for some nice constant $C$.

%It can be deduced from this the following results on the extreme singular values  of the same %matrix $\X_N$: 
%\begin{eqnarray}
%\label{eq:concentration-smallest-vs-wishart}
%\Prob\left[\sigma_M\left(\frac{\X_N}{\sqrt{B+1}}\right)<1-\sqrt{c}-\epsilon\right]\le(B+1)\exp-C(B+1)\epsilon^2 \\
%\label{eq:concentration-largest-vs-wishart}
%\Prob\left[\sigma_1\left(\frac{\X_N}{\sqrt{B+1}}\right)>1+\sqrt{c}+\epsilon\right]\le(B+1)\exp-C(B+1)\epsilon^2
%\end{eqnarray}

%Note that the constants $C$ appearing here depend only on $c$, the ratio between the dimensions of the matrix $\X$. Since this quantity will remain bounded (in fact converges toward a constant in $(0,1)$ by Assumption \ref{assumption:rate_NBM}), we will use this inequality with a universal constant $C=\Ocal(1)$ without recalling the dependency through $c$. \\

Consider for $\epsilon>0$, the $\epsilon$--expansion of the support of the Marchenko-Pastur distribution $\mu_{MP}^{(c)}$: 
$$\Supp\mu_{MP}^{(c)}+\epsilon:=\left[(1-\sqrt{c})^2-\epsilon, (1+\sqrt{c})^2+\epsilon\right]$$
and the event: 
\begin{equation}
\label{equation:definition_Xi}
    \Lambda_{N,\epsilon}=\left\{\sigma\left(\frac{\X_N\X_N^*}{B+1}\right)\subset\Supp\mu_{MP}^{(c)}+\epsilon\right\}.
\end{equation}

It is clear that using \eqref{eq:concentration-smallest-eig-wishart} and \eqref{eq:concentration-largest-eig-wishart}, $\Lambda_{N,\epsilon}$ holds with exponentially high probability for any $\epsilon>0$. This will be of high importance in the following since it will enable us to work on events of exponentially high probability where the norm
of $\frac{\X_N\X_N^*}{B+1}$ and the norm of its inverse are bounded. \\

Finally, the following (weaker) statement is a simple consequence of the equations \eqref{eq:concentration-smallest-eig-wishart} and \eqref{eq:concentration-largest-eig-wishart}, which will sometimes be enough in the following:
\begin{equation}
\label{eq:stoc-domination-eig-sing-wishart}    
\lambda_1\left(\frac{\X_N\X_N^*}{B+1}\right) + \frac{1}{\lambda_M\left(\frac{\X_N\X_N^*}{B+1}\right)} \prec 1.
\end{equation}

%Indeed, fix $\epsilon>0$. There exist a $N_0(\epsilon)$ large enough such that for any $N\ge N_0$, $N^\epsilon>(1+\sqrt{c})^2+\frac{N^\epsilon}{2}$. Therefore
%\begin{align*}
%    \Prob\left[\lambda_1\left(\frac{\X_N\X_N^*}{B+1}\right)>N^\epsilon\right] &\le \Prob\left[\lambda_1\left(\frac{\X_N\X_N^*}{B+1}\right)>(1+\sqrt{c})^2+\frac{N^\epsilon}{2}\right] \\
 %   &\le (B+1)\exp-C(B+1)N^\epsilon 
%\end{align*}
%which decay in the order $\exp-CN^\gamma$ for some $\gamma>0$. The proof for the three other quantities %are similar. \\

We finally notice that if we consider a family $\X_{N}(u)\in\Cbb^{M\times(B+1)}$ with i.i.d. $\Ncal_\Cbb(0,1)$ entries, $u \in U^{(N)}$, where 
$U^{(N)}$ is a certain set possibly depending on $N$, then (\ref{eq:concentration-smallest-eig-wishart}) and (\ref{eq:concentration-largest-eig-wishart})
hold for each $u \in U^{(N)}$ because the constant $C$ in (\ref{eq:concentration-smallest-eig-wishart}) and (\ref{eq:concentration-largest-eig-wishart}) is universal. This implies that the  stochastic domination (\ref{eq:stoc-domination-eig-sing-wishart})
is still satisfied by the family $\X_{N}(u)$, $u \in U^{(N)}$. Moreover, 
the family of events $\Lambda_{N,\epsilon}(u)$ defined by \eqref{equation:definition_Xi} when $\X_N$ is replaced by $\X_N(u)$ still holds with exponentially high probability.  

\subsubsection{Asymptotic behaviour of the resolvent of \texorpdfstring{$\frac{\X_N \X_N^{*}}{B+1}$}{}}
\label{subsubsec:resolvent-MP}
 We next review known results related to the asymptotic behaviour of the resolvent $\Q_N(z)$ of 
matrix $\frac{\X_N \X_N^{*}}{B+1}$ that can be deduced from standard Gaussian tools. 
The Poincaré-Nash inequality (see e.g. \cite[Proposition 2.1.6]{pasturshcherbina2011} in the Gaussian real case and Eq. (18) in \cite{hachemkhorunzhyloubatonnajimpastur2008newapproach} in the complex Gaussian case) implies immediately that the following Lemma holds. 
\begin{lemma}
\label{le:nash-poincare-resolvent}
Consider deterministic  $M \times M$ and $(B+1) \times (B+1)$ matrices $\A$ and $\tilde{\A}$. Then, it holds that 
\begin{eqnarray}
    \label{eq:var-trace-resolvent}
    \Var \frac{1}{M} \Tr \A \Q_N^{i}(z) & \leq & \frac{C(z)}{M^{2}} \, \frac{1}{M} \Tr \A \A^{*} \\
    \label{eq:var-trace-resolvent-XX*}
    \Var \frac{1}{M} \Tr \left( \frac{\X \tilde{\A} \X^{*}}{B+1} \Q_N^{i}(z) \right) & \leq & \frac{C(z)}{M^{2}} \, \frac{1}{B+1} \Tr \tilde{\A} \tilde{\A}^{*}
\end{eqnarray}
for $i=1,2$
\end{lemma}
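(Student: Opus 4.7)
The plan is to apply the complex Gaussian Poincaré--Nash inequality to the random variables $g_1(\X)=\tfrac{1}{M}\Tr\bigl(\A\Q_N^i(z)\bigr)$ and $g_2(\X)=\tfrac{1}{M(B+1)}\Tr\bigl(\X\tilde\A\X^*\Q_N^i(z)\bigr)$, and then reduce the right-hand side to the stated quantities using the resolvent identity $\Sigma_N\Q_N = \I_M + z\Q_N$ (where $\Sigma_N = \X\X^*/(B+1)$) together with the trivial bound $\|\Q_N(z)\|\leq 1/\Im z$. The inequality reads
\begin{equation*}
\Var g(\X)\;\le\;\Ebb\sum_{k=1}^{M}\sum_{l=1}^{B+1}\Bigl(|\partial_{X_{kl}}g|^{2}+|\partial_{\bar X_{kl}}g|^{2}\Bigr),
\end{equation*}
so the task reduces to computing these derivatives and summing them.

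For the first inequality I would start from $\partial_{X_{kl}}\Sigma_N = \tfrac{1}{B+1}\e_k\x_l^{*}$ (with $\x_l$ the $l$-th column of $\X$), which yields $\partial_{X_{kl}}\Q_N = -\tfrac{1}{B+1}\Q_N\e_k\x_l^{*}\Q_N$ in the case $i=1$ and an additional sandwich term for $i=2$ (contributing one extra power of $\|\Q_N\|$, absorbed into $C(z)$). Thus
$\partial_{X_{kl}}g_1 = -\tfrac{1}{M(B+1)}\x_l^{*}\Q_N\A\Q_N\e_k$, and summing $|\cdot|^2$ first over $k$ produces $\|\x_l^{*}\Q_N\A\Q_N\|^{2}$, then summing over $l$ gives $\Tr(\X\X^{*}\Q_N\A\Q_N\Q_N^{*}\A^{*}\Q_N^{*})=(B+1)\Tr(\Sigma_N\Q_N\A\Q_N\Q_N^{*}\A^{*}\Q_N^{*})$. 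The factor $\Sigma_N$ is not a.s. bounded, so the key step is to substitute $\Sigma_N\Q_N=\I_M+z\Q_N$, turning the trace into two terms of the form $\Tr\bigl(\A(\text{stuff}(\Q_N))\A^{*}(\text{stuff}(\Q_N))\bigr)$, each bounded by $C(z)\|\A\|_F^{2}$ via the standard $|\Tr(ABCD^{*})|\le\|B\|\|D\|\|A\|_F\|C\|_F$ estimate. The $\bar X_{kl}$-derivative is handled identically. Combining with $B+1\asymp M$ (Assumption \ref{assumption:rate_NBM}) and recognising $\|\A\|_F^{2}=\Tr(\A\A^{*})$ yields the bound $\tfrac{C(z)}{M^{2}}\cdot\tfrac{1}{M}\Tr(\A\A^{*})$.

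For the second inequality, the product rule forces a split $\partial_{X_{kl}}g_2 = T_1+T_2$, where $T_1$ differentiates $\X\tilde\A\X^{*}$ at fixed $\Q_N$ and $T_2$ differentiates $\Q_N^{i}$ at fixed $\X\tilde\A\X^{*}$. The term $T_1$ gives $\tfrac{1}{M(B+1)}(\tilde\A\X^{*}\Q_N^{i})_{lk}$, whose $(k,l)$-sum of squares equals $\tfrac{1}{M^{2}(B+1)^{2}}\Tr(\X\tilde\A\tilde\A^{*}\X^{*}\Q_N^{i}\Q_N^{i*})=\tfrac{1}{M^{2}(B+1)}\Tr(\Sigma_N^{[\tilde\A\tilde\A^{*}]}\Q_N^{i}\Q_N^{i*})$; after an $\Sigma_N\Q_N$-absorption trick this delivers $\tfrac{C(z)}{M^{2}(B+1)}\Tr(\tilde\A\tilde\A^{*})$. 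The term $T_2$ has the same structure as in the first inequality but with the extra factor $\X\tilde\A\X^{*}$ inside the trace; I would bracket it by Cauchy--Schwarz against a $\Q_N$-sandwich so that $\Sigma_N$ can again be removed through $\Sigma_N\Q_N=\I_M+z\Q_N$, producing once more $\tfrac{C(z)}{M^{2}(B+1)}\Tr(\tilde\A\tilde\A^{*})$. The case $i=2$ only adds another $\|\Q_N\|$ factor.

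The main obstacle I anticipate is not conceptual but bookkeeping: every naive bound produces a random factor $\|\Sigma_N\|$ (or $\|\X\tilde\A\X^{*}\|$) that has no deterministic control, and in each trace one must cyclically reshuffle and apply $\Sigma_N\Q_N=\I_M+z\Q_N$ until all such factors are eliminated in favour of $C(z)=P_1(z)P_2(1/\Im z)$. The other subtlety is tracking the correct Frobenius-norm factor ($\|\A\|_F$ versus $\|\tilde\A\|_F$) through the Cauchy--Schwarz steps so that the final right-hand side matches $\tfrac{1}{M}\Tr(\A\A^{*})$ or $\tfrac{1}{B+1}\Tr(\tilde\A\tilde\A^{*})$ exactly.
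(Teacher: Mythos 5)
Your proposal is correct and follows precisely the route the paper gestures at: the paper's ``proof'' is a one-line invocation of the complex Poincar\'e--Nash inequality (citing Eq.~(18) of \cite{hachemkhorunzhyloubatonnajimpastur2008newapproach}) together with the fact that after differentiating the resolvent and using $\Sigma_N \Q_N = \I_M + z\Q_N$ the surviving random factors can be controlled, so you have merely written out the standard computation the authors treat as immediate.

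One small precision: for the $T_1$ term of the second bound the resolvent identity $\Sigma_N\Q_N=\I_M+z\Q_N$ is of no use, because the random factor there is $\X\tilde{\A}\tilde{\A}^*\X^*$, which is not $\Sigma_N$. What actually kills it is the expectation already present in the Poincar\'e--Nash inequality: after the PSD trace bound $\Tr(\X\tilde{\A}\tilde{\A}^*\X^*\,\Q_N^i\Q_N^{i*}) \le \|\Q_N\|^{2i}\,\Tr(\X\tilde{\A}\tilde{\A}^*\X^*)$ one uses $\Ebb\,\Tr(\X\tilde{\A}\tilde{\A}^*\X^*)=M\,\Tr(\tilde{\A}\tilde{\A}^*)$, and for the $T_2$ term a similar mix of $\Q_N^*\Sigma_N\Q_N=\Q_N^*+z\Q_N^*\Q_N$ plus the Gaussian fourth-moment identity $\Ebb\Tr(\X\tilde{\A}\X^*\X\tilde{\A}^*\X^*)=M|\Tr\tilde{\A}|^2+M^2\Tr(\tilde{\A}\tilde{\A}^*)$ (both $\Ocal(M(B+1))\Tr(\tilde{\A}\tilde{\A}^*)$) finishes the bound. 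This is a naming correction, not a gap: your sketch already has all the needed ingredients, and with $M\asymp B+1$ the orders match the stated right-hand sides.
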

We recall that $C(z)$ represents a generic notation for $P_1(z) P_2(\frac{1}{\Im z})$ where $P_1$ and
$P_2$ are nice polynomials. \\

The integration by parts formula states that if $h(\X,\X^{*})$ is a $\mathcal{C}^{1}$ function of the entries of $\X$ and $\X^{*}$ with polynomially bounded first derivatives, then, it holds that 
\begin{equation}
    \label{eq:integration-by-part}
    \mathbb{E}(X_{ij} h(\X,\X^{*})) = \mathbb{E}|X_{ij}|^{2} \mathbb{E} \left[ \frac{\partial h}{\partial \overline{X}_{ij}}(\X,\X^{*})\right].
\end{equation}
\eqref{eq:integration-by-part}, in conjunction with the Poincaré-Nash inequality, allows to 
evaluate easily the asymptotic behaviour of the entries of $\mathbb{E}(\Q_N(z))$ (see e.g. \cite{pasturshcherbina2011}). We first notice that properties of the distribution of the matrix $\X_N$ immediately imply that $\mathbb{E}(\Q_N(z))$ 
is reduced to $\beta_N(z) \I_{M}$ where $\beta_N(z)$ coincides with $\mathbb{E}(\Q_{m,m}(z))$ for each $m$. 
Then, it holds that 
\begin{equation}
    \label{eq:beta-t}
    \beta_N(z) = t_N(z) + \epsilon_N(z)
\end{equation}
where the error term $\epsilon_N(z)$ satisfies $|\epsilon_N(z)| \leq \frac{C(z)}{M^{2}}$ and where $t_N(z)$ is the Stieltjes transform of the Marcenko-Pastur distribution $\mu_{MP}^{(c_N)}$. In other words, $t_N(z)$ is the unique Stieltjes transform satisfying the equation 
\begin{equation}
    \label{eq:equation-MP1}
    t_N(z) = \frac{1}{-z + \frac{1}{1+c_N t_N(z)}}.
\end{equation}
It is also convenient to define $\tilde{t}_N(z)$ by 
\begin{equation}
    \label{eq:def-ttilde}
 \tilde{t}_N(z) = - \frac{1}{z(1 + c_N t_N(z))}   
\end{equation}
so that $t_N(z)$ is also given by 
\begin{equation}
    \label{eq:equation-MP2}
    t_N(z) =  - \frac{1}{z(1 + \tilde{t}_N(z))} .
\end{equation}
It is well-known that $\tilde{t}_N(z)$ is the Stieltjes transform of the probability distribution 
$c_N \mu_{MP}^{(c_N)} + (1-c_N) \delta_0$. \\

We finally mention that $\mathbb{E}(\Q_N'(z)) = \mathbb{E}(\Q_N^{2}(z)) = \beta_N'(z) \I_M$ 
(where $'$ stands for the derivative w.r.t. $z$), and that $\epsilon_N'(z) = \beta_N^{'}(z) -t_N'(z)$
still satisfies 
\begin{equation}
    \label{eq:control-derivee-epsilon}
    |\epsilon_N'(z)| \leq  \frac{C(z)}{M^{2}}.
\end{equation}
\subsection{Concentration of functionals of Gaussian entries}
\label{subsection:lipschitz_concentration}
It is well-known (see e.g. \cite[Th. 2.1.12]{tao2011topics}) that for any 1-Lipschitz real valued function $f$ defined on $\mathbb{R}^{N}$ and any $N$--dimensional random variable $\X\sim\Ncal(0,\I_N)$, there exists a universal constant $C$ such that:
\begin{equation}
    \label{eq:gaussian-concentration-inequality}
    \Prob\left[\left|f(\X)-\Ebb f(\X)\right|>t\right] \le C\exp-Ct^2 .
\end{equation}
This inequality is still valid when $\X\sim\Ncal_\Cbb(0,\I_N)$: in this context, 
$f(\X)$ is replaced by a real-valued function 
 $f(\X, \X^{*})$ depending on the entries 
 of $\X$ and $\X^{*}$. $f(\X, \X^{*})$ can of course be written as
 $f(\X, \X^{*}) = \tilde{f}(\sqrt{2} \mathrm{Re}(\X), \sqrt{2} \mathrm{Im}(\X))$ for some 
 function $\tilde{f}$ defined on $\mathbb{R}^{2N}$. As $(\sqrt{2} \mathrm{Re}(\X), \sqrt{2}\mathrm{Im}(\X))$ is $\Ncal(0,\I_{2N})$ distributed, the concentration inequality is still valid for 
 $f(\X, \X^{*}) = \tilde{f}(\sqrt{2} \mathrm{Re}(\X), \sqrt{2} \mathrm{Im}(\X))$. We just finally mention that  $f$, considered 
 as a function of  $(\X, \X^{*})$, and $\tilde{f}$ have Lipschitz constants that are of 
 the same order of magnitude. More precisely, if  we define the differential operators
 $\frac{\partial}{\partial z}$ and $\frac{\partial}{\partial \bar{z}}$ by 
$$ \frac{\partial}{\partial z} =  \frac{\partial}{\partial x} - i  \frac{\partial}{\partial y}, \quad 
 \frac{\partial}{\partial \bar{z}} =  \frac{\partial}{\partial x} + i  \frac{\partial}{\partial y}  $$
 we can verify immediately that 
 $$
 \sum_{i=1}^{N} \left( \left|  \frac{\partial f}{\partial X_i} \right|^{2} + 
  \left|  \frac{\partial f}{\partial X_i^{*}} \right|^{2} \right) = \| \left(\nabla f \right)_{(\X, \X^{*})} \|^{2} = 4 \| \left( \nabla \tilde{f} \right)_{(\sqrt{2} \mathrm{Re}(\X), \sqrt{2} \mathrm{Im}(\X))}\|^{2} .
 $$
 
Within the stochastic domination framework, the concentration inequality (\ref{eq:gaussian-concentration-inequality}) implies that for a family $\X_N(u) \sim\Ncal(0,\I_N)$ for $u \in U^{(N)}$:
$$ \left|f(\X_N(u))-\Ebb f(\X_N(u))\right|\prec 1 $$
The proof is immediate: consider $\epsilon>0$ and obtain that 
$$ \Prob[|f(\X_N(u))-\Ebb f(\X_N(u))|>N^\epsilon]\le C\exp-CN^{2\epsilon} $$ for each $u$
as expected. This result can easily be extended in the complex case, ie. when $\X_N(u) \sim\Ncal_\Cbb(0,\I_N)$.

\subsection{Hanson-Wright inequality}
\label{subsection:hanson}
The Hanson-Wright inequality \cite{rudelson2013hanson} is useful to control deviations of a quadratic form from its expectation. While it is proved in the real case in \cite{rudelson2013hanson}, it can easily be understood that it can be extended in the complex case as follows: let $\X \sim\Ncal_\Cbb(0,\I_N)$ and $\A\in\Cbb^{N\times N}$. Then
\begin{align}
\label{eq:hanson-wright}
    \Prob[|\X^*\A\X-\Ebb\X^*\A\X|>t] \le 2\exp-C\min\left(\frac{t^2}{\|\A\|_F^2},\frac{t}{\|\A\|}\right).
\end{align}

We now write \eqref{eq:hanson-wright} in the stochastic domination framework. Consider a family of independent 
$\Ncal_\Cbb(0,1)$ random variables $(X_n(u))_{n=1, \ldots, N}$ where $u \in U^{(N)}$ and a sequence of $N\times N$ matrices $\A_N(u)$ that possibly depend on $u$. Take $\epsilon>0$ and $t=N^\epsilon\|\A_N(u)\|_F$. Since $\|\A_N(u)\|>0$, $\|\A_N(u)\|_F>0$, and $\|\A_N(u)\|\le\|\A_N(u)\|_F$:
\begin{align*}
    \min\left(\frac{t}{\|\A_N(u)\|},\frac{t^2}{\|\A_N(u)\|_F^2}\right) &= \min\left(N^\epsilon\frac{\|\A_N(u)\|_F}{\|\A_N(u)\|},N^{2\epsilon}\frac{\|\A_N(u)\|_F^2}{\|\A_N(u)\|_F^2}\right) \\
    &\ge \min(N^\epsilon,N^{2\epsilon}) = N^\epsilon.
\end{align*}
Denote $\X_N(u) = (X_1(u), \ldots, X_N(u))^{T}$. For any $u\in U^{(N)}$, it holds that:
\begin{multline}
\label{equation:hanson_wright_concentration_preuve}
    \Prob\left[|\X_N^*(u)\A_N(u)\X_N(u)-\Ebb\X_N^*(u)\A_N(u)\X_N(u) | > N^\epsilon\|\A_N(u)\|_F\right] \\ \le 2\exp-CN^\epsilon.
\end{multline}
We can therefore rewrite \eqref{equation:hanson_wright_concentration_preuve} as the following stochastic domination:
\begin{equation}
    \label{equation:hanson_wright_stochastic_domination}
    |\X_N^*(u)\A_N(u)\X_N(u)-\Ebb\X_N^*(u)\A_N(u)\X_N(u) |\prec \|\A_N(u)\|_F.
\end{equation}

\subsection{Helffer-Sjöstrand formula}
\label{subsec:hs-formula}
If $\mu$ is a probability measure, the Helffer-Sjöstrand formula can be seen as an alternative to the Stieltjes inversion formula that allows to express $\int f d\mu$ in terms of the Stieltjes transform $s_{\mu}(z)$ of $\mu$ (see \eqref{definition:stieltjes_transform}) when $f$ is a regular enough compactly supported function. In order to introduce this tool, we consider a class $\mathcal{C}^{k+1}$ compactly supported function $f$ for a certain integer $k$, and denote by $\Phi_k(f):\Cbb\to\Cbb$ the function defined on $\mathbb{C}$ by 
$$ \Phi_k(f)(x+iy)=\sum_{l=0}^k\frac{(iy)^l}{l!}f^{(l)}(x)\rho(y)$$
where $\rho:\Rbb\to\Rbb^+$ is smooth, compactly supported, with value 1 in a neighbourhood of $0$. Function $\Phi_k(f)$ coincides with $f$ on the real line and extends it to the complex plane. Let $\bar{\partial}=\partial_x+i\partial_y$. 
It is well-known that 
\begin{equation}
    \label{eq:derivee-Phi(f)}
    \bar{\partial}\Phi_k(f)(x+iy)=\frac{(iy)^k}{k!}f^{(k+1)}(x)
\end{equation}
(a proof of this result can be found in \cite{dyn1972operator} or \cite{helffer1989equation}) 
if $y$ belongs to the neighbourhood of $0$ in which $\rho$ is equal to 1. The Helffer-Sjöstrand formula can be written as
\begin{equation}
        \label{equ:helfjer-sjostrand-general}
        \int f\diff \mu   = \frac{1}{\pi}\Re\int_{\Cbb^+} \,\bar{\partial}\Phi_k(f)(z)
        s_{\mu}(z) \diff x \diff y.
\end{equation}
In order to understand why the integral at the right hand side of \eqref{equ:helfjer-sjostrand-general}
is well defined, we take, to fix the ideas, $\rho\in \mathcal{C}^\infty$ such that $\rho(y)=1$ for $|y|\le1$ and $\rho(y)=0$ for $|y|>2$, and denote by  $[a_1, a_2]$ an interval containing the support of $f$. Then, it  appears that the integral on  $\Cbb^+$ is in fact over the compact set $\Dcal =  \{x+iy:x\in[a_1,a_2],y\in[0,2]\}$. Moreover, as $|\s_{\mu}(z)| \leq \frac{1}{y}$ if $z \in \Dcal$ (see \eqref{eq:inegalite_stieltjes_transform}), \eqref{eq:derivee-Phi(f)} for $k=1$ leads to the conclusion that 
$$
|\bar{\partial}\Phi_k(f)(z) s_{\mu}(z)| \leq C
$$
for $z \in \{ x + i y \in \Dcal, y \leq 1 \}$. Therefore, the right hand side of 
\eqref{equ:helfjer-sjostrand-general} is well defined. 

We finally mention that the Helffer-Sjöstrand formula remains still valid for any compactly supported 
distribution $D$ (see e.g. \cite{loubaton2016jotp}, section $9$). The Stieltjes transform of $D$, denoted by $s_D(z)$,  is defined for each $z \in \Cbb^+$ as the action of the function $\lambda \rightarrow \frac{1}{\lambda - z}$ on $D$, i.e. $s_D(z) = < D, \frac{1}{\lambda -z}>$, and satisfies 
$$
|s_D(z)| \leq C \left(1 + \frac{1}{(\Im z)^{n_0}} \right)
$$
for each $z \in  \Cbb^+$ where $n_0$ is related to the order of the distribution. We refer the reader to 
\cite{capitainedonatiferal2009annprob} (Theorem 4.3) and the references therein for more details on Stieltjes transforms of distributions. 
Then, if $f$ is a $\mathcal{C}^{\infty}$ function supported by $[a_1, a_2]$, $<D,f>$ is given by
\begin{equation}
        \label{equ:helfjer-sjostrand-distribution}
        <D,f>   = \frac{1}{\pi}\Re\int_{\mathcal D} \,\bar{\partial}\Phi_k(f)(z)
        s_{D}(z) \diff x \diff y
\end{equation}
for $k \geq n_0$. We also recall that an alternative expression for $<D,f>$ is given by the Stieltjes inversion formula, also valid for distributions, i.e. 
\begin{equation}
    \label{eq:inversion-formula-distribution}
    <D,f>   = \frac{1}{\pi} \lim_{y \rightarrow 0} \int_{a_1}^{a_2} f(\lambda) \, \Im s_D(\lambda + iy) \, d\lambda.
\end{equation}

\section{Stochastic representations of  \texorpdfstring{$\tilde{\C}(\nu)$}{C tilde (nu)} and \texorpdfstring{$\hat{\C}(\nu)$}{C hat (nu)}}
\label{sec:representation-tilde-hat}
The first step is to show that $\tilde{\C}(\nu)$ and 
$\hat{\C}(\nu)$ can be approximated by the sample covariance matrix of a sequence of i.i.d. Gaussian random vectors, and to control the order of magnitude of the corresponding errors. This is the objective of the following result.

\begin{theorem}
\label{theorem:C_approximation_Wishart}
Under Assumptions \ref{assumption:gaussian_y_n}, \ref{assumption:H0_independence}, \ref{assumption:rate_NBM} and \ref{assumption:regularity}, for any $\nu\in[0, 1]$, there exists an $M \times(B+1)$ random matrix $\X_N(\nu)$ with $\Ncal_\Cbb(0,1)$ i.i.d. entries, and two matrices $(\tilde{\Deltabs}_N(\nu), \Deltabs_N(\nu))$ such that:
\begin{eqnarray} 
    \label{equation:tildeC_approximation_Wishart}
    \tilde{\C}_N(\nu) & = &\frac{\X_N(\nu)\X_N^*(\nu)}{B+1} + \tilde{\Deltabs}_N(\nu), \quad \|\tilde{\Deltabs}_N(\nu)\|\prec \frac{B}{N} \\
    \label{equation:C_approximation_Wishart}
    \hat{\C}_N(\nu) & =  & \frac{\X_N(\nu)\X_N^*(\nu)}{B+1} + \Deltabs_N(\nu), \quad \|\Deltabs_N(\nu)\|\prec \frac{1}{\sqrt{B}} + \frac{B}{N}.
\end{eqnarray}
\end{theorem}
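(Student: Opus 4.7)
The plan is to first build $\X_N(\nu)$ via a Gaussian coupling that turns $\Z(\nu) := \diag(\S(\nu))^{-1/2} \bs{\Xi}(\nu)$ into a genuine i.i.d.\ Gaussian matrix (where $\bs{\Xi}(\nu) \in \Cbb^{M \times (B+1)}$ has columns $\xibs_\y(\nu + b/N)$, so that $\tilde{\C}(\nu) = \frac{1}{B+1} \Z(\nu) \Z(\nu)^*$), and then to pass from $\tilde{\C}(\nu)$ to $\hat{\C}(\nu)$ by exploiting the concentration of the diagonal entries of $\hat{\S}(\nu)$.

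\textbf{Step 1 (covariance structure of $\Z(\nu)$).} Arrange $\Gammabs(\nu) := \Ebb[\Vec(\Z(\nu)) \Vec(\Z(\nu))^*]$ as a $(B+1) \times (B+1)$ block matrix with $M \times M$ blocks. By Assumption \ref{assumption:H0_independence}, every block is diagonal in the $m$-index, so $\Gammabs(\nu)$ is (after permutation) the direct sum of $M$ scalar $(B+1) \times (B+1)$ matrices $\Gammabs^{(m)}(\nu)$. Classical Brillinger-type computations using Assumption \ref{assumption:regularity} show that $(\Gammabs^{(m)}(\nu))_{b,b} = s_m(\nu + b/N)/s_m(\nu) + \Ocal(1/N) = 1 + \Ocal(|b|/N)$ on the diagonal, and that the DFT cancellation $\sum_{n=1}^N e^{-2i\pi(n-1)(b_1-b_2)/N} = 0$ combined with the $|k|$-sized boundary correction $\tfrac{1}{N}\sum_k |r_{m,k}||k| = \Ocal(1/N)$ gives $(\Gammabs^{(m)}(\nu))_{b_1, b_2} = \Ocal(1/N)$ for $b_1 \neq b_2$. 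A Gershgorin bound then yields $\max_m \|\Gammabs^{(m)}(\nu) - \I_{B+1}\| = \Ocal(B/N)$, hence $\|\Gammabs(\nu) - \I\| = \Ocal(B/N)$ and therefore $\|\Gammabs(\nu)^{1/2} - \I\| = \Ocal(B/N)$ (Lipschitzness of the square root near $1$).

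\textbf{Step 2 (coupling and bound on $\tilde{\Deltabs}_N$).} Define $\X_N(\nu)$ by $\Vec(\X_N(\nu)) := \Gammabs(\nu)^{-1/2} \Vec(\Z(\nu))$, so $\X_N(\nu)$ has i.i.d.\ $\Ncal_\Cbb(0,1)$ entries and $\Vec(\Z(\nu) - \X_N(\nu)) = (\Gammabs(\nu)^{1/2} - \I)\Vec(\X_N(\nu))$. For any unit vectors $u \in \Cbb^M$, $v \in \Cbb^{B+1}$, the scalar $u^*(\Z(\nu) - \X_N(\nu)) v$ is complex Gaussian with variance at most $\|\Gammabs(\nu)^{1/2} - \I\|^2 = \Ocal((B/N)^2)$. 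An $\epsilon$-net covering of the unit spheres (with log-cardinalities $\Ocal(M)$ and $\Ocal(B)$) combined with the Gaussian concentration of Section \ref{subsection:lipschitz_concentration} yields $\|\Z(\nu) - \X_N(\nu)\|_{\mathrm{op}} \prec \sqrt{B}\cdot B/N$ on an event of exponentially high probability (uniformly in $\nu$). Using $\|\X_N(\nu)\|_{\mathrm{op}} \prec \sqrt{B}$ from \eqref{eq:stoc-domination-eig-sing-wishart} and Lemma \ref{lemma:algebra_domination} applied to
\[
\tilde{\C}(\nu) - \frac{\X_N(\nu) \X_N(\nu)^*}{B+1} = \frac{1}{B+1}\bigl[(\Z-\X_N)\X_N^* + \X_N(\Z-\X_N)^* + (\Z-\X_N)(\Z-\X_N)^*\bigr]
\]
then gives $\|\tilde{\Deltabs}_N(\nu)\| \prec B/N + B^2/N^2 \prec B/N$, proving \eqref{equation:tildeC_approximation_Wishart}.

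\textbf{Step 3 (bound on $\Deltabs_N$).} Write $\hat{\C}(\nu) = \D_N(\nu) \tilde{\C}(\nu) \D_N(\nu)$ where $\D_N(\nu) = \diag(\S(\nu))^{1/2} \diag(\hat{\S}(\nu))^{-1/2}$ is diagonal. Each diagonal entry $\hat{s}_{m,m}(\nu) = \y_m^* \K(\nu) \y_m / (B+1)$ is a quadratic form in the Gaussian vector $\y_m$, with $\K(\nu)$ a rank-$(B+1)$ orthogonal projection (so $\|\K(\nu)\|_F = \sqrt{B+1}$). The Hanson-Wright inequality \eqref{equation:hanson_wright_stochastic_domination} combined with the Toeplitz factorization $\y_m = \bs{T}_m^{1/2} \w_m$ (where $\|\bs{T}_m\|$ is uniformly bounded by Assumption \ref{assumption:regularity}) gives $|\hat{s}_{m,m}(\nu) - \Ebb\hat{s}_{m,m}(\nu)| \prec 1/\sqrt{B}$, while Step 1 yields $|\Ebb\hat{s}_{m,m}(\nu) - s_m(\nu)| = \Ocal(1/N + (B/N)^2)= \Ocal(B/N)$. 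A union bound over $m = 1, \ldots, M$ is absorbed into the $N^\epsilon$ factor inherent to $\prec$, and the lower bound on $s_m$ in Assumption \ref{assumption:regularity} yields $\|\D_N(\nu) - \I_M\| \prec 1/\sqrt{B} + B/N$. Since $\|\tilde{\C}(\nu)\| \prec 1$ (from Step 2 and \eqref{eq:stoc-domination-eig-sing-wishart}), expanding $\hat{\C}(\nu) - \tilde{\C}(\nu)$ via $\D_N = \I + (\D_N - \I)$ and using Lemma \ref{lemma:algebra_domination} gives $\|\hat{\C}(\nu) - \tilde{\C}(\nu)\| \prec 1/\sqrt{B} + B/N$, and the triangle inequality yields $\|\Deltabs_N(\nu)\| \prec 1/\sqrt{B} + B/N$.

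\textbf{Main obstacle.} The delicate point is the sharp operator-norm estimate $\|\Gammabs(\nu) - \I\| = \Ocal(B/N)$ in Step 1. A naive Frobenius bound produces at best $B^{3/2}/N$, which would be insufficient. The argument works only because Assumption \ref{assumption:H0_independence} forces every off-diagonal block of $\Gammabs$ to be a diagonal $M \times M$ matrix, so the block-off-diagonal part of $\Gammabs$ splits into $M$ independent scalar $(B+1)\times(B+1)$ matrices with $1/N$-sized entries on which Gershgorin is sharp. Once this is in place, the apparent $\sqrt{B}$ loss in the coupling step (Step 2) is exactly compensated by $\|\X_N\|_{\mathrm{op}} = \Ocal(\sqrt{B})$ and the $1/(B+1)$ normalization, producing the sharp rate $B/N$ claimed in the theorem.
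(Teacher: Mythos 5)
Your proof is correct, and it in fact constructs the same matrix $\X_N(\nu)$ as the paper: because the covariance of $\Vec(\Z)$ is, after permutation, a direct sum $\bigoplus_m (\I + \Phibs_m)$ across the $m$-index (the block-diagonality you correctly extract from Assumption \ref{assumption:H0_independence}), your global coupling $\Vec(\X_N) = \Gammabs(\nu)^{-1/2}\Vec(\Z)$ reduces to the paper's row-by-row identity $\x_m = \Z_m(\I + \Phibs_m)^{-1/2}$, so $\Z - \X_N$ is exactly the matrix the paper denotes $\Gammabs/\sqrt{B+1}$ in Proposition \ref{proposition:X_Gamma}. The route you take to control its operator norm is genuinely different and somewhat lighter. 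The paper sets $\Z := \Gammabs\Gammabs^*/(B+1)$, splits $\|\Z\| \leq \|\Ebb\Z\| + \|\Z - \Ebb\Z\|$, bounds the deterministic part $\|\Ebb\Z\| = O((B/N)^2)$ explicitly, and controls the fluctuation via the Hanson--Wright inequality applied to quadratic forms $\g^*\Z\g$, followed by an $\epsilon$-net on the unit sphere of $\Cbb^M$. You instead note that the bilinear form $u^*(\Z - \X_N)v$ is a single centered complex Gaussian with variance at most $\|\Gammabs(\nu)^{1/2} - \I\|^2 = O((B/N)^2)$, and pay the entropy $O(M+B)$ of a two-sided $\epsilon$-net via the ordinary Gaussian tail; since $M \sim B$, both nets cost the same $\sqrt{B}$ factor, and both yield $\|\tilde{\Deltabs}_N\| \prec B/N$. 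Your version avoids Hanson--Wright entirely in this step; the paper's version retains the explicit expectation bound $\|\Ebb[\Gammabs\Gammabs^*/(B+1)]\| \leq C(B/N)^2$, which it reuses to obtain the moment bound $\Ebb\|\tilde{\Deltabs}\|^k \leq C(B/N)^k$ in Lemma \ref{le:EnormtildeDelta}. Step 3 follows the paper closely; the only detail worth making explicit is that converting $|\hat s_m - s_m| \prec 1/\sqrt{B}+B/N$ into $\|\D_N - \I\| \prec 1/\sqrt{B}+B/N$ also requires a lower bound on $\hat s_m$, not only on $s_m$. This holds with exponentially high probability once concentration of $\hat s_m$ around $s_m \geq \barbelow{s} > 0$ is established (the content of the paper's Lemma \ref{lemma:localization_s_m}), so it is a missing sentence rather than a hole.
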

\begin{remark}
Therefore, up to small additive perturbations, $\tilde{\C}_N(\nu)$ and $\hat{\C}_N(\nu)$ appear as empirical covariance matrices of i.i.d. $\Ncal_\Cbb(0,\I_M)$ random vectors. We thus expect that $\tilde{\C}_N(\nu)$ and $\hat{\C}_N(\nu)$ will satisfy a number of useful properties of empirical covariance matrices of i.i.d. $\Ncal_\Cbb(0,\I_M)$ random vectors. 
\end{remark}
In particular, Theorem \ref{theorem:C_approximation_Wishart} allows to make precise the location of the eigenvalues of 
$\tilde{\C}_N(\nu)$ and $\hat{\C}_N(\nu)$.  In order to 
formulate the corresponding result, we define some notations. We introduce
the events $\Lambda^{\tilde{\C}}_{N,\epsilon}(\nu)$ and $\Lambda^{\hat{\C}}_{N,\epsilon}(\nu)$ defined by
\begin{eqnarray}
\label{eq:def-Lambda_C_tilde}
\Lambda^{\tilde{\C}}_{N,\epsilon}(\nu)  & = & \{\sigma(\tilde{\C}_N(\nu))\subset\Supp\mu_{MP}^{(c)}+\epsilon\} \\
\Lambda^{\hat{\C}}_{N,\epsilon}(\nu)  & = & \{\sigma(\hat{\C}_N(\nu))\subset\Supp\mu_{MP}^{(c)}+\epsilon\}
\label{eqdef-Lambda_C_hat}.
\end{eqnarray}
Then, we establish in the following the Corollary: 
\begin{corollary}
\label{coro:localisation-eigenvalues-hatC-tildeC}
For each $\epsilon > 0$, the family of events $\Lambda^{\tilde{\C}}_{N,\epsilon}(\nu), N \geq 1, \nu \in [0,1]$ and 
$\Lambda^{\hat{\C}}_{N,\epsilon}(\nu), N \geq 1, \nu \in [0,1]$ hold with exponential high probability. 
\end{corollary}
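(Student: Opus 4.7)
The plan is to deduce the corollary from Theorem \ref{theorem:C_approximation_Wishart} by combining Weyl's perturbation inequality with the Haagerup-type concentration of the extreme eigenvalues of the Wishart matrix $\X_N(\nu)\X_N^*(\nu)/(B+1)$ recalled in Section \ref{subsubsection:concentration-haagerup}. Using the stochastic representations \eqref{equation:tildeC_approximation_Wishart} and \eqref{equation:C_approximation_Wishart}, Weyl's inequality yields, for every $k$,
\[
\bigl| \lambda_k(\tilde{\C}_N(\nu)) - \lambda_k\bigl(\X_N(\nu)\X_N^*(\nu)/(B+1)\bigr)\bigr| \le \|\tilde{\Deltabs}_N(\nu)\|,
\]
and the analogous bound for $\hat{\C}_N(\nu)$ with $\Deltabs_N(\nu)$ in place of $\tilde{\Deltabs}_N(\nu)$. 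It therefore suffices to prove that, with exponentially high probability uniformly in $\nu \in [0,1]$, the Wishart spectrum lies in $\Supp\mu_{MP}^{(c)}+\epsilon/2$ and the norm of the additive perturbation is at most $\epsilon/2$.

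For the perturbations, Theorem \ref{theorem:C_approximation_Wishart} gives $\|\tilde{\Deltabs}_N(\nu)\| \prec B/N$ and $\|\Deltabs_N(\nu)\| \prec 1/\sqrt{B}+B/N$. Under Assumption \ref{assumption:rate_NBM}, $B = \Ocal(N^{\alpha})$ with $1/2 < \alpha < 1$, so both $B/N$ and $1/\sqrt{B}$ are $\Ocal(N^{-\beta})$ for $\beta = \min(1-\alpha,\alpha/2) > 0$. Picking $\epsilon_0 \in (0,\beta)$ in the definition of stochastic domination, we have $N^{\epsilon_0}(1/\sqrt{B}+B/N) \to 0$, so for $N$ large enough (depending only on $\epsilon$, not on $\nu$) both $\|\tilde{\Deltabs}_N(\nu)\|$ and $\|\Deltabs_N(\nu)\|$ are at most $\epsilon/2$ off an event of exponentially small probability uniform in $\nu$. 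For the Wishart spectrum, the Haagerup inequalities \eqref{eq:concentration-smallest-eig-wishart}--\eqref{eq:concentration-largest-eig-wishart} apply to the whole family $\{\X_N(\nu)\}_{\nu\in[0,1]}$ because their constants are universal; hence the event $\Lambda_{N,\epsilon'}(\nu)$ from \eqref{equation:definition_Xi}, now written with $\X_N(\nu)$, holds with exponentially high probability uniformly in $\nu$ for any fixed $\epsilon'>0$. Taking $\epsilon' = \epsilon/4$ and using $c_N \to c$ to absorb the $\Ocal(|\sqrt{c_N}-\sqrt{c}|)$ mismatch between $\Supp\mu_{MP}^{(c_N)}$ and $\Supp\mu_{MP}^{(c)}$ for $N$ large, the Wishart spectrum lies in $\Supp\mu_{MP}^{(c)}+\epsilon/2$ on this event.

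Intersecting the perturbation-control event and the Wishart-localization event, each of exponentially high probability uniformly in $\nu$, yields a new family of exponentially high probability events on which Weyl's inequality gives
\[
\sigma(\tilde{\C}_N(\nu))\, \cup\, \sigma(\hat{\C}_N(\nu)) \subset \Supp\mu_{MP}^{(c)}+\epsilon,
\]
which is exactly $\Lambda^{\tilde{\C}}_{N,\epsilon}(\nu) \cap \Lambda^{\hat{\C}}_{N,\epsilon}(\nu)$. There is no genuine obstacle here: the corollary is a direct perturbation consequence of Theorem \ref{theorem:C_approximation_Wishart}, and the only bookkeeping is to verify uniformity in $\nu$, which is automatic because neither the constants in Theorem \ref{theorem:C_approximation_Wishart} nor those in the Haagerup concentration bounds depend on the frequency. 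The real work lies upstream, in the proof of Theorem \ref{theorem:C_approximation_Wishart} itself.
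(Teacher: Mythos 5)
Your proof is correct and follows essentially the same route as the paper: it combines the additive stochastic representations of Theorem \ref{theorem:C_approximation_Wishart}, Weyl's inequality (the paper phrases it as the matrix ordering $\frac{\X\X^*}{B+1}-\|\tilde{\Deltabs}\|\I_M\le\tilde{\C}\le\frac{\X\X^*}{B+1}+\|\tilde{\Deltabs}\|\I_M$, which is equivalent), a union bound splitting the event into a Wishart eigenvalue deviation and a perturbation-norm deviation, and then Haagerup concentration plus the stochastic domination $\|\tilde{\Deltabs}\|\prec B/N$, $\|\Deltabs\|\prec 1/\sqrt{B}+B/N$. The only cosmetic difference is that the paper handles $\tilde{\C}$ and $\hat{\C}$ in two separate short corollaries, whereas you treat them in parallel, and your remark about absorbing the $c_N\to c$ mismatch is not needed since the Haagerup bounds as stated already refer to $c$ rather than $c_N$.
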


\begin{remark}
In the following, we will often omit to mention that the various matrices 
under consideration depend on $N$ and $\nu$. Matrices $\hat{\C}_N(\nu), \tilde{\C}_N(\nu), \X_N(\nu), \Deltabs_N(\nu), \ldots$ will therefore 
be denoted by  $\hat{\C}(\nu), \tilde{\C}(\nu), \X(\nu), \Deltabs(\nu), \ldots$ or $\hat{\C}, \tilde{\C}, \X, \Deltabs, \ldots$. We will also denote $\Lambda^{\tilde{\C}}_{N,\epsilon}(\nu)$ and 
$\Lambda^{\hat{\C}}_{N,\epsilon}(\nu)$ by $\Lambda^{\tilde{\C}}_{\epsilon}(\nu)$ or $\Lambda^{\tilde{\C}}_{\epsilon}$ and 
$\Lambda^{\hat{\C}}_{\epsilon}(\nu)$ or $\Lambda^{\hat{\C}}_{\epsilon}$. 
\end{remark}

The proof of Theorem \ref{theorem:C_approximation_Wishart} will proceed in three steps: first we provide the result for matrix $\tilde{\C}(\nu)$, then control the deviations between $\diag(\S(\nu))^{-\frac{1}{2}}$ and $\diag(\hat{\S}(\nu))^{-\frac{1}{2}}$, and finally extend the stochastic representation of $\tilde{\C}(\nu)$ to $\hat{\C}(\nu)$.

\subsection{Step 1: Stochastic representation of \texorpdfstring{$\tilde{\C}$}{C tilde}}
In order to establish (\ref{equation:tildeC_approximation_Wishart}), we prove the 
following Proposition. 
\begin{proposition}
\label{proposition:X_Gamma}
Under Assumptions  \ref{assumption:gaussian_y_n}, \ref{assumption:H0_independence}, \ref{assumption:rate_NBM} and \ref{assumption:regularity}, for any $\nu\in[0, 1]$, there exists an $M\times(B+1)$ random matrix $\X_N(\nu)$ with $\Ncal_\Cbb(0,1)$ i.i.d. entries, and another matrix $\Gammabs_N(\nu)$ such that:
\begin{equation}
\label{equation:Smcor_X_Gamma}
    \tilde{\C}_N(\nu)=\frac{(\X_N(\nu)+\Gammabs_N(\nu))(\X_N(\nu)+\Gammabs_N(\nu))^*}{B+1}
\end{equation}
where the family of random variables $\frac{\|\Gammabs_N(\nu)\|^2}{B+1}, \nu \in [0,1] $ satisfies 
\begin{equation}
    \label{eq:domination-norm-Gamma}
 \frac{\|\Gammabs_N(\nu)\|^2}{B+1} \prec \frac{B^2}{N^2} .  
\end{equation}
\end{proposition}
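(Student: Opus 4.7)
The plan is to construct $\X(\nu)$ by Gaussian whitening of the naturally normalized frequency-domain matrix, and to bound the residual $\Gammabs(\nu)$ via concentration plus an $\varepsilon$-net argument. Concretely, I would form the $M\times(B+1)$ matrix $\Z(\nu)$ whose $(b+B/2+1)$-th column is $\diag(\S(\nu))^{-1/2}\xibs_{\y}(\nu+b/N)$, so that by direct computation $\tilde{\C}(\nu)=\Z(\nu)\Z(\nu)^*/(B+1)$. Under Assumptions \ref{assumption:gaussian_y_n}--\ref{assumption:H0_independence}, $\Z(\nu)$ is complex Gaussian with $M$ independent rows $\omegabs_m(\nu)$, one per scalar time series. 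The central classical estimate to establish is that the row covariance $\T_m(\nu):=\Ebb[\omegabs_m\omegabs_m^*]$ satisfies $\|\T_m(\nu)-\I_{B+1}\|=\Ocal(B/N)$ uniformly in $m\ge 1$ and $\nu\in[0,1]$: a Brillinger-style identity gives $\Ebb\{[\xibs_\y(\nu_1)]_m[\xibs_\y(\nu_2)]_m^*\}=s_m(\nu_1)\delta_{\nu_1,\nu_2}+R_m$ on the Fourier grid $\nu+\{-B/2,\ldots,B/2\}/N$, with $\sup_{m,\nu}|R_m|=\Ocal(1/N)$ coming from the uniform summability $\sup_m\sum_k|k||r_{m,k}|<+\infty$ of \eqref{eq:condition-R}; combined with the Taylor bound $s_m(\nu+b/N)/s_m(\nu)=1+\Ocal(b/N)$, uniform via \eqref{eq:s-bounded-away-from-zero}--\eqref{eq:derivative-spectral-densities}, the diagonal of $\T_m(\nu)-\I$ is $\Ocal(B/N)$ and the off-diagonal is $\Ocal(1/N)$, whence the claim.

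For $N$ large enough, $\T_m(\nu)\succ 0$; I then set $\x_m(\nu):=\T_m(\nu)^{-1/2}\omegabs_m(\nu)\sim\Ncal_{\Cbb}(0,\I_{B+1})$. Independence across $m$ is preserved, so the $M\times(B+1)$ matrix $\X(\nu)$ whose rows are $\x_m(\nu)^T$ has i.i.d.\ $\Ncal_{\Cbb}(0,1)$ entries as required. Setting $\Gammabs(\nu):=\Z(\nu)-\X(\nu)$, its $m$-th row equals $(\T_m(\nu)^{1/2}-\I)\x_m$, and standard matrix functional calculus gives $\|\T_m(\nu)^{1/2}-\I\|=\Ocal(\|\T_m(\nu)-\I\|)=\Ocal(B/N)$ uniformly in $m,\nu$.

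For the bound $\|\Gammabs(\nu)\|^2/(B+1)\prec(B/N)^2$, I proceed variationally. For fixed unit $v\in\Cbb^{B+1}$, $\|\Gammabs(\nu)v\|^2=\sum_m|\x_m^T u_m|^2$ with $u_m:=(\T_m(\nu)^{1/2}-\I)^T v$ satisfying $\|u_m\|=\Ocal(B/N)$, hence $\Ebb\|\Gammabs v\|^2\le M\max_m\|u_m\|^2=\Ocal(B^3/N^2)$. Rewritten as a quadratic form $\bar{\x}^*\A\bar{\x}$ in the concatenated Gaussian $\bar{\x}:=[\x_1;\ldots;\x_M]\sim\Ncal_{\Cbb}(0,\I_{M(B+1)})$ with block-diagonal Hermitian $\A=\bigoplus_m\overline{u_m}u_m^T$ satisfying $\|\A\|_F=\Ocal(B^{5/2}/N^2)$ and $\|\A\|=\Ocal(B^2/N^2)$, the complex Hanson-Wright inequality \eqref{equation:hanson_wright_stochastic_domination} yields $\bigl|\|\Gammabs v\|^2-\Ebb\|\Gammabs v\|^2\bigr|\prec B^{5/2}/N^2\ll B^3/N^2$, so $\|\Gammabs v\|^2\prec B^3/N^2$ at each fixed $v$. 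A $1/2$-net $\mathcal{N}$ of the unit sphere of $\Cbb^{B+1}$ has cardinality $\exp(CN^\alpha)$, and the exponential Hanson-Wright tail at scale $N^\varepsilon B^3/N^2$ is of order $\exp(-CN^{\varepsilon+\alpha})$, which overwhelms $|\mathcal{N}|$; a union bound together with the Lipschitz estimate $\sup_v\|\Gammabs v\|^2\le 4\sup_{v\in\mathcal{N}}\|\Gammabs v\|^2$ yields the desired $\|\Gammabs(\nu)\|^2/(B+1)\prec(B/N)^2$. The main obstacle is the uniform-in-$(m,\nu)$ Brillinger bound, whose constants must be controlled independently of the spectral density $s_m$ using \eqref{eq:condition-R}; note also that the naive bound $\|\Gammabs\|^2\le\|\Gammabs\|_F^2=\Ocal(MB^3/N^2)=\Ocal(B^4/N^2)$ is too weak by a factor of $B$, which is why the operator-norm argument via an $\varepsilon$-net is essential rather than a direct Frobenius control.
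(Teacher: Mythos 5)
Your proof is correct and follows essentially the same route as the paper: a Brillinger-type covariance estimate giving $\|\T_m-\I\|=\Ocal(B/N)$ uniformly in $m$ and $\nu$, a Gaussian whitening of the rows to produce $\X$ and the residual $\Gammabs$ with row norms $\Ocal(B/N)$, and then concentration of the spectral norm of $\Gammabs$ via Hanson--Wright combined with an $\varepsilon$-net and a union bound that exploits the $\exp(-CN^\varepsilon B)$ tail dominating the $\exp(CB)$ net cardinality. The only inessential difference is that you net the $(B+1)$-dimensional unit sphere to control $\sup_v\|\Gammabs v\|^2$ whereas the paper nets the $M$-dimensional sphere to control $\sup_{\g}\g^*(\Gammabs\Gammabs^*/(B+1))\g$; since $M\asymp B$ and $\|\Gammabs\Gammabs^*\|=\|\Gammabs^*\Gammabs\|$ these are interchangeable, and your remark that a naive Frobenius bound loses a factor of $B$ correctly identifies why the $\varepsilon$-net step is needed.
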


\begin{proof}
Denote by $\Sigmabs$ the $M\times(B+1)$ random matrix defined by
\begin{equation}
    \label{equation:definition_Sigma}
    \Sigmabs=\left(\xibs_\y(\nu-\frac{B}{2N}),\ldots,\xibs_\y(\nu+\frac{B}{2N})\right)
\end{equation}
where we recall that the normalized Fourier transform $\xibs_\y$ is defined in \eqref{eq:def-fourier-transform-xi}, so that $\hat{\S}$ defined in \eqref{eq:def-spectral-estimate} is equal to $\Sigmabs\Sigmabs^*/(B+1)$. Denote by $\omegabs_m$ the $m$--th row of $\Sigmabs$. In other words, $\omegabs_m$ coincides with the $(B+1)$--dimensional Gaussian complex row vector defined by:
$$\omegabs_m = \left(\xi_{y_m}(\nu-\frac{B}{2N}), \ldots ,\xi_{y_m}(\nu + \frac{B}{2N})\right).$$

The covariance matrix $\Ebb[\omegabs_m^{*} \omegabs_m]$ of $\omegabs$ is given by:
$$
 \Ebb[\omegabs_m^*\omegabs_m] = \Ebb\left[\left\{\xi_{y_m}(\nu+\frac{b_1}{N})^*\xi_{y_m}(\nu+\frac{b_2}{N})\right\}_{b_1,b_2=-B/2}^{B/2}\right]
$$
%\begin{align*}
%    \Ebb[\omegabs_m^*\omegabs_m] &=
%    \Ebb\left[\begin{pmatrix} 
%    \xi_{y_m}(\nu-B/2N)^* \\
%    \vdots \\
%    \xi_{y_m}(\nu+B/2N)^* 
%    \end{pmatrix} \cdot \begin{pmatrix} 
%    \xi_{y_m}(\nu-\frac{B}{2N}) & \ldots & \xi_{y_m}(\nu+\frac{B}{2N})
%    \end{pmatrix}\right] \\
%    &= \Ebb\left[\left\{\xi_{y_m}(\nu+\frac{b_1}{N})^*\xi_{y_m}(\nu+\frac{b_2}{N})\right\}_{b_1,b_2=-B/2}^{B/2}\right]
%\end{align*}  

By Lemma \ref{lemma:brillinger_uniformity_covariance} in Appendix, we have for $b$ and $b_1\neq b_2$:
\begin{align*}
    &\Ebb\left[\left|\xi_{y_m}\left(\nu+\frac{b}{N}\right)\right|^2\right] = s_m\left(\nu+\frac{b}{N}\right) + \Ocal\left(\frac{1}{N}\right) \\
    &\Ebb \left[\xi_{y_m}(\nu+\frac{b_1}{N})^*\xi_{y_m}(\nu+\frac{b_2}{N})\right] = \Ocal\left(\frac{1}{N}\right) 
\end{align*}
where the error is uniform over $m\ge1$ and $\nu \in [0,1]$. Therefore one can claim that there exists some Hermitian matrix $\Upsilonbs_m(\nu)$ and some nice constant $C$ such that:
\[ \Ebb[\omegabs_m^*\omegabs_m] = \diag\left(s_m\left(\nu+\frac{b}{N}\right): b=-B/2,\ldots,B/2\right)+\Upsilonbs_m\]
where $\Upsilonbs_m$ satisfies
\[ \sup_{m\ge1,b_1,b_2}\left|\left(\Upsilonbs_m\right)_{b_1,b_2}\right| \le \frac{C}{N}. \]

Moreover, the regularity of the mappings $\nu\mapsto s_m(\nu)$ specified in Assumption \ref{assumption:regularity} implies that there exists quantities $\epsilon_m$ such that:
$$
s_m(\nu + \frac{b}{N}) = s_m(\nu) + s_m'(\nu) \, \frac{b}{N} + \frac{1}{2} \, s_m''(\nu) \, (\frac{b}{N})^{2} + \epsilon_m(\nu + \frac{b}{N})
$$
where: 
$$\sup_{m\ge1}\sup_{-B/2\le b\le B/2}|\epsilon_m(\nu + \frac{b}{N})| \le C\,\left(\frac{B}{N}\right)^{3}$$ 
for some nice constant $C$. Therefore, it holds that 
\begin{multline*}
\diag\left(s_m(\nu + \frac{b}{N}): b=-B/2, \ldots, B/2 \right) \\ = s_m(\nu) \I_{B+1} + s_m'(\nu) \;  \diag\left( \frac{b}{N}: b=-B/2, \ldots, B/2 \right) + \\
\frac{1}{2} \, s_m''(\nu) \;  \diag\left( (\frac{b}{N})^{2}: b=-B/2, \ldots, B/2 \right) + \\
\diag\left( \epsilon_m(\nu + \frac{b}{N}):  b=-B/2, \ldots, B/2 \right).
\end{multline*}

If we define matrix $\Phibs_m$ as:
$$\Phibs_m = \frac{1}{s_m}\left[ \Upsilonbs_m + \diag\left( \s_m(\nu + \frac{b}{N}) -s_m(\nu): b=-B/2, \ldots, B/2 \right) \right]$$ 
then $\Ebb[\omegabs_m^{*} \omegabs_m] = s_m \left( \I_{B+1} + \Phibs_m \right)$ with 
\begin{equation}
\label{eq:expre-cov-omegam}
    \sup_{m\ge1, b_1\neq b_2}|(\Phibs_{m})_{b_1,b_2}| \le \frac{C}{N}, \quad  \sup_{m\ge1, b}|(\Phibs_{m})_{b,b}| \le \frac{CB}{N}
\end{equation}
as well as 
\begin{equation}
    \label{eq:expre-Trace-Phim}
    \frac{1}{B+1} \Tr \Phibs_{m} = \frac{1}{2} \frac{s_m''(\nu)}{s_m(\nu)} v_N + \Ocal\left(\left(\frac{B}{N}\right)^{3} + \frac{1}{N} \right) 
\end{equation}
where we recall that $v_N$ is defined by (\ref{eq:def-vN}). 
The spectral norm of $\Phibs_m$ can be roughly bounded by the following inequality:
$$\sup_{m\ge1}\|\Phibs_m\| \le \sup_{m\ge1}\sup_{-B/2\le b_1\le B/2}\sum_{b_2=-B/2}^{B/2}\left|(\Phibs_m\right)_{b_1,b_2}|\le C\frac{B}{N}.$$
Moreover, it is easily checked that the Frobenius norm of $\frac{\Phibs_m}{B+1}$ satisfies
\begin{equation}
    \label{eq:frobenius-norm-phim}
    \left\|\frac{\Phibs_m}{B+1}\right\|_{F} \leq C \, \frac{\sqrt{B}}{N} = \Ocal(u_N) .  
\end{equation}

Using the Gaussianity of the vector $\omegabs_m$ and the expression (\ref{eq:expre-cov-omegam}), we obtain that $\omegabs_m$ can be represented as 
\begin{equation}
\label{equation:representation-etam}
\omegabs_m = \sqrt{s_m} \,  \x_m \, \left( \I + \Phibs_m \right)^{1/2}, \quad \x_m \sim \Ncal_\Cbb(0,I_{B+1})
\end{equation}
where $\x_{m_1}$ and $\x_{m_2}$ are independent for $m_1\neq m_2$. This comes from the mutual independence of the time series $((y_{m,n})_{n\in\Zbb})_{m=1, \ldots, M}$. It is clear that 
$\left( \I + \Phibs_m \right)^{1/2}$ can be written as 
\begin{equation}
    \label{eq:expre-IplusPhi1/2}
 \left( \I + \Phibs_m \right)^{1/2} = \I + \Psibs_m   
\end{equation}
where the matrix $\Psibs_m$ satisfies
\begin{equation}
    \label{eq:controle-norm-Psi}
    \sup_{m} \| \Psibs_m \| \leq C \, \frac{B}{N}
\end{equation}
Therefore, it holds that:
$$ \omegabs_m = \sqrt{s_m} \x_m \left( \I +  \Psibs_m \right) = \sqrt{s_m} \left(\x_m + \x_m \Psibs_m \right)  $$
We denote by  $\X$ and $\Gammabs$
the $M \times (B+1)$ matrices with rows $(\x_m)_{m=1, \ldots, M}$, and $\left( \x_m \Psibs_m \right)_{m=1, \ldots, M}$ respectively. Then, it holds that 
\begin{equation}
\label{eq:representation-Sigma}
\Sigmabs = \diag\left( \sqrt{s_m}, m=1, \ldots, M \right) \left( \X + \Gammabs \right)
\end{equation}
where we recall that $\Sigmabs$ is defined by \eqref{equation:definition_Sigma}. We recall the definition of the matrix $\tilde{\C}$ given by 
\begin{eqnarray}
    \label{eq:useful-expression-tildeC}
 \tilde{\C} & =  & \diag(\sqrt{s_m}, m=1,\ldots,M)^{-1/2}  \; \hat{\S} \; \diag(\sqrt{s_m}, m=1,\ldots,M)^{-1/2}   \\
 & =  & \diag(\sqrt{s_m}, m=1,\ldots,M)^{-1/2} \frac{\Sigmabs \Sigmabs^*}{B+1}\diag(\sqrt{s_m}, m=1,\ldots,M)^{-1/2}.
 \nonumber
 \end{eqnarray}
The representation \eqref{eq:representation-Sigma} implies that $\tilde{\C}$ can also be written as 
$$\tilde{\C} = \frac{(\X+\Gammabs)(\X+\Gammabs)^*}{B+1}. $$
Equivalently, for each $m_1,m_2$, the entry $(\tilde{\C})_{m_1,m_2}$ is given by 
\begin{equation}
    \label{eq:expre-entries-tildeC}
    (\tilde{\C})_{m_1,m_2} = \frac{1}{B+1} \x_{m_1} (\I + \Phibs_{m_1})^{1/2} (\I + \Phibs_{m_2})^{1/2} \x_{m_2}^{*}.
\end{equation}

This completes the proof of (\ref{equation:Smcor_X_Gamma}). It remains to show \eqref{eq:domination-norm-Gamma}. We denote by $\Z$ the $M \times M$ matrix 
$\Z = \frac{1}{B+1}\Gammabs\Gammabs^*$. As $\|\Z\|$ satisfies
$$ \|\Z\| \le \|\Z-\Ebb\Z\| + \|\Ebb\Z\| $$
it is enough to prove the two following facts:
\begin{equation}
    \label{eq:esperance-Z}
    \|\Ebb\Z\| \leq C \, \frac{B^2}{N^2}
\end{equation}
\begin{equation}
    \label{eq:domination-stochastique-Z}
    \|\Z-\Ebb\Z\| \prec \frac{B^2}{N^2}.
\end{equation}

We start with \eqref{eq:esperance-Z}. The definition of $\Gammabs$ leads to
$$
    \Ebb[\Z_{i,j}] = \frac{1}{B+1}\Ebb[\Gammabs\Gammabs^*]_{i,j} 
    =  \frac{1}{B+1}\Ebb[\x_i \Psibs_i \Psibs_j^*\x_j^*] 
    = \delta_{ij}\frac{1}{B+1}\Tr\Psibs_i\Psibs_j^*
$$
so that it is clear that $\Ebb[\Z]$ is the diagonal matrix with diagonal entries $(\frac{1}{B+1} \Tr \Psibs_m \Psibs_m^*)_{m=1, \ldots, M} $. By the estimation in equation (\ref{eq:controle-norm-Psi}), we easily have \eqref{eq:esperance-Z}. 

It remains to prove \eqref{eq:domination-stochastique-Z}. We use the observation 
that $\|\Z-\Ebb[\Z]\| = \max_{\| {\bf h} \| = 1} \left| {\bf h}^{*}(\Z-\Ebb[\Z]){\bf h} \right|$, and use a classical $\epsilon$--net argument that allows to 
deduce the behaviour of  $\|\Z-\Ebb[\Z]\|$ from the behaviour of any recentered quadratic 
form $\g^*\Z\g -\Ebb\g^*\Z\g$ where  $\g\in\Cbb^M$ is a deterministic unit norm vector. We thus first concentrate $\g^*\Z\g -\Ebb\g^*\Z\g$ using the Hanson-Wright inequality \eqref{equation:hanson_wright_stochastic_domination}. For this, we need to express $\g^*\Z\g$ as a quadratic form of a certain complex Gaussian random vector with i.i.d. entries. We denote by $\z$ the $M$--dimensional random vector  $\z=\frac{\Gammabs^*(\nu)\g}{\sqrt{B+1}}$. Its covariance matrix $\G = \G(\nu)$ is equal to 
$$\G(\nu)=\Ebb[\z\z^*]=\frac{1}{B+1} \sum_{m=1}^{M} |\g_m|^{2}  (\Psibs_m(\nu))^* \Psibs_m(\nu).$$
Therefore, $\z$ can be written as $\z=\G^{1/2}\w$ for some $\w\sim\Ncal_\Cbb(0,{\bf I}_M)$ random vector. As a consequence, the quadratic form $\g^*\Z\g -\Ebb\g^*\Z\g$ can be written as
$$ \g^*\Z\g -\Ebb\g^*\Z\g = \w^*\G\w-\Ebb\w^*\G\w.$$

The Hanson-Wright inequality \eqref{equation:hanson_wright_stochastic_domination} can now be applied:
\begin{equation}
    \label{equation:hanson_wGw}
    |\w^*\G\w-\Ebb\w^*\G\w| \prec \|\G\|_F.
\end{equation}

Since $\sum_{m=1}^M|\g_m|^2=1$, it is clear that $\|\G\|\le\frac{1}{B+1} \sup_{m=1,\ldots,M} \|  \Psibs_m(\nu) \|^2$.
Therefore, \eqref{eq:controle-norm-Psi} and the rough bound $\|\G\|^2_F\le (B+1) \|\G\|^2$ leads to
\begin{equation}
\label{equation:estimees_norme_G}
    \|\G\| \leq C \, \frac{1}{B+1}\left(\frac{B}{N}\right)^2, \quad \|\G\|^2_F \leq C \, \frac{1}{B+1} \left(\frac{B}{N}\right)^4
\end{equation}

The substitution of \eqref{equation:estimees_norme_G} in equation \eqref{equation:hanson_wGw} gives the following control of 
$\g^*\Z\g -\Ebb\g^*\Z\g$:
\begin{equation}
    \label{equation:domination_polynomiale}
    |\g^*\Z\g -\Ebb\g^*\Z\g|\prec \frac{1}{\sqrt{B}}\left(\frac{B}{N}\right)^2
\end{equation}

Consider $\epsilon > 0$, and an $\epsilon$--net $N_\epsilon$ of $\mathbb{C}^{M}$, that is a set of $\Cbb^M$ unit norm vectors $\{\h_k:k=1,\ldots,\mathcal{K}\}$  such that for each unit norm vector $\u\in\Cbb^{M}$, there exists a vector $\h\in N_\epsilon$ for which $\|\u-\h\|\le\epsilon$. It is well 
known that the cardinality of $N_\epsilon$ is bounded by $C_0 \left(\frac{1}{\epsilon}\right)^{2M}$ where $C_0$ is a universal constant. Then, denote $\g_s$ a (random) unit norm vector 
such that $|\g_s^*\Z\g_s -\Ebb\g_s^*\Z\g_s| = \| \Z - \Ebb \Z \|$, and define $\h_s\in\N_\epsilon$ as the closest vector from $\g_s$. Therefore, we have
\begin{align*}
    \|\Z-\Ebb\Z\| &= |\g_s^*(\Z-\Ebb\Z)\g_s| \\
    &= |(\g_s^*-\h_s^*+\h_s^*)(\Z-\Ebb\Z)(\g_s-\h_s+\h_s)| \\
    &\le |(\g_s^*-\h_s^*)(\Z-\Ebb\Z)(\g_s-\h_s)|+ |(\g_s^*-\h_s^*)(\Z-\Ebb\Z)\h_s|\\
    &\hskip2cm+|\h_s^*(\Z-\Ebb\Z)(\g_s-\h_s)|+ |\h_s^*(\Z-\Ebb\Z)\h_s|.
\end{align*}
It is clear that:
\begin{align*}
    &|(\g_s^*-\h_s^*)(\Z-\Ebb\Z)(\g_s-\h_s)| \le \epsilon^2\|\Z-\Ebb\Z\| \\
    &|(\g_s^*-\h_s^*)(\Z-\Ebb\Z)\h_s|\le \epsilon\|\Z-\Ebb\Z\|
\end{align*}
and 
$$ \|\Z-\Ebb\Z\|\le|\h_s^*(\Z-\Ebb\Z)\h_s|+\epsilon^2\|\Z-\Ebb\Z\|+2\epsilon\|\Z-\Ebb\Z\|$$
which leads to 
$$ (1-2\epsilon-\epsilon^2)\|\Z-\Ebb\Z\| \le |\h_s^*(\Z-\Ebb\Z)\h_s|.$$

This implies that for each $t > 0$, 
$$
\{ \|\Z-\Ebb\Z\|  > t \} \subset \cup_{ h \in N_\epsilon} \{ |\h^*(\Z-\Ebb\Z)\h| > C_1 t \}
$$
where $C_1 = (1-2\epsilon-\epsilon^2)$. Using the union bound, we obtain that 
\begin{equation}
\label{eq:union-bound-epsilon-net}
\Prob \left[  \|\Z-\Ebb\Z\|  > t \right] \leq \sum_{h \in N_\epsilon} 
\Prob \left[ |\h^*(\Z-\Ebb\Z)\h| > C_1 t \right].
\end{equation}
Here, we would like to use equation \eqref{equation:domination_polynomiale}. By the definition of $\prec$, \eqref{equation:domination_polynomiale} is valid uniformly on any set of vector with cardinality polynomial in $N$. Here, the cardinality of the set $N_\epsilon$ is a $\Ocal(\epsilon^{-2M})$ term and therefore exponential in $M$. As a consequence, we have to accept to lose some speed when going from the stochastic domination of $|\g^*(\Z-\Ebb\Z)\g|$ for a fixed $\g$ to the same stochastic domination but uniformly over $N_\epsilon$.

More specifically, write again \eqref{equation:domination_polynomiale} but here without the notation $\prec$ in order to understand precisely how a change in speed affects the probability. Take $t_N$ a sequence of positive numbers such that $t_N\ge B^2/N^2$. Using the estimates \eqref{equation:estimees_norme_G} of $\|\G\|$ and $\|\G\|_F^2$, and the fact that $\min(a_1,a_2)>\min(b_1,b_2)$
when $a_1>b_1$ and $a_2 > b_2$, we obtain that there exists some nice constant $C>0$ such that:
\begin{multline*}
    \min\left(\frac{t_N}{\|\G\|}, \frac{t_N^2}{\|\G\|_F^{2}}\right) \ge C\, B\min\left(t_N \left(\frac{N}{B}\right)^2, \left(t_N\left(\frac{N}{B}\right)^2\right)^2 \right) \\ = C\, B\, t_N \left(\frac{N}{B}\right)^2.
\end{multline*} 
The Hanson-Wright inequality \eqref{eq:hanson-wright} provides:
$$ \Prob\left[|\g^*\left[\Z-\Ebb\Z)\right]\g|>C_1 t_N\right]\le 2 \, \exp\left\{-CB \frac{t_N}{(B/N)^2}\right\} $$
for some nice constant $C$ that depends on $C_1$. 
Finally, the union bound on $N_\epsilon$ gives:
\begin{multline}
    \label{eq:concentration-normZ-EZ}
    \Prob \left[ \|\Z - \mathbb{E}(\Z) \| >  t_N \right] \leq \sum_{h \in N_\epsilon} \Prob \left[ |\h^*(\Z-\Ebb\Z)\h| > C_1 t_N \right] \\ \le 2 \, C_0 \exp\left\{-CB \frac{t_N}{(B/N)^2} + 2M\log\frac{1}{\epsilon}\right\} .
\end{multline}
If we take $t_N=N^{\epsilon'} (B^2/N^2)$, then, there exists $\gamma > 0$ such that 
$$ \exp\left\{-CB \frac{t_N}{(B/N)^2} + 2CM\log\frac{1}{\epsilon}\right\} \le \exp-N^\gamma$$ 
holds for each $N$ large enough. (\ref{eq:union-bound-epsilon-net}) thus implies  \eqref{eq:domination-stochastique-Z}. This completes the proof of \eqref{equation:Smcor_X_Gamma}.
\end{proof}

Corollary \ref{corollary:concentration_delta} is a rewriting of Proposition \ref{proposition:X_Gamma} in a more concise way. Define:
\begin{equation}
    \label{definition:tilde_Deltabs}
    \tilde{\Deltabs}=\frac{\X\Gammabs^*+\Gammabs\X^*+\Gammabs\Gammabs^*}{B+1}.
\end{equation}

\begin{corollary}
\label{corollary:concentration_delta}
For any $\nu\in[0, 1]$, $\tilde{\C}(\nu)$ can be written as 
\begin{equation}
    \label{eq:representation-tildeC}
    \tilde{\C}(\nu)=\frac{\X(\nu)\X^*(\nu)}{B+1}+\tilde{\Deltabs}(\nu)
\end{equation}
where the family of random variable $\|\tilde{\Deltabs}(\nu)\|, \, \nu \in [0,1]$ satisfies
\begin{equation}
\label{eq:concentration-tilde-Delta}
\quad \|\tilde{\Deltabs}\| \prec \frac{B}{N}.
\end{equation}
\end{corollary}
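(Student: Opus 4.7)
The plan is to read off the corollary almost directly from Proposition \ref{proposition:X_Gamma} by expanding the product $(\X+\Gammabs)(\X+\Gammabs)^{*}$ and controlling each of the three cross/quadratic terms by stochastic domination. First I would write
\[
\tilde{\C}(\nu) \;=\; \frac{(\X(\nu)+\Gammabs(\nu))(\X(\nu)+\Gammabs(\nu))^{*}}{B+1} \;=\; \frac{\X(\nu)\X^{*}(\nu)}{B+1} \;+\; \tilde{\Deltabs}(\nu),
\]
with $\tilde{\Deltabs}(\nu)$ given exactly by the definition \eqref{definition:tilde_Deltabs}. So the only real work is to verify that $\|\tilde{\Deltabs}(\nu)\| \prec \frac{B}{N}$ uniformly in $\nu$.

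Next, I would apply the triangle inequality and the submultiplicativity of the spectral norm to get
\[
\|\tilde{\Deltabs}(\nu)\| \;\leq\; 2\,\frac{\|\X(\nu)\|\,\|\Gammabs(\nu)\|}{B+1} \;+\; \frac{\|\Gammabs(\nu)\|^{2}}{B+1}.
\]
The second term is directly controlled by \eqref{eq:domination-norm-Gamma}, which already yields $\|\Gammabs(\nu)\|^{2}/(B+1)\prec (B/N)^{2}$, and since $B/N\to 0$ this is in particular $\prec B/N$. For the cross term I would factor it as
\[
\frac{\|\X(\nu)\|\,\|\Gammabs(\nu)\|}{B+1} \;=\; \frac{\|\X(\nu)\|}{\sqrt{B+1}} \cdot \frac{\|\Gammabs(\nu)\|}{\sqrt{B+1}}.
\]
The first factor satisfies $\|\X(\nu)\|/\sqrt{B+1}\prec 1$ uniformly in $\nu$ by the Haagerup-type bound \eqref{eq:stoc-domination-eig-sing-wishart} (note that $\X(\nu)$ has i.i.d.\ $\Ncal_{\Cbb}(0,1)$ entries for each $\nu$, and the constant in \eqref{eq:concentration-largest-eig-wishart} is universal, so stochastic domination holds uniformly over $\nu\in[0,1]$). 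The second factor is $\prec B/N$ by \eqref{eq:domination-norm-Gamma} (taking square roots preserves $\prec$ on nonnegative quantities, which is immediate from the definition). Combining these through the product rule of Lemma \ref{lemma:algebra_domination} gives $\|\X(\nu)\|\,\|\Gammabs(\nu)\|/(B+1)\prec B/N$.

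Finally, combining the two estimates using the sum rule of Lemma \ref{lemma:algebra_domination} yields $\|\tilde{\Deltabs}(\nu)\|\prec B/N$, as claimed. There is no substantive obstacle here: the only subtle point is ensuring that all the stochastic dominations hold uniformly in $\nu\in[0,1]$, which is handled by the observation made at the end of Paragraph \ref{subsubsection:concentration-haagerup} that the Haagerup-type constants in \eqref{eq:concentration-smallest-eig-wishart}--\eqref{eq:concentration-largest-eig-wishart} are universal, together with the fact that Proposition \ref{proposition:X_Gamma} itself is stated uniformly in $\nu$.
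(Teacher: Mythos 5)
Your proof is correct and follows essentially the same route as the paper: both rely on the bound $\|\Gammabs\|/\sqrt{B+1}\prec B/N$ from \eqref{eq:domination-norm-Gamma}, the Haagerup-type bound $\|\X\|/\sqrt{B+1}\prec 1$ from \eqref{eq:stoc-domination-eig-sing-wishart}, and the algebra of stochastic domination applied to the three terms in \eqref{definition:tilde_Deltabs}. You simply spell out the triangle/submultiplicativity step that the paper calls ``immediate,'' and your side remark that $\prec$ is stable under taking square roots of nonnegative families is a valid (and useful) observation.
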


\begin{proof}
Let $\nu\in[0, 1]$. 
By equation \eqref{eq:domination-norm-Gamma} from Theorem \ref{theorem:C_approximation_Wishart} and equation \eqref{eq:stoc-domination-eig-sing-wishart} from Paragraph \ref{section:haagerup}, we have the two following estimates: $$\frac{\|\Gammabs\|}{\sqrt{B+1}} \prec \frac{B}{N}, \quad  \frac{\|\X\|}{\sqrt{B+1}} \prec 1$$
The result is immediate using decomposition $\tilde{\Deltabs}$ from \eqref{definition:tilde_Deltabs}:
\end{proof}

We now take benefit of Corollary \ref{corollary:concentration_delta} to establish 
the first part of Corollary \ref{coro:localisation-eigenvalues-hatC-tildeC} and to 
analyse the location of the eigenvalues of matrices $\hat{\S}$. We denote by $\D$ and $\hat{\D}$ the matrices $\D = \D(\nu):=\diag(\S(\nu))^{\frac{1}{2}}$ and $\hat{\D} = \hat{\D}(\nu):=\diag(\hat{\S}(\nu))^{\frac{1}{2}}$. Denote by $\bar{s}$ and $\barbelow{s}$ the quantities such that:
$$ \barbelow{s}:=\inf_{m\ge1}\inf_{\nu\in[0,1]}s_m(\nu), \quad \bar{s}:=\sup_{m\ge1}\sup_{\nu\in[0,1]}s_m(\nu) $$
which are by Assumption \ref{assumption:regularity} in $(0,+\infty)$. We consider the event:
\begin{equation}
    \label{equation:definition_Lambda_S} \Lambda_\epsilon^{\hat{\S}}(\nu)=\left\{\sigma(\hat{\S}(\nu))\subset\Supp\mu_{MP}^{(c)}\times[\barbelow{s},\bar{s}]+\epsilon\right\} 
\end{equation}
where the notation $\Supp\mu_{MP}^{(c)}\times[\barbelow{s},\bar{s}]$ stands for $[(1-\sqrt{c})^2\barbelow{s},(1+\sqrt{c})^2\bar{s}]$.
Note that in our settings, $c\in(0,1)$ so $\Supp\mu_{MP}^{(c)}$ is bounded and away from zero. In conjunction with Assumption \ref{assumption:regularity}, the same holds for $\Supp\mu_{MP}^{(c)}\times[\barbelow{s},\bar{s}]$. We also note that $\Lambda_\epsilon^{\hat{\S}}(\nu)$ of course depends on $N$. 
\begin{corollary}
\label{corollary:spectre-tildeC-spectre-hatS}
For any $\epsilon>0$, the families of events $\Lambda^{\tilde{\C}}_\epsilon(\nu)$, $\nu\in[0,1]$ and $\Lambda_\epsilon^{\hat{\S}}(\nu)$, $\nu\in[0,1]$ hold with exponentially high probability.
\end{corollary}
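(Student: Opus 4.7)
The plan is to deduce both localization statements directly from the stochastic representation of $\tilde{\C}(\nu)$ provided by Corollary \ref{corollary:concentration_delta}, combined with the Haagerup-type concentration of the extreme eigenvalues of $\frac{\X(\nu)\X^*(\nu)}{B+1}$ recalled in Paragraph \ref{subsubsection:concentration-haagerup}, and then transfer the bound from $\tilde{\C}(\nu)$ to $\hat{\S}(\nu)$ by using the similarity-type relation $\hat{\S}(\nu) = \D(\nu) \tilde{\C}(\nu) \D(\nu)$ with $\D(\nu) = \diag(\S(\nu))^{1/2}$.

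First I would handle $\tilde{\C}(\nu)$. By Corollary \ref{corollary:concentration_delta}, one writes $\tilde{\C}(\nu) = \frac{\X(\nu)\X^*(\nu)}{B+1} + \tilde{\Deltabs}(\nu)$ with $\|\tilde{\Deltabs}(\nu)\| \prec B/N$, uniformly in $\nu \in [0,1]$. Since $B = \Ocal(N^\alpha)$ with $\alpha < 1$, one can pick $\epsilon' \in (0, 1-\alpha)$, and then with exponentially high probability uniformly in $\nu$ one has $\|\tilde{\Deltabs}(\nu)\| \le N^{\epsilon'} B/N \le \epsilon/2$ for all $N$ large enough. Next, by the remark at the end of Paragraph \ref{subsubsection:concentration-haagerup} (based on \eqref{eq:concentration-smallest-eig-wishart}--\eqref{eq:concentration-largest-eig-wishart}), the event $\Lambda_{N,\epsilon/2}(\nu)$ — stating that $\sigma(\frac{\X(\nu)\X^*(\nu)}{B+1}) \subset \Supp\mu_{MP}^{(c)} + \epsilon/2$ — holds with exponentially high probability, uniformly in $\nu$, because the constant appearing in the Haagerup bounds is universal. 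By Weyl's inequality for Hermitian matrices, the eigenvalues of $\tilde{\C}(\nu)$ and $\frac{\X(\nu)\X^*(\nu)}{B+1}$ differ by at most $\|\tilde{\Deltabs}(\nu)\|$. Intersecting the two exponentially high probability events therefore yields $\sigma(\tilde{\C}(\nu)) \subset \Supp\mu_{MP}^{(c)} + \epsilon$ uniformly in $\nu$, which is precisely $\Lambda_\epsilon^{\tilde{\C}}(\nu)$.

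To transfer the bound to $\hat{\S}(\nu)$, I would exploit the identity $\hat{\S}(\nu) = \D(\nu) \tilde{\C}(\nu) \D(\nu)$, which follows from \eqref{eq:useful-expression-tildeC}. Since $\D(\nu)$ is diagonal positive definite with $\sqrt{\barbelow{s}} \le \D_{mm}(\nu) \le \sqrt{\bar{s}}$ by Assumption \ref{assumption:regularity}, the extreme eigenvalues satisfy
\[
\lambda_M(\hat{\S}(\nu)) \ge \barbelow{s}\,\lambda_M(\tilde{\C}(\nu)), \qquad \lambda_1(\hat{\S}(\nu)) \le \bar{s}\,\lambda_1(\tilde{\C}(\nu)),
\]
as one sees directly from the Rayleigh quotient (for the upper bound, $x^*\hat{\S} x = (\D x)^* \tilde{\C} (\D x) \le \|\tilde{\C}\| \|\D x\|^2 \le \bar{s}\|\tilde{\C}\|$; for the lower bound, apply the same argument to $\hat{\S}^{-1} = \D^{-1}\tilde{\C}^{-1}\D^{-1}$). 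Applying the result of the previous paragraph with $\epsilon$ replaced by $\epsilon/\bar{s}$ (so that the additive slack is absorbed in the prescribed $\epsilon$ for $\hat{\S}$), one obtains, with exponentially high probability uniformly in $\nu$, the inclusion
\[
\sigma(\hat{\S}(\nu)) \subset \bigl[\barbelow{s}(1-\sqrt{c})^2 - \epsilon,\; \bar{s}(1+\sqrt{c})^2 + \epsilon\bigr] = \Supp\mu_{MP}^{(c)} \times [\barbelow{s},\bar{s}] + \epsilon,
\]
which is exactly $\Lambda_\epsilon^{\hat{\S}}(\nu)$.

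There is no serious obstacle in this proof; the only point requiring some care is ensuring that every probability estimate is uniform in $\nu$. This is not an issue because the stochastic domination $\|\tilde{\Deltabs}(\nu)\| \prec B/N$ given by Corollary \ref{corollary:concentration_delta} is uniform in $\nu$ by definition, and the Haagerup-type bounds \eqref{eq:concentration-smallest-eig-wishart}--\eqref{eq:concentration-largest-eig-wishart} involve a universal constant, so they apply uniformly to the family $\X(\nu)$, $\nu \in [0,1]$. The final conclusion for both families of events then follows by intersecting two families of events of exponentially high probability.
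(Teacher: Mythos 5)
Your proof is correct and follows essentially the same route as the paper: sandwich $\tilde{\C}(\nu)$ between $\frac{\X\X^*}{B+1}\pm\|\tilde{\Deltabs}\|\I_M$ (which is Weyl's inequality), combine the Haagerup eigenvalue concentration with the stochastic domination of $\|\tilde{\Deltabs}\|$, then transfer to $\hat{\S}(\nu)=\D\tilde{\C}\D$ via the eigenvalue bounds $\barbelow{s}\,\lambda_M(\tilde{\C})\le\lambda_k(\hat{\S})\le\bar{s}\,\lambda_1(\tilde{\C})$ and an adjusted $\epsilon$. The only cosmetic difference is that you invoke Rayleigh quotients explicitly where the paper simply states the eigenvalue inequalities, and you phrase the $\epsilon$-budget as $\epsilon/\bar{s}$ where the paper introduces an auxiliary $\tilde{\epsilon}$.
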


\begin{proof}
Equation (\ref{eq:representation-tildeC}) implies that 
$$ \frac{\X\X^*}{B+1}-\|\tilde{\Deltabs}\|\I_M\le\tilde{\C}\le\frac{\X\X^*}{B+1}+\|\tilde{\Deltabs}\|\I_M.$$
Therefore, the event $\{ \lambda_1(\tilde{\C}) > (1 +\sqrt{c})^{2} + \epsilon \}$ is included in
$\{ \lambda_1(\frac{\X\X^*}{B+1}) + \| \tilde{\Deltabs} \| >  (1 +\sqrt{c})^{2} + \epsilon \}$, which is itself included in 
$$
\left\{ \lambda_1(\frac{\X\X^*}{B+1}) >  (1 +\sqrt{c})^{2} + \epsilon/2 \right\} \cup \left\{  \| \tilde{\Deltabs} \| > \epsilon/2 \right\}.
$$
Therefore, 
\begin{multline*}
    \Prob\left[  \lambda_1(\tilde{\C}) > (1 +\sqrt{c})^{2} + \epsilon \right] \leq \Prob\left[ \lambda_1(\frac{\X\X^*}{B+1}) >  (1 +\sqrt{c})^{2} + \epsilon/2 \right] \\ + \Prob\left[ \| \tilde{\Deltabs} \| > \epsilon/2 \right].
\end{multline*}
Equations (\ref{eq:concentration-largest-eig-wishart}) and (\ref{eq:concentration-tilde-Delta}) 
imply that $\Prob\left[  \lambda_1(\tilde{\C}) > (1 +\sqrt{c})^{2} + \epsilon \right]$ 
converges towards 0 exponentially. A similar evaluation of 
$\Prob\left[  \lambda_M(\tilde{\C}) < (1 -\sqrt{c})^{2} - \epsilon \right]$ leads to the same 
conclusion. This, in turn, establishes that  $\Lambda^{\tilde{\C}}_\epsilon(\nu), \nu \in [0,1]$ holds with exponential high probability. 

In order to establish that the same property holds for  $\Lambda_\epsilon^{\hat{\S}}(\nu), \nu \in [0,1]$, we just need to write \eqref{eq:def-tildeC} as  $ \hat{\S} = \D^{1/2}\tilde{\C}\D^{1/2} $. 
Therefore, for each $k=1, \ldots, M$, the eigenvalues of $\hat{\S}$ satisfy 
$$
\barbelow{s} \, \lambda_{M}(\tilde{\C}) \leq \lambda_k(\hat{\S}) \leq \bar{s} \, 
 \lambda_{1}(\tilde{\C}).
$$
This, of course, implies that $\Lambda_\epsilon^{\hat{\S}}(\nu), \nu \in [0,1]$ holds 
with  exponential high probability
(indeed, one can change $\epsilon$ to $\tilde{\epsilon}$ such that $(\Supp\mu_{MP}^{(c)}+\tilde{\epsilon})\times[\barbelow{s},\bar{s}] \subset \Supp\mu_{MP}^{(c)}\times[\barbelow{s},\bar{s}]+\epsilon$.

\end{proof}

\begin{remark}
Corollary \ref{corollary:spectre-tildeC-spectre-hatS} implies the following weaker property, which will be useful: 
\begin{equation}
    \label{eq:domination-norm-hatS}
    \|\hat{\S}(\nu)\|\prec 1 .
\end{equation}
\end{remark}

Before ending the section and proving Theorem \ref{theorem:C_approximation_Wishart}, we need some stochastic control on the diagonal elements of $\hat{\S}$ 
in order to evaluate $\Thetabs$ defined by
\begin{equation}
\label{equation:definition_Theta}
    \Thetabs := \hat{\C} - \tilde{\C} .
\end{equation} 
Using the definition of $\hat{\C}$ from \eqref{equation:definition_coherency} and $\tilde{\C}$ from \eqref{eq:def-tildeC}, $\Thetabs$ can be written as
\begin{align}
\label{equation:decomposition_Theta}
    \Thetabs = (\hat{\D}^{-1/2}-\D^{-1/2})\hat{\S}\hat{\D}^{-1/2} + \D^{-1/2}\hat{\S}(\hat{\D}^{-1/2}-\D^{-1/2}).
\end{align}
Since we proved  that $\|\hat{\S}\|\prec1$, it remains to show that $\|\hat{\D}^{-1/2}\|$ and $\|\hat{\D}^{-1/2}-\D^{-1/2}\|$ can also be stochastically dominated by some relevant quantity in order to control $\| \Thetabs \|$. Define 
\begin{equation}
\label{eq:def-hatsm}
 \hat{s}_m(\nu):=\hat{\S}_{m,m}(\nu)
 \end{equation}
the diagonal elements of $\hat{\S}(\nu)$ spectral density estimator (note that they coincide with the traditional smoothed periodogram estimator of the spectral density $s_m$). The aim of the following Paragraph \ref{paragraph:domination_stochastique_s} is to establish stochastic domination results for $\hat{s}_m$, $\|\hat{\D}^{-1/2}\|$ and $\|\hat{\D}^{-1/2}-\D^{-1/2}\|$. 

\subsection{Step 2: Estimates for \texorpdfstring{$\hat{s}_m(\nu)$}{s m tilde (nu)} }
\label{paragraph:domination_stochastique_s}
We write $s_m(\nu):=s_m$, $\D(\nu):=\D,$ in order to simplify the 
notations. Define as in \eqref{equation:definition_Lambda_S} the following quantity
\begin{equation}
    \label{equation:definition_Lambda_D}
    \Lambda_\epsilon^{\hat{\D}}(\nu)=\{\sigma(\hat{\D}(\nu))\subset [\barbelow{s},\bar{s}]+\epsilon\}.
\end{equation}

\begin{lemma}
\label{lemma:localization_s_m}
Let $\epsilon>0$. The family of events $\Lambda_\epsilon^{\hat{\D}}(\nu), \nu\in[0, 1]$ holds with exponentially high probability.
\end{lemma}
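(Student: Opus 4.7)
The plan is to establish that $\hat{s}_m(\nu)$ is close to $s_m(\nu)$ with exponentially high probability, uniformly in $m\ge 1$ and $\nu\in[0,1]$, and then to deduce the spectral inclusion from the identity $\sigma(\hat{\D}(\nu))=\{\sqrt{\hat{s}_m(\nu)}:m=1,\ldots,M\}$ together with continuity of the square root.

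For a fixed $\nu$, I would use the representation $\omegabs_m=\sqrt{s_m(\nu)}\,\x_m(\I+\Phibs_m)^{1/2}$, with $\x_m\sim\Ncal_\Cbb(0,\I_{B+1})$, obtained in the proof of Proposition \ref{proposition:X_Gamma}. It gives
$$
\hat{s}_m(\nu)=\frac{1}{B+1}\omegabs_m\omegabs_m^*=\frac{s_m(\nu)}{B+1}\,\x_m(\I+\Phibs_m)\x_m^*.
$$
Since $\|\Phibs_m\|\le CB/N=\Ocal(1)$, the Frobenius norm of $\I+\Phibs_m$ is at most $\sqrt{B+1}\cdot\|\I+\Phibs_m\|=\Ocal(\sqrt{B})$, so the Hanson--Wright bound (\ref{equation:hanson_wright_stochastic_domination}) yields $|\x_m(\I+\Phibs_m)\x_m^*-\Tr(\I+\Phibs_m)|\prec\sqrt{B}$ uniformly in $m$. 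Combined with the identity $\frac{s_m(\nu)}{B+1}\Tr(\I+\Phibs_m)=s_m(\nu)+\Ocal((B/N)^2+1/N)$ coming from (\ref{eq:expre-Trace-Phim}) and $s_m(\nu)=\Ocal(1)$, this produces the pointwise estimate
$$
\hat{s}_m(\nu)=s_m(\nu)+\Ocal_\prec\!\Bigl(\tfrac{1}{\sqrt{B}}+\bigl(\tfrac{B}{N}\bigr)^{\!2}\Bigr),
$$
uniformly in $m\ge 1$.

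To upgrade pointwise-in-$\nu$ control into uniform-in-$\nu$ control, I would introduce a grid $\Vcal_N\subset[0,1]$ of polynomial cardinality $N^{K}$ and union-bound the previous estimate over $\{1,\ldots,M\}\times\Vcal_N$, an index set of polynomial size in $N$ and thus compatible with stochastic domination. The oscillation of $\hat{s}_m$ between adjacent grid points is controlled by a crude Lipschitz bound: differentiation of the Fourier transforms $\xibs_{y_m}$ brings down at most a factor $2\pi N$, so $\nu\mapsto\hat{s}_m(\nu)$ has a Lipschitz constant growing only polynomially in $N$ on the exponentially high probability event $\|\hat{\S}(\nu)\|\prec 1$ recalled in (\ref{eq:domination-norm-hatS}). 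Choosing $K$ large enough makes the interpolation error negligible compared to $1/\sqrt{B}+(B/N)^2$, yielding the uniform bound $\sup_{m,\nu}|\hat{s}_m(\nu)-s_m(\nu)|\prec 1/\sqrt{B}+(B/N)^2$.

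Putting these together, $\sup_{m,\nu}|\hat{s}_m(\nu)-s_m(\nu)|$ is smaller than any prescribed $\epsilon'>0$ with exponentially high probability. Since $s_m(\nu)\in[\barbelow{s},\bar{s}]$ by Assumption \ref{assumption:regularity}, the eigenvalues $\sqrt{\hat{s}_m(\nu)}$ of $\hat{\D}(\nu)$ lie in the desired $\epsilon$-neighbourhood simultaneously for all $\nu\in[0,1]$. The main obstacle is the passage to uniformity in $\nu$ rather than the concentration itself: Hanson--Wright handles a single $\nu$ directly, but the uncountable index set forces the grid plus Lipschitz detour.
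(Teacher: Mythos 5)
Your proof is correct, and its core is the same as the paper's: you reuse the representation $\hat{s}_m(\nu)=\frac{s_m(\nu)}{B+1}\x_m(\I+\Phibs_m)\x_m^*$, apply Hanson--Wright with $\|\I+\Phibs_m\|_F=\Ocal(\sqrt{B})$ to obtain the fluctuation bound $\prec 1/\sqrt{B}$, and add the bias estimate $\Ocal((B/N)^2)$ from the trace of $\Phibs_m$ — exactly the content of Lemmas \ref{lemma:biais_uniformity} and \ref{lemma:concentration_shat_esperance}, which the paper invokes to conclude Lemma \ref{lemma:localization_s_m}. The second half of your argument (the grid $\Vcal_N$, the polynomial Lipschitz constant of $\nu\mapsto\hat{s}_m(\nu)$, and the interpolation between grid points) is, however, not needed for this lemma. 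The statement ``the family of events $\Lambda_\epsilon^{\hat{\D}}(\nu)$, $\nu\in[0,1]$, holds with exponentially high probability'' is taken in the sense of Definition \ref{definition:stochastic_domination}: for each fixed $\nu$, $\Prob[\Lambda_\epsilon^{\hat{\D}}(\nu)^c]\le\exp(-N^\gamma)$, with $\gamma$ and $N_0$ \emph{independent of} $\nu$. This is already what Hanson--Wright with uniform constants delivers (combined with a union bound over $m=1,\ldots,M$, which is harmless since $M=\Ocal(N^\alpha)$). The uncountability of $[0,1]$ is not an obstacle here, because the event is not being intersected over $\nu$. The stronger ``$\forall\nu\in[0,1]$ simultaneously'' version you end up proving is precisely Corollary \ref{corollary:spectre_S_D_uniforme} in Section \ref{sec:lipschitz}, which the paper establishes later by exactly the grid-plus-Lipschitz device you describe. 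So your closing remark that ``the main obstacle is the passage to uniformity in $\nu$'' misidentifies the difficulty for this particular lemma; the passage to uniformity belongs to a different (later) result.
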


\begin{proof}
See Appendix \ref{appendix:concentration_sm}. 
\end{proof}

Roughly speaking, this ensures that with exponentially high probability, $\hat{s}_m$ stays bounded and away from zero. This result implies the following (weaker) statement, but will still be enough for some proofs and reduces the complexity of the arguments. 

\begin{lemma}
\label{lemma:concentration_sm_inf_sup}
The family of random variables $(|\hat{s}_m(\nu)|+ \frac{1}{|\hat{s}_m(\nu)|})_{m=1, \ldots, M}$, $\nu\in[0, 1]$, satisfies
$$ \left(|\hat{s}_m| + \frac{1}{|\hat{s}_m|}\right)\prec 1.$$
\end{lemma}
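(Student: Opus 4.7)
The plan is to read the statement off Lemma \ref{lemma:localization_s_m} essentially for free. Since $\hat{\D}(\nu)=\diag(\hat{\S}(\nu))^{1/2}$ is diagonal with non-negative entries (as $\hat{\S}(\nu)$ is positive semidefinite by construction), its spectrum is exactly $\{\sqrt{\hat{s}_m(\nu)} : m=1,\ldots,M\}$. Hence the event $\Lambda^{\hat{\D}}_{\epsilon_0}(\nu)$ from \eqref{equation:definition_Lambda_D} is precisely a simultaneous two-sided bound on all $\hat{s}_m(\nu)$, and the lemma reduces to combining that bound with the definition of $\prec$.

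Concretely, I would first fix a single $\epsilon_0>0$ small enough that $\barbelow{s}-\epsilon_0>0$; this is allowed because Assumption \ref{assumption:regularity} ensures $\barbelow{s}>0$. On $\Lambda^{\hat{\D}}_{\epsilon_0}(\nu)$, each $\hat{s}_m(\nu)$ therefore lies in the compact subset $[(\barbelow{s}-\epsilon_0)^2,(\bar{s}+\epsilon_0)^2]$ of $(0,+\infty)$, so there is a nice constant $K$ (depending only on $\barbelow{s},\bar{s},\epsilon_0$) with
\[
|\hat{s}_m(\nu)| + \frac{1}{|\hat{s}_m(\nu)|} \le K \qquad \text{for every } m\in\{1,\ldots,M\}, \ \nu\in[0,1].
\]

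Then, for any $\epsilon>0$ and any $N$ large enough that $N^\epsilon\ge K$, I get the uniform inclusion
\[
\Big\{|\hat{s}_m(\nu)|+\tfrac{1}{|\hat{s}_m(\nu)|}>N^\epsilon\Big\}\subset\big(\Lambda^{\hat{\D}}_{\epsilon_0}(\nu)\big)^c,
\]
valid simultaneously in $m$ and $\nu$. Lemma \ref{lemma:localization_s_m} bounds the probability of the right-hand side by $\exp-N^\gamma$ for some $\gamma>0$, uniformly in $\nu$ (and hence in the pair $(m,\nu)$, since the event is $m$-free), which is exactly the stochastic-domination requirement of Definition \ref{definition:stochastic_domination}.

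There is no genuine obstacle here: all the analytical content lies in Lemma \ref{lemma:localization_s_m} (whose proof is deferred to Appendix \ref{appendix:concentration_sm}), and Lemma \ref{lemma:concentration_sm_inf_sup} is simply its restatement in the $\prec$ language, weaker because it forgets the specific compact interval on which the $\hat{s}_m(\nu)$ concentrate.
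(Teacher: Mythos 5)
Your proof is correct and matches the paper's own reasoning, which simply states that the lemma is immediate from Lemma~\ref{lemma:localization_s_m}; you have just spelled out the standard step of turning the high-probability localization of $\hat{s}_m(\nu)$ in a compact subset of $(0,\infty)$ into the stochastic-domination bound. The only minor point worth noting is that the paper is slightly loose about whether $\sigma(\hat{\D}(\nu))$ consists of $\hat{s}_m(\nu)$ or $\sqrt{\hat{s}_m(\nu)}$ (compare the definition of $\hat{\D}$ with the use of $\Lambda_\epsilon^{\hat{\D}}(\nu)$ in Appendix~\ref{appendix:concentration_sm}), but, as your argument shows, the conclusion holds under either reading.
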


\begin{proof}
Immediate from Lemma \ref{lemma:localization_s_m}.
\end{proof}

\begin{lemma}
\label{lemma:concentration_ratio_s_shat}
The set of random variable $(|\hat{s}_m(\nu)^{-1/2} - s_m(\nu)^{-1/2}|)_{m=1,\ldots,M}$ and $(|\sqrt{\frac{s_m(\nu)}{\hat{s}_m(\nu)}}-1|)_{m=1,\ldots,M}$, $\nu\in[0, 1]$, satisfies
\begin{equation}
\label{equation:domination_D_racine_carre}
|\hat{s}_m^{-1/2} - s_m^{-1/2}| \prec \frac{1}{\sqrt{B}}+\frac{B^2}{N^2}, \quad  \left|\sqrt{\frac{s_m}{\hat{s}_m}}-1\right|\prec \frac{1}{\sqrt{B}}+\frac{B^2}{N^2}.
\end{equation}

\end{lemma}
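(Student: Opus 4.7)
The plan is to first establish a sharp concentration bound for $\hat{s}_m - s_m$, and then derive the two bounds in \eqref{equation:domination_D_racine_carre} by a simple algebraic argument that exploits the fact that $\hat{s}_m$ is bounded above and away from zero with exponentially high probability (Lemma \ref{lemma:localization_s_m}).

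The starting point is the identity $\tilde{\C}_{m,m} = \hat{s}_m / s_m$ coming from the definition \eqref{eq:def-tildeC}. Combining this with Corollary \ref{corollary:concentration_delta} gives
\[
\frac{\hat{s}_m}{s_m} - 1 = \left(\frac{\|\X_m\|^2}{B+1} - 1\right) + \tilde{\Deltabs}_{m,m},
\]
where $\X_m$ denotes the $m$-th row of $\X$ (with $\Ncal_\Cbb(0,1)$ i.i.d. entries). The first term is classically controlled by the Hanson-Wright inequality \eqref{equation:hanson_wright_stochastic_domination} applied with $\A = \I_{B+1}$: since $\|\I_{B+1}\|_F = \sqrt{B+1}$, one obtains $|\|\X_m\|^2 - (B+1)| \prec \sqrt{B+1}$, hence $|\frac{\|\X_m\|^2}{B+1} - 1| \prec \frac{1}{\sqrt{B}}$.

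The delicate step is to show that the $(m,m)$ diagonal entry of $\tilde{\Deltabs}$ enjoys a much finer bound than the operator-norm bound $\|\tilde{\Deltabs}\| \prec B/N$ of Corollary \ref{corollary:concentration_delta}. Using the explicit formula $\tilde{\Deltabs}_{m,m} = (B+1)^{-1}\bigl[2\Re(\x_m \Psibs_m^* \x_m^*) + \x_m \Psibs_m \Psibs_m^* \x_m^*\bigr]$ from the representation $\Gammabs_m = \x_m \Psibs_m$ in the proof of Proposition \ref{proposition:X_Gamma}, I would separately estimate expectation and fluctuation. For the expectation, \eqref{eq:expre-Trace-Phim} together with the expansion $\Psibs_m = \tfrac{1}{2}\Phibs_m + \Ocal(\|\Phibs_m\|^2)$ gives $|\Tr\Psibs_m|/(B+1) = \Ocal((B/N)^2)$, while $\Tr(\Psibs_m\Psibs_m^*)/(B+1) \leq \|\Psibs_m\|^2 = \Ocal((B/N)^2)$ by \eqref{eq:controle-norm-Psi}. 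For the fluctuation, Hanson-Wright applied to each quadratic form yields fluctuations of order $\|\Psibs_m\|_F/(B+1) \leq \|\Psibs_m\|/\sqrt{B+1} = \Ocal(\sqrt{B}/N)$. Since $\sqrt{B}/N \leq 1/\sqrt{B}$ in the regime $B \leq N$, we conclude
\[
|\tilde{\Deltabs}_{m,m}| \prec \frac{B^2}{N^2} + \frac{\sqrt{B}}{N} \prec \frac{1}{\sqrt{B}} + \frac{B^2}{N^2}.
\]

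Combining the two estimates and using that $s_m$ is bounded above and away from zero (Assumption \ref{assumption:regularity}) gives $|\hat{s}_m - s_m| \prec \frac{1}{\sqrt{B}} + \frac{B^2}{N^2}$. The conclusion is then purely deterministic on the exponentially high probability event $\Lambda_\epsilon^{\hat{\D}}(\nu)$ of Lemma \ref{lemma:localization_s_m}, where $\hat{s}_m$ is bounded below by $\barbelow{s} - \epsilon > 0$: writing $\sqrt{s_m/\hat{s}_m} - 1 = (s_m - \hat{s}_m)/[\sqrt{\hat{s}_m}(\sqrt{s_m} + \sqrt{\hat{s}_m})]$ and $\hat{s}_m^{-1/2} - s_m^{-1/2} = s_m^{-1/2}(\sqrt{s_m/\hat{s}_m} - 1)$ produces both bounds in \eqref{equation:domination_D_racine_carre}. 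The main obstacle is the diagonal refinement of the $\tilde{\Deltabs}$ bound; the rest of the argument is bookkeeping.
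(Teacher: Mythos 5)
Your proposal is correct and follows essentially the same route as the paper: the paper's own proof cites Lemma \ref{lemma:hat_s_s_domination_stochastique} (which gives $|\hat{s}_m - s_m| \prec \frac{1}{\sqrt{B}} + \frac{B^2}{N^2}$ via the bias bound of Lemma \ref{lemma:biais_uniformity} and the Hanson--Wright fluctuation bound of Lemma \ref{lemma:concentration_shat_esperance}) and then performs exactly the algebraic manipulation you carry out, using $1/\hat{s}_m \prec 1$ from Lemma \ref{lemma:concentration_sm_inf_sup}. The only cosmetic difference is that you re-derive the intermediate estimate $|\hat{s}_m - s_m| \prec \frac{1}{\sqrt{B}} + \frac{B^2}{N^2}$ by splitting $\tilde{\C}_{m,m} - 1$ into $\|\x_m\|^2/(B+1) - 1$ and $\tilde{\Deltabs}_{m,m}$ rather than applying Hanson--Wright once to the quadratic form $\x_m(\I+\Phibs_m)\x_m^*/(B+1)$ as the paper does, but since $2\Psibs_m + \Psibs_m^2 = \Phibs_m$ the two computations are identical in substance.
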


\begin{proof}
See Appendix \ref{appendix:various}
\end{proof}

\subsection{Step 3: Stochastic representation of \texorpdfstring{$\hat{\C}$}{C hat}} 
We are now in a position to prove the result concerning $\hat{\C}$ of Theorem \ref{theorem:C_approximation_Wishart} and of Corollary \ref{coro:localisation-eigenvalues-hatC-tildeC}.
\begin{proof}
We have first to control the operator norm of:
\begin{equation}
    \label{eq:domination_stochastique_Thetabs}
    \Deltabs=\hat{\C}-\frac{\X\X^*}{B+1}=\hat{\C}-\tilde{\C}+\tilde{\C}-\frac{\X\X^*}{B+1}=\Thetabs+\tilde{\Deltabs}.
\end{equation}

The operator norm of $\|\tilde{\Deltabs}\|$ has already been proved in Corollary \ref{corollary:concentration_delta} to satisfy $\|\tilde{\Deltabs}\|\prec(\frac{B}{N})$. Moreover, recall that $\Thetabs$ can be written as a function of $\hat{\D}^{-1/2}-\D^{-1/2}$ in \eqref{equation:decomposition_Theta}, so that one can use Lemma \ref{lemma:concentration_sm_inf_sup} and Lemma \ref{lemma:concentration_ratio_s_shat} to dominate each term and get:
\begin{equation}
\label{equation:domination_Thetabs}
    \|\Thetabs\| \prec \frac{1}{\sqrt{B}}+\frac{B^2}{N^2}.
\end{equation}

Summing the estimate of $\Thetabs$ and the one of $\tilde{\Deltabs}$, one gets:
$$ \|\Deltabs\| \prec \frac{1}{\sqrt{B}}+ \frac{B}{N}$$
which is the desired result.
\end{proof}

As a consequence, we state here Corollary \ref{corollary:localization_C_hat} about the localization of the eigenvalues of $\hat{\C}(\nu)$. 

\begin{corollary}
\label{corollary:localization_C_hat}
For each $\epsilon > 0$, we define $\Lambda^{\hat{\C}}_\epsilon(\nu)$ as the event
\begin{equation}
    \label{definition:hat_Xi}
    \Lambda^{\hat{\C}}_\epsilon(\nu)=\left\{\sigma(\hat{\C}(\nu))\subset\Supp\mu_{MP}^{(c)}+\epsilon\right\}.
\end{equation}
Then, the family of events $\Lambda^{\hat{\C}}_\epsilon(\nu), \nu \in [0,1]$ holds with exponentially 
high probability. 
\end{corollary}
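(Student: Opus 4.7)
The plan is to reproduce the argument already used in the proof of Corollary \ref{corollary:spectre-tildeC-spectre-hatS} for $\tilde{\C}(\nu)$, substituting the stochastic representation \eqref{equation:C_approximation_Wishart} of $\hat{\C}(\nu)$ for the one of $\tilde{\C}(\nu)$ given in \eqref{eq:representation-tildeC}. The only non-trivial point is that the control on $\|\Deltabs(\nu)\|$ is weaker than the one on $\|\tilde{\Deltabs}(\nu)\|$, namely $\|\Deltabs(\nu)\|\prec \frac{1}{\sqrt{B}}+\frac{B}{N}$ instead of $\frac{B}{N}$, but this sequence still converges to $0$ under Assumption \ref{assumption:rate_NBM} (since $\alpha<1$ gives $\frac{B}{N}\to 0$ and $B\to\infty$ gives $\frac{1}{\sqrt{B}}\to 0$), which is all that is needed.

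More precisely, from $\hat{\C}(\nu)=\frac{\X(\nu)\X^*(\nu)}{B+1}+\Deltabs(\nu)$ one gets the operator inequality
\[
\frac{\X\X^*}{B+1}-\|\Deltabs\|\,\I_M \;\le\; \hat{\C} \;\le\; \frac{\X\X^*}{B+1}+\|\Deltabs\|\,\I_M ,
\]
so that $\{\lambda_1(\hat{\C})>(1+\sqrt{c})^2+\epsilon\}$ is included in
\[
\left\{\lambda_1\!\left(\tfrac{\X\X^*}{B+1}\right)>(1+\sqrt{c})^2+\epsilon/2\right\}\;\cup\;\left\{\|\Deltabs\|>\epsilon/2\right\},
\]
and an analogous inclusion holds for the event $\{\lambda_M(\hat{\C})<(1-\sqrt{c})^2-\epsilon\}$. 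The first event in each union has exponentially small probability uniformly in $\nu$ by the Haagerup-type estimates \eqref{eq:concentration-smallest-eig-wishart}--\eqref{eq:concentration-largest-eig-wishart} recalled in Paragraph \ref{subsubsection:concentration-haagerup}. For the second event, I would fix a small $\delta>0$ such that $N^{\delta}\!\left(\frac{1}{\sqrt{B}}+\frac{B}{N}\right)<\epsilon/2$ for all large enough $N$ (possible since $\alpha<1$), and then invoke the definition of stochastic domination together with the estimate $\|\Deltabs(\nu)\|\prec \frac{1}{\sqrt{B}}+\frac{B}{N}$ established in Theorem \ref{theorem:C_approximation_Wishart} to obtain an exponential bound $\exp(-N^{\gamma})$ uniform in $\nu\in[0,1]$.

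Taking the union bound over the two extreme-eigenvalue events yields
\[
\Prob\!\left[\sigma(\hat{\C}(\nu))\not\subset \Supp\mu_{MP}^{(c)}+\epsilon\right]\le \exp(-N^{\gamma})
\]
for some $\gamma>0$ independent of $\nu$, which is exactly the statement that the family of events $\Lambda^{\hat{\C}}_{\epsilon}(\nu), \nu\in[0,1]$ holds with exponentially high probability. I do not see any real obstacle here: the proof is a straightforward transcription of the $\tilde{\C}$ argument, the only subtlety being to check that the slower rate of decay of $\|\Deltabs\|$ is still fast enough to absorb the $N^{\delta}$ cost that comes from unpacking the $\prec$ notation.
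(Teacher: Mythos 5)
Your proof is correct and takes essentially the same route as the paper: the paper's proof of Corollary \ref{corollary:localization_C_hat} simply writes the operator sandwich $\frac{\X\X^*}{B+1}-\|\Deltabs\|\I_M\le\hat{\C}\le\frac{\X\X^*}{B+1}+\|\Deltabs\|\I_M$ and invokes the same argument as in Corollary \ref{corollary:spectre-tildeC-spectre-hatS}, which is exactly the inclusion of events, Haagerup estimates, and stochastic domination of $\|\Deltabs\|$ that you lay out. Your observation that the weaker rate $\frac{1}{\sqrt{B}}+\frac{B}{N}$ still converges to $0$ under Assumption \ref{assumption:rate_NBM} is precisely the (minor) point that needs checking, and you handle it correctly.
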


\begin{proof}
We simply write:
$$ \frac{\X\X^*}{B+1}-\|\Deltabs\|\I_M\le\hat{\C}\le\frac{\X\X^*}{B+1}+\|\Deltabs\|\I_M$$
and use the same arguments as in the proof of Corollary \ref{corollary:spectre-tildeC-spectre-hatS}. 
\end{proof}

\section{Stochastic domination of the family \texorpdfstring{$\psi_N(f,\nu), N \geq 1, \nu \in [0,1]$}{}}
\label{sec:lss}
We have first to define the distribution $D_N$ introduced in the definition (\ref{eq:def-psi(f)})
of $\psi_N(f,\nu)$. For this, we consider the function $p_N(z)$ defined by 
\begin{equation}
    \label{eq:def-pN}
    p_N(z) = - \frac{ c_N \, (z \, t_N(z) \, \tilde{t}_N(z))^{3}}{1 - c (z \, t_N(z) \, \tilde{t}_N(z))^{2}}
\end{equation}
where we recall that $t_N$ and $\tilde{t}_N$ are defined by (\ref{eq:equation-MP1})  and 
(\ref{eq:def-ttilde}). 
Then (see Lemma 9.2 in \cite{loubaton2016jotp}),  
$p_N$ is the Stieltjes transform of a distribution whose support is contained in the 
support $\Supp\mu_{MP}^{(c_N)} = [(1 - \sqrt{c}_N)^{2}, (1 + \sqrt{c}_N)^{2}]$ of the Marcenko-Pastur distribution $\mu_{MP}^{(c_N)}$. This distribution is $D_N$ introduced in (\ref{eq:def-psi(f)}). 
In the following, we consider LSS for function $f$ satisfying the following assumptions.
\begin{assumption}
\label{assumption:f}
$f$ is defined on $\Rbb_+$ and there exists some $\epsilon>0$ such that its restriction on $\Supp_{MP}^{(c)}+\epsilon$ is $\mathcal{C}^{\infty}$.
\end{assumption}
We now state the main result of this section.  
\begin{theorem}
\label{theo:domination-psi}
Let $f$ be a function satisfying the conditions of Assumption \ref{assumption:f}. Then, under Assumptions \ref{assumption:gaussian_y_n}, \ref{assumption:H0_independence},   \ref{assumption:rate_NBM} and  \ref{assumption:regularity}, the family  $|\psi_N(f,\nu)|, N \geq 1, \nu \in [0,1]$ satisfies 
\begin{equation}
\label{eq:stochastic-domination-psi}
|\psi_N(f,\nu)| \prec u_N.
\end{equation}
\end{theorem}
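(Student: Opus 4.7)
The plan is to follow the decomposition (\ref{eq:fundamental-decomposition}) of $\frac{1}{M}\mathrm{Tr}(f(\hat{\C}(\nu))) - \int f \, d\mu_{MP}^{(c_N)}$ into four pieces and establish the required stochastic domination for each. A preliminary reduction: by Corollaries \ref{corollary:spectre-tildeC-spectre-hatS} and \ref{corollary:localization_C_hat} the spectra of $\hat{\C}(\nu)$, $\tilde{\C}(\nu)$ and $\X\X^*/(B+1)$ all sit inside $\Supp\mu_{MP}^{(c)} + \epsilon$ with exponentially high probability uniformly in $\nu$, so up to events of exponentially small probability $f$ may be replaced by a $\mathcal{C}^\infty$ compactly supported function agreeing with it on that neighbourhood. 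In particular the Helffer-Sjöstrand formula (\ref{equ:helfjer-sjostrand-general}) applies to all the traces and distributional actions $\phi_N(f)$, $\tilde\phi_N(f)$ involved.

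For the first piece $\frac{1}{M}\mathrm{Tr}(f(\hat{\C})) - \frac{1}{M}\mathrm{Tr}(f(\tilde{\C}))$, I would use the Helffer-Sjöstrand representation to reduce matters to the resolvent difference $\frac{1}{M}\mathrm{Tr}(\hat{\Q}(z) - \tilde{\Q}(z))$, which via the resolvent identity $\hat{\Q}-\tilde{\Q} = -\hat{\Q}\,\Thetabs\,\tilde{\Q}$ and the decomposition (\ref{equation:decomposition_Theta}) of $\Thetabs$ in terms of $(\hat{\D}^{-1/2} - \D^{-1/2})$ can be analyzed using Lemmas \ref{lemma:concentration_sm_inf_sup}--\ref{lemma:concentration_ratio_s_shat} together with the Poincaré-Nash and Hanson-Wright tools of Section \ref{sec:useful-results}. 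The outcome, as announced in the general approach, is that this piece equals $\bigl(\frac{1}{2M}\sum_m s_m''(\nu)/s_m(\nu)\bigr) \tilde\phi_N(f) v_N \mathbf{1}_{\alpha > 2/3}$ modulo a remainder $\prec u_N$. For the second, centered piece $\frac{1}{M}\mathrm{Tr}(f(\tilde{\C})) - \mathbb{E}\frac{1}{M}\mathrm{Tr}(f(\tilde{\C}))$, I would apply the Gaussian concentration inequality (\ref{eq:gaussian-concentration-inequality}) to the map sending the i.i.d.\ Gaussian entries of $\X$ (see (\ref{equation:representation-etam})) to $\frac{1}{M}\mathrm{Tr}(f(\tilde{\C}))$, whose Lipschitz constant on the event $\Lambda_\epsilon^{\tilde{\C}}$ is $O(1/\sqrt{B})$, yielding domination by $1/B$. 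The fourth piece $\mathbb{E}\frac{1}{M}\mathrm{Tr}(f(\X\X^*/(B+1))) - \int f d\mu_{MP}^{(c_N)}$ is the classical LSS bias for a Wishart matrix: using (\ref{eq:beta-t}) and the Helffer-Sjöstrand formula it is $O(1/B^2) = o(u_N)$.

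The heart of the argument, and the main obstacle, is the third piece, namely the deterministic difference $\mathbb{E}[\frac{1}{M}\mathrm{Tr}(f(\tilde{\C})) - \frac{1}{M}\mathrm{Tr}(f(\X\X^*/(B+1)))]$. Via Helffer-Sjöstrand this reduces to controlling $\mathbb{E}[\frac{1}{M}\mathrm{Tr}(\tilde{\Q}(z) - \Q(z))]$, for which the naive bound from $\|\tilde\Deltabs\| \prec B/N$ only gives $O(B/N)$; one must gain an extra order to reach $O((B/N)^2)$, and more precisely establish
\[
\mathbb{E}\Bigl[\frac{1}{M}\mathrm{Tr}(\tilde{\Q}(z) - \Q(z))\Bigr] = -\Bigl(\frac{1}{2M}\sum_m \frac{s_m''(\nu)}{s_m(\nu)}\Bigr)\tilde{p}_N(z) v_N + r_N(\nu) p_N(z) v_N + O\bigl((B/N)^3\bigr).
\]
The plan here is to exploit the explicit structure $\tilde{\Deltabs} = (\X\Gammabs^* + \Gammabs\X^* + \Gammabs\Gammabs^*)/(B+1)$ from (\ref{definition:tilde_Deltabs}) together with $\Gammabs_{m,\cdot} = \x_m \Psibs_m$ and the expansion of $\Psibs_m$ coming from (\ref{eq:expre-IplusPhi1/2}) and (\ref{eq:expre-Trace-Phim}), and to perform the resulting Gaussian calculations using the integration by parts formula (\ref{eq:integration-by-part}) and the deterministic equivalent (\ref{eq:beta-t}) for $\mathbb{E}[\Q(z)]$. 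The careful bookkeeping of second order contributions, with the diagonal and off-diagonal parts of $\Phibs_m$ contributing respectively through $\tilde p_N$ (via $\frac{1}{2M}\sum s_m''/s_m$) and through $p_N$ (via the squared term $r_N(\nu)$), is where all the work lies; Poincaré-Nash bounds (Lemma \ref{le:nash-poincare-resolvent}) keep the error terms at $O((B/N)^3) \le O(u_N)$ under Assumption \ref{assumption:rate_NBM}.

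Finally, converting Stieltjes transform identities back to LSS via the distributional Helffer-Sjöstrand formula (\ref{equ:helfjer-sjostrand-distribution}) with $s_{D_N} = p_N$ and $s_{\tilde D_N} = \tilde p_N$ produces the constants $\phi_N(f)$ and $\tilde\phi_N(f)$, which are $\Ocal(1)$ for $f$ smooth and compactly supported as above. Summing the four contributions, the $\tilde\phi_N(f) v_N$ terms cancel between the first and third pieces, leaving exactly the deterministic correction $r_N(\nu) \phi_N(f) v_N \mathbf{1}_{\alpha > 2/3}$ subtracted off in the definition (\ref{eq:def-psi(f)}) of $\psi_N(f,\nu)$, and a residual stochastically dominated by $u_N$. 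This proves (\ref{eq:stochastic-domination-psi}).
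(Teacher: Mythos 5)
Your proposal mirrors the paper's proof: the same preliminary reduction to compactly supported $f$ via eigenvalue localization (Corollaries \ref{corollary:spectre-tildeC-spectre-hatS} and \ref{corollary:localization_C_hat}), the same four-term decomposition (\ref{eq:fundamental-decomposition}), and the same tools (Helffer--Sjöstrand, Gaussian concentration, Hanson--Wright, and the integration-by-parts/Poincaré--Nash calculus for the deterministic expectations). Your announced corrections in $\tilde{p}_N$ and $p_N$, the cancellation of the $\tilde{\phi}_N(f)v_N$ contributions between pieces one and three, and the $O(1/B^2)$ Wishart LSS bias all agree with Propositions \ref{prop:evaluation-trace-hatC-tildeC} and \ref{prop:EtildeQ-EQ} and Lemma \ref{le:evaluation-terme4}.

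One slip is worth flagging because, taken literally, it would break the bound. In your treatment of the centered fluctuation $\frac{1}{M}\Tr f(\tilde{\C}) - \mathbb{E}\,\frac{1}{M}\Tr f(\tilde{\C})$ you write that the map $g:(\X,\X^*)\mapsto\frac{1}{M}\Tr f(\tilde{\C})$ has Lipschitz constant $O(1/\sqrt{B})$ ``yielding domination by $1/B$.'' That does not follow: for an $L$-Lipschitz function of i.i.d.\ Gaussians, the concentration inequality (\ref{eq:gaussian-concentration-inequality}) gives $|g-\mathbb{E}g|\prec L$, so $L = O(1/\sqrt{B})$ would only deliver domination by $1/\sqrt{B}$, which is strictly larger than $u_N=O(1/B)$ and therefore not good enough. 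What is actually true, and what is needed, is that
\[
\|\nabla g\|^2 = 2\sum_{i,j}\Bigl|\frac{\partial g}{\partial X_{i,j}}\Bigr|^2 \leq \frac{C}{B^2}\cdot\frac{1}{M}\Tr\bigl(f'^2(\tilde{\C})\,\tilde{\C}\bigr) = O(1/B^2),
\]
using the compact support of $f$ to bound $\lambda\mapsto\lambda f'^2(\lambda)$ by a nice constant. Thus $g$ is $O(1/B)$-Lipschitz \emph{globally} --- no conditioning on $\Lambda_\epsilon^{\tilde{\C}}$ is needed --- and this is the bound that delivers $\prec 1/B$.
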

Before starting the proof of Theorem \ref{theo:domination-psi}, we first mention that it is sufficient
to establish (\ref{eq:stochastic-domination-psi}) when $f$ is compactly supported by a neighbourhood of $\Supp\mu_{MP}^{(c)}$. To justify this claim, we consider $\kappa>0$ and define $\chi: \Rbb\to\Rbb$ as a $\mathbb{C}^\infty$ function such that:
\begin{equation}   
\label{eq:definition_chi}
\chi(\lambda) = 
     \begin{cases}
       1 &\quad\text{if }  \lambda \in \Supp\mu_{MP}^{(c)}+\kappa \\
       0 &\quad\text{if }\lambda \notin \Supp\mu_{MP}^{(c)}+2\kappa.
     \end{cases}
\end{equation}  
We consider the function $\bar{f}$ given by $\bar{f}=f\times\chi$. Then, as $c_N \rightarrow c$, for $N$ large enough, $\Supp\mu_{MP}^{(c_N)}$ is contained in $\Supp\mu_{MP}^{(c)}+\kappa$. Therefore, $f = \bar{f}$ on 
 $\Supp\mu_{MP}^{(c_N)}$ for $N$ large enough, and it holds that $<D_N,f> = <D_N,\bar{f}>$ and $\int f d\mu_{MP}^{(c_N)} = \int \bar{f} d\mu_{MP}^{(c_N)}$. For each $\epsilon > 0$, we express $\Pbb(|\psi_N(f,\nu)| > N^{\epsilon} u_N)$ as 
\begin{align*}
 \Pbb(|\psi_N(f,&\nu)| > N^{\epsilon} u_N) \\
  &=  \Prob(|\psi_N(f,\nu)| > N^{\epsilon} u_N, \Lambda^{\hat{\C}}_\kappa(\nu)) + \Pbb(|\psi_N(f,\nu)| > N^{\epsilon} u_N, (\Lambda^{\hat{\C}}_\kappa(\nu))^{c}) \\
   & \leq  \Pbb(|\psi_N(f,\nu)| > N^{\epsilon} u_N, \Lambda^{\hat{\C}}_\kappa(\nu)) + \Pbb\left((\Lambda^{\hat{\C}}_\kappa(\nu))^{c}\right) \\
   & \leq  \Pbb(|\psi_N(\bar{f},\nu)| > N^{\epsilon} u_N) + \Pbb\left((\Lambda^{\hat{\C}}_\kappa(\nu))^{c}\right)
 \end{align*}
 where the last inequality follows from the observation that 
 $\frac{1}{M} \Tr f(\hat{\C}) = \frac{1}{M} \Tr \bar{f}(\hat{\C})$ on $\Lambda^{\hat{\C}}_\kappa(\nu)$. Moreover, the family of events $\Lambda^{\hat{\C}}_\kappa(\nu)$
 holds with exponential high probability, which implies that $\Pbb\left((\Lambda^{\hat{\C}}_\kappa(\nu))^{c}\right)$ converges towards $0$ exponentially fast. Therefore, $|\psi_N(\bar{f},\nu)| \prec u_N$ 
 implies (\ref{eq:stochastic-domination-psi}) as expected. From now on, we thus assume that the function $f$ is supported by $\Supp\mu_{MP}^{(c)}+2\kappa$ \\

In order to establish (\ref{eq:stochastic-domination-psi}), we evaluate the four terms of the righhandside of (\ref{eq:fundamental-decomposition}).

\subsection{Step 1: Evaluation of \texorpdfstring{$\mathbb{E} \left[  \frac{1}{M} \mathrm{Tr}\left( f(\frac{\X_N(\nu)\X_N^*(\nu)}{B+1}) \right) \right] - \int_{\Rbb^{+}}f\diff\mu_{MP}^{(c_N)}$}{}}
We evaluate this term using the Helffer-Sjöstrand formula. We keep the notations 
of paragraphs \ref{subsec:hs-formula} and \ref{section:haagerup}: we assume that the support of $f$ is included in 
$[a_1, a_2]$ with  $a_1 = (1-\sqrt{c}_N)^{2} - 2 \kappa$ and $a_2 = (1+\sqrt{c}_N)^{2} + 2 \kappa$.
Moreover, the resolvent of the matrix $\frac{\X_N \X_N^*}{B+1}$ is denoted $\Q_N(z)$ (we omit to mention that the matrices depend on $\nu$), and $\beta_N(z)$ represents $\mathbb{E}((\Q_N(z))_{mm})$ 
for each $m$. We also denote by $\epsilon_N(z)$ the error term defined by (\ref{eq:beta-t})
which satisfies $|\epsilon_N(z)| \leq \frac{1}{M^{2}} P_1(|z|) P_2(\frac{1}{\Im z})$ on $\mathbb{C}^{+}$ for some nice polynomials $P_1$ and $P_2$. Then, for $k \geq \mathrm{deg}(P_2)$, it holds that 
\begin{multline*}
\mathbb{E} \left[  \frac{1}{M} \mathrm{Tr}\left( f(\frac{\X(\nu)\X^*(\nu)}{B+1}) \right) \right] - \int_{\Rbb^{+}}f\diff\mu_{MP}^{(c_N)} \\ = \frac{1}{\pi}\Re\int_{\mathcal{D}} \,\bar{\partial}\Phi_k(f)(z)
      (\beta_N(z) - t_N(z)) \diff x \diff y
\end{multline*}
where $\mathcal{D}$ is defined as in paragraph \ref{subsec:hs-formula}. 
$\int_{\mathcal{D}} \,
\left| \bar{\partial}\Phi_k(f)(z) \right|  P_1(|z|) P_2(\frac{1}{\Im z}) \diff x \diff y$
is finite, and by \eqref{eq:beta-t}, the following bound holds:
\begin{multline*}
\left| \mathbb{E} \left[  \frac{1}{M} \mathrm{Tr}\left( f(\frac{\X(\nu)\X^*(\nu)}{B+1}) \right) \right] - \int_{\Rbb^{+}}f\diff\mu_{MP}^{(c_N)} \right| \leq  \\ 
\frac{1}{M^{2}} \int_{\mathcal{D}} \,
\left| \bar{\partial}\Phi_k(f)(z) \right|  P_1(|z|) P_2(\frac{1}{\Im z}) \diff x \diff y \leq \frac{C}{B^{2}}
\end{multline*}
for some nice constant $C$. We have therefore established the following result. 
\begin{lemma}
\label{le:evaluation-terme4}
there exists a nice constant $C$ such that, for each $\nu$,
\begin{equation}
    \label{eq:control-terme4}
\left| \mathbb{E} \left[  \frac{1}{M} \mathrm{Tr}\left( f(\frac{\X(\nu)\X^*(\nu)}{B+1}) \right) \right] - \int_{\Rbb^{+}}f\diff\mu_{MP}^{(c_N)} \right| \leq \frac{C}{B^{2}}.
\end{equation}
\end{lemma}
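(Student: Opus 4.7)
The plan is to convert both quantities into integrals against the Helffer-Sjöstrand representation \eqref{equ:helfjer-sjostrand-general}, and then exploit the quantitative comparison between $\beta_N(z)$ and $t_N(z)$ already established in Paragraph \ref{subsubsec:resolvent-MP}. Specifically, for the empirical side, taking the expectation of the identity $\frac{1}{M}\mathrm{Tr}(f(\frac{\X\X^{*}}{B+1})) = \frac{1}{\pi}\Re\int_{\mathcal{D}} \bar{\partial}\Phi_k(f)(z) \cdot \frac{1}{M}\mathrm{Tr}\,\Q_N(z)\,\diff x\diff y$ and using the invariance property $\mathbb{E}[(\Q_N(z))_{mm}] = \beta_N(z)$ gives a representation of the first term in terms of $\beta_N$. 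The Marcenko-Pastur integral admits the analogous representation in terms of $t_N$. Subtracting, the quantity to bound reads
\begin{equation*}
\frac{1}{\pi}\Re\int_{\mathcal{D}} \bar{\partial}\Phi_k(f)(z)\,(\beta_N(z)-t_N(z))\,\diff x\diff y = \frac{1}{\pi}\Re\int_{\mathcal{D}} \bar{\partial}\Phi_k(f)(z)\,\epsilon_N(z)\,\diff x\diff y,
\end{equation*}
by the decomposition \eqref{eq:beta-t}.

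The next step is to insert the pointwise estimate $|\epsilon_N(z)| \le C(z)/M^{2}$ with $C(z) = P_1(|z|) P_2(1/\Im z)$ and control the integral. The factor $\bar{\partial}\Phi_k(f)(z) = \frac{(iy)^k}{k!} f^{(k+1)}(x)$ vanishes like $y^{k}$ on a neighbourhood of the real axis, so choosing the order $k$ at least equal to $\deg P_2$ makes $|\bar{\partial}\Phi_k(f)(z)|\cdot C(z)$ bounded on the compact set $\mathcal{D}$. Since the reduction performed at the beginning of Section \ref{sec:lss} lets us assume $f$ compactly supported and $\mathcal{C}^{\infty}$, arbitrary $k$ is available. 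The integral is therefore $\Ocal(1/M^{2})$, and Assumption \ref{assumption:rate_NBM} ($M$ and $B$ of the same order) converts this to $\Ocal(1/B^{2})$.

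The main conceptual point is that no dynamical argument on the resolvent is needed at this step: the hard work has been absorbed upstream, in the proof that $\beta_N - t_N = \Ocal(C(z)/M^{2})$ via Gaussian integration by parts and Poincaré-Nash (Paragraph \ref{subsubsec:resolvent-MP}). The only technical care here is pairing the order of the Helffer-Sjöstrand cutoff $k$ to the polynomial growth of $P_2(1/\Im z)$ so as to guarantee uniform boundedness of the integrand. The constant in the final estimate is nice because $\mathcal{D}$ depends only on $\kappa$ and on the support of $\mu_{MP}^{(c)}$, and $\bar{\partial}\Phi_k(f)$ is bounded by a constant depending only on $\|f^{(k+1)}\|_{\infty}$, which is fixed by $f$.
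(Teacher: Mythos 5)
Your proof is correct and follows essentially the same route as the paper: Helffer–Sjöstrand representation of both sides, identification of the difference with the error term $\epsilon_N(z)=\beta_N(z)-t_N(z)$ from \eqref{eq:beta-t}, the bound $|\epsilon_N(z)|\le C(z)/M^2$, and the choice of the cutoff order $k\geq \deg P_2$ to keep the integrand bounded on $\mathcal{D}$. The conversion $\Ocal(1/M^2)=\Ocal(1/B^2)$ via Assumption~\ref{assumption:rate_NBM} is also exactly how the paper concludes.
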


\subsection{Step 2: Evaluation of \texorpdfstring{$\frac{1}{M} \mathrm{Tr}\left( f(\tilde{\C}(\nu))\right) - \mathbb{E} \left[ \frac{1}{M} \mathrm{Tr}\left( f(\tilde{\C}(\nu))\right) \right]$}{}}

In order to evaluate the above term, we use the Gaussian concentration inequality introduced in Paragraph \ref{subsection:lipschitz_concentration}. We recall that $\tilde{\C}$ can be interpreted as a function of $(\X,\X^{*})$ (see \eqref{eq:expre-entries-tildeC})). Therefore, 
$\frac{1}{M} \mathrm{Tr}\left( f(\tilde{\C}(\nu))\right)$ can be written as $g(\X,\X^*)$
for some real valued function $g$. We establish in the following that $g$ is $\Ocal(\frac{1}{B})$--Lipschitz, which in turn, will imply that 
\begin{equation}
    \label{eq:concentration-f(tildeC)}
    \left|\frac{1}{M} \mathrm{Tr}\left( f(\tilde{\C}(\nu))\right) - \mathbb{E} \left[ \frac{1}{M} \mathrm{Tr}\left( f(\tilde{\C}(\nu))\right) \right]\right| \prec \frac{1}{B} .
\end{equation}
For this, we evaluate 
\begin{equation}
\label{norme_lipschitz}
\|\nabla g(\X,\X^{*})\|^2=\sum_{i,j}\left|\frac{\partial g}{\partial X_{i,j}}\right|^2 + \left|\frac{\partial g}{\partial\overline{X_{i,j}}}\right|^2 = 2 \, \sum_{i,j}\left|\frac{\partial g}{\partial X_{i,j}}\right|^2.
\end{equation}

Using classic identities for the derivation of Hermitian matrices, we obtain that 
$$
\frac{1}{M} \frac{\partial \, \Tr f(\tilde{\C})}{\partial X_{ij}} = \frac{1}{M} \Tr \left(f'(\tilde{\C}) \frac{\partial \tilde{\C}}{\partial X_{ij}} \right) 
$$
Straightforward calculations lead to 
$$
\sum_{i,j}\left|\frac{\partial g}{\partial X_{i,j}}\right|^2 = 
\frac{1}{M^{2}(B+1)^{2}} \sum_{i=1}^{M}  \left( f'(\tilde{\C}) (\X + \Gammabs) (\I + \Phibs_i)   (\X + \Gammabs)^{*} f'(\tilde{\C}) \right)_{ii}.
$$
Using  $\sup_i \| \I + \Phibs_i \| \leq C$ for some nice constant $C$ as well as $\tilde{\C} = \frac{1}{B+1}(\X + \Gammabs) (\X + \Gammabs)^{*}$, we obtain immediately that 
$$
\sum_{i,j}\left|\frac{\partial g}{\partial X_{i,j}}\right|^2 \leq \frac{C}{B^{2}} \, \frac{1}{M} \Tr\left( f'^2(\tilde{\C}) \tilde{\C} \right).
$$
\bigbreak
As $f\in C^{\infty}$  and is compactly supported, the function $\lambda \rightarrow \lambda \,  f'^{2}(\lambda)$ is bounded by some constant, and there exists a nice constant $C$ such that 
$$
\|\nabla g(\X,\X^{*})\|^2 \leq \frac{C}{B^{2}}.
 $$
 This proves that $g$ is $\Ocal(\frac{1}{B})$--Lipschitz.  Paragraph \ref{subsection:lipschitz_concentration} thus leads to (\ref{eq:concentration-f(tildeC)}). 
 \subsection{Step 3: Evaluation of \texorpdfstring{$\frac{1}{M} \mathrm{Tr}\left( f(\hat{\C}(\nu))\right) - \frac{1}{M} \mathrm{Tr}\left( f(\tilde{\C}(\nu))\right)$}{}}
The goal of this paragraph is to establish the following Proposition.
\begin{proposition}
\label{prop:evaluation-trace-hatC-tildeC}
Let $\tilde{D}_N$ the distribution supported by $\Supp(\mu_{MP}^{(c_N)})$ with Stieltjes transform 
\begin{equation}
 \label{eq:def-tildep}
 \tilde{p}_N(z) = (z \, t_N(z))' = \frac{(z \, t_N(z) \, \tilde{t}_N(z))^{2}}{1 - c (z \, t_N(z) \, \tilde{t}_N(z))^{2}}.
\end{equation}
Then, if we denote $<\tilde{D}_N, f>$ by $\tilde{\phi}_N(f)$, we have 
\begin{multline}
    \label{eq:evaluation-trace-hatC-tildeC}
    \left| \frac{1}{M} \mathrm{Tr}\left( f(\hat{\C}(\nu))\right) - \frac{1}{M} \mathrm{Tr}\left( f(\tilde{\C}(\nu))\right) \right. \\ \left. -  \left( \frac{1}{2M} \sum_{m=1}^{M} \frac{s_m''(\nu)}{s_m(\nu)} \right) \; \tilde{\phi}_N(f) \; v_N \; \mathbf{1}_{\alpha > 2/3} \right| \prec u_N.
\end{multline}
\end{proposition}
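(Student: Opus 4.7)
The plan is to apply the Helffer--Sjöstrand formula \eqref{equ:helfjer-sjostrand-general} on the exponentially-high-probability events $\Lambda^{\hat{\C}}_\kappa \cap \Lambda^{\tilde{\C}}_\kappa$ (Corollaries \ref{corollary:spectre-tildeC-spectre-hatS} and \ref{corollary:localization_C_hat}) to reduce the claim, for $z \in \Cbb^+$, to controlling
\[ \tfrac{1}{M}\Tr\bigl(\hat{\Q}(z) - \tilde{\Q}(z)\bigr) = -\tfrac{1}{M}\Tr\bigl(\hat{\Q}(z)\,\Thetabs\,\tilde{\Q}(z)\bigr), \]
where the last equality is the resolvent identity for $\Thetabs = \hat{\C} - \tilde{\C}$. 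Substituting $\hat{\Q} = \Q - \Q\Deltabs\hat{\Q}$ and $\tilde{\Q} = \Q - \Q\tilde{\Deltabs}\tilde{\Q}$ (with $\Q$ the resolvent of $\X\X^*/(B+1)$) and using the norm bounds $\|\Thetabs\|\prec B^{-1/2} + B^{2}/N^{2}$, $\|\Deltabs\|\prec B^{-1/2} + B/N$, $\|\tilde{\Deltabs}\|\prec B/N$ from Theorem \ref{theorem:C_approximation_Wishart} and \eqref{equation:domination_Thetabs} together with $\|\Q\|,\|\hat{\Q}\|,\|\tilde{\Q}\| \le 1/\Im z$, it suffices to study $\tfrac{1}{M}\Tr(\Q\,\Thetabs\,\Q)$: the additional cross terms contribute $\Ocal_\prec(C(z)\,u_N)$ which, integrated against $\bar\partial\Phi_k(f)$ for $k$ large enough, yield a global $\Ocal_\prec(u_N)$ error.

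The second step decomposes $\Thetabs$ via \eqref{equation:decomposition_Theta} and uses the Taylor expansion
\[ \hat{s}_m^{-1/2} - s_m^{-1/2} = -\tfrac{1}{2} s_m^{-3/2}(\hat{s}_m - s_m) + \tfrac{3}{8} s_m^{-5/2}(\hat{s}_m - s_m)^2 + \Ocal_\prec\bigl((\hat{s}_m-s_m)^3\bigr), \]
valid on $\Lambda^{\hat{\D}}_\kappa$ by Lemma \ref{lemma:localization_s_m}. The crucial step is to split $\hat{s}_m - s_m = b_m(\nu) + \eta_m(\nu)$ with $b_m := \Ebb[\hat{s}_m] - s_m$ and $\eta_m := \hat{s}_m - \Ebb[\hat{s}_m]$. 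The Taylor expansion of $s_m$ inside the smoothing window already carried out in the proof of Proposition \ref{proposition:X_Gamma} (cf.\ \eqref{eq:expre-Trace-Phim}), together with Lemma \ref{lemma:brillinger_uniformity_covariance}, gives $b_m(\nu) = \tfrac12 s_m''(\nu)\, v_N + \Ocal((B/N)^{3} + 1/N)$ uniformly in $m,\nu$, while Hanson--Wright (as in the proof of Lemma \ref{lemma:localization_s_m}) yields $|\eta_m(\nu)|\prec B^{-1/2}$ uniformly. Plugging the bias part into the expansion of $\Thetabs$ and using $\D^{-1/2}\hat{\S}\D^{-1/2} = \tilde{\C} = \X\X^*/(B+1) + \tilde{\Deltabs}$, the deterministic contribution to $-\tfrac{1}{M}\Tr(\Q\Thetabs\Q)$ simplifies, after cancellation of the diagonal $s_m$-factors, to
\[ \tfrac{1}{2}\,v_N\,\Bigl(\tfrac{1}{M}\sum_m \tfrac{s_m''(\nu)}{s_m(\nu)}\Bigr)\,\tfrac{1}{M}\Tr\Bigl(\tfrac{\X\X^*}{B+1}\Q^2\Bigr) + \Ocal_\prec(u_N). \]
Using $\tfrac{\X\X^*}{B+1}\Q = \I + z\Q$ gives $\tfrac{1}{M}\Tr(\tfrac{\X\X^*}{B+1}\Q^2) = \tfrac{1}{M}(\Tr\Q + z\Tr\Q^2)$, whose expectation equals $(z\beta_N(z))'$; by \eqref{eq:beta-t} and \eqref{eq:control-derivee-epsilon} this is $\tilde{p}_N(z) + \Ocal(C(z)/B^2)$. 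Integrating back via \eqref{equ:helfjer-sjostrand-distribution} produces the announced main term $\bigl(\tfrac{1}{2M}\sum_m \tfrac{s_m''}{s_m}\bigr)\,\tilde{\phi}_N(f)\,v_N$. When $\alpha \le 2/3$ one has $v_N = \Ocal((B/N)^{2}) = \Ocal(1/B) = \Ocal(u_N)$, so the deterministic correction is itself absorbed into the $\Ocal_\prec(u_N)$ error, explaining the indicator $\mathbf{1}_{\alpha > 2/3}$.

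The main obstacle is to handle the fluctuation contribution $\eta_m$: a crude bound through $|\eta_m|\prec B^{-1/2}$ would yield an error of order $B^{-1/2}$, which \emph{exceeds} $u_N$ precisely in the regime $\alpha > 2/3$ where the correction is visible. To recover an $\Ocal_\prec(u_N)$ estimate one must exhibit cancellations between the randomness of $\{\eta_m\}_m$ and that of $\Q$ via the Gaussian integration-by-parts formula \eqref{eq:integration-by-part} combined with the Poincaré--Nash bounds \eqref{eq:var-trace-resolvent}--\eqref{eq:var-trace-resolvent-XX*}, applied to traces of the form $\tfrac{1}{M}\sum_m \eta_m\,(\Q\cdots\Q)_{mm}$. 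This is the lengthy but routine step: it brings the centred fluctuation contribution down to $\Ocal_\prec(u_N)$. The quadratic term $\Ebb[\eta_m^2] = \Ocal(1/B)$ contributes an additional deterministic piece of order $1/B = \Ocal(u_N)$ and is therefore absorbed, while the cross term $b_m\eta_m$ is $\Ocal_\prec(v_N \cdot B^{-1/2}) = o(u_N)$ in all regimes. Gathering all pieces then yields \eqref{eq:evaluation-trace-hatC-tildeC}.
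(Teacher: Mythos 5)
Your outline follows the same route as the paper (Helffer--Sjöstrand on the localization events, perturbation $\hat{\Q}-\tilde{\Q}=-\tilde{\Q}\Thetabs\hat{\Q}$ reduced to $\Tr(\Q^2\Thetabs)$, decomposition \eqref{equation:decomposition_Theta} and a Taylor expansion of $1/\sqrt{\hat{s}_m}$, bias giving the $v_N$-correction), and the bias computation $b_m=\tfrac12 s_m''\,v_N+\Ocal((B/N)^3+1/N)$ is correct and matches \eqref{eq:expre-biais-hatsm}. However, the step you flag as ``lengthy but routine'' is where the real work is, and your description of it misses the precise mechanism.

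The issue is the fluctuation $\eta_m/s_m = \tfrac{\x_m(\I+\Phibs_m)\x_m^*}{B+1}-1-\tfrac{1}{B+1}\Tr\Phibs_m$. A crude bound does give $\Ocal_\prec(B^{-1/2})$, but ``Gaussian IBP plus Poincar\'e--Nash cancellations'' is not what closes the gap. The paper must first split this into $\bigl(\tfrac{\|\x_m\|^2}{B+1}-1\bigr)$ — which is genuinely of size $B^{-1/2}$ — and $\bigl(\tfrac{\x_m\Phibs_m\x_m^*}{B+1}\bigr)^\circ$, which is of size $\|\Phibs_m/(B+1)\|_F=\Ocal(\sqrt{B}/N)=\Ocal(u_N)$ by \eqref{eq:frobenius-norm-phim} and Hanson--Wright (this is $\delta_{1,N}$). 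The first piece, summed against $(z\Q)'_{mm}$, produces the quantity $\zeta$ in \eqref{eq:def_zeta}; showing $|\zeta|\prec 1/B$ is the crux. The paper does this in two separate lemmas: $|\Ebb\zeta|=\Ocal(1/B)$ by Cauchy--Schwartz combined with $\Var(\Q^2)_{mm}=\Ocal(C(z)/B)$ from \eqref{eq:var-trace-resolvent} (Lemma~\ref{lemma:esperance-zeta}), and $|\zeta-\Ebb\zeta|\prec 1/B$ by a Lipschitz concentration argument that requires a nontrivial smoothing trick, replacing the non-Lipschitz factor $\tfrac{\|\x_m\|^2}{B+1}-1$ by $g_{B,\epsilon}\bigl(\tfrac{\|\x_m\|^2}{B+1}\bigr)$ on a high-probability event (Lemma~\ref{lemma:variance-zeta}). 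Neither step is an integration-by-parts argument, and the smoothing device is an idea that cannot be inferred from the heading ``exhibit cancellations.'' Additionally, your factorization of the bias contribution into $\tfrac12 v_N\bigl(\tfrac{1}{M}\sum_m\tfrac{s_m''}{s_m}\bigr)\cdot\tfrac1M\Tr(\tfrac{\X\X^*}{B+1}\Q^2)$ silently replaces $(z\Q)'_{mm}$ by its average over $m$; controlling the residual $\tfrac{1}{M}\sum_m\bigl((z\Q)'_{mm}\bigr)^\circ\tfrac{s_m''}{2s_m}v_N$ is a separate (small but necessary) concentration step, the paper's $\delta_{3,N}$.
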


\begin{remark}
(\ref{eq:evaluation-trace-hatC-tildeC}) implies that $\left| \frac{1}{M} \mathrm{Tr}\left( f(\hat{\C}(\nu))\right) - \frac{1}{M} \mathrm{Tr}\left( f(\tilde{\C}(\nu))\right) \right| \prec \frac{1}{B} $ if $\alpha \leq 2/3$. If $\alpha > 2/3$, the dominant term of $\frac{1}{M} \mathrm{Tr}\left( f(\hat{\C}(\nu))\right) - \frac{1}{M} \mathrm{Tr}\left( f(\tilde{\C}(\nu))\right)$ is the deterministic $\Ocal\left(\frac{B}{N}\right)^{2}$ term $\left( \frac{1}{M} \sum_{m=1}^{M} \frac{s_m''(\nu)}{s_m(\nu)} \right) \; \tilde{\phi}_N(f) \; v_N$, and its substraction from  $ \frac{1}{M} \mathrm{Tr}\left( f(\hat{\C}(\nu))\right) - \frac{1}{M} \mathrm{Tr}\left( f(\tilde{\C}(\nu))\right)$ allows to retrieve a term stochastically dominated by $u_N$. 
\end{remark}

\begin{remark}
We notice that \eqref{equation:domination_Thetabs} leads immediately to 
\begin{equation}
\label{eq:rough-evaluation-hatC-tildeC}
\left|\frac{1}{M}\Tr f(\hat{\C})(\nu)-\frac{1}{M}\Tr f(\tilde{\C})(\nu)\right| \prec\frac{B}{N} + \frac{1}{\sqrt{B}}
\end{equation}
an approximation which is considerably more pessimistic than (\ref{eq:evaluation-trace-hatC-tildeC}). As seen below, the derivation of (\ref{eq:evaluation-trace-hatC-tildeC}) is rather demanding, and 
is based on subtle effects. In order to understand why (\ref{eq:rough-evaluation-hatC-tildeC}) can be improved, we consider the simple case $f(\lambda) = \log \lambda$. We thus have
$$
\frac{1}{M}\Tr f(\hat{\C})(\nu)-\frac{1}{M}\Tr f(\tilde{\C})(\nu) =  \frac{1}{M} \sum_{m=1}^{M}
\left( \log s_m(\nu) - \log \hat{s}_m(\nu)\right)
$$
which depends only on the estimators $(\hat{s}_m(\nu))_{m=1, \ldots, M}$. We just provide a 
brief analysis of the above term. For this, we first remark that it is possible to study 
$\frac{1}{M} \sum_{m=1}^{M} \left( \log s_m(\nu) - \log \hat{s}_m(\nu)\right)$
on the event $\Lambda_\epsilon^{\hat{\D}}(\nu)$ defined by \eqref{equation:definition_Lambda_D}. For each $m$, we expand around $s_m$ the logarithm up to the second order, and obtain that 
\begin{multline}
    \label{eq:expansion-log}
    \frac{1}{M} \sum_{m=1}^{M} \left( \log s_m(\nu) - \log \hat{s}_m(\nu)\right) \\ = -\frac{1}{M}\sum_{m=1}^M (\hat{s}_m-s_m) \frac{1}{s_m} + \frac{1}{M}\sum_{m=1}^M \frac{1}{2}\left(\frac{\hat{s}_m-s_m}{\theta_m}\right)^{2}
\end{multline}
where for each $m$, $\theta_m$ is located between $s_m$ and $\hat{s}_m$. 
Lemma \ref{lemma:concentration_somme_carre_shat} allows to conclude that 
the second term of the right hand side of (\ref{eq:expansion-log}) is dominated 
by $\frac{1}{B} + \left( \frac{B}{N} \right)^{4} = \Ocal(u_N)$ term. In order to evaluate the first term of the r.h.s. of  (\ref{eq:expansion-log}), we note that (\ref{eq:expre-biais-hatsm}) leads to
$$
\frac{1}{M}\sum_{m=1}^M (\mathbb{E}(\hat{s}_m-s_m)) \frac{1}{s_m} =  \frac{1}{2M} \sum_{m=1}^{M} \frac{s_m''}{s_m} \, v_N + \Ocal\left( \left( \frac{B}{N} \right)^{3} + \frac{1}{N} \right).
$$
As $\frac{\hat{s}_m}{s_m} = \frac{\x_m (\I + \Phibs_m) \x_m^*}{B+1}$ we finally remark that
$$
\frac{1}{M}\sum_{m=1}^M \frac{\hat{s}_m - \mathbb{E}(\hat{s}_m)}{s_m}
$$
can be interpreted as a recentered quadratic form of the $M(B+1)$--dimensional 
vector $\x = (\x_1^{T}, \ldots, \x_M^{T})^{T}$. The stochastic domination
relation
$$
\left| \frac{1}{M}\sum_{m=1}^M \frac{\hat{s}_m - \mathbb{E}(\hat{s}_m)}{s_m} \right|\prec \frac{1}{B}
$$
then follows from the Hanson-Wright inequality. 
Putting all the pieces together, and using that $ \frac{1}{B} + \Ocal\left( \left(  \frac{B}{N} \right)^{3} + \frac{1}{N} \right)  = \Ocal(u_N)$ and that $v_N = o(u_N)$ if $\alpha < 2/3$, we obtain that 
$$
\left| \frac{1}{M} \sum_{m=1}^{M} \left( \log s_m(\nu) - \log \hat{s}_m(\nu)\right) +  \frac{1}{2M} \sum_{m=1}^{M} \frac{s_m''(\nu)}{s_m(\nu)} \, v_N \, \mathbf{1}_{\alpha > 2/3} \right| \prec u_N.
$$
Comparing this result with (\ref{eq:evaluation-trace-hatC-tildeC}), we deduce that  $<\tilde{D}_N, f> = -1$. We just check this formula directly. For this, we notice that function $z \rightarrow  \log z$ is holomorphic inside a neighbourhood of the interval $[a_1, a_2]$. We consider the expression (\ref{eq:inversion-formula-distribution}) of $<\tilde{D}_N, f >$ and remark that if $\left(\partial \mathcal{R}_{\epsilon}\right)_{-}$ denotes the negatively oriented contour 
$$
\left(\partial \mathcal{R}_{\epsilon}\right)_{-} = \{ \lambda \pm i \epsilon, \lambda \in [a_1,a_2] \} \cup \{ a_1 + 
iy, y \in [-\epsilon, \epsilon] \} \cup \{ a_2 + 
iy, y \in [\epsilon, -\epsilon] \} 
$$
then, by (\ref{eq:inversion-formula-distribution}), $<\tilde{D}_N, f >$ can also be written as the contour integral
$$
<\tilde{D}_N, f > = \lim_{\epsilon \rightarrow 0} \frac{1}{2 i \pi } \int_{ (\partial \mathcal{R}_{\epsilon})_{-}} \log z \, \tilde{p}_N(z) \, dz.
$$
But, the above contour integral does not depend on $\epsilon$, so that for each $\epsilon$, we have 
$$
<\tilde{D}_N, f > =  \frac{1}{2 i \pi } \int_{ (\partial \mathcal{R}_{\epsilon})_{-}} \log z \, \tilde{p}_N(z) \, dz .
$$
Using the expression of $\tilde{p}_N(z)$ and the integration by parts trick, we get that 
$$
<\tilde{D}_N, f > = - \frac{1}{2 i \pi } \int_{ (\partial \mathcal{R}_{\epsilon})_{-}} t_N(z) \, dz.
$$
Taking the limit $\epsilon \rightarrow 0$, and using the Stieltjes inversion formula for the Marcenko-Pastur distribution $\mu_{MP}^{(c_N)}$, we finally obtain that 
$$
<\tilde{D}_N, f > = - \frac{1}{ \pi} \lim_{\epsilon \rightarrow 0} \int_{a_1}^{a_2} \Im (t_N(\lambda + i \epsilon)) \, d \lambda = - \mu_{MP}^{(c_N)}([a_1,a_2]) =  -1
$$
which is the expected result.

\end{remark}

\begin{proof}
We now establish (\ref{eq:evaluation-trace-hatC-tildeC}). In order to simplify the notations, we put 
\begin{equation}
    \label{eq:def-tilder}
    \tilde{r}_N(\nu) = \frac{1}{2M} \sum_{m=1}^{M} \frac{s_m''(\nu)}{s_m(\nu)}.
\end{equation}
The Helffer-Sjöstrand formula implies that 
\begin{multline*}
    \frac{1}{M}\Tr f(\hat{\C}) - \frac{1}{M}\Tr f(\tilde{\C}) - \tilde{r}_N(\nu) \; \tilde{\phi}_N(f) \; v_N \; \mathbf{1}_{\alpha > 2/3} = \\ 
    \frac{1}{\pi}\Re\int_{\Dcal}\diff x \diff y\,\bar{\partial}\Phi_k(f)(z) \left[ \frac{1}{M}(\Tr\hat{\Q}(z)-\Tr\tilde{\Q}(z)) - 
    \tilde{r}_N(\nu) \; \tilde{p}_N(z) \; v_N \; \mathbf{1}_{\alpha > 2/3} \right].
    \label{equation:helfjer_sjostrand}
\end{multline*}

\subsubsection{Reduction to the study of \texorpdfstring{$\zeta$}{} }
We define
\begin{equation}
    \label{eq:def_zeta}
    \zeta=\int_\Dcal\diff x \diff y\,\overline{\partial}\Phi_k(f)(z) \frac{1}{M} \sum_{m=1}^{M} (z\Q)'_{mm} \left(\frac{\|\x_m\|_2^{2}}{B+1} -1 \right)
\end{equation}
where we recall that the row vectors $(\x_m)_{m=1, \ldots, M}$ 
are the rows of the i.i.d. matrix $\X$. We establish in this paragraph that 
\begin{equation}
\label{eq:reduction}
 \left| \int_{\Dcal} \diff x \diff y\, \bar{\partial}\Phi_k(f)(z) \left( \frac{1}{M} \Tr\{\hat{\Q}-\tilde{\Q}\} - \tilde{r}_N(\nu) \; \tilde{p}_N(z) \; v_N \; \mathbf{1}_{\alpha > 2/3}\right) - \zeta\right| \prec u_N.
 \end{equation}
It turns out that by Lemma \ref{lemma:variance-zeta} and Lemma \ref{lemma:esperance-zeta}
in Paragraph \ref{paragraph:emmes-zeta} below, 
$\zeta$ satisfies the key properties: 
$$ |\zeta| \le |\zeta-\Ebb\zeta|+|\Ebb\zeta| \prec \frac{1}{B}.$$
(\ref{eq:evaluation-trace-hatC-tildeC}) will then follow directly from \eqref{eq:reduction}. \\

Plugging in the integral expression of $\zeta$, and using the expression (\ref{eq:def-tildep}), we get:
\begin{multline*}
     \left|\int_{\Dcal} \bar{\partial}\Phi_k(f)(z) \left( \frac{1}{M} \Tr\{\hat{\Q}-\tilde{\Q}\} -  \tilde{r}_N(\nu) \; \tilde{p}_N(z) \; v_N \; \mathbf{1}_{\alpha > 2/3}\right) \diff x \diff y - \zeta \right| \\
     = \left|\int_\Dcal \diff x \diff y\overline{\partial}\Phi_k(f)(z) \left( \frac{1}{M} \Tr\{\hat{\Q}-\tilde{\Q}\} -  \tilde{r}_N \; (z t_N(z))' \; v_N \; \mathbf{1}_{\alpha > 2/3} \right. \right. \\ \left. \left.  - \frac{1}{M} \sum_{m=1}^M (z \Q)'_{mm} \left(\frac{\|\x_m\|_2^{2}}{B+1} -1 \right) \right)\right| .
\end{multline*}

We recall the definition of $\Thetabs:=\hat{\C} - \tilde{\C}$ from \eqref{equation:definition_Theta}. We will proceed in three steps, which, in turn, will imply (\ref{eq:reduction}): 
\begin{enumerate}
    \item \begin{equation}
            \label{eq:controle_Q_hat_tilde_Q_carre}
            \left|\int_\Dcal\diff x \diff y\, \overline{\partial}\Phi_k(f)(z)\left(\frac{1}{M} \Tr\{\hat{\Q}-\tilde{\Q}\}+ \frac{1}{M} \Tr\{\Q^2\Thetabs\}\right)\right| \prec u_N
        \end{equation}
    \item \begin{multline}
        \label{eq:controle_Q_carre}
        \left|\int_\Dcal\diff x \diff y\, \overline{\partial}\Phi_k(f)(z)\left(\frac{1}{M} \Tr\{\Q^2\Thetabs\} \right. \right. \\ \left. \left. -2\, \frac{1}{M} \Tr\frac{\X\X^*}{B+1}\Q^2(\hat{\D}^{-1/2}\D^{1/2}-\I)\right)\right| \prec u_N
    \end{multline}
    \item \begin{multline}
        \label{eq:controle_Q_XX}
        \left|\int_\Dcal\diff x \diff y\, \overline{\partial}\Phi_k(f)(z)\times\right. \\
        \left.\left(2\, \frac{1}{M} \Tr\frac{\X\X^*}{B+1}\Q^2(\I - \hat{\D}^{-1/2}\D^{1/2})-  \tilde{r}_N  \; (z t_N(z))' \; v_N \; \mathbf{1}_{\alpha > 2/3}  - \right. \right. \\ \left. \left. \frac{1}{M} \sum_{m=1}^M(z\Q)'_{mm} \left(\frac{\|\x_m\|_2^{2}}{B+1} -1  \right)\right)\right| \prec u_N
    \end{multline}
\end{enumerate}

\textbf{Step 1.} Using the well-known identity $\A^{-1}-\B^{-1}=\B^{-1}(\B-\A)\A^{-1}$, we express $\hat{\Q}-\tilde{\Q}$ as: 
\begin{equation}
    \label{equation:perturbation_Q}
    \hat{\Q}-\tilde{\Q} = - \tilde{\Q}\Thetabs\hat{\Q}.
\end{equation}
We claim that it is possible to approximate $\Tr\tilde{\Q}\Thetabs\hat{\Q}$ by 
$ \Tr\Q\Thetabs\Q$. Indeed, we have
\begin{align*}
    &|\Tr\tilde{\Q}\Thetabs\hat{\Q} - \Tr\Q\Thetabs\Q| \\
    &\hskip1cm= |\Tr\tilde{\Q}\Thetabs\hat{\Q} - \Tr\tilde{\Q}\Thetabs\tilde{\Q} + \Tr\tilde{\Q}\Thetabs\tilde{\Q} - \Tr\tilde{\Q}\Thetabs\Q + \Tr\tilde{\Q}\Thetabs\Q- \Tr\Q\Thetabs\Q|  \\
    &\hskip1cm\le |\Tr\tilde{\Q}\Thetabs\hat{\Q}  - \Tr\tilde{\Q} \Thetabs\tilde{\Q} | + |\Tr\tilde{\Q}\Thetabs\tilde{\Q} - \Tr\tilde{\Q}\Thetabs\Q| + |\Tr\tilde{\Q}\Thetabs\Q-\Tr\Q\Thetabs\Q| \\
    &\hskip1cm:= T_1 + T_2 + T_3.
\end{align*}
The following rough bounds are enough to control $T_1$ (we used \eqref{eq:inegalite_stieltjes_transform_Q} to control the norm of the resolvents):
$$ T_1 = |\Tr\tilde{\Q}\Thetabs(\hat{\Q}-\tilde{\Q})| = |\Tr\tilde{\Q}\Thetabs\tilde{\Q}\Thetabs\hat{\Q}| \le M\|\tilde{\Q}\|^2\|\hat{\Q}\|\|\|\Thetabs\|^2 \le \frac{1}{\Im^3z}M\|\Thetabs\|^2.$$

Concerning $T_2$ and $T_3$, we write similarly that $\tilde{\Q} - \Q = - \tilde{\Q} \tilde{\Deltabs} \Q$, and obtain that 
$$ T_2 = |\Tr\tilde{\Q}\Thetabs\tilde{\Q}-\Tr\tilde{\Q}\Thetabs\Q|\le M\|\tilde{\Q}\|^2\|\Q\|\|\tilde{\Deltabs}\|\Thetabs\| \le \frac{1}{\Im^3z}M\|\tilde{\Deltabs}\|\|\Thetabs\| $$
$$ T_3 = |\Tr\tilde{\Q}\Thetabs\Q-\Tr\Q\Thetabs\Q| \le M\|\tilde{\Q}\|\|\Q\|^2\|\tilde{\Deltabs}\|\Thetabs\| \le \frac{1}{\Im^3z}M\|\tilde{\Deltabs}\|\|\Thetabs\|.$$

Plugging these estimations into the left hand side of \eqref{eq:controle_Q_hat_tilde_Q_carre}, 
we obtain that
\begin{align*}
    &\left|\int_\Dcal\diff x \diff y\, \overline{\partial}\Phi_k(f)(z)\left(\frac{1}{M} \Tr\{\hat{\Q}-\tilde{\Q}\}-\frac{1}{M} \Tr\{\Q^2\Thetabs\}\right)\right| \\
    &\hskip4cm\le  \int_\Dcal\diff x \diff y |\overline{\partial}\Phi_k(f)(z)|\frac{1}{M} (T_1+T_2+T_3) \\
    &\hskip4cm\le C (\|\Thetabs\|^2+2\|\tilde{\Deltabs}\|\Thetabs\|) .
\end{align*}
Moreover, the concentration results \eqref{equation:domination_Thetabs} for $\|\Thetabs\|$ and \eqref{eq:concentration-tilde-Delta} for $\|\tilde{\Deltabs}\|$ from Proposition \ref{proposition:X_Gamma}, imply that 
$$
\|\Thetabs\|^2 + 2 \|\Thetabs\| \|\tilde{\Deltabs}\| \prec \frac{1}{B} + \frac{1}{\sqrt{B}}
\, \frac{B}{N} + \left( \frac{B}{N} \right)^{3} = u_N.
$$
This finally establishes (\ref{eq:controle_Q_hat_tilde_Q_carre}). 

\bigbreak
\textbf{Step 2.} We claim that:
\begin{equation}
    \label{eq:approximation_Thetabs}
    \left\|\Thetabs - \left((\hat{\D}^{-1/2}\D^{1/2}-\I)\frac{\X\X^*}{B+1} + \frac{\X\X^*}{B+1}(\D^{1/2}\hat{\D}^{-1/2}-\I)\right)\right\| \prec u_N.
\end{equation}
We recall that $\hat{\S}$ can be written using the definition \eqref{eq:def-tildeC} of $\tilde{\C}$, and use the decomposition \eqref{eq:representation-tildeC} of $\tilde{\C}$ from Corollary \ref{corollary:concentration_delta}. Using these results, 
we get that
$$ \hat{\S} = \D^{1/2} \, \tilde{\C} \, \D^{1/2} = \D^{1/2}\left(\frac{\X\X^*}{B+1}+\tilde{\Deltabs}\right)\D^{1/2} . $$
Plugging this expression of $\hat{\S}$ into \eqref{equation:decomposition_Theta}, we obtain easily that 
 \begin{align*}
    \Thetabs = &
    (\hat{\D}^{-1/2}\D^{1/2}-\I)\left(\frac{\X\X^*}{B+1}+\tilde{\Deltabs}\right)\D^{1/2}\hat{\D}^{-1/2} \\
    &\hskip5cm+ \left(\frac{\X\X^*}{B+1}+\tilde{\Deltabs}\right)(\D^{1/2}\hat{\D}^{-1/2}-\I)\\
    &:= \Thetabs_1 + \Thetabs_2 .
 \end{align*}
 
As $\tilde{\Deltabs}$ is a negligible quantity, one should expect that the leading quantity in $\Thetabs_1$ and $\Thetabs_2$ is respectively $(\hat{\D}^{-1/2}\D^{1/2}-\I)\frac{\X\X^*}{B+1}\D^{1/2}\hat{\D}^{-1/2}$ and $\frac{\X\X^*}{B+1}(\D^{1/2}\hat{\D}^{-1/2}-\I)$. To prove it, write:
\begin{align}
\notag
    &\left\|\Thetabs_1 -  (\hat{\D}^{-1/2}\D^{1/2}-\I)\frac{\X\X^*}{B+1}\D^{1/2}\hat{\D}^{-1/2} \right\| \\
    \notag 
    &\hskip4cm= \left\|(\hat{\D}^{-1/2}\D^{1/2}-\I)\tilde{\Deltabs}\D^{1/2}\hat{\D}^{-1/2}\right\| \\
\label{equation:approximation_1_Thetabs_1}
    &\hskip4cm\le \|\hat{\D}^{-1/2}\D^{1/2}-\I\|\|\tilde{\Deltabs}\|\|\D^{1/2}\hat{\D}^{-1/2}\|.
\end{align}

$\tilde{\Deltabs}$ is controlled by \eqref{eq:concentration-tilde-Delta} from Corollary \ref{corollary:concentration_delta}, and $\hat{\D}^{-1/2}\D^{1/2}-\I$ is controlled by \eqref{equation:domination_D_racine_carre} from Lemma \ref{lemma:concentration_ratio_s_shat} (it is a diagonal matrix which elements are stochastically dominated by Lemma \ref{lemma:concentration_ratio_s_shat}). Moreover, from Lemma \ref{lemma:concentration_ratio_s_shat}, it holds that $\|\D^{1/2}\hat{\D}^{-1/2}\|\prec 1$. Combining these estimates into \eqref{equation:approximation_1_Thetabs_1}, one gets:
\begin{equation}
\label{equation:approximation_2_Thetabs_1}
    \left\|\Thetabs_1 -  (\hat{\D}^{-1/2}\D^{1/2}-\I)\frac{\X\X^*}{B+1}\D^{1/2}\hat{\D}^{-1/2} \right\| \prec \left(\frac{1}{\sqrt{B}}+\frac{B^2}{N^2}\right)\frac{B}{N} = \Ocal(u_N).
\end{equation}

Using that $\|\hat{\D}^{-1/2}\D^{1/2}-\I\| \prec \frac{1}{\sqrt{B}}+\frac{B^2}{N^2}$ as well as  \eqref{eq:stoc-domination-eig-sing-wishart} from Paragraph \ref{section:haagerup} to control the norm of $\X\X^*/(B+1)$, one can further approximate $(\hat{\D}^{-1/2}\D^{1/2}-\I)\frac{\X\X^*}{B+1}\D^{1/2}\hat{\D}^{-1/2}$ by 
$(\hat{\D}^{-1/2}\D^{1/2}-\I)\frac{\X\X^*}{B+1}$. In particular, it is easy to check that
\begin{equation}
\label{equation:approximation_finale_Thetabs_1}
   \left \|\Thetabs_1 -  (\hat{\D}^{-1/2}\D^{1/2}-\I)\frac{\X\X^*}{B+1}\right\| \prec u_N.
\end{equation}

%{\color{blue}
%\[
%    \left \|\Thetabs_1 -  (\hat{\D}^{-1/2}\D^{1/2}-\I)\frac{\X\X^*}{B+1}\right\| \prec \frac{1}{\sqrt{B}}\frac{B}{N} + \frac{B^3}{N^3} + \frac{1}{B} + \frac{B^4}{N^4}
%\]
%}

Similarly for $\Thetabs_2$, one would obtain:
\begin{equation}
\label{equation:approximation_finale_Thetabs_2}
   \left \|\Thetabs_2 -  \frac{\X\X^*}{B+1}(\D^{1/2}\hat{\D}^{-1/2}-\I)\right\| \prec u_N.
\end{equation}

Combining \eqref{equation:approximation_finale_Thetabs_1} and  \eqref{equation:approximation_finale_Thetabs_2}, we obtain \eqref{eq:approximation_Thetabs}. To finish the proof of Step 2, it remains to consider $\Tr\Q^2\Thetabs$ and prove \eqref{eq:controle_Q_carre}. Remark that $\X\X^*/(B+1)$ and its resolvent $\Q$ commutes.
\begin{align}
\notag
    &\Tr\Q^2\left((\hat{\D}^{-1/2}\D^{1/2}-\I)\frac{\X\X^*}{B+1} + \frac{\X\X^*}{B+1}(\D^{1/2}\hat{\D}^{-1/2}-\I)\right) \\
\label{eq:approximation_Thetabs_permute}
    &\hskip2cm= 2 \, \Tr\frac{\X\X^*}{B+1}\Q^2(\hat{\D}^{-1/2}\D^{1/2}-\I)
\end{align}

Therefore, using \eqref{eq:approximation_Thetabs_permute}:
\begin{multline}
\label{equation:approximation_trace_Q_square_Thetabs}
    \left| \frac{1}{M} \Tr\Q^2\Thetabs - 2\, \frac{1}{M}  \Tr\frac{\X\X^*}{B+1}\Q^2(\hat{\D}^{-1/2}\D^{1/2}-\I)\right | \\ \le \|\Q\|^2\left\|\Thetabs - \left((\hat{\D}^{-1/2}\D^{1/2}-\I)\frac{\X\X^*}{B+1} + \frac{\X\X^*}{B+1}(\D^{1/2}\hat{\D}^{-1/2}-\I)\right)\right\|
\end{multline}
so that the left hand side of \eqref{eq:approximation_Thetabs} is recognised in the right hand side of \eqref{equation:approximation_trace_Q_square_Thetabs}. We can finally prove \eqref{eq:controle_Q_carre} by following the same idea as in Step 1:
\begin{align*}
    &\left|\int_\Dcal\diff x \diff y\, \overline{\partial}\Phi_k(f)(z)\, \frac{1}{M} \left(\Tr\{\Q^2\Thetabs\}-2\Tr\frac{\X\X^*}{B+1}\Q^2(\hat{\D}^{-1/2}\D^{1/2}-\I)\right)\right| \\
    &\hskip1cm\le  \left\|\Thetabs - \left((\hat{\D}^{-1/2}\D^{1/2}-\I)\frac{\X\X^*}{B+1} + \frac{\X\X^*}{B+1}(\D^{1/2}\hat{\D}^{-1/2}-\I)\right)\right\| \\
    &\hskip7cm\times\underbrace{\int_\Dcal |\overline{\partial}\Phi_k(f)(z)|\frac{1}{\Im^2z}\diff x \diff y}_{<+\infty}.
\end{align*}
This proves  \eqref{eq:controle_Q_carre} and ends Step 2.

\bigbreak
\textbf{Step 3.}
By definition of the resolvent, the following identity holds $\left(\frac{\X\X^*}{B+1}-z\I_M\right)\Q(z) = \I_M $, which leads to the so-called resolvent identity:
\begin{equation}
\label{equation:resolvent_identity}
    \frac{\X\X^*}{B+1}\Q = \I_M+z\Q.
\end{equation}

Using \eqref{equation:resolvent_identity} as well the identity $\Q'(z) = \Q^{2}(z)$ one can write:
\begin{align}
    \label{eq:trace_Q_carre}
    \notag
    \frac{1}{M} \Tr\frac{\X\X^*}{B+1}\Q^2(\I - \D^{1/2}\hat{\D}^{-1/2}) &= \frac{1}{M} \Tr(\I+z\Q)\Q(\I - \D^{1/2}\hat{\D}^{-1/2}) \\
    &= \frac{1}{M} \, \sum_{m=1}^{M} \left( z \Q \right)'_{mm} \, \left( 1 - \sqrt{\frac{s_m}{\hat{s}_m}}\right).
\end{align}

To handle $1 - \sqrt{\frac{s_m}{\hat{s}_m}}$ we use the following Taylor expansion: define the mapping $h$ by $h(u)=1 - \frac{1}{\sqrt{u}}$, with $h'(u)=\frac{1}{2}\frac{1}{u^{3/2}}$ and $h''(u)=-\frac{3}{4}\frac{1}{u^{5/2}}$.
A Taylor expansion to the second order of $h$ around $1$ provides:
\begin{multline*}
    h\left(\frac{\hat{s}_m}{s_m}\right) = h(1) + \left(\frac{\hat{s}_m}{s_m} - 1\right)h'(1) + \frac{1}{2}\left(\frac{\hat{s}_m}{s_m} - 1\right)^2h''(\theta_m) \\ = \frac{1}{2s_m}(\hat{s}_m-s_m) + \frac{1}{2}\frac{h''(\theta_m)}{s_m^2}(\hat{s}_m-s_m)^2
\end{multline*} 
where $\theta_m$ is some random quantity between $\hat{s}_m$ and $s_m$. Therefore \eqref{eq:trace_Q_carre} becomes 
\begin{multline*}
    \frac{1}{M}\Tr\left((z\Q)'(\I - \D^{1/2}\hat{\D}^{-1/2})\right) \\ =  \frac{1}{M}\Tr\left((z\Q)'\diag\left(\frac{\hat{s}_m-s_m}{2s_m}+\frac{1}{2}\frac{h''(\theta_m)(\hat{s}_m-s_m)^2}{s_m^2}: m\in\{1,\ldots,M\}\right)\right).
\end{multline*}
Lemma \ref{lemma:localization_s_m} implies that the set $\Lambda_\epsilon^{\hat{\D}}(\nu)$ defined by \eqref{equation:definition_Lambda_D} holds with exponentially high probability. Therefore, 
it is sufficient to study the term  $\frac{1}{M} \Tr (z \Q)'(\I - \D^{1/2}\hat{\D}^{-1/2})$ on the event 
$\Lambda_\epsilon^{\hat{\D}}(\nu)$. If $\Lambda_\epsilon^{\hat{\D}}(\nu)$ holds, $\theta_m$
belongs to $ [\barbelow{s},\bar{s}]+\epsilon$ for each  $m\in\{1, \ldots, M\}$, and  
$\sup_{m\ge1}|h''(\theta_m)|$ is bounded by a nice constant. Moreover, as $\inf_\nu\inf_{m\ge1} s_m(\nu)$ is bounded away from zero, 
there exists a nice constant $C$ for which the inequality
\begin{multline*}
   \left|\frac{1}{M} \Tr\left((z\Q)' \diag\left(\frac{1}{2}\frac{h''(\theta_m)(\hat{s}_m-s_m)^2}{s_m^2}: m\in\{1,\ldots,M\}\right)\right) \right|  \\ \le C (\|\Q\|+z\|\Q\|^2) \frac{1}{M} \sum_{m=1}^M(\hat{s}_m-s_m)^2 \leq C(z) \,  \frac{1}{M} \sum_{m=1}^M(\hat{s}_m-s_m)^2 
\end{multline*} 
holds on $\Lambda_\epsilon^{\hat{\D}}(\nu)$, where we recall that $C(z)$ can be written as 
$P_1(|z|) P_2(\frac{1}{\Im z})$ for some nice polynomials $P_1$ and $P_2$.
Following again the same argument as in Step 1, we obtain that
\begin{align*}
    &\left|\int_\Dcal\diff x \diff y\overline{\partial}\Phi_k(f)(z)\left\{\frac{1}{M} \Tr(z\Q)'(\I - \hat{\D}^{-1/2}\D^{1/2})\right.\right.\\
    &\hskip3cm\left.\left.-\frac{1}{M}\Tr(z\Q)'\diag\left(\frac{\hat{s}_m-s_m}{2s_m}: m\in\{1,\ldots,M\}\right)\right\}\right| \\
    &\quad\le C \frac{1}{M} \sum_{m=1}^M(\hat{s}_m-s_m)^2
\end{align*}
on $\Lambda_\epsilon^{\hat{\D}}(\nu)$ provided $k \geq \mathrm{Deg}(P_2)$ . Lemma \ref{lemma:concentration_somme_carre_shat} 
in Appendix implies that
$$ \frac{1}{M} \sum_{m=1}^M(\hat{s}_m-s_m)^2 \prec \frac{1}{B} +\frac{B^4}{N^4} = \Ocal(u_N) .$$
We have thus shown that 
\begin{align*}
    &\left|\int_\Dcal\diff x \diff y\overline{\partial}\Phi_k(f)(z)\left\{ \frac{1}{M} \Tr(z\Q)'(\I - \hat{\D}^{-1/2}\D^{1/2})\right.\right.\\
    &\hskip3cm\left.\left.-\frac{1}{M} \Tr(z\Q)'\diag\left(\frac{\hat{s}_m-s_m}{2s_m}: m\in\{1,\ldots,M\}\right)\right\}\right| \\
    &\prec u_N.
\end{align*}

We denote by $\eta_N(z)$ the term defined by 
\begin{multline}
    \label{def-etaN}
    \eta_N(z) =  \frac{1}{M} \sum_{m=1}^{M} (z\Q)'_{mm} \left( \frac{\hat{s}_m-s_m}{s_m} - \left(\frac{\|\x_m\|_2^{2}}{B+1} -1  \right)\right) \\ - \tilde{r}_N(z) (z t_N(z))' v_N \mathbf{1}_{\alpha > 2/3}
\end{multline}
and define $\delta_N$ as 
$$
\delta_N = \int_\Dcal\diff x \diff y\, \overline{\partial}\Phi_k(f)(z) \, \eta_N(z).
$$
In order to establish (\ref{eq:controle_Q_XX}), it is sufficient to prove that $|\delta_N| \prec u_N$. For this, we first remark 
that $\hat{s}_m = s_m \frac{\x_m (\I + \Phibs_m) \x_m^*}{B+1}$, so that $\eta_N(z)$ can also be written as
\begin{equation}
    \label{eq:expre-etaN}
   \eta_N(z) = \frac{1}{M} \sum_{m=1}^{M} (z\Q)'_{mm} \, \frac{\x_m \Phibs_m \x_m^*}{B+1} -  \tilde{r}_N (z t_N(z))' v_N \mathbf{1}_{\alpha > 2/3}.
\end{equation}
We express $\eta_N(z)$ as $\eta_N(z) = \eta_{1,N}(z) + \eta_{2,N}(z) + \eta_{3,N}(z)$
where $(\eta_{i,N})_{i=1,2,3}$ are defined by 
\begin{align*}
    \eta_{1,N}(z) = \frac{1}{M} \sum_{m=1}^{M} (z\Q)'_{mm} \left( \frac{\x_m \Phibs_m \x_m^*}{B+1} - \frac{1}{B+1} \Tr \Phibs_m \right) & \\
    \eta_{2,N}(z) =  \frac{1}{M} \sum_{m=1}^{M} \Ebb[(z\Q)'_{mm}] \, \frac{1}{B+1} \Tr \Phibs_m
    - \tilde{r}_N (z t_N(z))' v_N \mathbf{1}_{\alpha > 2/3} & \\
    \eta_{3,N}(z) =  \frac{1}{M} \sum_{m=1}^{M} ((z\Q)'_{mm})^{\circ} \, \frac{1}{B+1} \Tr \Phibs_m
\end{align*}
and denote by $(\delta_{i,N})_{i=1,2,3}$ the contributions of $(\eta_{i,N})_{i=1,2,3}$ 
to $\delta_N$. We recall the definition (\ref{eq:def-xrond}) of $ ((z\Q)'_{mm})^{\circ}$. In order to evaluate $\delta_{1,N}$, we note that $|(z\Q)'_{mm}| = |Q_{mm} + z \Q^{2}_{mm}| \leq C(z)$ and that 
$$
|\eta_{1,N}(z)| \leq C(z) \, \sup_{m=1, \ldots, M} \left|  \frac{\x_m \Phibs_m \x_m^*}{B+1} - \frac{1}{B+1} \Tr \Phibs_m \right| .
$$
Therefore, for $k$ large enough, $\delta_{1,N}$ satisfies $|\delta_{1,N}| \leq C \sup_{m=1, \ldots, M}  \left|  \frac{\x_m \Phibs_m \x_m^*}{B+1} - \frac{1}{B+1} \Tr \Phibs_m \right|$. The Hanson-Wright inequality as well as the bound (\ref{eq:frobenius-norm-phim}) of the Frobenius norm of $\Phibs_m$ imply that $|\delta_{1,N}|\ \prec u_N$. We now evaluate $\delta_{2,N}$. For this, 
we notice that the results reviewed in Paragraph \ref{subsubsec:resolvent-MP} imply that 
$\mathbb{E} (z\Q)'_{mm} = (z \beta_N(z))' = (z t_N(z))' + (z\epsilon_N(z))'$ where $|(z\epsilon_N(z))'| \leq \frac{C(z)}{M^{2}}$. 
Therefore, using (\ref{eq:expre-Trace-Phim}), we obtain that 
\begin{align*}
\eta_{2,N} (z)) = & \left( \frac{1}{M} \sum_{m=1}^{2M} \frac{s''_m}{s_m} \right)  \, (z t_N(z))' \, v_N + \epsilon_{1,N}(z) - \tilde{r}_N \, (z t_N(z))' \, v_N \, \mathbf{1}_{\alpha > 2/3}     \\
 = & \tilde{r}_N  \, (z t_N(z))' \, v_N \, \mathbf{1}_{\alpha \leq 2/3} + \epsilon_{1,N}(z)
\end{align*}
where $\epsilon_{1,N}(z)$ satisfies $|\epsilon_{1,N}(z)| \leq C(z) \, u_N$. We then deduce that $|\eta_{2,N}(z)| \leq C(z) u_N$ because if $\alpha \leq 2/3$, $v_N \leq u_N$. This implies that $|\delta_{2,N}| = \Ocal(u_N)$. In order to address $\delta_{3,N}$, we interpret $\delta_{3,N}$ 
as a function $g$ of $(\X,\X^*)$, and use the Gaussian concentration inequality presented in Paragraph \ref{subsection:lipschitz_concentration}. In particular, we verify that 
$$
\| \nabla g \| \leq  C \, \frac{1}{\sqrt{B}} \left( \frac{B}{N} \right)^{2} = o(u_N).
$$
As $\mathbb{E}(\delta_{3,N}) = 0$, this leads immediately to $|\delta_{3,N}| \prec u_N$. We just check that 
\begin{equation}
\label{eq:partial-delta3}
\sum_{i,j} \left| \frac{\partial g}{\partial X_{ij}} \right|^{2} \leq C \frac{1}{B} \left( \frac{B}{N} \right)^{4}.
\end{equation}
For this, we express $(z\Q)'_{mm}$ as $(z\Q)'_{mm} = Q_{mm} + z \Q^{2}_{mm}$ and notice that
\begin{align*}
 \frac{\partial Q_{mm}}{\partial X_{ij}} = & - Q_{mi} \left( \frac{\X^*}{B+1} \Q \right)_{jm} \\
  \frac{\partial \Q^{2}_{mm}}{\partial X_{ij}} = & -(\Q^{2})_{mi} \left( \frac{\X^*}{B+1} \Q \right)_{jm} - 
  Q_{mi} \left( \frac{\X^*}{B+1} \Q^{2} \right)_{jm} .
\end{align*}
Using the Jensen inequality, we obtain that
$$
\left| \frac{ \partial g}{\partial X_{ij}} \right|^{2} \leq \int_\Dcal\diff x \diff y|\overline{\partial}\Phi_k(f)(z)|^{2} \frac{1}{M} \sum_{m=1}^{M} \left| \frac{\partial (z\Q)'_{mm}}{\partial X_{ij}} \right|^{2} \left( \frac{1}{B+1} \Tr \Phibs_m \right)^{2}.
$$
Summing over $i,j$ leads to the expected evaluation of (\ref{eq:partial-delta3}) and to $|\delta_{3,N}| \prec u_N$.
 This, in turn, completes the proof of (\ref{eq:controle_Q_XX}) and of 
(\ref{eq:reduction}).

Up to the Lemma \ref{lemma:variance-zeta} and Lemma \ref{lemma:esperance-zeta}, Theorem 
\ref{theo:domination-psi} is proved.
\end{proof}
\subsubsection{Proof of  Lemma \ref{lemma:variance-zeta} and Lemma \ref{lemma:esperance-zeta}}
\label{paragraph:emmes-zeta}
We now establish Lemma \ref{lemma:variance-zeta} and Lemma \ref{lemma:esperance-zeta}.
\begin{lemma}
\label{lemma:variance-zeta}
The family of random variables $\zeta(\nu)-\Ebb\zeta(\nu)$, $\nu \in [0,1]$ satisfies the following property: 
\begin{equation}
    \label{eq:variance-zeta}
    |\zeta(\nu)-\Ebb\zeta(\nu)|\prec \frac{1}{B}.
\end{equation}
\end{lemma}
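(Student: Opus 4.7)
The plan is to interpret $\zeta = g(\X,\X^*)$ as a function of the Gaussian matrix $\X$ and to apply the Gaussian concentration inequality of Paragraph \ref{subsection:lipschitz_concentration}, after establishing a deterministic Lipschitz bound of order $1/B$. Setting $h_m = \|\x_m\|^{2}/(B+1) - 1$, differentiation of $g$ with respect to $X_{ij}$ decomposes into two contributions: (A) the derivative acts on $(z\Q)'_{mm} = Q_{mm}+z(\Q^{2})_{mm}$, producing factors of the form $-Q_{mi}(\X^*\Q/(B+1))_{jm}$ (and $\Q^{2}$-analogues) weighted by $h_m$; and (B) the derivative acts on $h_m$, producing $\delta_{im}\overline{X_{ij}}/(B+1)$ weighted by $(z\Q)'_{ii}$. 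Differentiation with respect to $\overline{X_{ij}}$ yields analogous terms.

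Applying Jensen's inequality to move the integral over $\Dcal$ outside the squared modulus,
$$
\sum_{ij}\left|\frac{\partial g}{\partial X_{ij}}\right|^{2} \le \Bigl(\int_{\Dcal}|\overline{\partial}\Phi_k(f)(z)|\diff x\diff y\Bigr)\int_{\Dcal}|\overline{\partial}\Phi_k(f)(z)|\sum_{ij}|a_{ij}(z)|^{2}\diff x\diff y,
$$
where $a_{ij}(z)$ is the integrand at index $(i,j)$. Summing the (A)-contribution yields squared Frobenius norms of products of the form $M^{-1}\X^{*}\Q\,D_h\,\Q/(B+1)$ with $D_h=\diag(h_1,\ldots,h_M)$, bounded on $\Lambda_{N,\epsilon}$ by $C(z)\|D_h\|_F^{2}\|\Q\|^{4}/(M^{2}(B+1))$ thanks to the resolvent identity $\X\X^{*}\Q/(B+1) = \I + z\Q$ combined with the control $\|\X\X^{*}/(B+1)\| \le C$. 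Summing the (B)-contribution yields $C(z)\|\X\|_F^{2}/(M^{2}(B+1)^{2})$, using the pointwise bound $|(z\Q)'_{ii}|\le C(z)$.

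Introduce the event
$$
\Omega_N = \Lambda_{N,\epsilon}\cap\Bigl\{\textstyle\sum_m h_m^{2}\le CM/(B+1)\Bigr\}\cap\bigl\{\|\X\|_F^{2}\le CM(B+1)\bigr\},
$$
which holds with exponentially high probability: the second event because the $(h_m)_{m=1,\ldots,M}$ are independent centered sub-exponential variables with variance $\Ocal(1/B)$ (chi-squared concentration), and the third by standard Gaussian concentration of the Frobenius norm. On $\Omega_N$, contribution (A) is $\Ocal(1/(M(B+1)^{2}))$ and contribution (B) is $\Ocal(1/(M(B+1)))$, so $\|\nabla g\|^{2}\le C/B^{2}$ under Assumption \ref{assumption:rate_NBM} (which forces $M\sim B$). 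To promote this to the deterministic Lipschitz bound required by \eqref{eq:gaussian-concentration-inequality}, one introduces a smooth cutoff $\tilde g = g\cdot\chi(\X)$ where $\chi$ equals $1$ on $\Omega_N$ and vanishes outside a slightly enlarged version, arranged so that $\tilde g$ is globally $\Ocal(1/B)$-Lipschitz. Gaussian concentration then yields $|\tilde g - \Ebb\tilde g|\prec 1/B$, and since $g = \tilde g$ on $\Omega_N$ while $|\Ebb(g-\tilde g)| \le \sqrt{\Ebb|g|^{2}}\sqrt{\Pbb(\Omega_N^c)}$ is exponentially small (elementary polynomial bounds on $\Ebb|g|^{2}$ follow from $\|\Q\|\le 1/\Im z$ and the $(iy)^{k}$ factor in $\overline{\partial}\Phi_k(f)$), one concludes $|\zeta - \Ebb\zeta|\prec 1/B$ uniformly in $\nu\in[0,1]$. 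The main technical obstacle is engineering $\chi$ so that $\nabla(g\chi)=\chi\nabla g + g\nabla\chi$ stays of order $1/B$ across the transition region: the transition must occur where $g$ is itself controlled, which requires choosing the cutoff as a smooth function of the same scalar quantities used to define $\Omega_N$.
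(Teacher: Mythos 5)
Your approach is the same in spirit as the paper's: interpret $\zeta$ as a function $g$ of $(\X,\X^*)$, cut it off to make it globally Lipschitz with constant $\Ocal(1/B)$, and then apply the Gaussian concentration inequality. The difference is \emph{where} the cutoff is applied. You multiply the whole function by an outer cutoff $\chi(\X)$ built from the scalar quantities $\sum_m h_m^2$ and $\|\X\|_F^2$; the paper instead replaces each inner factor $h_m = \|\x_m\|^2/(B+1)-1$ by a smooth truncation $g_{B,\epsilon}\bigl(\|\x_m\|^2/(B+1)\bigr)$ that caps the individual $|h_m|$ at order $B^{\epsilon}/\sqrt{B}$. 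The paper's inner replacement is the cleaner route: upon differentiating there is no $g\nabla\chi$ term at all — the only new contribution is $\partial_{X_{ij}}g_{B,\epsilon}(\|\x_i\|^2/(B+1)) = \delta_{im}g_{B,\epsilon}'(\cdot)\overline{X_{ij}}/(B+1)$, automatically of the right size because $|g'_{B,\epsilon}|\le C$, while the remaining factors $g_{B,\epsilon}(\cdot)$ are bounded by $CB^{\epsilon}/\sqrt{B}$ \emph{by construction}. In your route the $g\nabla\chi$ term on the transition region is precisely where the argument could fail, and you flag it as ``the main technical obstacle'' without closing it. It does in fact work out — on the transition region of the $\sum_m h_m^2$ cutoff one has $|g|\lesssim C\|h\|_2/\sqrt{M}\lesssim 1/\sqrt{B}$ and $\|\nabla\chi\|\lesssim 1/\sqrt{M}$, giving $|g|\|\nabla\chi\|\lesssim 1/\sqrt{MB}\sim 1/B$ — but this verification, with its careful tracking of which condition defines the transition region and whether $g$ is controlled there, is a nontrivial gap in your writeup that the paper's strategy deliberately avoids.

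Two further cleanups worth noting. First, the $\Lambda_{N,\epsilon}$ factor in your $\Omega_N$ is unnecessary: the bounds $\|\Q(z)\|\le 1/\Im z$ and $\|\X\X^*\Q/(B+1)\| = \|\I + z\Q\|\le 1 + |z|/\Im z$ hold unconditionally, and the Helffer--Sjöstrand weight $\bar\partial\Phi_k(f)(z)=\Ocal\bigl((\Im z)^k\bigr)$ absorbs the negative powers of $\Im z$ for $k$ large, so no spectral localization is needed for the Lipschitz estimate (the paper's Lemma \ref{lemma:controle_zeta_tilde} indeed uses none). Second, the Frobenius-norm condition $\|\X\|_F^2\le CM(B+1)$ in your $\Omega_N$ is implied by the $\sum_m h_m^2 \le CM/(B+1)$ condition via $\|\X\|_F^2/(B+1) = M + \sum_m h_m$ and Cauchy--Schwarz, so a single smooth cutoff on $\sum_m h_m^2$ would suffice. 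If you do decide to carry out the outer-cutoff approach, it is marginally sharper than the paper's — the $\ell_2$-type control on $h$ avoids the $B^\epsilon$ loss incurred by the per-coordinate truncation — but this gain is invisible in the stochastic-domination framework, so the paper's simpler per-$m$ replacement is the more economical choice.
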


\begin{proof} 
$\zeta$ defined by \eqref{eq:def_zeta} can be written as
\begin{align*}
    \zeta &= \int_\Dcal\diff x \diff y\,\overline{\partial}\Phi_k(f)(z) \frac{1}{M} \sum_{m=1}^{M} Q_{mm} \left( \frac{\|\x_m\|_2^{2}}{B+1} -1 \right) + \\
    &\hskip3cm \int_\Dcal\diff x \diff y\,\overline{\partial}\Phi_k(f)(z) \frac{1}{M} \sum_{m=1}^{M} z(\Q^2)_{mm} \left( \frac{\|\x_m\|_2^{2}}{B+1} -1 \right) \\
    &:=  \zeta_1 + \zeta_2.
\end{align*} 
In the following, we omit to 
evaluate $|\zeta_1(\nu) - \mathbb{E}(\zeta_1(\nu))|$, and just establish that $|\zeta_2(\nu) - \mathbb{E}(\zeta_2(\nu))| \prec \frac{1}{B}$ using the Gaussian concentration inequality from Paragraph \ref{subsection:lipschitz_concentration}. 

Recall that $\|\x_m\|_2^2$ is a $\chi^2_{2(B+1)}$ random variable. Therefore it is clear that: 
$$ \left|\frac{\|\x_m\|_2^2}{B+1}-1\right| \prec \frac{1}{\sqrt{B}}. $$

Knowing this, the idea is to show that, conditioned on the event where 
the random variables $\left(\frac{\|\x_m\|_2^2}{B+1}-1 \right)_{m=1, \ldots, M}$ are localized, which holds with exponentially high probability, $\zeta_2$ is a $\Ocal(\frac{1}{B^{1-\epsilon}})$--Lipschitz function of the entries of the matrix $\X$ for any $\epsilon>0$. Let $0<\epsilon < \frac{1}{2}$, and define the family of events $A_{m,\epsilon}(\nu)$, $m=1, \ldots, M, \,  \nu \in [0,1]$ given by
\begin{equation}
    \label{equation:A_N}
    A_{m,\epsilon}(\nu)=\left\{\frac{\|\x_m(\nu)\|_2^2}{B+1}\in\left[1-\frac{B^\epsilon}{\sqrt{B}},1+\frac{B^\epsilon}{\sqrt{B}}\right]\right\}
\end{equation} 
as well as $A_\epsilon(\nu) = \cap_{m=1}^{M} A_{m,\epsilon}(\nu)$. It is clear that the family of events $A_{m,\epsilon}(\nu)$, $m=1, \ldots, M$, $\nu \in [0,1]$ holds with exponentially high probability, and that the 
same property holds for the family $A_\epsilon(\nu), \nu \in [0,1]$. We claim that there exists a family of $\mathcal{C}^\infty$ functions
$(g_{B,\epsilon})_{B \geq 1}$ satisfying
\[   
g_{B,\epsilon}(t) = 
     \begin{cases}
       t-1 &\quad\text{if }t\in[1-\frac{B^\epsilon}{\sqrt{B}},1+\frac{B^\epsilon}{\sqrt{B}}]\\
       0 &\quad\text{if } t\notin [1-2\frac{B^\epsilon}{\sqrt{B}},1+2\frac{B^\epsilon}{\sqrt{B}}]
     \end{cases}
\]
and 
\begin{equation}
    \label{equation:g_B_bound}
    \sup_t |g_{B,\epsilon}(t)|\le C \,  \frac{B^\epsilon}{\sqrt{B}}, \quad \sup_t |g_{B,\epsilon}'(t)|\le C
\end{equation} 
for each $B$, where $C$ is a nice constant. 
Indeed consider $h\in C^\infty$ such that it satisfies $|h(t)| \leq 2 |t|$ for each $t$ and 
\[
h(t) = 
     \begin{cases}
       t &\quad\text{if }t\in[-1,1]\\
       0 &\quad\text{if } t\notin [-2,2]. \\
     \end{cases}
\]
Then, it is easy to check that the family $(g_{B,\epsilon})_{B \geq 1}$ defined by 
$$
g_{B,\epsilon}(t) = \frac{B^\epsilon}{\sqrt{B}} \, h\left( \frac{\sqrt{B}}{B^\epsilon} \, (t-1) \right)
$$
satisfies the requirements \eqref{equation:g_B_bound}. \\

We define $\tilde{\zeta}_{2,\epsilon}$ by 
\[
    \tilde{\zeta}_{2,\epsilon} = \int_\Dcal\diff x \diff y\overline{\partial} \Phi_k(f)(z) \frac{1}{M} \sum_{m=1}^{M} (z \Q^2)_{mm} \, g_{B,\epsilon}\left(\frac{\|\x_m\|_2^2}{B+1}\right)
\]
and notice that $\zeta_2$ and $\tilde{\zeta}_{2,\epsilon}$ coincide on the exponentially high probability
event $A_\epsilon(\nu)$. We claim that if $| \tilde{\zeta}_{2,\epsilon} - \mathbb{E}(\tilde{\zeta}_{2,\epsilon})| \prec \frac{1}{B^{1-\epsilon}}$, then  $|\zeta_{2} - \mathbb{E}(\zeta_{2})| \prec \frac{1}{B^{1-\epsilon}}$. Since $\epsilon$ is arbitrary and $B^\epsilon=\Ocal(N^{\alpha\epsilon})$, Remark \ref{remark:domination_stochastique} will imply that $| \zeta_2 - \mathbb{E}(\zeta_2)| \prec \frac{1}{B}$. To justify this, we evaluate 
$\Pbb\left(|\tilde{\zeta}_{2,\epsilon} - \mathbb{E}(\tilde{\zeta}_{2,\epsilon})| > \frac{1}{B^{1-\epsilon}} N^{\delta} \right)$ for each $\delta > 0$. 
It holds that 
$$
\Pbb\left(|\zeta_2 - \mathbb{E}(\zeta_2)| > \frac{N^{\alpha \epsilon +\delta}}{B}\right) \leq \Pbb\left(|\zeta_2 - \mathbb{E}(\zeta_2)| > \frac{N^{\alpha \epsilon+\delta}}{B}, A_{\epsilon} \right) + \Pbb(A_{\epsilon}^{c}).
$$
As $\Pbb(A_{\epsilon}^{c})$ converges towards zero exponentially, we have just to consider 
$$
\Pbb\left(|\zeta_2 - \mathbb{E}(\zeta_2)| > \frac{N^{\alpha \epsilon+\delta}}{B}, A_{\epsilon} \right)
$$
and write, since $\zeta_2$ and $\tilde{\zeta}_{2,\epsilon}$ coincide on $A_{\epsilon}$, 
\begin{multline*}
\Pbb\left(|\zeta_2 - \mathbb{E}(\zeta_2)| > \frac{N^{\alpha \epsilon+\delta}}{B}, A_{\epsilon} \right) = \Pbb\left(|\tilde{\zeta}_{2,\epsilon} - \mathbb{E}(\zeta_2)| > \frac{N^{\alpha \epsilon+\delta}}{B}, A_{\epsilon} \right) \\
\leq 
\Pbb\left(|\tilde{\zeta}_{2,\epsilon} - \mathbb{E}(\tilde{\zeta}_{2,\epsilon})| > \frac{N^{\alpha \epsilon+\delta}}{B} - |E(\zeta_2 - \tilde{\zeta}_{2,\epsilon})|, A_{\epsilon} \right) .
\end{multline*}

We now prove that $|\mathbb{E}(\zeta_2 - \tilde{\zeta}_{2,\epsilon})|$ converges towards 0 exponentially. For this, we notice that as $\zeta_2$ and $\tilde{\zeta}_{2,\epsilon}$ coincide on $A_{\epsilon}$, then 
$$
|\mathbb{E}(\zeta_2 - \tilde{\zeta}_{2,\epsilon})| = 
\left|\mathbb{E}((\zeta_2 - \tilde{\zeta}_{2,\epsilon}) \mathbb{I}_{A_{\epsilon}^{c}})\right| \leq \left( \mathbb{E}\left|\zeta_2 - \tilde{\zeta}_{2,\epsilon}\right|^{2} \right)^{1/2} \, \left(\Pbb(A_{\epsilon}^{c})\right)^{1/2} .
$$
A rough evaluation of $ \left(\mathbb{E}\left|\zeta_2 - \tilde{\zeta}_{2,\epsilon}\right|^{2}\right)^{1/2}$ leads to 
$\left(\mathbb{E}\left|\zeta_2 - \tilde{\zeta}_{2,\epsilon}\right|^{2}\right)^{1/2} \leq C$
 for some nice constant $C$. Therefore, $\left( \mathbb{E}\left|\zeta_2 - \tilde{\zeta}_{2,\epsilon}\right|^{2} \right)^{1/2} \, \left(\Pbb(A_{\epsilon}^{c})\right)^{1/2}$, and thus $|\mathbb{E}(\zeta_2 - \tilde{\zeta}_{2,\epsilon})|$, converge towards $0$ exponentially. For each $N$ large enough, 
 we thus have 
\begin{align*}
    \Pbb\left(|\tilde{\zeta}_{2,\epsilon} - \mathbb{E}(\tilde{\zeta}_{2,\epsilon})| > \frac{N^{\alpha \epsilon+\delta}}{B} \right. &\left.- |\mathbb{E}(\zeta_2 - \tilde{\zeta}_{2,\epsilon})|, A_{\epsilon} \right) \\
    &\leq  \Pbb\left(|\tilde{\zeta}_{2,\epsilon} - \mathbb{E}(\tilde{\zeta}_{2,\epsilon})| > \frac{N^{\alpha \epsilon+\delta/2}}{B}, A_{\epsilon} \right) \\
    &\leq  \Pbb\left(|\tilde{\zeta}_{2,\epsilon} - \mathbb{E}(\tilde{\zeta}_{2,\epsilon})| > \frac{N^{\alpha \epsilon+\delta/2}}{B} \right) .
\end{align*}
We have therefore established that 
$$
\Pbb\left(|\zeta_2 - \mathbb{E}(\zeta_2)| > \frac{N^{\alpha \epsilon+\delta}}{B}, A_{\epsilon} \right) \leq  \Pbb\left(|\tilde{\zeta}_{2,\epsilon} - \mathbb{E}(\tilde{\zeta}_{2,\epsilon})| > \frac{N^{\alpha \epsilon+\delta/2}}{B} \right)
$$
which finally justifies that if $| \tilde{\zeta}_{2,\epsilon} - \mathbb{E}(\tilde{\zeta}_{2,\epsilon})| \prec \frac{B^\epsilon}{B}$, then  $|\zeta_{2} - \mathbb{E}(\zeta_{2})| \prec \frac{B^\epsilon}{B}$. \\

Therefore, it remains to prove that $|\tilde{\zeta}_{2,\epsilon} - \mathbb{E}(\tilde{\zeta}_{2,\epsilon})| \prec \frac{B^\epsilon}{B}$. This is true by Lemma \ref{lemma:controle_zeta_tilde} below. The stochastic domination relation  $ |\zeta_1-\Ebb\zeta_1| \prec \frac{B^\epsilon}{B}$ is proved similarly. This completes the proof of Lemma \ref{lemma:variance-zeta}. 
\end{proof}

\begin{lemma}
\label{lemma:controle_zeta_tilde}
$$|\tilde{\zeta}_{2,\epsilon} - \mathbb{E}(\tilde{\zeta}_{2,\epsilon})| \prec \frac{B^\epsilon}{B}.$$
\end{lemma}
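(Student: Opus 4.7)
The plan is to apply the Gaussian concentration inequality from Paragraph \ref{subsection:lipschitz_concentration} to $\tilde{\zeta}_{2,\epsilon}$, viewed as a real-valued function $g(\X,\X^{*})$ of the entries of the complex Gaussian matrix $\X$. To this end, I would show that (after the same kind of smooth truncation on the eigenvalue localization event $\Lambda_{N,\epsilon_0}$ from \eqref{equation:definition_Xi} that is used at the start of the proof of Lemma \ref{lemma:variance-zeta}) the function $g$ is globally Lipschitz with constant bounded, up to polynomial factors in $1/\Im z$ that remain integrable against $\overline{\partial}\Phi_k(f)$ for $k$ large enough, by $C\,B^{\epsilon}/B$. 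The concentration inequality would then yield $|\tilde{\zeta}_{2,\epsilon}-\Ebb \tilde{\zeta}_{2,\epsilon}|\prec B^{\epsilon}/B$ immediately.

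The core computation is to estimate $\sum_{i,j}|\partial g/\partial X_{ij}|^{2}$ on $\Lambda_{N,\epsilon_0}$. Differentiating under the Helffer--Sj\"ostrand integral yields two types of contributions: a \emph{resolvent term}, in which $\partial (\Q^{2})_{mm}/\partial X_{ij}$ is multiplied by $g_{B,\epsilon}(\|\x_m\|_{2}^{2}/(B+1))$, and a \emph{quadratic form term}, in which $\partial g_{B,\epsilon}(\|\x_m\|_{2}^{2}/(B+1))/\partial X_{ij}=g'_{B,\epsilon}(\cdot)\,\overline{X_{mj}}\,\delta_{im}/(B+1)$ is multiplied by $(z\Q^{2})_{mm}$. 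Using the identity
\[
\frac{\partial (\Q^{2})_{mm}}{\partial X_{ij}}=-\frac{1}{B+1}\,\bigl[(\Q^{2})_{mi}(\X^{*}\Q)_{jm}+Q_{mi}(\X^{*}\Q^{2})_{jm}\bigr],
\]
a Cauchy--Schwarz bound on the $z$-integral, and the uniform estimates $\sup_m|g_{B,\epsilon}|\le CB^{\epsilon}/\sqrt{B}$ and $|g'_{B,\epsilon}|\le C$, both contributions can be bounded by $C(z)\,B^{2\epsilon}/B^{2}$ on $\Lambda_{N,\epsilon_0}$. The key ingredients are $\|\Q\|\le 1/\Im z$ together with $\|\X\X^{*}/(B+1)\|\le C$ and $\|\X\|_F^{2}\le CM(B+1)$, all of which hold on $\Lambda_{N,\epsilon_0}$.

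The main obstacle is the mismatch between this Lipschitz estimate, valid only on $\Lambda_{N,\epsilon_0}$, and the global Lipschitz bound required by the Gaussian concentration inequality. I would resolve it by a standard truncation exactly parallel to the one used to pass from $\zeta_2$ to $\tilde{\zeta}_{2,\epsilon}$ in the proof of Lemma \ref{lemma:variance-zeta}: multiply $g$ by a smooth cutoff $\chi(\|\X\X^{*}/(B+1)\|)$ with $\mathcal{C}^\infty$ derivative of order one, equal to one on $\Lambda_{N,\epsilon_0}$ and vanishing outside $\Lambda_{N,2\epsilon_0}$. Since $|g|\le CB^{\epsilon}/\sqrt{B}$ on $\Lambda_{N,2\epsilon_0}$ (by the uniform bound on $g_{B,\epsilon}$ and the resolvent bounds), the extra contribution $g\,\nabla\chi$ to the gradient is also of order $B^{\epsilon}/B$, so that the truncated function is globally $\Ocal(B^{\epsilon}/B)$-Lipschitz. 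The Gaussian concentration inequality then produces the required rate for the truncated function, which transfers back to $\tilde{\zeta}_{2,\epsilon}$ because $\Lambda_{N,\epsilon_0}$ holds with exponentially high probability.
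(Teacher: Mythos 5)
Your overall plan — bound the gradient of $\tilde\zeta_{2,\epsilon}$ with respect to $(\X,\X^*)$ and invoke the Gaussian concentration inequality — is the right one and matches the paper, as do the two types of derivative contributions you identify and the use of the bounds $|g_{B,\epsilon}|\le CB^\epsilon/\sqrt B$ and $|g'_{B,\epsilon}|\le C$. However, the additional truncation on the eigenvalue‑localization event $\Lambda_{N,\epsilon_0}$ that you propose is both a misreading of Lemma \ref{lemma:variance-zeta} and unnecessary. The cutoff introduced at the start of that lemma's proof is on the \emph{row‑norm} events $A_{m,\epsilon}(\nu)$ (controlling $\|\x_m\|_2^2/(B+1)$), which is precisely what produces $g_{B,\epsilon}$ and is already baked into the definition of $\tilde\zeta_{2,\epsilon}$; it is not a cutoff on $\Lambda_{N,\epsilon_0}$. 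More importantly, no spectral localization is needed at all: the paper obtains a \emph{global} bound $\|\nabla\tilde\zeta_{2,\epsilon}\|^2\le C\,B^{2\epsilon}/B^2$ using only (i) the universal resolvent estimate $\|\Q\|\le 1/\Im z$, (ii) the row‑norm control supplied by the support of $g_{B,\epsilon}$ and $g'_{B,\epsilon}$ (for the terms $\sum_j|X_{ij}|^2$), and (iii) the resolvent identity $\tfrac{\X\X^*}{B+1}\Q=\I+z\Q$, which converts $\Tr\,\X^*\Q\Q^*\X$ into $(B+1)\Tr\bigl((\I+z\Q)\Q^*\bigr)\le M(B+1)(\|\Q\|+|z|\|\Q\|^2)$ without any boundedness assumption on $\X$. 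This is the ingredient you are missing: you compensate for it with the quantities $\|\X\X^*/(B+1)\|\le C$ and $\|\X\|_F^2\le CM(B+1)$, which forces the extra cutoff $\chi(\|\X\X^*/(B+1)\|)$ and the accounting of the term $\tilde\zeta_{2,\epsilon}\,\nabla\chi$. Your route would in fact close (the extra gradient term contributes $\Ocal(B^\epsilon/B)$ as you say), so this is not a fatal gap, but it is genuinely more complicated than the paper's argument, which never leaves the realm of deterministic bounds on $\Q$ and the $g_{B,\epsilon}$ support.
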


\begin{proof}
In the following, we evaluate the norm square of the gradient of $\tilde{\zeta}_{2,\epsilon}$ w.r.t. the variables 
$X_{i,j}, X_{i,j}^{*}$ and just compute  $\sum_{i,j} \left|\frac{\partial\tilde{\zeta}_{2,\epsilon}}{\partial X_{ij}}\right|^{2}$
because  $\sum_{i,j} \left|\frac{\partial\tilde{\zeta}_{2,\epsilon}}{\partial X_{ij}^{*}}\right|^{2}$ is of the same order 
of magnitude. 

We recall that 
\begin{align}
\label{equation:derivative_Q_square}
    \frac{\partial (\Q^2)_{mm}}{\partial X_{ij}} = - \left(\frac{(\Q^2)_{mi}(\X^*\Q)_{jm}}{B+1} + \frac{Q_{mi}(\X^*\Q^2)_{jm}}{B+1}\right).
\end{align}

%Note that $\sum_{k=1}^M Q_{mk}Q_{ki}=(\Q^2)_{mi}$ and $\sum_{k=1}^M Q_{km}(\X^*\Q)_{jk}=(\X^*\Q^2)_{jm}$ so that:
%\begin{equation}
%    \label{equation:derivative_Q^2}
%    \frac{\partial (\Q^2)_{mm}}{\partial X_{ij}} = -\frac{(\Q^2)_{mi}(\X^*\Q)_{jm}}{B+1} - \frac{Q_{mi}(\X^*\Q^2)_{jm}}{B+1}
%\end{equation}

Moreover it is clear that
\begin{equation}
    \label{equation:derivative_g_B}
    \frac{\partial }{\partial X_{ij}}\left(g_{B,\epsilon}\left(\frac{\|\x_m\|_2^2}{B+1}\right)\right) = \delta_{im}\frac{\overline{X_{m,j}}}{B+1}g_{B,\epsilon}'\left(\frac{\|\x_m\|_2^2}{B+1}\right).
\end{equation}

Collecting the derivatives \eqref{equation:derivative_Q_square} and \eqref{equation:derivative_g_B}  we get after some algebra that 
\begin{multline}
\label{equation:derivative_Q^2_g_B}
    \frac{\partial}{\partial X_{ij}}\left(\sum_{m=1}^{M} (\Q^2)_{mm}g_{B,\epsilon}\left(\frac{\|\x_m\|_2^2}{B+1}\right)\right) = \frac{\overline{X_{i,j}}}{B+1}g_{B,\epsilon}'\left(\frac{\|\x_i\|_2^2}{B+1}\right)(\Q^2)_{ii} \\ -\sum_{m=1}^M g_{B,\epsilon}\left(\frac{\|\x_m\|_2^2}{B+1}\right) \left(\frac{(\Q^2)_{mi}(\X^*\Q)_{jm}}{B+1} + \frac{Q_{mi}(\X^*\Q^2)_{jm}}{B+1}\right).
\end{multline}

It remains to control $\sum_{i,j} \left|\frac{\partial\tilde{\zeta}_{2,\epsilon}}{\partial X_{ij}}\right|^{2}$. From the integral representation of $\tilde{\zeta}_{2,\epsilon}$, the derivative with respect to $X_{ij}$ is applied only on the integrand as follows: 
$$ \frac{\partial\tilde{\zeta}_{2,\epsilon}}{\partial X_{ij}} = \frac{1}{M} \, \int_\Dcal\diff x \diff y\, \overline{\partial}\Phi_k(f)(z)\frac{\partial}{\partial X_{ij}}\left(\sum_{m=1}^{M} z (\Q^2)_{mm} g_{B,\epsilon}\left(\frac{\|\x_m\|_2^2}{B+1}\right)\right).  $$
Plugging in the derivative computed in \eqref{equation:derivative_Q^2_g_B} we get:
\begin{align*}
    \frac{\partial\tilde{\zeta}_{2,\epsilon}}{\partial X_{ij}} &= \frac{1}{M} \, \int_\Dcal\diff x \diff y\, \overline{\partial}\Phi_k(f)(z) \, z \left\{\frac{\overline{X_{i,j}}}{B+1}g_{B,\epsilon}'\left(\frac{\|\x_i\|_2^2}{B+1}\right)(\Q^2)_{ii}\right. \\
    &\hskip1cm  \left.- \sum_{m=1}^M g_{B,\epsilon}\left(\frac{\|\x_m\|_2^2}{B+1}\right) \left(\frac{(\Q^2)_{mi}(\X^*\Q)_{jm}}{B+1} + \frac{Q_{mi}(\X^*\Q^2)_{jm}}{B+1}\right)\right\}.
\end{align*}

Using the bounds of $g_{B,\epsilon}$ and $g_{B,\epsilon}'$ from inequalities \eqref{equation:g_B_bound}, 
the observation that $g_{B,\epsilon}'(t) = 0$ if $|t-1| \geq \frac{2 B^{\epsilon}}{\sqrt{B}}$, 
and that $|z|$ is bounded on $\Dcal$, one can write:
\begin{align*}
    \left|\frac{\partial\tilde{\zeta}_{2,\epsilon}}{\partial X_{ij}}\right|^2 &\le \frac{C}{M^{2}} \, \int_\Dcal \diff x \diff y \left|\overline{\partial}\Phi_k(f)(z)\right|^2 \left|\frac{\overline{X_{i,j}}}{B+1}g_{B,\epsilon}'\left(\frac{\|\x_i\|_2^2}{B+1}\right)(\Q^2)_{ii} \right.\\
    &\hskip0.5cm- \left.\sum_{m=1}^Mg_{B,\epsilon}\left(\frac{\|\x_m\|_2^2}{B+1}\right)\left(\frac{(\Q^2)_{mi}(\X^*\Q)_{jm}}{B+1} + \frac{Q_{mi}(\X^*\Q^2)_{jm}}{B+1}\right)\right|^2 \\
    &\le \frac{C}{M^{2}} \, \int_\Dcal \diff x \diff y \left|\overline{\partial}\Phi_k(f)(z)\right|^2  \left|\frac{\overline{X_{i,j}}}{B+1}(\Q^2)_{ii}\right|^2 \, \mathds{1} \left(\left|\frac{\|\x_i\|_2^2}{B+1} - 1\right| \leq  \frac{2 B^{\epsilon}}{\sqrt{B}}\right) \\
    &\hskip0.5cm+  \frac{C}{M^{2}} \,\int_\Dcal \diff x \diff y \left|\overline{\partial}\Phi_k(f)(z)\right|^2 \left(\frac{B^\epsilon}{\sqrt{B}}\right)^2\left|\sum_{m=1}^M\frac{\left|(\Q^2)_{mi}\right|\left|(\X^*\Q)_{jm}\right|}{B+1}\right|^2 \\
    &\hskip0.5cm+  \frac{C}{M^{2}} \, \int_\Dcal \diff x \diff y \left|\overline{\partial}\Phi_k(f)(z)\right|^2\left(\frac{B^\epsilon}{\sqrt{B}}\right)^2\left|\sum_{m=1}^M\frac{\left|Q_{mi}\right|\left|(\X^*\Q^2)_{jm}\right|}{B+1}\right|^2 \\
    &:=  \frac{C}{M^{2}} \, (T_{ij}^{(1)}+T_{ij}^{(2)}+T_{ij}^{(3)}).
\end{align*}

It remains to sum over $i,j$. 
\begin{align*}
    &\sum_{i,j=1}^M T_{ij}^{(1)} \\
    &= \int_\Dcal \diff x \diff y \left|\overline{\partial}\Phi_k(f)(z)\right|^2  \sum_{i,j=1}^M\left|\frac{\overline{X_{i,j}}}{B+1}(\Q^2)_{ii}\right|^2 \,  \mathds{1} \left(\left|\frac{\|\x_i\|_2^2}{B+1} - 1\right| \leq  \frac{2 B^{\epsilon}}{\sqrt{B}}\right)\\
    &\le \int_\Dcal \diff x \diff y \left|\overline{\partial}\Phi_k(f)(z)\right|^2  \sum_{i=1}^M|(\Q^2)_{ii}|^2 \,  \mathds{1} \left(\left|\frac{\|\x_i\|_2^2}{B+1} - 1\right| \leq  \frac{2 B^{\epsilon}}{\sqrt{B}}\right) \,  \sum_{j=1}^M\left|\frac{\overline{X_{i,j}}}{B+1}\right|^2 \\
    &= \frac{C}{B+1}\int_\Dcal \diff x \diff y \left|\overline{\partial}\Phi_k(f)(z)\right|^2  \sum_{i=1}^M|(\Q^2)_{ii}|^2 \frac{\|\x_i\|_2^2}{B+1}  \, \mathds{1} \left(\left|\frac{\|\x_i\|_2^2}{B+1} - 1\right| \leq  \frac{2 B^{\epsilon}}{\sqrt{B}}\right)\\
    &\le \frac{C}{B+1}(1+\frac{2B^\epsilon}{\sqrt{B}})\int_\Dcal \diff x \diff y \left|\overline{\partial}\Phi_k(f)(z)\right|^2  \sum_{i=1}^M|(\Q^2)_{ii}|^2.
\end{align*}
Since 
$$ \sum_{i=1}^M|(\Q^2)_{ii}|^2 \le M \|\Q\|^4$$
it can be written that:
$$ \sum_{i,j=1}^M T_{ij}^{(1)} \le C\frac{M}{B+1}\int_\Dcal \diff x \diff y \left|\overline{\partial}\Phi_k(f)(z)\right|^2 \|\Q\|^4 . $$

Inspecting $T_{ij}^{(2)}$, one can see that by Jensen's inequality
\[
    \left|\sum_{m=1}^M \left|(\Q^2)_{mi}\right|\left|(\X^*\Q)_{jm}\right|\right|^2 \le M\sum_{m=1}^M \left|(\Q^2)_{mi}\right|^2\left|(\X^*\Q)_{jm}\right|^2
\]
so summing over $i$ and $j$ provides:
\begin{multline*}
    \sum_{i,j=1}^MT_{ij}^{(2)} \le \frac{B^{2\epsilon}M}{(B+1)^3} \\ \times  \int_\Dcal \diff x \diff y \left|\overline{\partial}\Phi_k(f)(z)\right|^2 \sum_{m=1}^M \left(\sum_{i=1}^M\left|(\Q^2)_{mi}\right|^2\right)\left(\sum_{j=1}^M\left|(\X^*\Q)_{jm}\right|^2\right).
\end{multline*}
Notice that since $\sum_{i=1}^M\left|(\Q^2)_{mi}\right|^2$ is the square euclidean norm of line $m$ of $\Q^2$:
\[
    \sum_{i=1}^M\left|(\Q^2)_{mi}\right|^2 \le \|\Q^2\|^2 \le \|\Q\|^4 .
\]
Moreover, 
\begin{multline*}
    \sum_{m=1}^M\left(\sum_{j=1}^M\left|(\X^*\Q)_{jm}\right|^2\right) = \Tr \X^*\Q\Q^*\X \\ = (B+1)\Tr\left(\left(\I+z\Q\right)\Q^*\right) \le M(B+1) (\|\Q\| + |z|\|\Q\|^2)
\end{multline*}
therefore
\[
    \sum_{i,j=1}^M T_{ij}^{(2)} \le B^{2\epsilon}\left(\frac{M}{B+1}\right)^2 \int_\Dcal \diff x \diff y \left|\overline{\partial}\Phi_k(f)(z)\right|^2 \|\Q^4\| (\|\Q\| + |z|\|\Q\|^2)
\]

and similarly for $T_{ij}^{(3)}$ one gets:
\[
    \sum_{i,j=1}^M T_{ij}^{(3)} \le B^{2\epsilon}\left(\frac{M}{B+1}\right)^2 \int_\Dcal \diff x \diff y \left|\overline{\partial}\Phi_k(f)(z)\right|^2 \|\Q^2\| (\|\Q\|^3 + |z|\|\Q\|^4).
\]

Collecting the terms in $T_{ij}^{(1)}, T_{ij}^{(2)}$ and $T_{ij}^{(3)}$, and since $M/(B+1)=\Ocal(1)$ by Assumption \ref{assumption:rate_NBM}, we can write:
$$ \sum_{i,j} \left|\frac{\partial\tilde{\zeta}_{2,\epsilon}}{\partial X_{ij}}\right|^{2}\le \frac{C}{M^{2}} B^{2\epsilon}\int_\Dcal \diff x \diff y \left|\overline{\partial}\Phi_k(f)(z)\right|^2 (\|\Q\|^4+\|\Q^5\|+|z|\|\Q^6\|)$$
As $\|\Q\|^4+\|\Q^5\|+|z|\|\Q^6\| \leq C(z)$, we obtain that for $k$ large enough, 
$$ \sum_{i,j} \left|\frac{\partial\tilde{\zeta}_{2,\epsilon}}{\partial X_{ij}}\right|^{2} = \Ocal\left(\frac{B^{2\epsilon}}{B^{2}}\right)$$
as expected. 
\end{proof}

It remains to study $\Ebb[\zeta]$, and establish the following Lemma. 
\begin{lemma}
\label{lemma:esperance-zeta}
$$ \left|\Ebb\zeta\right| = \Ocal\left(\frac{1}{B}\right).$$
\end{lemma}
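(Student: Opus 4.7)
The plan is to apply the integration by parts formula \eqref{eq:integration-by-part} to decouple $(z\Q)'_{mm}$ from the quadratic form $\|\x_m\|_2^2/(B+1)$. Writing $\|\x_m\|_2^2 = \sum_j X_{mj}\overline{X}_{mj}$ and applying IBP in each $X_{mj}$ inside $\Ebb[X_{mj}(z\Q)'_{mm}\overline{X}_{mj}]$, two contributions appear: the one from differentiating $\overline{X}_{mj}$ produces $\Ebb[(z\Q)'_{mm}]$, which is precisely cancelled by $-\Ebb[(z\Q)'_{mm}]$ coming from the $-1$ in $(\|\x_m\|_2^2/(B+1)-1)$. Only the genuine IBP remainder survives:
\begin{equation*}
\Ebb\left[(z\Q)'_{mm}\left(\tfrac{\|\x_m\|_2^2}{B+1}-1\right)\right] = \frac{1}{B+1}\sum_{j=1}^{B+1}\Ebb\left[\overline{X}_{mj}\,\frac{\partial (z\Q)'_{mm}}{\partial \overline{X}_{mj}}\right].
\end{equation*}

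The next step is to compute the derivatives. Writing $(z\Q)'=\Q+z\Q^2$ and using the standard identity $\frac{\partial \Q}{\partial \overline{X}_{ij}} = -\frac{1}{B+1}\Q\X_{\cdot,j}\e_i^T\Q$, elementary bookkeeping gives
\begin{equation*}
\frac{\partial Q_{mm}}{\partial \overline{X}_{mj}} = -\frac{1}{B+1}(\Q\X)_{mj}Q_{mm},\quad \frac{\partial (\Q^2)_{mm}}{\partial \overline{X}_{mj}} = -\frac{1}{B+1}\left[(\Q\X)_{mj}(\Q^2)_{mm}+(\Q^2\X)_{mj}Q_{mm}\right].
\end{equation*}
Summing over $j$ and using $\sum_j \overline{X}_{mj}(\Q\X)_{mj} = (\Q\X\X^*)_{mm}$, $\sum_j \overline{X}_{mj}(\Q^2\X)_{mj}=(\Q^2\X\X^*)_{mm}$, together with the resolvent identity \eqref{equation:resolvent_identity} (and its left analogue $\Q\tfrac{\X\X^*}{B+1} = \I + z\Q$, valid since $\Q$ commutes with $\X\X^*/(B+1)$), one obtains $\Q^2\tfrac{\X\X^*}{B+1}=\Q+z\Q^2$ and hence
\begin{equation*}
\sum_{j=1}^{B+1}\overline{X}_{mj}\,\frac{\partial (z\Q)'_{mm}}{\partial \overline{X}_{mj}} = -\Bigl[Q_{mm}+2zQ_{mm}^2+z(\Q^2)_{mm}+2z^2 Q_{mm}(\Q^2)_{mm}\Bigr].
\end{equation*}
Plugging back yields the explicit identity
\begin{equation*}
\Ebb\left[(z\Q)'_{mm}\Bigl(\tfrac{\|\x_m\|_2^2}{B+1}-1\Bigr)\right] = -\frac{1}{B+1}\,\Ebb\Bigl[Q_{mm}+2zQ_{mm}^2+z(\Q^2)_{mm}+2z^2Q_{mm}(\Q^2)_{mm}\Bigr].
\end{equation*}

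To conclude, I would use the trivial bound $|Q_{mm}|\le\|\Q\|\le 1/\Im z$ and $|(\Q^2)_{mm}|\le 1/(\Im z)^2$, so that the bracketed expectation is bounded uniformly in $m$ by a term of the form $C(z)=P_1(|z|)P_2(1/\Im z)$ for some nice polynomials $P_1,P_2$. Averaging over $m$, multiplying by $\overline{\partial}\Phi_k(f)(z)$ and integrating over the compact set $\Dcal$ — choosing $k$ larger than $\deg P_2$ exactly as in Lemma \ref{le:evaluation-terme4} to ensure integrability near the real axis — we obtain $|\Ebb\zeta| \le \frac{C}{B+1}\int_\Dcal |\overline{\partial}\Phi_k(f)(z)|\,C(z)\,\diff x\diff y = \Ocal(1/B)$, proving the lemma.

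The only non-routine point is noticing that the order zero IBP term $\Ebb[(z\Q)'_{mm}]$ is exactly cancelled by the $-1$ we introduced; without this cancellation we would only get an $\Ocal(1)$ bound. Everything else is mechanical derivative bookkeeping combined with the resolvent identity, so I do not anticipate any further obstacle.
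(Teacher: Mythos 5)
Your proof is correct but follows a genuinely different route than the paper. The paper handles $\Ebb\zeta$ by observing that $\Ebb[\|\x_m\|_2^2/(B+1)-1]=0$, so the inner expectation equals the covariance of $(\Q^2)_{mm}$ with a centred $\chi^2$ quantity; it then applies Cauchy--Schwarz, bounds $\Var (\Q^2)_{mm}\le C(z)/B$ via the Poincar\'e--Nash inequality of Lemma \ref{le:nash-poincare-resolvent}, and uses $\Ebb|\|\x_m\|_2^2/(B+1)-1|^2=\Ocal(1/B)$, yielding $\Ocal(\sqrt{C(z)}/B)$ per term. You instead apply the integration-by-parts formula \eqref{eq:integration-by-part} directly to $\Ebb[X_{mj}\overline{X}_{mj}(z\Q)'_{mm}]$, exploit the crucial cancellation of the zeroth-order term against the $-1$, and obtain an \emph{exact} closed-form identity for $\Ebb[(z\Q)'_{mm}(\|\x_m\|_2^2/(B+1)-1)]$ that manifestly carries a $1/(B+1)$ prefactor, after which only deterministic resolvent bounds are needed. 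Your route avoids both Cauchy--Schwarz and the Poincar\'e--Nash variance estimate, at the cost of a short resolvent-derivative computation; it gives an explicit expression rather than just an order-of-magnitude bound, and arguably makes the mechanism behind the $1/B$ gain more transparent. The paper's covariance/Poincar\'e--Nash argument is shorter to state once the machinery is in place and is the one reused systematically elsewhere (e.g.\ in Step~3 and in the sketch of Proposition \ref{prop:EtildeQ-EQ}), which is likely why the authors preferred it. One minor remark: strictly speaking you should fix $z\in\mathcal{D}$, apply IBP to the inner expectation, and only then integrate; since all quantities involved are bounded by $C(z)$ on the compact $\mathcal{D}$ this interchange is harmless, but it is worth noting.
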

\begin{proof}

As in the proof of Lemma \ref{lemma:variance-zeta}, we only consider 
\[
    \Ebb[\zeta_2] =  \int_\Dcal \diff x \diff y\, \overline{\partial}\Phi_k(f)(z) \frac{1}{M} \sum_{m=1}^M z\Ebb\left[(\Q^2)_{mm} \left(\frac{\|\x_m\|_2^2}{B+1}-1\right)\right]
\]
as $\Ebb[\zeta_1]$ is shown to be also $\Ocal(\frac{1}{B})$ with the same argument. As $\Ebb\left[\frac{\|\x_m\|_2^2}{B+1}-1\right] = 0$, we have
$$
    \Ebb\left[(\Q^2)_{mm} \left(\frac{\|\x_m\|_2^2}{B+1}-1\right)\right] =  \Ebb\left[\left((\Q^2)_{mm}-\Ebb[(\Q^2)_{mm}]\right) \left(\frac{\|\x_m\|_2^2}{B+1}-1\right)\right] .
$$
Apply now the Cauchy-Schwartz inequality:
\begin{equation}
\label{equation:controle_esperance_zeta_2}
    \left|\Ebb[\zeta_2]\right|  \le \int_\Dcal \diff x \diff y\, |\overline{\partial}\Phi_k(f)(z)| \frac{1}{M} \sum_{m=1}^M |z|\sqrt{\Var(\Q^2)_{mm}}\sqrt{\Ebb\left|\frac{\|\x_m\|_2^2}{B+1}-1\right|^2}.
\end{equation}
As it is clear that $\Ebb\left|\frac{\|\x_m\|_2^2}{B+1}-1\right|^2 = \Ocal\left(\frac{1}{B}\right)$, it remains to control $\Var (\Q^2)_{mm} = \Var (\Tr \Q^2 \e_m \e_m^T)$
where $(\e_m)_{m=1,\ldots,M}$ is the canonical basis of $\Cbb^M$. A direct application 
of (\ref{eq:var-trace-resolvent}) for $i=2$ leads immediately to 
\begin{equation}
\label{equation:controle_var_Q_square}
    \Var (\Q^2)_{mm} \le \frac{C(z)}{B}
\end{equation}
for some nice constant C. Using \eqref{equation:controle_var_Q_square} in \eqref{equation:controle_esperance_zeta_2},  we get that for $k$ large enough: 
\begin{align*}
    \left|\Ebb\zeta_2\right| &\le  \frac{1}{B} \int_\Dcal \diff x \diff y\, |\overline{\partial}\Phi_k(f)(z)| \sqrt{C(z)} \\
    & \le  \frac{1}{B} \int_\Dcal \diff x \diff y\, |\overline{\partial}\Phi_k(f)(z)|  (1 + C(z)) \leq C \frac{1}{B}.
\end{align*}
This completes the proof of Lemma \ref{lemma:esperance-zeta}. 
\end{proof}
\begin{remark}
\label{re:decomposition-fondamentale}
We notice that, instead of using (\ref{eq:fundamental-decomposition}), an alternative approach to study $\frac{1}{M} \Tr f(\hat{\C}(\nu)) - \int f \, d\mu_{MP}^{(c_N)}$ could have been based on the decomposition
\begin{align}
\nonumber \frac{1}{M} \mathrm{Tr}\left( f(\hat{\C}(\nu))\right) -  \int_{\Rbb^{+}}f\diff\mu_{MP}^{(c_N)} = \frac{1}{M} \mathrm{Tr}\left( f(\hat{\C}(\nu))\right) - \mathbb{E} \left[ \frac{1}{M} \mathrm{Tr}\left( f(\hat{\C}(\nu))\right) \right]+ \\ \nonumber \mathbb{E} \left[\frac{1}{M} \mathrm{Tr}\left( f(\hat{\C}(\nu))\right) \right] - \mathbb{E} \left[ \frac{1}{M} \mathrm{Tr}\left( f(\tilde{\C}(\nu))\right) \right] + \\
\nonumber \mathbb{E} \left[ \frac{1}{M} \mathrm{Tr}\left( f(\tilde{\C}(\nu))\right) 
- \frac{1}{M} \mathrm{Tr}\left( f(\frac{\X(\nu)\X^*(\nu)}{B+1}) \right) \right] + \\ 
\label{eq:-alternative-fundamental-decomposition}
\mathbb{E} \left[  \frac{1}{M} \mathrm{Tr}\left( f(\frac{\X(\nu)\X^*(\nu)}{B+1}) \right) \right] - \int_{\Rbb^{+}}f\diff\mu_{MP}^{(c_N)}.    
\end{align}
The first term of the r.h.s. of (\ref{eq:-alternative-fundamental-decomposition}) can be addressed using the Gaussian concentration inequality. However, the calculations are more complicated than the evaluation of $\frac{1}{M} \mathrm{Tr}\left( f(\tilde{\C}(\nu))\right) - \mathbb{E} \left[ \frac{1}{M} \mathrm{Tr}\left( f(\tilde{\C}(\nu))\right) \right]$ because, considered as a function of $(\X, \X^*)$, $\frac{1}{M} \mathrm{Tr}\left( f(\hat{\C}(\nu))\right)$ is not a Lipschitz function. Using techniques similar to those developed to evaluate $\zeta - \mathbb{E}(\zeta)$ (see Lemma \ref{lemma:variance-zeta}), it could however be shown that 
\begin{equation}
    \label{eq:new-stochastic-domination}
\left|\frac{1}{M} \mathrm{Tr}\left( f(\hat{\C}(\nu))\right) - \mathbb{E} \left[ \frac{1}{M} \mathrm{Tr}\left( f(\hat{\C}(\nu))\right) \right]\right| \prec \frac{1}{B}.
\end{equation}
In order to evaluate the second term of the r.h.s. of (\ref{eq:-alternative-fundamental-decomposition}), 
one should prove that 
\begin{multline}
\label{eq:expectation-reduction}
    \mathbb{E} \left[ \int_{\Dcal} \diff x \diff y\, \left(\bar{\partial}\Phi_k(f)(z) \frac{1}{M} \Tr\{\hat{\Q}-\tilde{\Q}\} - \tilde{r}_N(\nu) \; \tilde{p}_N(z) \; v_N \; \mathbf{1}_{\alpha > 2/3}\right) - \zeta\right] \\= \Ocal(u_N)
\end{multline}
and $\mathbb{E}(\zeta) = \Ocal(\frac{1}{B})$. The proof of (\ref{eq:expectation-reduction}) does not appear simpler 
than the proof of (\ref{eq:reduction}): the 3 steps that allowed to establish (\ref{eq:reduction}) should still be used, except that the stochastic domination properties should be replaced by properties of the mathematical expectation of the various terms. However, proving stochastic domination appears simpler than showing the desired properties of the above mathematical expectations. In sum, while the use of decomposition 
(\ref{eq:-alternative-fundamental-decomposition}) allows to avoid 
Lemma \ref{lemma:variance-zeta}, the justification of (\ref{eq:new-stochastic-domination}) needs to develop tools that are similar to those of Lemma \ref{lemma:variance-zeta}, and the proof of (\ref{eq:expectation-reduction}) tends to be more complicated than the
proof of (\ref{eq:reduction}). This explains why we have chosen to use decomposition 
(\ref{eq:fundamental-decomposition}) rather than (\ref{eq:-alternative-fundamental-decomposition}). 
\end{remark}
\subsection{Step 4: evaluation of \texorpdfstring{$\mathbb{E} \left[ \frac{1}{M} \mathrm{Tr}\left( f(\tilde{\C}(\nu))\right) 
- \frac{1}{M} \mathrm{Tr}\left( f(\frac{\X(\nu)\X^*(\nu)}{B+1}) \right) \right] $}{}}
The Helffer-Sjöstrand formula implies that 
\begin{multline*}
\mathbb{E} \left[ \frac{1}{M} \mathrm{Tr}\left( f(\tilde{\C}(\nu))\right) 
- \frac{1}{M} \mathrm{Tr}\left( f(\frac{\X(\nu)\X^*(\nu)}{B+1}) \right) \right] \\ =\frac{1}{\pi} \Re \int_\Dcal \diff x \diff y\, \overline{\partial}\Phi_k(f)(z) 
\mathbb{E} \left[ \frac{1}{M} \Tr ( \tilde{\Q}_N(z) - \Q_N(z) ) \right].
\end{multline*}
Therefore, we are back to evaluate $\mathbb{E} \left[ \frac{1}{M} \Tr ( \tilde{\Q}_N(z) - \Q_N(z) )\right]$. \\

In order to simplify the exposition of the results of this paragraph, we introduce the following notation. If $(h_N(z))_{N \geq 1}$ is a sequence of complex-valued functions defined on $\mathbb{C}^{+}$ and if $(w_N)_{N \geq 1}$ is a sequence of positive real numbers, the notation $h_N(z) = \Ocal_z(w_N)$ means that there exists two nice polynomials $P_1$ and $P_2$ such that 
$|h_N(z)| \leq w_N P_1(|z|) P_2(\frac{1}{\Im z})$ for each $z \in \mathbb{C}^{+}$. \\

In this paragraph, we establish the following Proposition. 
\begin{proposition}
\label{prop:EtildeQ-EQ}
 $\mathbb{E} \left[ \frac{1}{M} \Tr ( \tilde{\Q}_N(z) - \Q_N(z) ) \right]$ can be written as 
 \begin{multline}
\mathbb{E} \left[ \frac{1}{M} \Tr ( \tilde{\Q}_N(z) - \Q_N(z) ) \right] = \left( \frac{1}{M} \sum_{m=1}^{M} \frac{s_m'}{s_m} \right)^{2} \, p_N(z) \, v_N -  \\
\left( \frac{1}{2M} \sum_{m=1}^{M} \frac{s_m''}{s_m} \right) \, \tilde{p}_N(z) \, v_N + \Ocal_z\left( \left(\frac{B}{N}\right)^{3} + \frac{1}{N} \right)
\label{eq:EtildeQ-EQ}.
 \end{multline}
\end{proposition}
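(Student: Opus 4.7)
The plan is to study $\mathbb{E}[\frac{1}{M}\Tr(\tilde{\Q}-\Q)]$ via a third-order resolvent expansion. Since $\tilde{\C} = \frac{\X\X^*}{B+1} + \tilde{\Deltabs}$, iterating the resolvent identity $\tilde{\Q}-\Q = -\tilde{\Q}\tilde{\Deltabs}\Q$ twice yields
\begin{equation*}
\tilde{\Q} - \Q = -\Q\tilde{\Deltabs}\Q + \Q\tilde{\Deltabs}\Q\tilde{\Deltabs}\Q - \tilde{\Q}(\tilde{\Deltabs}\Q)^{3}.
\end{equation*}
The remainder is controlled by $\frac{1}{(\Im z)^{4}} \|\tilde{\Deltabs}\|^{3}$. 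Using the stochastic domination $\|\tilde{\Deltabs}\| \prec \frac{B}{N}$ from Corollary \ref{corollary:concentration_delta} on an event of exponentially high probability, plus a crude deterministic bound on $\|\tilde{\Deltabs}\|$ (which is a polynomial of Gaussian entries and has bounded moments of every order), one deduces that $\Ebb\bigl|\frac{1}{M}\Tr \tilde{\Q}(\tilde{\Deltabs}\Q)^{3}\bigr| = \Ocal_z\bigl((B/N)^{3}\bigr)$, which produces the advertised error.

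The work thus reduces to evaluating the two expectations $T_1 = -\Ebb[\frac{1}{M}\Tr(\tilde{\Deltabs}\Q^2)]$ and $T_2 = \Ebb[\frac{1}{M}\Tr(\tilde{\Deltabs}\Q\tilde{\Deltabs}\Q^2)]$. Writing $\tilde{\Deltabs}_{m_1,m_2} = \frac{1}{B+1}\x_{m_1}K_{m_1,m_2}\x_{m_2}^{*}$ with $K_{m_1,m_2} = (\I+\Phibs_{m_1})^{1/2}(\I+\Phibs_{m_2})^{1/2} - \I$, the Taylor expansion $(\I+\Phibs_m)^{1/2} = \I + \tfrac{1}{2}\Phibs_m - \tfrac{1}{8}\Phibs_m^{2} + \Ocal((B/N)^{3})$ gives $K_{m_1,m_2} = \tfrac{1}{2}(\Phibs_{m_1}+\Phibs_{m_2}) + \tfrac{1}{4}\Phibs_{m_1}\Phibs_{m_2} - \tfrac{1}{8}(\Phibs_{m_1}^{2}+\Phibs_{m_2}^{2}) + \Ocal((B/N)^{3})$. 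Combined with the explicit structure of $\Phibs_m$ from the proof of Proposition \ref{proposition:X_Gamma}, whose diagonal satisfies $(\Phibs_m)_{b,b} = \frac{s_m'}{s_m}\frac{b}{N} + \tfrac{1}{2}\frac{s_m''}{s_m}(\frac{b}{N})^{2} + \Ocal((B/N)^{3})$ and off-diagonal entries are $\Ocal(1/N)$, one reads off $\frac{1}{B+1}\Tr\Phibs_m = \tfrac{1}{2}\frac{s_m''}{s_m}v_N + \Ocal((B/N)^{3}+1/N)$ (the linear term vanishes by symmetry of the sum), while $\frac{1}{B+1}\Tr\Phibs_m^{2} = (\frac{s_m'}{s_m})^{2}v_N + \Ocal((B/N)^{3}+1/N)$.

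The computation of $T_1$ and $T_2$ then proceeds via Gaussian integration by parts \eqref{eq:integration-by-part} applied to the i.i.d. complex Gaussian entries of $\X$, together with the Poincaré-Nash variance estimates of Lemma \ref{le:nash-poincare-resolvent}, and the asymptotic description $\Ebb[\Q(z)] = t_N(z)\I_M + \Ocal_z(1/M^2)$ recalled in Paragraph \ref{subsubsec:resolvent-MP}. Schematically, $T_1$ picks up the expectation of $\tilde{\Deltabs}$, which equals $\diag(\frac{1}{B+1}\Tr\Phibs_m)$ plus negligible terms, multiplied by $\Ebb[\Q^2]$; this gives $-\bigl(\frac{1}{2M}\sum_m \frac{s_m''}{s_m}\bigr)\tilde{p}_N(z) v_N$ after identifying $\frac{1}{M}\Tr\Ebb[\Q^2] = (zt_N(z))'/z + \Ocal_z(1/M^2)$, together with a cross term stemming from IBP-produced correlations that is shown to be lower order. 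The harder $T_2$ requires a careful second-order IBP expansion: the dominant contribution comes from pairing the $\frac{1}{2}(\Phibs_{m_1}+\Phibs_{m_2})$ parts of $K$ through their linear-in-$b/N$ pieces (which would vanish individually but combine quadratically through the $\tilde{\Deltabs}\Q\tilde{\Deltabs}$ sandwich), yielding a factor $v_N\bigl(\frac{1}{M}\sum_m \frac{s_m'}{s_m}\bigr)^{2}$ multiplied by a deterministic function of $z$ built from $t_N,\tilde{t}_N$ via the self-consistent Marcenko-Pastur equations \eqref{eq:equation-MP1}-\eqref{eq:equation-MP2}; a direct manipulation then identifies this function with $p_N(z)$ defined in \eqref{eq:def-pN}.

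The main obstacle, explicitly acknowledged by the authors, is the sheer length and delicacy of the Gaussian calculations for $T_2$: every application of IBP generates several descendant terms, each of which must be reclassified as either contributing to the leading $v_N$-order, being absorbable into $T_1$, or being negligible $\Ocal_z((B/N)^{3}+1/N)$. A secondary subtlety is that $\Gammabs$ is not independent of $\X$ (since $\Gammabs$ is built from the rows of $\X$ multiplied by the deterministic matrices $\Psibs_m$), so IBP must be performed on the unique underlying Gaussian variable $\X$ after fully expressing $\tilde{\Deltabs}$ in quadratic form. Finally, the identification of the Stieltjes transforms $p_N$ and $\tilde p_N$ appearing in the final formula requires, after all IBP-generated terms are collected, a nontrivial algebraic simplification using \eqref{eq:equation-MP1} and \eqref{eq:def-ttilde} to match the definitions \eqref{eq:def-pN} and \eqref{eq:def-tildep}.
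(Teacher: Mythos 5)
Your proposal follows the same overall route as the paper (which also only provides a sketch: the authors explicitly state they verify the $\Ocal_z\bigl((B/N)^2\bigr)$ magnitude but "do not establish its expression"): a third-order resolvent expansion giving $\tilde{\Q}-\Q = -\Q\tilde{\Deltabs}\Q + \Q\tilde{\Deltabs}\Q\tilde{\Deltabs}\Q - \tilde{\Q}(\tilde{\Deltabs}\Q)^{3}$, control of the remainder by moment bounds on $\|\tilde{\Deltabs}\|$ (Lemma \ref{le:EnormtildeDelta}), reduction to the two traces $T_1,T_2$, Taylor expansion of $\Phibs_m$, and Gaussian IBP plus Poincaré--Nash for the deterministic equivalents. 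The expansion of $K_{m_1,m_2}$ and the evaluations of $\frac{1}{B+1}\Tr\Phibs_m$ and $\frac{1}{B+1}\Tr\Phibs_m^2$ match the paper.

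There is, however, a genuine conceptual gap in your schematic account of $T_1$. You claim $T_1$ is obtained by "the expectation of $\tilde{\Deltabs}$ \ldots multiplied by $\Ebb[\Q^2]$, together with a cross term stemming from IBP-produced correlations that is shown to be lower order." This is wrong: $\Q$ and $\tilde{\Deltabs}$ are both deterministic functionals of the same Gaussian matrix $\X$, and their correlation contributes at leading order. This is exactly what the paper's Lemma \ref{le:EwAWQ} records: the dominant term of $\Ebb\bigl(\w_m\A\W^*\Q\e_m\bigr)$ is $\frac{\beta}{1+\beta c}\frac{1}{B+1}\Tr\A$, and the self-consistency factor $\frac{\beta}{1+\beta c}=-zt\tilde t + \Ocal_z(M^{-2})$ is not $\beta$. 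After differentiating in $z$, the coefficient governing the $\Gammabs\X^*$ and $\X\Gammabs^*$ parts of $T_1$ is $(zt_N\tilde t_N)'$, not $\beta_N'=t_N'$ as your naive factorisation would give. (Incidentally, the identity $\frac{1}{M}\Tr\Ebb[\Q^2]=(zt_N(z))'/z$ in your text is off: $\frac{1}{M}\Tr\Ebb[\Q^2]=\beta_N'=t_N'+\Ocal_z(M^{-2})$, and $(zt_N)'/z=t_N/z+t_N'\ne t_N'$.) Furthermore, your clean attribution "$T_1\mapsto -\tilde r_N\tilde p_N v_N$, $T_2\mapsto r_N p_N v_N$" is not how the contributions actually organise: the paper's computation of $\Ebb\bigl[\frac{1}{M}\Tr\Q^{2}\frac{\Gammabs\X^*}{B+1}\bigr]$ already yields a term proportional to $\bigl(\frac{s_m'}{s_m}\bigr)^{2}v_N$ (coming from $-\frac{1}{8}\Phibs_m^2$ in $\Psibs_m$), so $T_1$ contributes to both coefficients in the final formula, and the algebraic regrouping that produces the clean $p_N$, $\tilde p_N$ pairing happens only after $T_1$ and $T_2$ are combined. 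Your sketch would therefore not reach the correct answer as written; the missing ingredient is the IBP self-consistency lemma (the paper's Lemma \ref{le:EwAWQ} and its $z$-derivative) which turns the naive $t_N'$ into $(zt_N\tilde t_N)'$ and intertwines the two trace terms.
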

The Helffer-Sjöstrand formula thus leads to the following Corollary:
\begin{corollary}
\label{coro:Eftilde-EfXX}
$\mathbb{E} \left[ \frac{1}{M} \mathrm{Tr}\left( f(\tilde{\C}(\nu))\right) 
- \frac{1}{M} \mathrm{Tr}\left( f(\frac{\X(\nu)\X^*(\nu)}{B+1}) \right) \right]$ is given by
\begin{multline}
    \label{eq:expre-Eftilde-EfXX}
\mathbb{E} \left[ \frac{1}{M} \mathrm{Tr}\left( f(\tilde{\C}(\nu))\right) 
- \frac{1}{M} \mathrm{Tr}\left( f(\frac{\X(\nu)\X^*(\nu)}{B+1}) \right) \right] = \\
\left( \frac{1}{M} \sum_{m=1}^{M} \frac{s_m'}{s_m} \right)^{2} \, \phi_N(f) \, v_N - \left( \frac{1}{2M} \sum_{m=1}^{M} \frac{s_m''}{s_m} \right) \, \tilde{\phi}_N(f) \, v_N + \Ocal\left(\left(\frac{B}{N}\right)^{3} + \frac{1}{N}\right).
\end{multline}
\end{corollary}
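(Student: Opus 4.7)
The plan is to derive the corollary from Proposition \ref{prop:EtildeQ-EQ} via the Helffer-Sjöstrand formula, applied in the distributional form \eqref{equ:helfjer-sjostrand-distribution}. First I would reduce to the case where $f$ is $\mathcal{C}^\infty$ and compactly supported in a small neighbourhood of $\Supp\mu_{MP}^{(c)}$. Introduce $\bar f = f \cdot \chi$ with $\chi$ the smooth cutoff from \eqref{eq:definition_chi}. Since $\Supp\mu_{MP}^{(c_N)} \subset \Supp\mu_{MP}^{(c)}+\kappa$ for $N$ large, the distributions $D_N$ and $\tilde D_N$ act identically on $f$ and on $\bar f$, so $\phi_N(f) = \phi_N(\bar f)$ and $\tilde\phi_N(f) = \tilde\phi_N(\bar f)$. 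On $\Lambda^{\tilde{\C}}_\kappa(\nu) \cap \Lambda_{N,\kappa}(\nu)$ (with $\Lambda_{N,\kappa}$ the Wishart localization event \eqref{equation:definition_Xi}), the traces $\frac{1}{M}\Tr f(\tilde{\C})$ and $\frac{1}{M}\Tr f(\frac{\X\X^*}{B+1})$ coincide with their $\bar f$-versions; on the complement, which carries exponentially small probability, crude deterministic/moment bounds on $\|\tilde{\C}\|$ and $\|\X\X^*/(B+1)\|$ together with the polynomial growth of $f$ give an exponentially small expectation, negligible compared to $\Ocal((B/N)^3 + 1/N)$. So it suffices to prove the identity for $\bar f$.

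Second, for $\bar f$ compactly supported, I apply \eqref{equ:helfjer-sjostrand-general} to both Hermitian matrices $\tilde{\C}$ and $\frac{\X\X^*}{B+1}$ (whose empirical spectral distributions have Stieltjes transforms $\frac{1}{M}\Tr\tilde{\Q}(z)$ and $\frac{1}{M}\Tr\Q(z)$ respectively). Subtracting the two representations, taking expectations, and exchanging expectation and integral (Fubini, justified below), I obtain
\begin{equation*}
\mathbb{E}\!\left[\tfrac{1}{M}\Tr \bar f(\tilde{\C}) - \tfrac{1}{M}\Tr \bar f(\tfrac{\X\X^*}{B+1})\right] = \frac{1}{\pi}\Re \int_{\Dcal} \bar\partial\Phi_k(\bar f)(z)\,\mathbb{E}\!\left[\tfrac{1}{M}\Tr(\tilde{\Q}(z)-\Q(z))\right] dx\,dy .
\end{equation*}
Now I substitute the expansion \eqref{eq:EtildeQ-EQ} of Proposition \ref{prop:EtildeQ-EQ} into the integrand. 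This splits the integral into three pieces: one proportional to $p_N(z)$, one proportional to $\tilde p_N(z)$, and an error.

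Third, the distributional Helffer-Sj\"ostrand formula \eqref{equ:helfjer-sjostrand-distribution} applied to $D_N$ (whose Stieltjes transform is $p_N$, by \eqref{eq:def-pN} and the reference to \cite{loubaton2016jotp}) and to $\tilde D_N$ (Stieltjes transform $\tilde p_N$, by \eqref{eq:def-tildep}) identifies the two main contributions as $r_N(\nu)\phi_N(\bar f)\,v_N$ and $-\tilde r_N(\nu)\tilde\phi_N(\bar f)\,v_N$, exactly as claimed. The remainder is bounded in modulus by
\begin{equation*}
\Ocal\!\left((B/N)^3 + 1/N\right)\cdot \int_{\Dcal}|\bar\partial\Phi_k(\bar f)(z)|\,P_1(|z|)\,P_2(1/\Im z)\,dx\,dy,
\end{equation*}
with $P_1,P_2$ the nice polynomials encoded in the $\Ocal_z$ notation of Proposition \ref{prop:EtildeQ-EQ}.

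The only delicate point, and what dictates the choice of $k$, is guaranteeing the finiteness of this last integral, and simultaneously the finiteness of the integral underlying Fubini. By \eqref{eq:derivee-Phi(f)}, $|\bar\partial\Phi_k(\bar f)(z)| \le C_k (\Im z)^k$ on $\Dcal \cap \{|\Im z|\le 1\}$, so taking $k \ge \deg P_2$ makes $|\bar\partial\Phi_k(\bar f)(z)|\,P_2(1/\Im z)$ bounded on $\Dcal$, and $P_1(|z|)$ is trivially bounded on the compact set $\Dcal$. This is the only real obstacle, and it is handled by the freedom in choosing $k$. Since $\bar f \in \mathcal{C}^\infty_c$ (with all derivatives bounded), any such $k$ is admissible, which completes the proof.
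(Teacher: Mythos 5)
Your proposal is correct and follows essentially the same route as the paper: the corollary is obtained exactly by writing the expected trace difference through the Helffer--Sj\"ostrand formula, inserting the expansion \eqref{eq:EtildeQ-EQ} of Proposition \ref{prop:EtildeQ-EQ}, identifying the $p_N$ and $\tilde p_N$ contributions with $\phi_N(f)$ and $\tilde\phi_N(f)$ via \eqref{equ:helfjer-sjostrand-distribution}, and choosing $k\geq \deg P_2$ so that the $\Ocal_z\left((B/N)^3+1/N\right)$ remainder integrates to $\Ocal\left((B/N)^3+1/N\right)$. Your preliminary reduction to $\bar f=f\chi$ is harmless but not needed here (and its bad-event estimate tacitly assumes a growth bound on $f$ beyond Assumption \ref{assumption:f}), since in the paper this reduction is already performed once and for all at the start of the proof of Theorem \ref{theo:domination-psi}, so $f$ is compactly supported at this stage.
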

Corollary \ref{coro:Eftilde-EfXX} first implies that $\mathbb{E} \left[ \frac{1}{M} \mathrm{Tr}\left( f(\tilde{\C}(\nu))\right) 
- \frac{1}{M} \mathrm{Tr}\left( f(\frac{\X(\nu)\X^*(\nu)}{B+1}) \right) \right]$
is $\Ocal\left( \frac{B}{N} \right)^{2}$, a result which is not a priori obvious. In particular, the stochastic representation (\ref{equation:tildeC_approximation_Wishart}) of the matrix $\tilde{\C}$ can be shown to provide the more pessimistic $\Ocal(\frac{B}{N})$ rate of convergence. The comparison of (\ref{eq:expre-Eftilde-EfXX}) with (\ref{eq:evaluation-trace-hatC-tildeC}) also leads to the conclusion that if $\alpha > 2/3$, the dominant $\Ocal\left( \frac{B}{N} \right)^{2}$  deterministic term of $\frac{1}{M} \Tr( f(\hat{\C}(\nu)) - f(\tilde{\C}(\nu))$ is cancelled by the second term  of the righthandside of (\ref{eq:expre-Eftilde-EfXX}), thus explaining the structure of the $\Ocal\left( \frac{B}{N} \right)^{2}$  deterministic correction of $\frac{1}{M} \Tr( f(\hat{\C}(\nu)) - \int f \, d\mu_{MP}^{(c_N)}$. In particular, establishing 
(\ref{eq:EtildeQ-EQ}) (and thus (\ref{eq:expre-Eftilde-EfXX})) will complete the proof of 
Theorem \ref{theo:domination-psi}. \\

\begin{proof} 
The proof of (\ref{eq:EtildeQ-EQ}) is based on the Gaussian tools reviewed in  
Paragraph  \ref{subsubsec:resolvent-MP}, and needs long and very tedious calculations. 
Therefore, we just provide a sketch of proof. In particular, we justify that 
$\mathbb{E} \left[ \frac{1}{M} \Tr ( \tilde{\Q}_N(z) - \Q_N(z) ) \right]$ is a 
$\Ocal_z\left(\frac{B}{N}\right)^{2}$ term, but do not establish its expression
(\ref{eq:EtildeQ-EQ}). \\

The starting point of the proof is to express $\tilde{\Q} - \Q$ as 
$$
\tilde{\Q} - \Q = - \tilde{\Q} \tilde{\Deltabs} \Q = -\Q \tilde{\Deltabs} \Q + 
\Q \tilde{\Deltabs} \Q \tilde{\Deltabs} \Q - \tilde{\Q} \tilde{\Deltabs} \Q \tilde{\Deltabs} \Q \tilde{\Deltabs} \Q.
$$
Therefore, $\mathbb{E} \left[ \frac{1}{M} \Tr ( \tilde{\Q}_N(z) - \Q_N(z) ) \right]$ can be written as  
\begin{multline}
\mathbb{E} \left[ \frac{1}{M} \Tr ( \tilde{\Q} - \Q ) \right] = 
- \mathbb{E} \left[ \frac{1}{M} \Tr (\Q^{2} \tilde{\Deltabs}) \right] + \mathbb{E} \left[ \frac{1}{M} \Tr (\Q^{2} \tilde{\Deltabs} \Q  \tilde{\Deltabs}) \right] \\ - \mathbb{E} \left[ \frac{1}{M} \Tr (\tilde{\Q} \tilde{\Deltabs} \Q  \tilde{\Deltabs} \Q \tilde{\Deltabs}) \right].
 \label{eq:expre-EtildeQ-Q-1}
\end{multline}
It is clear that the moduli of the second and third terms of the right hand side 
of (\ref{eq:expre-EtildeQ-Q-1}) are controlled by $C(z) \mathbb{E}(\| \tilde{\Deltabs}\|^{2})$ 
and  $C(z) \mathbb{E}(\| \tilde{\Deltabs}\|^{3})$ respectively. We now state the following useful Lemma, proved in the Appendix, which implies that these terms are $\Ocal_z\left(\frac{B}{N}\right)^{2}$ and $\Ocal_z\left(\frac{B}{N}\right)^{3}$
respectively. 
\begin{lemma}
\label{le:EnormtildeDelta}
For each $k \geq 1$, there exists a nice constant $C$ depending on $k$ such that $\Ebb \left(\| \tilde{\Deltabs} \|^{k}\right) \leq C \, \left(\frac{B}{N}\right)^{k}$
\end{lemma}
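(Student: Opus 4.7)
The plan is to leverage the decomposition
\[
\tilde{\Deltabs} = \frac{\X\Gammabs^* + \Gammabs\X^* + \Gammabs\Gammabs^*}{B+1}
\]
together with the triangle inequality and the sub-multiplicativity of the operator norm to obtain the pointwise bound
\[
\|\tilde{\Deltabs}\| \le \frac{2\|\X\|\,\|\Gammabs\| + \|\Gammabs\|^{2}}{B+1} = 2 A\, B_{G} + B_{G}^{2},
\]
where $A := \|\X\|/\sqrt{B+1}$ and $B_{G} := \|\Gammabs\|/\sqrt{B+1}$. Raising to the $k$-th power, expanding and applying the Cauchy–Schwarz inequality shows that it suffices to establish the two moment estimates
\begin{equation}
\label{eq:plan-moments}
\Ebb\bigl[A^{2k}\bigr] \le C_{k}, \qquad \Ebb\bigl[B_{G}^{2k}\bigr] \le C_{k}\left(\frac{B}{N}\right)^{2k},
\end{equation}
since under Assumption \ref{assumption:rate_NBM} one has $B/N \le 1$ for $N$ large, which makes the $B_{G}^{2}$ contribution no larger than $C_{k}(B/N)^{k}$.

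The first bound in \eqref{eq:plan-moments} is standard and will be recovered from the Haagerup concentration inequality \eqref{eq:concentration-largest-eig-wishart}, which implies that $\Pbb[A^{2} > (1+\sqrt{c})^{2} + s]$ decays like $(B+1)\exp(-C(B+1)s^{2})$. Integrating by parts against $\int_{0}^{\infty} k s^{k-1}\,\Pbb[A^{2}>s]\,ds$ yields $\Ebb A^{2k} \le C_{k}$ with nice constants. The core of the argument is therefore the second bound of \eqref{eq:plan-moments}. Here the temptation to write $\|\Gammabs\| \le \|\Gammabs\|_{F}$ must be resisted, since a direct moment computation gives only $\Ebb \|\Gammabs\|_{F}^{2k}/(B+1)^{k} \lesssim M^{k}(B/N)^{2k}$, which is off by a factor $M^{k}$. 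This is the main obstacle.

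To circumvent this obstacle, the plan is to revisit the $\epsilon$-net argument used in the proof of Proposition \ref{proposition:X_Gamma}. Writing $\Z = \Gammabs\Gammabs^{*}/(B+1)$ so that $B_{G}^{2} = \|\Z\|$, the combination of $\|\Ebb \Z\| \le C(B/N)^{2}$ with the Hanson–Wright tail bound for a fixed quadratic form $\g^{*}\Z\g$ and the union bound over a $(1/\epsilon)^{2M}$-cardinality net yielded
\[
\Pbb\bigl[\|\Z - \Ebb\Z\| > u\,(B/N)^{2}\bigr] \le 2C_{0}\exp\!\left(-C B u + 2 M\log(1/\epsilon)\right)
\]
for $u \ge 1$. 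Since $M = \Ocal(B)$, there exists a nice constant $u_{0}$ such that for all $u \ge u_{0}$ the right-hand side is dominated by $2C_{0}\exp(-\frac{C}{2}B u)$. Integrating this sub-exponential tail produces
\[
\Ebb\|\Z - \Ebb\Z\|^{k} \le C_{k}\,(B/N)^{2k},
\]
and combining with $\|\Ebb\Z\| \le C(B/N)^{2}$ gives $\Ebb \|\Z\|^{k} \le C_{k}(B/N)^{2k}$, which is exactly the second estimate in \eqref{eq:plan-moments}. Assembling the two moment bounds via Cauchy–Schwarz then yields $\Ebb \|\tilde{\Deltabs}\|^{k} \le C_{k}(B/N)^{k}$, completing the proof.
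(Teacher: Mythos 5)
Your proposal is correct and follows essentially the same path as the paper: the same decomposition of $\tilde{\Deltabs}$, the same reduction via Cauchy–Schwarz to moments of $\|\X\X^*/(B+1)\|$ and of $\Z=\Gammabs\Gammabs^*/(B+1)$, the same separation of $\Z$ into $\Ebb\Z$ and $\Z-\Ebb\Z$, and the same integration of the $\epsilon$-net tail bound \eqref{eq:concentration-normZ-EZ} to get $\Ebb\|\Z-\Ebb\Z\|^k\le C_k(B/N)^{2k}$. The only cosmetic difference is that the paper simply cites the Wishart moment bound $\Ebb\|\X\X^*/(B+1)\|^k\le C_k$ as well known and integrates the tail via the substitution $t_N=w^{1/k}(B/N)^2$, whereas you spell out the Haagerup-based integration for the Wishart factor and use the substitution $t_N=u(B/N)^2$; both are equivalent.
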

In order to prove that $\mathbb{E} \left[ \frac{1}{M} \Tr ( \tilde{\Q}_N(z) - \Q_N(z) ) \right] = \Ocal_z\left(\frac{B}{N}\right)^{2}$, we thus have to check that 
\begin{equation}
    \label{eq:magnitude-trace-Q2tildeDelta}
    \mathbb{E} \left[ \frac{1}{M} \Tr \Q^{2} \tilde{\Deltabs} \right] = \Ocal_z\left(\frac{B}{N}\right)^{2}.
\end{equation}
For this, we first express $ \mathbb{E} \left[ \frac{1}{M} \Tr \Q^{2} \tilde{\Deltabs} \right]$ 
as 
\begin{multline*}
 \mathbb{E} \left[ \frac{1}{M} \Tr \Q^{2} \tilde{\Deltabs} \right] = 
 \mathbb{E}  \left( \frac{1}{M} \Tr \Q^{2} \frac{\Gammabs \X^*}{B+1} \right) +
  \mathbb{E}  \left( \frac{1}{M} \Tr \Q^{2} \frac{\X \Gammabs^*}{B+1} \right) \\ + 
 \mathbb{E}  \left( \frac{1}{M} \Tr \Q^{2} \frac{\Gammabs \Gammabs^*}{B+1} \right).
\end{multline*}
The third term of the right hand side is clearly $\Ocal_z((\frac{B}{N})^{2})$. We thus need to check that the first two terms are also $\Ocal_z((\frac{B}{N})^{2})$. We just verify this property for the first term. For this, we evaluate $\mathbb{E}  \left( \frac{1}{M} \Tr \Q \frac{\Gammabs \X^*}{B+1} \right)$ using the Gaussian tools, and take the derivative w.r.t. $z$ to obtain the expression 
of $\mathbb{E}  \left( \frac{1}{M} \Tr \Q^{2} \frac{\Gammabs \X^*}{B+1} \right)$. \\

In order to simplify the notations, we denote by $\W$ the matrix $\W = \frac{\X}{\sqrt{B+1}}$, and 
denote by $\w_1 = \frac{\x_1}{\sqrt{B+1}}, \ldots, \w_M = \frac{\x_M}{\sqrt{B+1}}$ its $M$ rows. 
In particular, the row $m$ of the matrix $\frac{\Gammabs}{\sqrt{B+1}}$ coincides with $\w_m \Psibs_m$ where we recall that 
matrix $\Psibs_m$ is defined by (\ref{eq:expre-IplusPhi1/2}). If $(\e_1, \ldots, \e_{m})$ represents the canonical basis of $\mathbb{C}^{M}$,  $\mathbb{E}  \left( \frac{1}{M} \Tr \Q \frac{\Gammabs \X^*}{B+1} \right)$ can be written as 
$$
\mathbb{E}  \left( \frac{1}{M} \Tr \Q \frac{\Gammabs \X^*}{B+1} \right) = \frac{1}{M} \sum_{m=1}^{M} \mathbb{E}\left( \w_m \Psibs_m \W^* \Q \e_m \right).
$$
We now state the following Lemma whose proof is given in Appendix. We recall 
that $\beta_N(z) = \mathbb{E}((\Q_N(z))_{mm}$ for each $m$. 
\begin{lemma}
\label{le:EwAWQ}
If $\A$ represents a $(B+1) \times (B+1)$ matrix, the following equality holds
\begin{multline}
    \label{eq:expre-EwAW*Q}
\mathbb{E}\left( \w_m \A \W^* \Q \e_m \right) = \frac{\beta}{1 + \beta c} \frac{1}{B+1} \Tr \A 
- \mathbb{E} \left[ \left(\frac{1}{B+1} \Tr \W \A \W^* \Q\right)^{\circ} \, \Q^{\circ}_{m,m} \right]+ \\
\frac{\beta c}{1+ \beta c} \mathbb{E} \left[ \left(\frac{1}{B+1} \Tr \W \A \W^* \Q\right)^{\circ} \, \frac{1}{B+1} \Tr \Q^{\circ} \right].
\end{multline}
\end{lemma}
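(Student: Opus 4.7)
The plan is to apply the Gaussian integration by parts formula \eqref{eq:integration-by-part} twice to the complex Gaussian entries $W_{ij}$ (with $\mathbb{E}|W_{ij}|^{2} = 1/(B+1)$), using the two derivative rules
\[
\frac{\partial (\W\W^*)_{ab}}{\partial \overline{W}_{ij}} = W_{aj}\,\delta_{bi}, \qquad \frac{\partial Q_{pl}}{\partial \overline{W}_{ij}} = -(\Q\W)_{pj}\, Q_{il},
\]
which follow directly from $\Q = (\W\W^*-z\I_M)^{-1}$. No tool beyond these two identities and the permutation invariance $\mathbb{E}\,Q_{mm}=\beta$ is needed.

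\emph{First integration by parts.} I would expand $\mathbb{E}(\w_m\A\W^*\Q\e_m)=\sum_{j,k,l}\mathbb{E}\bigl[W_{mj}\,A_{jk}\,\overline{W}_{lk}\,Q_{lm}\bigr]$ and differentiate with respect to $\overline{W}_{mj}$. The contribution from $\overline{W}_{lk}$ is $\delta_{lm}\delta_{kj}$, producing a term $\Tr(\A)\,Q_{mm}/(B+1)$ after summation; the contribution from $Q_{lm}$ is $-(\Q\W)_{lj}Q_{mm}$, producing a term $-\Tr(\W\A\W^*\Q)Q_{mm}/(B+1)$. Thus
\[
\mathbb{E}(\w_m\A\W^*\Q\e_m) = \frac{1}{B+1}\,\mathbb{E}\bigl[\Tr(\A)\,Q_{mm} - \Tr(\W\A\W^*\Q)\,Q_{mm}\bigr].
\]
Centering each factor as $Q_{mm}=\beta+Q_{mm}^{\circ}$ and $\Tr(\W\A\W^*\Q)=\mathbb{E}\Tr(\W\A\W^*\Q)+(\Tr(\W\A\W^*\Q))^{\circ}$ already isolates the second (covariance) term appearing in \eqref{eq:expre-EwAW*Q}, and reduces the problem to evaluating the deterministic quantity $\mathbb{E}\Tr(\W\A\W^*\Q)$.

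\emph{Second integration by parts.} I would then apply the same machinery to $\mathbb{E}\Tr(\W\A\W^*\Q)=\sum_{l,p,j,k}\mathbb{E}\bigl[W_{lj}\,A_{jk}\,\overline{W}_{pk}\,Q_{pl}\bigr]$, differentiating now with respect to $\overline{W}_{lj}$. The $\overline{W}_{pk}$ derivative gives a $\Tr(\A)\Tr(\Q)$ contribution and the $Q_{pl}$ derivative gives a $-\Tr(\Q)\Tr(\W\A\W^*\Q)$ contribution. Splitting $\Tr\Q=M\beta+(\Tr\Q)^{\circ}$ and using $c_N=M/(B+1)$ produces the self-consistent relation
\[
(1+c_N\beta)\,\mathbb{E}\Tr(\W\A\W^*\Q) = c_N\beta\,\Tr(\A) - \frac{1}{B+1}\,\mathbb{E}\bigl[(\Tr\Q)^{\circ}\,(\Tr(\W\A\W^*\Q))^{\circ}\bigr],
\]
which solves explicitly for $\mathbb{E}\Tr(\W\A\W^*\Q)$.

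\emph{Combining.} Substituting this solution back into the first-step identity and simplifying the combination $\Tr(\A)-\mathbb{E}\Tr(\W\A\W^*\Q)$ — which collapses, after clearing $(1+c_N\beta)$, into the sum of $\Tr(\A)$ and a rescaled covariance — absorbs the $1/(B+1)$ factors into the traces and produces the three summands of \eqref{eq:expre-EwAW*Q}: the deterministic term $\frac{\beta}{1+c_N\beta}\frac{\Tr(\A)}{B+1}$, the covariance term inherited directly from the first integration by parts, and the covariance between $(\Tr\Q)^{\circ}$ and $(\Tr(\W\A\W^*\Q))^{\circ}$ with the required prefactor. The only real difficulty is the algebraic bookkeeping: carefully tracking the centered factors and the $(B+1)$-normalizations so that the prefactors collapse to the advertised $\frac{\beta}{1+c_N\beta}$ and $\frac{c_N\beta}{1+c_N\beta}$; no further probabilistic estimate is required.
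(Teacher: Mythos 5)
Your proposal is correct and follows essentially the same route as the paper: one Gaussian integration by parts on $\mathbb{E}(\w_m\A\W^*\Q\e_m)$, a centering to isolate the covariance term, and a self-consistent equation for $\mathbb{E}\Tr(\W\A\W^*\Q)$ solved and substituted back. The only (cosmetic) difference is that the paper obtains that self-consistent equation by observing $\mathbb{E}\tfrac{1}{M}\Tr(\W^*\Q\W\A) = \tfrac{1}{M}\sum_m \eta_m$ and summing the single-$m$ identity over $m$, whereas you re-run the integration by parts directly on the trace — operationally the same computation.
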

Using (\ref{eq:control-inverse-1pluss}) in the case $\s_{\mu}(z) = \beta(z)$ as well as 
(\ref{eq:beta-t}), we easily obtain that $\frac{\beta}{1 + \beta c} = \frac{t}{1 + c t} + \epsilon_1(z) = -z t(z) \tilde{t}(z) + \epsilon_1(z)$ where $\epsilon_1(z) = \Ocal_z(\frac{1}{B^{2}})$. Moreover, it follows from  (\ref{eq:control-derivee-epsilon}) that $\epsilon_1'(z)$ is also a  $\Ocal_z(\frac{1}{B^{2}})$. We now use (\ref{eq:expre-EwAW*Q}) for $\A = \Psibs_m$, and 
differentiate (\ref{eq:expre-EwAW*Q}) for $\A = \Psibs_m$ w.r.t. $z$. Using the Schwartz inequality 
the inequalities (\ref{eq:var-trace-resolvent}) and (\ref{eq:var-trace-resolvent-XX*}), and (\ref{eq:controle-norm-Psi}), we obtain immediately that 
$$
\mathbb{E}\left( \w_m \Psi_m \W^* \Q \e_m \right) = -\left(z t(z) \tilde{t}(z)\right)' \, 
\frac{1}{B+1} \Tr \Psibs_m + \Ocal_z\left(\frac{B}{N}\right)^{3} +  \Ocal_z\left(\frac{1}{\sqrt{B}N}\right)
$$
and that 
\begin{multline*}
    \mathbb{E}  \left( \frac{1}{M} \Tr \Q^{2} \frac{\Gammabs \X^*}{B+1} \right) = -\left(z t(z) \tilde{t}(z)\right)' \, \frac{1}{B+1} \Tr \left( \frac{1}{M} \sum_{m=1}^{M} \Psibs_m \right) \\ + \Ocal_z\left(\frac{B}{N}\right)^{3} +  \Ocal_z\left(\frac{1}{\sqrt{B}N}\right).
\end{multline*}
It is easily checked that 
$$
\Psibs_m = (\I + \Phibs_m)^{1/2} - \I = \frac{1}{2} \Phibs_m - \frac{1}{8} \Phibs_m^{2} + \Xibs_m
$$
where $\|  \Xibs_m \| \leq C \left( \frac{B}{N} \right)^{3}$. It is easily seen 
that $\frac{1}{B+1} \Tr \Phibs_m^{2} = \left(\frac{s_m'}{s_m}\right)^{2} v_N + \Ocal((\frac{B}{N})^3 + \frac{1}{N})$. Using (\ref{eq:expre-Trace-Phim}), we thus obtain that  
\begin{multline*}
\mathbb{E}  \left( \frac{1}{M} \Tr \Q^{2} \frac{\Gammabs \X^*}{B+1} \right) =  -(z t_N(z) \tilde{t}_N(z))' \, \left( \frac{1}{M} \sum_{m=1}^{M} (\frac{s_m''}{2 s_m} - \frac{(s_m')^{2}}{8 (s_m)^{2}}) \right) v_N \\ + \Ocal_z(u_N) 
\end{multline*}
because 
$$
\Ocal_z\left(\frac{B}{N}\right)^{3} +  \Ocal_z\left(\frac{1}{\sqrt{B}N}\right) + \Ocal_z\left(\frac{1}{N}\right) = \Ocal_z(u_N).
$$
We have thus established that $\mathbb{E}  \left( \frac{1}{M} \Tr \Q^{2} \frac{\Gammabs \X^*}{B+1} \right)$ is a $\Ocal_z\left(\frac{B}{N}\right)^{2}$ term, and have evaluated the 
corresponding principal term. Using similar calculations, we can obtain easily the expression 
of the $\Ocal_z\left(\frac{B}{N}\right)^{2}$ term of  $\mathbb{E}  \left( \frac{1}{M} \Tr \Q^{2} \tilde{\Deltabs} \right)$. In order to establish (\ref{eq:expre-Eftilde-EfXX}), it is necessary to evaluate  the  $\Ocal_z\left(\frac{B}{N}\right)^{2}$ term of  $\mathbb{E}  \left( \frac{1}{M} \Tr \Q \tilde{\Deltabs} \Q \tilde{\Deltabs} \Q \right)$. This step needs very long  calculations that are omitted. 
\end{proof}
\subsection{Estimation of \texorpdfstring{$r_N(\nu)$}{}}
\label{subsec:estimation-rN}
The term $\sup_{\nu}|\psi_{N}(f,\nu)|$ depends on the unknown true spectral densities 
$(s_m)_{m=1, \ldots, M}$ through the term $r_N(\nu)$ defined by (\ref{def-eq-rN}). 
In order to be able to use Theorem \ref{theo:domination-psi} in practice, it appears 
necessary to estimate $r_N(\nu)$ by an accurate enough estimate $\hat{r}_N(\nu)$, and to replace $\psi_N(f,\nu)$ by $\hat{\psi}_N(f,\nu)$ defined by 
\begin{equation}
    \label{eq:def-hatpsi}
 \hat{\psi}_N(f,\nu) = \frac{1}{M} \mathrm{Tr}\left( f(\hat{\C}(\nu))\right) -  \int_{\Rbb^{+}}f\diff\mu_{MP}^{(c_N)}  -  \hat{r}_N(\nu)  \; \phi_N(f) \; v_N \; \mathbf{1}_{\alpha > 2/3}  
\end{equation}
$\hat{r}_N(\nu)$ has to be chosen in such a way that $|\hat{\psi}_N(f,\nu)| \prec u_N$, 
a condition that will be satisfied if $|\hat{r}_N(\nu) - r_N(\nu)| \prec \frac{u_N}{v_N}$ if $\alpha > \frac{2}{3}$. A natural choice for $\hat{r}_N(\nu)$ would be to replace the true spectral densities $(s_m)_{m=1, \ldots, M}$ by their frequency smoothed estimates $(\hat{s}_m)_{m=1, \ldots, M}$ defined by (\ref{eq:def-hatsm}), and the derivatives $(s_m')_{m=1, \ldots, M}$
by $(\hat{s}_m')_{m=1, \ldots, M}$. However, $\hat{s}_m'$ is not an accurate estimate of $s_m'$ so that the corresponding estimate of $r_N(\nu)$ does not satisfy $|\hat{r}_N(\nu) - r_N(\nu)| \prec \frac{u_N}{v_N}$ if $\alpha > \frac{2}{3}$. If $L<N$ is an integer, we introduce the lag window estimator $\hat{s}_{m,L}$ of $s_m$ defined by 
\begin{equation}
    \label{eq:def-hatsLm}
  \hat{s}_{m,L}(\nu) = \int_0^{1} |\xi_{\y_m}(\mu) |^{2} w_{L}(\nu - \mu) d\mu =  \sum_{l=-L}^{L} \hat{r}_{m,l} \, e^{-2i\pi l \nu}  
\end{equation}
where $w_L(\nu)=\sum_{l=-L}^{L} e^{-2 i \pi  l \nu}$ is the Fourier transform of the rectangular window and $\hat{r}_{m,l}$ represents the biased estimate of the autocovariance coefficient $r_{m,l}$ of $y_{m}$ at lag $l$ defined by 
\begin{equation}
    \label{eq:def-hatrml}
 \hat{r}_{m,l} = \frac{1}{N} \sum_{n=1}^{N-l} y_{m,n+l}y_{m,n}^*  
\end{equation}
and $\hat{r}_{m,-l} = \hat{r}_{m,l}^*$ for $l \geq 0$. Then, the following result holds. 
\begin{proposition}
\label{prop:hatrNL}
Assume that $L = L(N) = \Ocal(N^{\frac{1}{2\gamma_0+1}})$, where $\gamma_0 \geq 3$ is defined
by (\ref{eq:condition-R}). Then, the estimate $\hat{r}_N(\nu)$ defined by 
\begin{equation}
    \label{eq:def-hatrNL}
 \hat{r}_N(\nu) = \left( \frac{1}{M} \sum_{m=1}^{M} \frac{\hat{s}'_{m,L}(\nu)}{\hat{s}_{m,L}(\nu)} \right)^{2}   
\end{equation}
satisfies 
\begin{eqnarray}
\label{eq:domination-hatrN}
|\hat{r}_N(\nu) - r_N(\nu)| & \prec  & \frac{1}{N^{(\gamma_0 -1)/(2\gamma_0+1)}} \\
\label{eq:domination-bis-hatrN}
|\hat{r}_N(\nu) - r_N(\nu)| & \prec &\frac{u_N}{v_N} \; \mbox{if $\alpha > \frac{2}{3}$}
\end{eqnarray}
as well as 
\begin{equation}
    \label{eq:domination-hatpsi}
    |\hat{\psi}_N(f,\nu)| \prec u_N.
\end{equation}
\end{proposition}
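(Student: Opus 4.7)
The strategy is to view $\hat{r}_N(\nu) - r_N(\nu)$ as a smooth function of the uniform errors $\hat{s}_{m,L}(\nu) - s_m(\nu)$ and $\hat{s}'_{m,L}(\nu) - s'_m(\nu)$, and to estimate these via a classical bias/variance decomposition sharpened by the Hanson--Wright inequality. Once (\ref{eq:domination-hatrN}) is established, (\ref{eq:domination-bis-hatrN}) is a purely arithmetic comparison of exponents, and (\ref{eq:domination-hatpsi}) will follow immediately from Theorem \ref{theo:domination-psi} after writing $\hat{\psi}_N(f,\nu) - \psi_N(f,\nu) = (r_N(\nu) - \hat r_N(\nu))\,\phi_N(f)\,v_N\,\mathbf{1}_{\alpha > 2/3}$.

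\textbf{Bias.} Using $\mathbb{E}[\hat r_{m,l}] = (1 - |l|/N)\,r_{m,l}$ for $|l|\le N$, I would write
\[
\mathbb{E}[\hat{s}_{m,L}(\nu)] - s_m(\nu) = -\sum_{|l|>L} r_{m,l}\,e^{-2i\pi l \nu} - \frac{1}{N}\sum_{|l|\le L}|l|\,r_{m,l}\,e^{-2i\pi l \nu},
\]
and similarly for the derivative, picking up an extra factor $-2i\pi l$ in each term. Assumption \ref{assumption:regularity} with $\gamma_0 \ge 3$ ensures the truncation bias of $\hat s_{m,L}$ is $\Ocal(L^{-\gamma_0})$ and that of $\hat s'_{m,L}$ is $\Ocal(L^{-\gamma_0+1})$, uniformly in $m$ and $\nu$, while the secondary tapering terms are $\Ocal(1/N)$ and $\Ocal(L^2/N)$.

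\textbf{Concentration.} Both $\hat s_{m,L}(\nu)$ and $\hat s'_{m,L}(\nu)$ are Hermitian quadratic forms $\y_m^{*}\mathbf{A}_N(\nu)\y_m$ and $\y_m^{*}\mathbf{A}'_N(\nu)\y_m$, where $\mathbf{A}_N(\nu)$ is banded Toeplitz-like with Frobenius norm $\|\mathbf{A}_N(\nu)\|_F^{2} \le C L/N$, and $\|\mathbf{A}'_N(\nu)\|_F^{2} \le C L^{3}/N$ (the extra $l^2$ weights). After whitening $\y_m = \T_m^{1/2}\z_m$ with $\z_m\sim \Ncal_{\Cbb}(0,\I_N)$ and using that $\|\T_m\|$ is bounded uniformly in $m$ thanks to (\ref{eq:condition-R}), the Hanson--Wright bound (\ref{equation:hanson_wright_stochastic_domination}) gives
\[
|\hat s_{m,L}(\nu) - \mathbb{E}\hat s_{m,L}(\nu)| \prec \sqrt{L/N},\qquad |\hat s'_{m,L}(\nu) - \mathbb{E}\hat s'_{m,L}(\nu)| \prec \sqrt{L^3/N}.
\]
With $L = \Ocal(N^{1/(2\gamma_0+1)})$, bias and square-root variance balance, yielding $|\hat s_{m,L}(\nu) - s_m(\nu)| \prec N^{-\gamma_0/(2\gamma_0+1)}$ and, one power worse for the derivative, $|\hat s'_{m,L}(\nu) - s'_m(\nu)| \prec N^{-(\gamma_0-1)/(2\gamma_0+1)}$. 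A verbatim adaptation of Lemma \ref{lemma:localization_s_m} shows $\inf_{m,\nu}|\hat s_{m,L}(\nu)|$ is bounded away from $0$ with exponentially high probability, so quotients are well controlled.

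\textbf{Conclusion.} Writing $\hat r_N(\nu) - r_N(\nu) = (X_N(\nu) + Y_N(\nu))(X_N(\nu) - Y_N(\nu))$ with $X_N = \frac{1}{M}\sum_m \hat s'_{m,L}/\hat s_{m,L}$, $Y_N = \frac{1}{M}\sum_m s'_m/s_m$, and combining the previous estimates on the exponentially high-probability localization event for $\hat s_{m,L}$, Lemma \ref{lemma:algebra_domination} gives (\ref{eq:domination-hatrN}). To obtain (\ref{eq:domination-bis-hatrN}), it remains to verify $N^{-(\gamma_0-1)/(2\gamma_0+1)} \le u_N/v_N$ whenever $\alpha > 2/3$. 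Using (\ref{eq:carac-uN}), the worst case is $\alpha = 4/5$ where $u_N/v_N \asymp N^{-1/5}$, so the condition becomes $(\gamma_0-1)/(2\gamma_0+1) \ge 1/5$, which is satisfied for $\gamma_0 \ge 2$ and a fortiori for $\gamma_0 \ge 3$. Finally, (\ref{eq:domination-hatpsi}) follows from the triangle inequality together with Theorem \ref{theo:domination-psi} and (\ref{eq:domination-bis-hatrN}), since $\phi_N(f)=\Ocal(1)$. The main technical obstacle is the variance control for $\hat s'_{m,L}$ uniformly in $(m,\nu)$, which is exactly where the choice $L \asymp N^{1/(2\gamma_0+1)}$ and the assumption $\gamma_0 \ge 3$ are used.
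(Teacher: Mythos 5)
Your proof is correct and follows essentially the same strategy as the paper: represent $\hat s_{m,L}(\nu)$ and $\hat s'_{m,L}(\nu)$ as Gaussian quadratic forms, bound the Frobenius norms of the underlying Toeplitz-like matrices by $\Ocal(\sqrt{L/N})$ and $\Ocal(\sqrt{L^3/N})$, apply Hanson--Wright, control the bias via Assumption \ref{assumption:regularity}, and balance bias against fluctuation by taking $L\asymp N^{1/(2\gamma_0+1)}$. One modest added value of your write-up is that you carry out the exponent comparison needed for (\ref{eq:domination-bis-hatrN}) explicitly — identifying $\alpha=4/5$ as the worst case, where $u_N/v_N\asymp N^{-1/5}$, and checking $(\gamma_0-1)/(2\gamma_0+1)\ge 1/5$ for $\gamma_0\ge 2$ — whereas the paper simply asserts this check. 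A minor inaccuracy: your bound $\Ocal(L^2/N)$ for the tapering term in $\hat s'_{m,L}$ is unnecessarily loose; under $\gamma_0\ge 2$, $\sum_u u^2|r_{m,u}|<\infty$, so it is $\Ocal(1/N)$; but this does not affect the conclusion since the truncation bias $\Ocal(L^{-(\gamma_0-1)})$ dominates in either case.
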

\begin{proof}
We denote by  $\d_N(\nu)$ the $N$--dimensional vector defined by $\d_N(\nu) = (1, e^{-2 i \pi \nu}, \ldots, e^{-2 i \pi (N-1) \nu})^{T}$. We recall that $\y_m$ is the 
$N$--dimensional vector $\y_m = (y_{m,1}, \ldots, y_{m,N})^{T}$ which can be written as $\y_m = \R_m^{1/2} \z_m$ where $\R_m = \mathbb{E}(\y_m \y_m^*)$ and $\z_m$ is $\mathcal{N}_{\Cbb}(0,\I_N)$ distributed. It is clear that $\hat{s}_{m,L}(\nu)$ can be written as 
$$
\hat{s}_{m,L}(\nu) = \z_m^* \R_m^{1/2} \Omegabs(\nu) \R_m^{1/2} \z_m
$$
with 
$$
\Omegabs(\nu) = \frac{1}{N} \int \d_N(\mu) \d_N(\mu)^* w_L(\nu - \mu) d\mu
$$
while $\hat{s}'_{m,L}(\nu)$ is equal to 
$$
\hat{s}'_{m,L}(\nu)  =  \z_m^* \R_m^{1/2} \Omegabs'(\nu)  \R_m^{1/2} \z_m 
$$
with 
$$
            \Omegabs'(\nu) = \frac{-2 i \pi}{N} \int \d_N(\mu) \d_N(\mu)^* \left( \sum_{l=-L}^{L} l \, e^{-2 i \pi l (\nu -\mu)} \right) \, d\mu
$$
It is easy to check that $\|  \Omegabs'(\nu) \|_F = \Ocal(\frac{L^{3/2}}{N^{1/2}})$ 
and therefore that $ \| \R_m^{1/2} \Omegabs'(\nu)  \R_m^{1/2} \|_F = \Ocal(\frac{L^{3/2}}{N^{1/2}})$. The Hanson-Wright inequality leads immediately 
to $|\hat{s}'_{m,L}(\nu) - \mathbb{E}(\hat{s}'_{m,L}(\nu))| \prec \frac{L^{3/2}}{N^{1/2}}$. Moreover, it is easy to check that (\ref{eq:condition-R}) implies that 
$$
|\mathbb{E}(\hat{s}'_{m,L}(\nu)) - s_m'(\nu)| \leq \frac{C}{L^{\gamma_0-1}} 
$$
where $C$ is a nice constant. For $L=L(N)$ in such a way that 
$\Ocal(\frac{L^{3/2}}{N^{1/2}}) = \frac{1}{L^{\gamma_0-1}}$, i.e. 
$L = \Ocal(N^{\frac{1}{2\gamma_0+1}})$, we obtain that 
$$
|\hat{s}'_{m,L}(\nu) - s_m'(\nu)| \prec \frac{1}{N^{(\gamma_0 -1)/(2\gamma_0+1)}}.
$$
Moreover, a similar analysis leads to 
$$
|\hat{s}_{m,L}(\nu) - s_m(\nu)| \prec \frac{1}{N^{\gamma_0/(2\gamma_0+1)}}
$$
from which we deduce that the estimate $\hat{r}_N(\nu)$ defined by (\ref{eq:def-hatrNL}) satisfies 
(\ref{eq:domination-hatrN}). 
It is then easily checked that if $\gamma_0 \geq 3$, then (\ref{eq:domination-bis-hatrN}) holds, which implies 
that $|\hat{\psi}_N(f,\nu)| \prec u_N$ holds. 
\end{proof}
\section{Use of Lipschitz properties of the functions \texorpdfstring{$\nu \rightarrow \psi_N(f,\nu)$ and $\nu \rightarrow \hat{\psi}_N(f,\nu)$}{}}
\label{sec:lipschitz}
In this section, we establish Lipschitz properties of $\nu \rightarrow \psi_N(f,\nu)$ and $\nu \rightarrow \hat{\psi}_N(f,\nu)$, and deduce that the stochastic domination properties (\ref{eq:stochastic-domination-psi}) and (\ref{eq:domination-hatpsi}) continues to remain valid 
for $\sup_{\nu \in [0,1]} |\psi_N(f,\nu)|$ and $\sup_{\nu \in [0,1]} |\hat{\psi}_N(f,\nu)|$ where $\hat{\psi}_N(f,\nu)$ is defined by 
(\ref{eq:def-hatpsi}, \ref{eq:def-hatrNL}). 
\subsection{Lipschitz properties}
The goal of this paragraph is to prove the following Proposition. 
\begin{proposition}
\label{prop:psi-hatpsi-lipschitz}
Functions $\nu \rightarrow \psi_N(f,\nu)$ and $\nu \rightarrow \hat{\psi}_N(f,\nu)$ satisfy 
\begin{eqnarray}
\label{eq:lipschitz-constant-psi}
 \sup_{\delta \neq 0} \sup_{\nu\in[0, 1]} \frac{\|\psi_N(f,\nu)- \psi_N(f,\nu+\delta)\|}{|\delta|} &\prec MN^{3/2} \\
 \label{eq:lipschitz-constant-hatpsi}
 \sup_{\delta \neq 0} \sup_{\nu\in[0, 1]} \frac{\|\hat{\psi}_N(f,\nu)- \hat{\psi}_N(f,\nu+\delta)\|}{|\delta|} & \prec MN^{3/2}
\end{eqnarray}
\end{proposition}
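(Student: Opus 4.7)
The plan is to bound the pointwise derivative $\psi_N'(f,\nu) := \frac{d}{d\nu}\psi_N(f,\nu)$ uniformly in $\nu$ on an exponentially high probability event, and then convert this to the Lipschitz bound by the mean value theorem. First I reduce: the constant term $\int f\,d\mu_{MP}^{(c_N)}$ contributes nothing, and the correction $r_N(\nu)\phi_N(f)v_N\mathbf{1}_{\alpha>2/3}$ has derivative of order $v_N=O((B/N)^2)$ (using that under Assumption \ref{assumption:regularity}, $\sup_{m,\nu}|s_m'/s_m|$ and its derivative are bounded by nice constants), which is negligible against $MN^{3/2}$. The problem reduces to bounding the Lipschitz constant of $g(\nu) = \frac{1}{M}\Tr f(\hat{\C}(\nu))$.

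For $g$, the cutoff argument at the beginning of Section \ref{sec:lss} allows me to replace $f$ with a $\mathcal{C}^{\infty}_c$ function $\bar f$ supported in a neighbourhood of $\Supp\mu_{MP}^{(c)}$ on the event $\Lambda^{\hat{\C}}_\kappa(\nu)$, which holds with exponentially high probability uniformly in $\nu$ by Corollary \ref{coro:localisation-eigenvalues-hatC-tildeC}. Then $\|\bar f'\|_\infty$ is a nice constant and $|g'(\nu)| = |\frac{1}{M}\Tr(\bar f'(\hat{\C}(\nu))\hat{\C}'(\nu))| \leq \|\bar f'\|_\infty \|\hat{\C}'(\nu)\|$. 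Differentiating $\hat{\C}=\diag(\hat{\S})^{-1/2}\hat{\S}\diag(\hat{\S})^{-1/2}$ via the product rule and using $\|\diag(\hat{\S})^{-1/2}\|\prec 1$ (Lemma \ref{lemma:concentration_sm_inf_sup}) and $\|\hat{\S}\|\prec 1$ (equation \eqref{eq:domination-norm-hatS}), the problem reduces to bounding $\sup_m|\hat{s}_m'(\nu)|$ and $\|\hat{\S}'(\nu)\|$.

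These Fourier-derivative estimates are elementary. For $\xibs_\y'(\nu) = -\frac{2i\pi}{\sqrt{N}}\sum_{n=1}^N(n-1)\y_n e^{-2i\pi(n-1)\nu}$, Cauchy-Schwarz gives $|\xi'_{y_m}(\nu)|^2 \leq \frac{(2\pi)^2}{N}\bigl(\sum_n(n-1)^2\bigr)\|\y_m\|^2 = O(N^2)\|\y_m\|^2$; since $\|\y_m\|^2 \prec N$ (by Hanson-Wright applied to $\y_m = \R_m^{1/2}\z_m$), we obtain $|\xi'_{y_m}(\nu)|\prec N^{3/2}$ and $\|\xibs_\y'(\nu)\|^2\prec MN^3$. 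Combined with $\|\xibs_\y(\nu)\|^2 \leq (B+1)\|\hat{\S}(\nu)\|\prec B$, Cauchy-Schwarz over $b$ in $\hat{\S}'(\nu) = \frac{1}{B+1}\sum_b(\xibs_\y'\xibs_\y^*+\xibs_\y(\xibs_\y')^*)$ yields $\|\hat{\S}'(\nu)\|\prec\sqrt{M}N^{3/2}$ and $\sup_m|\hat{s}_m'(\nu)|\prec N^{3/2}$, whence $\|\hat{\C}'(\nu)\|\prec \sqrt{M}N^{3/2}\prec MN^{3/2}$. The case of $\hat{\psi}_N(f,\nu)$ is analogous: the derivative of $\hat{r}_N(\nu)$ involves only trigonometric polynomials $\hat{s}_{m,L}$ of degree $L = O(N^{1/(2\gamma_0+1)})\ll N$, hence is polynomially bounded in $L$ and negligible.

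The main obstacle is the uniformity in $\nu$ over the continuous interval $[0,1]$. I handle it by a standard discretisation: evaluate on a polynomial-density grid $\mathcal{G}_N\subset[0,1]$, apply the pointwise stochastic domination with a union bound (of polynomial cost), and extend to all $\nu$ using an analogous bound on $\psi_N''(f,\cdot)$ obtained by differentiating once more in $\nu$ and applying the same Fourier estimates to the next order. The polynomial factors so introduced are absorbed by the $N^\epsilon$ slack in the definition of stochastic domination, and the proposition follows by the fundamental theorem of calculus.
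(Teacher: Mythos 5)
Your derivative-based strategy is genuinely different from the paper's and, at the level of the pointwise estimates, is correct and in places slightly sharper. The paper works with finite differences, bounds each entry $|\hat{s}_{ij}(\nu)-\hat{s}_{ij}(\nu+\delta)|/|\delta|\prec N^{3/2}$, and then passes to the operator norm by the crude inequality $\|\A\|\le M\sup_{i,j}|\A_{ij}|$, which gives the $MN^{3/2}$ Lipschitz constant for $\hat{\S}$. Your matrix Cauchy--Schwarz argument $\|\hat{\S}'(\nu)\|\le \frac{2}{B+1}\|U\|\|V\|$ with $U,V$ the matrices of columns $\xibs_\y'(\nu+b/N),\xibs_\y(\nu+b/N)$ yields $\|\hat{\S}'(\nu)\|\prec\sqrt{M}N^{3/2}$, which is a factor $\sqrt{M}$ better, and your reductions (the correction term is negligible, the cutoff to a bounded $\bar f'$) are all sound.

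There is, however, a real gap in the handling of uniformity in $\nu$, which is the actual substance of the proposition since the supremum runs over the full interval $[0,1]$. You invoke Corollary \ref{coro:localisation-eigenvalues-hatC-tildeC} as if it gave exponential control of the event $\bigcap_{\nu\in[0,1]}\Lambda^{\hat{\C}}_\kappa(\nu)$, but that corollary is only a \emph{family} statement: the exponent $\gamma$ is uniform in $\nu$, yet it says nothing about the intersection over uncountably many $\nu$. The paper needs a separate argument, Corollary \ref{corollary:spectre_S_D_uniforme} and Corollary \ref{corollary:spectre_C_uniforme}, to obtain the intersection event, and that argument in turn \emph{uses} the unconditional Lipschitz (or derivative) bound on $\hat{\S}$ together with a grid discretization. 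Your proposal has the right ingredients (the unconditional bound $\|\hat{\S}'(\nu)\|\prec\sqrt{M}N^{3/2}$, the discretization sketched at the end) but presents them in the wrong order, so that the cutoff step -- which presupposes uniform eigenvalue localization -- appears before the tool that produces that localization. The correct ordering, mirroring the paper's chain Proposition \ref{proposition:Shat_lipschitz} $\Rightarrow$ Corollary \ref{corollary:spectre_S_D_uniforme} $\Rightarrow$ Corollary \ref{corollary:Chat_lipschitz} $\Rightarrow$ Corollary \ref{corollary:spectre_C_uniforme} $\Rightarrow$ Proposition \ref{prop:lipschitz-constant-Trf(hatC)}, is: first establish the derivative bound on $\hat{\S}$ (unconditional, no localization needed), use it with a polynomial grid to get $\bigcap_\nu\Lambda^{\hat{\S}}_\epsilon(\nu)$ and hence $\bigcap_\nu\Lambda^{\hat{\D}}_\epsilon(\nu)$ with exponentially high probability, deduce the derivative bound on $\hat{\C}$ and then uniform localization $\bigcap_\nu\Lambda^{\hat{\C}}_\epsilon(\nu)$, and only then apply the cutoff to bound $\frac{1}{M}\Tr f(\hat{\C}(\nu))$. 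Your closing suggestion of controlling $\psi_N''$ and extending between grid points is workable (those Fourier bounds one order up are elementary and carry polynomial factors absorbed by $N^\epsilon$) but is not needed if the argument is ordered as above; the first-derivative bound suffices, exactly as in the paper.
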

In the following, we just establish (\ref{eq:lipschitz-constant-hatpsi}). For this, we evaluate separately 
the Lipschitz constants of $\nu \rightarrow \frac{1}{M} \Tr f(\hat{\C}(\nu))$ and of 
$\nu \rightarrow \hat{r}_N(\nu)$. 
\subsubsection{Lipschitz constant of \texorpdfstring{$\nu \rightarrow \frac{1}{M} \Tr f(\hat{\C}(\nu))$}{}}
To show that  $\nu \rightarrow \frac{1}{M} \Tr f(\hat{\C}(\nu))$ is $MN^{3/2}$-Lipschitz with overwhelming probability, we need to establish a number of intermediate properties. 
\begin{proposition}
\label{proposition:Shat_lipschitz}
It holds that 
\begin{equation}
    \label{equation:Shat_lipschitz}
    \sup_{\delta \neq 0} \sup_{\nu\in[0, 1]} \frac{\|\hat{\S}(\nu)-\hat{\S}(\nu+\delta)\|}{|\delta|}\prec MN^{3/2}.
\end{equation}
\end{proposition}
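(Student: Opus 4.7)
The plan is to bound the Lipschitz constant by the uniform quantity $\sup_{\nu \in [0,1]} \|\hat{\S}'(\nu)\|$, which is well defined because $\hat{\S}(\nu)$ is a trigonometric polynomial in $\nu$ and hence $\mathcal{C}^{\infty}$. Introducing the $M \times N$ data matrix $\Y = (\y_1, \ldots, \y_N)$ and the $N \times (B+1)$ Fourier matrix $\F(\nu)$ whose $b$-th column is $(e^{-2i\pi(n-1)(\nu+b/N)})_{n=1}^{N}$, one writes $\Sigmabs(\nu) = N^{-1/2}\Y\F(\nu)$ and $\hat{\S}(\nu) = (B+1)^{-1}\Sigmabs(\nu)\Sigmabs(\nu)^{*}$. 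Differentiating then yields
\[
\|\hat{\S}'(\nu)\| \le \frac{2}{B+1}\,\|\Sigmabs'(\nu)\|\,\|\Sigmabs(\nu)\|,
\]
and the strategy is to turn this into the product of a deterministic factor (uniform in $\nu$) and a single random quantity tractable by standard Gaussian concentration.

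The orthogonality $\sum_{n=0}^{N-1} e^{2i\pi n(b_2-b_1)/N} = N\,\delta_{b_1,b_2}$ of the discrete Fourier basis gives $\F(\nu)^{*}\F(\nu) = N\I_{B+1}$, hence $\|\F(\nu)\| = \sqrt{N}$ independently of $\nu$. This produces the clean uniform inequality $\|\Sigmabs(\nu)\| \le \|\Y\|$. A direct calculation of the Frobenius norm of $\F'(\nu)$ gives $\|\F'(\nu)\|_{F}^{2} = (B+1)\sum_{n=0}^{N-1}(2\pi n)^{2} = O(BN^{3})$, from which $\|\Sigmabs'(\nu)\| \le C\sqrt{B}\,N\,\|\Y\|$. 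Substituting into the previous inequality,
\[
\sup_{\nu \in [0,1]} \|\hat{\S}'(\nu)\| \;\le\; \frac{C\,N}{\sqrt{B}}\,\|\Y\|^{2}
\]
for a nice constant $C$, and crucially the right-hand side is a deterministic function of the single random variable $\|\Y\|$.

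It then remains to show that $\|\Y\|^{2} \prec N$. Although the rows of $\Y$ carry distinct covariance matrices $\R_m$, they are independent complex Gaussian vectors, and Assumption \ref{assumption:regularity} ensures that $\|\R_m\| \le \bar{s}$ uniformly in $m$. A standard operator-norm estimate for Gaussian matrices with bounded-covariance independent rows gives $\mathbb{E}\|\Y\| = O(\sqrt N + \sqrt M)$, which combined with Gaussian Lipschitz concentration (Paragraph \ref{subsection:lipschitz_concentration}) yields $\|\Y\| \prec \sqrt{N}$ under the regime $M = O(N^{\alpha})$, $\alpha < 1$, of Assumption \ref{assumption:rate_NBM}. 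This step, requiring operator-norm concentration outside the i.i.d.-entries setting, is the main obstacle of the proof. Plugging the estimate back produces $\sup_\nu\|\hat{\S}'(\nu)\| \prec N^{2}/\sqrt{B}$; since $M\sqrt{B} \asymp N^{3\alpha/2}$ and $\alpha > 1/2$ gives $N^{1/2} = o(M\sqrt{B})$, one has $N^{2}/\sqrt{B} = o(MN^{3/2})$, which establishes \eqref{equation:Shat_lipschitz}.
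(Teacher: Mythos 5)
Your proof is correct but follows a genuinely different route from the paper's. The paper argues entrywise: it first shows $\sup_{\nu}|\xi_{y_m}(\nu)|\prec 1$ and that $\nu\mapsto\xi_{y_m}(\nu)$ is $\prec N^{3/2}$--Lipschitz, deduces that each entry $\hat{s}_{ij}(\nu)$ is $\prec N^{3/2}$--Lipschitz, and then pays a factor $M$ via the crude bound $\|\A\|\le M\max_{i,j}|a_{ij}|$. Your argument is matrix--level from the start: the factorization $\Sigmabs(\nu)=N^{-1/2}\Y\F(\nu)$ together with the orthogonality $\F(\nu)^*\F(\nu)=N\I_{B+1}$ makes $\|\F(\nu)\|$ and $\|\F'(\nu)\|$ deterministic and independent of $\nu$, so the uniformity in $\nu$ is automatic and all randomness is packaged into the single factor $\|\Y\|$. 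This buys a sharper conclusion: your $\sup_\nu\|\hat{\S}'(\nu)\|\prec N^2/\sqrt{B}$ is smaller than $MN^{3/2}$ throughout the regime, and writing $\F'(\nu)=-2i\pi\,\diag(0,\ldots,N-1)\,\F(\nu)$ gives the operator-norm bound $\|\F'(\nu)\|\le 2\pi (N-1)\sqrt{N}$, improving it further to $\prec N^2/B$.

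The one step that needs more justification than you give it is $\|\Y\|\prec\sqrt{N}$. Calling $\mathbb{E}\|\Y\|=O(\sqrt{N}+\sqrt{M})$ ``standard'' is a slight overreach: the textbook estimates for matrices with independent sub-Gaussian rows assume isotropic rows, whereas here the rows carry distinct anisotropic Toeplitz covariances $\R_m$. The estimate is true, but establishing it requires, e.g., a Sudakov--Fernique comparison of the process $(\u,\v)\mapsto\Re(\v^*\Y^*\u)$ against $\sqrt{\bar{s}}\,\Re(\v^*\G\u)$ with $\G$ an i.i.d.\ $\Ncal_\Cbb(0,1)$ matrix, the increment comparison boiling down to $\mathbf{c}^*\R_m\mathbf{c}\le\bar{s}\|\mathbf{c}\|^2$ for each row. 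Alternatively you can stay entirely within the paper's toolbox: since $\Y\Y^*=N\hat{\R}_N$ and, when $(B+1)\mid N$, $\hat{\R}_N=\frac{B+1}{N}\sum_{j=0}^{N/(B+1)-1}\hat{\S}\bigl(j\tfrac{B+1}{N}\bigr)$, Corollary~\ref{corollary:spectre-tildeC-spectre-hatS} plus a union bound over the $O(N/B)$ grid points gives $\|\hat{\R}_N\|\prec 1$, hence $\|\Y\|^2\prec N$, with no appeal to external random-matrix results and no circularity (that corollary does not use Proposition~\ref{proposition:Shat_lipschitz}).
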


\begin{proof}
Let $\delta\in\mathbb{R}$ and $\nu\in[0, 1]$. As the random variables $(y_{m,n})_{m=1, \ldots, M, n=1, \ldots, N}$ are complex Gaussian and that $\sup_{m \geq 1} \mathbb{E}|y_{m,n}|^{2} < +\infty$, 
the family $(y_{m,n})_{m=1, \ldots, M, n=1, \ldots, N}$ satisfies $|y_{m,n}| \prec 1$. Therefore, it holds that 
\begin{equation}
 \label{eq:domination_stochastique_y_m_n_normalise}
    \frac{1}{\sqrt{N}}\sum_{n=1}^N|y_{m,n}| \prec \sqrt{N} .  
\end{equation}
For the same reasons, 
the family $\xi_{y_m}(\nu), m=1, \ldots, M, \, \nu \in [0,1]$ satisfies. 
\begin{equation}
\label{eq:domination_stochastique_simple_xi}
    |\xi_{y_m}(\nu)| \prec 1 .
\end{equation}
We also claim that 
\begin{equation}
    \label{eq:domination_stochastique_xi}
    \sup_{\nu \in [0,1]} |\xi_{y_m}(\nu)| \prec 1.
\end{equation}
In order to verify (\ref{eq:domination_stochastique_xi}), we first observe that 
for any $n\ge1$, we have the following control:
$$|e^{-2i\pi n\nu} - e^{-2i\pi n(\nu+\delta)} | \le 2|\sin\pi n\delta|\le  2\pi n|\delta|.$$
\eqref{eq:domination_stochastique_y_m_n_normalise} implies that 
\begin{align}
\notag
    &\sup_{\delta \neq 0}\sup_{\nu\in[0, 1]}\left|\frac{\xi_{y_m}(\nu)-\xi_{y_m}(\nu+\delta)}{\delta}\right| \\
\notag 
    &\hskip3cm= \sup_{\delta \neq 0}\sup_{\nu\in[0, 1]}\frac{1}{\sqrt{N}}\left|\sum_{n=1}^N y_{m,n}\frac{e^{-2i\pi n\nu} - e^{-2i\pi n(\nu+\delta)}}{\delta}\right| \\
\notag
    &\hskip3cm\le 2\pi N\frac{1}{\sqrt{N}}\sum_{n=1}^N|y_{m,n}| \\
\label{eq:domination_stochastique_derivee_xi}
    &\hskip3cm\prec N^{3/2}.
\end{align}
We consider a frequency $\nu_* \in [0,1]$ (depending on $m$) where $|\xi_{y_m}(\nu)|$ is maximum, and have thus to establish 
that for each $\epsilon > 0$, then there exists $\gamma > 0$ depending only on $\epsilon$ such that 
$$
\Pbb(|\xi_{y_m}(\nu_*)| > N^{\epsilon}) \leq \exp{-N^{\gamma}}
$$
for each $N$ larger than a certain integer $N_0(\epsilon)$. 
We introduce the discrete set 
\begin{equation}
    \label{eq:def-V_N_p}
    \Vcal_N^p=\left\{\frac{k}{N^p}: k\in\{0,\ldots,N^p-1\}\right\}
\end{equation}
whose cardinality is $|\Vcal_N^p|=N^p$. We notice that   (\ref{eq:domination_stochastique_simple_xi}) in conjunction with the union bound implies that 
$\sup_{\nu_p \in \Vcal_N^p} |\xi_{y_m}(\nu_p)| \prec 1$. We denote by $\nu_{*,p}$ the element of $\Vcal_N^p$ for which
$|\nu_* - \nu_p|$ is minimum, and notice that $|\nu_* - \nu_{*,p}| \leq \frac{1}{N^{p}}$. Then, we have the following inequality
\begin{align}
\nonumber
    &\Pbb(|\xi_{y_m}(\nu_*)| > N^{\epsilon})  \\
\nonumber
    &\hskip1cm\leq \Pbb\left(|\xi_{y_m}(\nu_*) - \xi_{y_m}(\nu_{*,p})| > \frac{N^{\epsilon}}{2}\right)+
\Pbb\left(|\xi_{y_m}(\nu_{*,p})| > \frac{N^{\epsilon}}{2}\right) \\
\label{eq:second-inequality}
   &\hskip1cm\leq \Pbb\left(|\xi_{y_m}(\nu_*) - \xi_{y_m}(\nu_{*,p})| > \frac{N^{\epsilon}}{2}\right)+\Pbb\left(\sup_{\nu_p \in \Vcal_N^p} |\xi_{y_m}(\nu_{p})| > \frac{N^{\epsilon}}{2}\right).
\end{align}
As $\sup_{\nu_p \in \Vcal_N^p} |\xi_{y_m}(\nu_p)| \prec 1$, the second term of the right hand side of (\ref{eq:second-inequality}) converges exponentially towards $0$. In order to evaluate the first term of the r.h.s. of (\ref{eq:second-inequality}), we use (\ref{eq:domination_stochastique_derivee_xi}), and obtain that 
\begin{align*}
\Pbb\left(|\xi_{y_m}(\nu_*) - \xi_{y_m}(\nu_{*,p})| > \frac{N^{\epsilon}}{2} \right) & \leq
\Pbb \left( N \, \frac{1}{\sqrt{N}}\sum_{n=1}^N|y_{m,n}| \geq \frac{\pi}{2|\nu_* - \nu_{*,p}|} N^{\epsilon} \right)\\ 
    & \leq \Pbb \left( \frac{1}{\sqrt{N}}\sum_{n=1}^N|y_{m,n}| \geq \frac{\pi}{2} N^{p+\epsilon - 1} \right).
\end{align*}
We choose $p$ so that $p-1 > 3/2$, and use  \eqref{eq:domination_stochastique_y_m_n_normalise}  to conclude that $\Pbb\left(|\xi_{y_m}(\nu_*) - \xi_{y_m}(\nu_{*,p})| > \frac{N^{\epsilon}}{2} \right)$ converges towards $0$ exponentially. This establishes 
(\ref{eq:domination_stochastique_xi}). \\

In order to complete the proof of Proposition \ref{proposition:Shat_lipschitz}, we consider an individual entry $\hat{s}_{ij}(\nu)$ of $\hat{\S}(\nu)$ for $i,j\le M$, and write
\begin{align*}
    &\left|\hat{s}_{ij}(\nu)-\hat{s}_{ij}(\nu+\delta)\right| \\
    &\hskip1cm= \frac{1}{B+1}\left|\sum_{b=-B/2}^{B/2}\xi_i\left(\nu+\frac{b}{N}\right)\xi_j\left(\nu+\frac{b}{N}\right)^*\right.\\
    &\hskip2cm-\left.\xi_i\left(\nu+\delta+\frac{b}{N}\right)\xi_j\left(\nu+\delta+\frac{b}{N}\right)^*\right| \\
    &\hskip1cm\le \frac{1}{B+1}\sum_{b=-B/2}^{B/2} \left|\xi_i\left(\nu+\frac{b}{N}\right)\left(\xi_j\left(\nu+\frac{b}{N}\right)^*-\xi_j\left(\nu+\delta+\frac{b}{N}\right)^*\right) \right| \\
    &\hskip2cm+\left|\left(\xi_i\left(\nu+\frac{b}{N}\right)-\xi_i\left(\nu+\delta+\frac{b}{N}\right)\right)\xi_j\left(\nu+\delta+\frac{b}{N}\right)^*\right|.
\end{align*}
Using the estimations \eqref{eq:domination_stochastique_xi} and \eqref{eq:domination_stochastique_derivee_xi}, we get:
\begin{equation}
    \label{equation:domination_sij}
    \sup_{i,j}\sup_{\delta \neq 0}\sup_{\nu\in[0, 1]}\left|\frac{\hat{s}_{ij}(\nu)-\hat{s}_{ij}(\nu+\delta)}{\delta}\right| \prec N^{3/2}
\end{equation}
and deduce (\ref{equation:Shat_lipschitz}) from the rough bound
\begin{align*}
    \sup_{\nu\in[0, 1]}\|\hat{\S}(\nu)-\hat{\S}(\nu+\delta)\| &\le \sup_{\nu\in[0, 1]}\sup_i\sum_j |\hat{s}_{ij}(\nu)-\hat{s}_{ij}(\nu+\delta)| \\
    &\le M\sup_{\nu\in[0, 1]}\sup_{i,j}|\hat{s}_{ij}(\nu)-\hat{s}_{ij}(\nu+\delta)|.
\end{align*} 
\end{proof}

Combining the eigenvalue localisation result from Corollary \ref{corollary:spectre-tildeC-spectre-hatS} and the Lipschitz behaviour of $\hat{\S}$ from Proposition \ref{proposition:Shat_lipschitz}, the following statement holds.
\begin{corollary}{($\nu$ uniform version of Corollary \ref{corollary:spectre-tildeC-spectre-hatS}.)}
\label{corollary:spectre_S_D_uniforme}
Denote for $\epsilon>0$:
\begin{align*}
    &\Lambda_\epsilon^{\hat{\S}}= \left\{\forall\nu\in[0, 1]:\sigma(\hat{\S}(\nu))\subset\Supp\mu_{MP}^{(c)} \times [\barbelow{s}, \bar{s}]+\epsilon\right\} \\
    &\Lambda_\epsilon^{\hat{\D}}=\left\{\forall\nu\in[0,1]:\sigma(\hat{\D}(\nu))\subset[\barbelow{s},\bar{s}]+\epsilon\right\}.
\end{align*}
 
Then, $\Lambda_\epsilon^{\hat{\S}}$ and  $\Lambda_\epsilon^{\hat{\D}}$ hold with exponentially high probability. 
\end{corollary}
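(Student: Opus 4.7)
The plan is to combine the pointwise localization results (Corollary \ref{corollary:spectre-tildeC-spectre-hatS} for $\hat{\S}$ and Lemma \ref{lemma:localization_s_m} for $\hat{\D}$) with the Lipschitz bound of Proposition \ref{proposition:Shat_lipschitz} via a standard net-plus-continuity argument. Concretely, for $\Lambda_\epsilon^{\hat{\S}}$, I will introduce the grid $\Vcal_N^p$ from \eqref{eq:def-V_N_p} for a sufficiently large integer $p$, chosen in a moment, and use the union bound together with the pointwise statement $\Lambda_{\epsilon/2}^{\hat{\S}}(\nu_p)$ over the $N^p$ grid points. Since for each $\nu_p$ the failure probability is bounded by $\exp-N^{\gamma}$ for some $\gamma>0$, the event
\[
\Xi_N^{(p)} = \bigcap_{\nu_p \in \Vcal_N^p} \Lambda_{\epsilon/2}^{\hat{\S}}(\nu_p)
\]
still has probability at least $1 - N^p \exp-N^\gamma$, which tends to $1$ exponentially fast.

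Next, for an arbitrary $\nu \in [0,1]$, I denote by $\nu_p$ the closest point of $\Vcal_N^p$, so that $|\nu - \nu_p| \leq N^{-p}$. Proposition \ref{proposition:Shat_lipschitz} gives that there is an event $\Upsilon_N$ of exponentially high probability on which $\| \hat{\S}(\nu) - \hat{\S}(\nu_p) \| \leq N^{\epsilon'} M N^{3/2 - p}$ for any fixed small $\epsilon' > 0$. By Assumption \ref{assumption:rate_NBM}, $M = \Ocal(N^\alpha)$ with $\alpha < 1$, so choosing $p$ large enough (e.g. $p \geq 4$) makes $N^{\epsilon'} M N^{3/2 - p} \leq \epsilon/2$ for $N$ large. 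By Weyl's inequality, on $\Xi_N^{(p)} \cap \Upsilon_N$, every eigenvalue of $\hat{\S}(\nu)$ lies within $\epsilon/2$ of an eigenvalue of $\hat{\S}(\nu_p)$, and therefore lies in $\Supp\mu_{MP}^{(c)} \times [\barbelow{s},\bar{s}] + \epsilon$. As this holds for every $\nu \in [0,1]$ on an event of exponentially high probability, $\Lambda_\epsilon^{\hat{\S}}$ holds with exponentially high probability.

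The argument for $\Lambda_\epsilon^{\hat{\D}}$ is completely analogous. The diagonal entries of $\hat{\S}(\nu)$ are Lipschitz in $\nu$ as a byproduct of \eqref{equation:domination_sij}, and on the event $\Lambda_{\epsilon_0}^{\hat{\D}}$ for any small $\epsilon_0 > 0$ applied pointwise, each $\hat{s}_m(\nu)$ stays bounded away from zero. On such an event the square root is Lipschitz with a uniform constant, so the family $\sqrt{\hat{s}_m(\cdot)}$ inherits a Lipschitz constant dominated by $N^{3/2}$. One then repeats the grid argument: for each fixed grid point, Lemma \ref{lemma:localization_s_m} gives exponential concentration, the union bound over $N^p$ points preserves this, and Lipschitz continuity extends the localization to the whole interval $[0,1]$.

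The main (and only real) obstacle is ensuring that the grid is fine enough to absorb the Lipschitz constant $\prec MN^{3/2}$ while keeping the number of grid points polynomial in $N$; this is immediate once one picks $p$ larger than $5/2 + \alpha$ plus a small safety margin, which causes no issue since the exponential decay $\exp-N^\gamma$ dominates any polynomial factor $N^p$. No other subtlety arises, as both the pointwise localization and the Lipschitz estimate are already in hand from the preceding results.
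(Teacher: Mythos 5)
Your proposal is correct and follows essentially the same route as the paper's proof: union bound over the polynomial grid $\Vcal_N^p$ using the pointwise localization, then the Lipschitz bound of Proposition \ref{proposition:Shat_lipschitz} to pass from grid points to all of $[0,1]$, with $p$ chosen large enough to dominate the Lipschitz constant $\prec M N^{3/2}$. The only cosmetic difference is that you invoke Weyl's inequality explicitly to transfer eigenvalue localization from $\hat{\S}(\nu_p)$ to $\hat{\S}(\nu)$, whereas the paper phrases the same step as a set-theoretic inclusion of complements; these are equivalent.
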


\begin{proof}
As the proof for $\Lambda_\epsilon^{\hat{\D}}$ is strictly similar to the one of $\Lambda_\epsilon^{\hat{\S}}$, we will only write the arguments for $\Lambda_\epsilon^{\hat{\S}}$. For any fixed $\nu\in[0,1]$, Corollary \ref{corollary:spectre-tildeC-spectre-hatS} ensures that $\Lambda_\epsilon^{\hat{\S}}(\nu)$ holds with exponentially high probability. For $p \geq 1$, we still consider the set $\Vcal_N^p$ defined by (\ref{eq:def-V_N_p}) 
and denote by $ \Lambda_{\epsilon,p}^{\hat{\S}}$ the event defined by 
$$ \Lambda_{\epsilon,p}^{\hat{\S}}=\left\{\forall\nu_p\in\Vcal_N^p:\sigma(\hat{\S}(\nu_p))\subset\Supp\mu_{MP}^{(c)}\times[\barbelow{s},\bar{s}]+\epsilon\right\} $$
which is $\Lambda_\epsilon^{\hat{\S}}$ but where $\nu$ runs only on the finite grid $\Vcal_N^p$. It is immediate (by the union bound) that $\Lambda_{\epsilon,p}^{\hat{\S}}$ holds with exponentially high probability for any fixed $p\in\Nbb$. Moreover, it is clear from the definitions of $\Lambda_\epsilon^{\hat{\S}}$ and $\Lambda_{\epsilon,p}^{\hat{\S}}$ that $ \Lambda_\epsilon^{\hat{\S}} \subset \Lambda_{\epsilon,p}^{\hat{\S}}$. We now show the following inclusion:
\begin{align}
    \notag
    \left(\Lambda_\epsilon^{\hat{\S}}\right)^c &\subset \left(\Lambda_{\epsilon/2,p}^{\hat{\S}}\right)^c \\
    &\quad\cup \left\{\exists \nu\in[0,1]: \|\hat{\S}(\nu) - \hat{\S}(\nu_p^*)\|>\epsilon/2 \text{ where }\nu_p^*\in\argmin_{\nu_p\in\Vcal_N^p}|\nu-\nu_p|\right\} 
    \label{eq:decomposition_Lambda_S_epsilon}.
\end{align}

Suppose that $(\Lambda_\epsilon^{\hat{\S}})^c$ is realized, and denote by $\nu^*\in[0,1]$ 
a frequency such that $\sigma(\hat{\S})(\nu^*)\not\subset\Supp_{MP}^{(c)}\times[\barbelow{s},\bar{s}]+\epsilon$. Denote also $\nu_p^*\in\argmin_{\nu_p\in\Vcal_N^p}|\nu_p-\nu^*|$. We just consider the case where $\lambda_1(\hat{\S}(\nu^*))>\bar{s}(1+\sqrt{c})^2+\epsilon$, since in the case where 
$\lambda_M(\hat{\S}(\nu^*))<\barbelow{s} (1-\sqrt{c})^2-\epsilon$, the proof is similar. 
Then, either:
\begin{enumerate}
    \item $\|\hat{\S}(\nu_p^*)-\hat{\S}(\nu^*)\|\leq\epsilon/2$, which implies the following estimation for the location of $\lambda_1(\hat{\S}(\nu_p^*))$: 
    $$\lambda_1(\hat{\S}(\nu^*)) - \frac{\epsilon}{2} \le \lambda_1(\hat{\S}(\nu_p^*)) \le  \lambda_1(\hat{\S}(\nu^*)) + \frac{\epsilon}{2}  $$
    and in particular, $\lambda_1(\hat{\S}(\nu_p^*)) \geq \bar{s}(1+\sqrt{c})^2+\epsilon/2$. This means that $\left(\Lambda_{\epsilon/2,p}^{\hat{\S}}\right)^c$ holds.
    \item $\|\hat{\S}(\nu_p^*)-\hat{\S}(\nu^*)\|>\epsilon/2$, which exactly means that $\left\{\exists \nu\in[0,1]: \|\hat{\S}(\nu) - \hat{\S}(\nu_p^*)\|>\epsilon/2 \text{ where }\nu_p^*\in\argmin_{\nu_p\in\Vcal_N^p}|\nu-\nu_p|\right\}$ is realized
\end{enumerate}
\eqref{eq:decomposition_Lambda_S_epsilon} is now proved. \\

We already showed that $\left(\Lambda_{\epsilon/2,p}^{\hat{\S}}\right)^c$ holds with exponentially small probability, and establish now that the set 
$$
\left\{\exists \nu\in[0,1]: \|\hat{\S}(\nu_p^*) - \hat{\S}(\nu)\|>\epsilon/2 \text{ where }\nu_p^*\in\argmin_{\nu_p\in\Vcal_N^p}|\nu-\nu_p|\right\}
$$
has the same property. To justify this claim, we note that Proposition \ref{proposition:Shat_lipschitz} implies that for each $\kappa > 0$, the probability
$$  \Prob\left[\left\{\exists \nu, \nu' \in[0,1], \;  \frac{\|\hat{\S}(\nu) - \hat{\S}(\nu')\|}{|\nu - \nu'|} > N^{\kappa} M N^{3/2} \right\} \right] $$
converges to $0$ exponentially fast. As the following inclusion
\begin{align*}
    &\left\{\exists \nu\in[0,1], \; \frac{\|\hat{\S}(\nu) - \hat{\S}(\nu_{p}^{*})\|}{|\nu - \nu_p^{*}|}, \, 
 > N^{\kappa} M N^{3/2}, \, \mbox{where} \, \nu_p^*\in\argmin_{\nu_p\in\Vcal_N^p}|\nu-\nu_p| \right \} \\
 &\hskip2cm\subset 
\left\{\exists \nu, \nu' \in[0,1], \; \frac{\|\hat{\S}(\nu) - \hat{\S}(\nu')\|}{|\nu - \nu'|} > N^{\kappa} M N^{3/2} \right \}
\end{align*}
holds, we get that 
$$
\Prob\left[\left\{\exists \nu\in[0,1], \; \|\hat{\S}(\nu) - \hat{\S}(\nu_{p}^{*})\|  > |\nu - \nu_p^*| N^{\kappa} M N^{3/2} \right\}  \right] \rightarrow 0
$$
exponentially fast. Moreover, as for each $\nu$, $|\nu - \nu_p^*| \leq \frac{1}{N^p}$, we obtain that 
$$
\Prob\left[\left\{\exists \nu\in[0,1], \; \|\hat{\S}(\nu) - \hat{\S}(\nu_p^*)\|  > \frac{1}{N^{p}} N^{\kappa} M N^{3/2} \right\} \right] \rightarrow 0
$$
exponentially fast as well. For $p$ large enough, $N^{\kappa} \frac{1}{N^{p}} M N^{3/2}$ will finally become smaller than $\epsilon/2$. This proves that 
$$ \left\{\exists \nu\in[0,1], \; \|\hat{\S}(\nu_p^*) - \hat{\S}(\nu)\|>\epsilon/2 \text{ where }\nu_p^*\in\argmin_{\nu_p\in\Vcal_N^p}|\nu-\nu_p|\right\} $$
holds with exponentially small probability. \\

The same argument can be used to control $\Lambda_\epsilon^{\hat{\D}}$. This completes the proof
of Corollary \ref{corollary:spectre_S_D_uniforme}.
\end{proof}

We deduce immediately from Corollary \ref{corollary:spectre_S_D_uniforme} the following result that can be seen 
as a refinement of \eqref{eq:domination-norm-hatS} and of Lemma \ref{lemma:localization_s_m}.
\begin{corollary}
\label{coro:uniform-spectrum}
It holds that
$$ \sup_{\nu\in[0, 1]}\|\hat{\D}(\nu)^{-1/2}\|\prec1, \quad \sup_{\nu\in[0, 1]}\|\hat{\S}(\nu)\|\prec1. $$
\end{corollary}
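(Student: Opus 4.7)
The plan is to deduce both stochastic dominations as an essentially immediate consequence of Corollary \ref{corollary:spectre_S_D_uniforme}, which has already done the hard work of upgrading the pointwise spectrum localization (Corollary \ref{corollary:spectre-tildeC-spectre-hatS} and Lemma \ref{lemma:localization_s_m}) into a uniform version over $\nu \in [0,1]$, by combining an $\epsilon$-net argument with the Lipschitz bound of Proposition \ref{proposition:Shat_lipschitz}. Once the eigenvalues of $\hat{\S}(\nu)$ and $\hat{\D}(\nu)$ are, with exponentially high probability, confined to fixed compact sets depending neither on $\nu$ nor $N$, the operator norms $\|\hat{\S}(\nu)\|$ and $\|\hat{\D}(\nu)^{-1/2}\|$ are uniformly bounded by nice constants on that good event, which translates directly into the desired stochastic domination.

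For the first bound, I would fix an arbitrary $\epsilon > 0$ and observe that on the event $\Lambda_\epsilon^{\hat{\S}}$, one has $\sup_{\nu \in [0,1]} \|\hat{\S}(\nu)\| \leq K := \bar{s}(1+\sqrt{c})^2 + \epsilon$. Then for any $\delta > 0$, $N^\delta > K$ for $N$ large enough, so that
$$\Prob\left[\sup_{\nu \in [0,1]} \|\hat{\S}(\nu)\| > N^\delta\right] \leq \Prob\left[(\Lambda_\epsilon^{\hat{\S}})^c\right] \leq \exp(-N^\gamma)$$
for some $\gamma > 0$, by Corollary \ref{corollary:spectre_S_D_uniforme}. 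This is precisely $\sup_{\nu \in [0,1]} \|\hat{\S}(\nu)\| \prec 1$ in the sense of Definition \ref{definition:stochastic_domination} (here the family is indexed by the trivial singleton, so there is no uniformity issue to worry about).

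For the second bound, I would select $\epsilon > 0$ small enough that $\barbelow{s} - \epsilon > 0$, for instance $\epsilon = \barbelow{s}/2$. On the event $\Lambda_\epsilon^{\hat{\D}}$, the diagonal matrix $\hat{\D}(\nu)$ has all eigenvalues in $[\barbelow{s}-\epsilon,\bar{s}+\epsilon]$ uniformly in $\nu$, so $\hat{\D}(\nu)$ is invertible with $\|\hat{\D}(\nu)^{-1/2}\| \leq (\barbelow{s}-\epsilon)^{-1/2}$. The exact same argument as above then gives $\sup_{\nu \in [0,1]} \|\hat{\D}(\nu)^{-1/2}\| \prec 1$.

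There is no real obstacle here since all the genuinely delicate work---controlling the supremum over the continuum $[0,1]$ rather than at a fixed $\nu$---was carried out in the proof of Corollary \ref{corollary:spectre_S_D_uniforme}. The present statement is simply a repackaging of that uniform localization in terms of the stochastic domination framework, obtained by bounding operator norms by constants on the good event and using the exponentially small probability of its complement to absorb the rest.
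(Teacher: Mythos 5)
Your proof is correct and takes exactly the route the paper intends: the paper states that the corollary is "deduced immediately" from Corollary \ref{corollary:spectre_S_D_uniforme} and gives no further argument, and your write-up is precisely that immediate deduction, spelled out—on the high-probability event the operator norms are bounded by nice constants, and the complement has exponentially small probability, which gives the $\prec 1$ domination.
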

 A useful consequence of this is the following Corollary, which states that the Lipschitz result holds for $\hat{\C}(\nu)$.
\begin{corollary}
\label{corollary:Chat_lipschitz}
It holds that 
\begin{equation}
    \label{equation:Chat_lipschitz}
    \sup_{\delta \neq 0}\sup_{\nu\in[0, 1]}\left\|\frac{\hat{\C}(\nu)-\hat{\C}(\nu+\delta)}{\delta}\right\| \prec MN^{3/2}
\end{equation}
\end{corollary}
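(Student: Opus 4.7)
The plan is to reduce the bound on $\nu \mapsto \hat{\C}(\nu)$ to the Lipschitz bound on $\hat{\S}$ already established in Proposition \ref{proposition:Shat_lipschitz} and to a Lipschitz bound on $\hat{\D}^{-1/2}$, using the uniform-in-$\nu$ localisation of the spectra of $\hat{\S}$ and $\hat{\D}$ from Corollary \ref{corollary:spectre_S_D_uniforme} (equivalently Corollary \ref{coro:uniform-spectrum}) to control the factors $\|\hat{\S}\|$ and $\|\hat{\D}^{-1/2}\|$.

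First, I would write the telescoping decomposition
\begin{align*}
\hat{\C}(\nu) - \hat{\C}(\nu+\delta)
& = \bigl[\hat{\D}^{-1/2}(\nu) - \hat{\D}^{-1/2}(\nu+\delta)\bigr]\hat{\S}(\nu)\hat{\D}^{-1/2}(\nu) \\
&\quad + \hat{\D}^{-1/2}(\nu+\delta)\bigl[\hat{\S}(\nu) - \hat{\S}(\nu+\delta)\bigr]\hat{\D}^{-1/2}(\nu) \\
&\quad + \hat{\D}^{-1/2}(\nu+\delta)\hat{\S}(\nu+\delta)\bigl[\hat{\D}^{-1/2}(\nu) - \hat{\D}^{-1/2}(\nu+\delta)\bigr].
\end{align*}
By Corollary \ref{coro:uniform-spectrum}, both $\sup_{\nu} \|\hat{\S}(\nu)\| \prec 1$ and $\sup_{\nu} \|\hat{\D}^{-1/2}(\nu)\| \prec 1$. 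Thus the middle term, combined with Proposition \ref{proposition:Shat_lipschitz}, contributes at most $MN^{3/2}|\delta|$ in spectral norm uniformly in $\nu$ and $\delta$. It then remains to establish
$$\sup_{\delta\neq 0} \sup_{\nu\in[0,1]} \frac{\|\hat{\D}^{-1/2}(\nu) - \hat{\D}^{-1/2}(\nu+\delta)\|}{|\delta|} \prec N^{3/2},$$
after which the first and third terms are immediately controlled by $N^{3/2}|\delta|$ and the result follows.

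The bound on $\hat{\D}^{-1/2}$ is handled entrywise since $\hat{\D}$ is diagonal. For each $m$ one has
$$\hat{s}_m(\nu)^{-1/2} - \hat{s}_m(\nu+\delta)^{-1/2} = \frac{\hat{s}_m(\nu+\delta) - \hat{s}_m(\nu)}{\sqrt{\hat{s}_m(\nu)\hat{s}_m(\nu+\delta)}\,\bigl(\sqrt{\hat{s}_m(\nu)} + \sqrt{\hat{s}_m(\nu+\delta)}\bigr)}.$$
On the exponentially-high-probability event $\Lambda^{\hat{\D}}_{\epsilon}$ from Corollary \ref{corollary:spectre_S_D_uniforme}, the denominator is uniformly bounded away from zero by a nice constant for all $m$ and all $\nu \in [0,1]$. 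Meanwhile, the diagonal specialisation of \eqref{equation:domination_sij} in the proof of Proposition \ref{proposition:Shat_lipschitz} yields $\sup_{m}\sup_{\nu}|\hat{s}_m(\nu) - \hat{s}_m(\nu+\delta)|/|\delta| \prec N^{3/2}$. Since the spectral norm of a diagonal matrix is the maximum of the moduli of its entries, the desired $N^{3/2}$ bound on the Lipschitz constant of $\hat{\D}^{-1/2}$ follows.

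I do not expect any genuine obstacle here: the proof is essentially the observation that the spectral-norm Lipschitz constant is multiplicative under bounded factors, together with the fact that the previously proved Lipschitz estimate for the off-diagonal entries of $\hat{\S}$ is in fact the bottleneck. The only point requiring care is to ensure that all stochastic domination statements are \emph{uniform} in $\nu$ and $\delta$; this is handled exactly as in Corollary \ref{corollary:spectre_S_D_uniforme}, by combining the pointwise-in-$\nu$ estimate on the polynomial grid $\mathcal{V}_N^p$ (with $p$ chosen large enough that $N^{-p}\cdot MN^{3/2}$ is negligible) with the deterministic Lipschitz control supplied by Proposition \ref{proposition:Shat_lipschitz}.
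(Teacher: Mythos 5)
Your proposal is correct and takes essentially the same route as the paper: the same telescoping decomposition of $\hat{\C}(\nu)-\hat{\C}(\nu+\delta)$, bounding the side factors via the uniform spectral localisation (Corollary \ref{coro:uniform-spectrum}), and reducing to the Lipschitz estimates for $\hat{\S}$ and $\hat{\D}^{-1/2}$. The paper merely asserts the $N^{3/2}$ Lipschitz bound on $\hat{\D}^{-1/2}$ as ``easy to check''; your entrywise computation via $\hat{s}_m(\nu)^{-1/2}-\hat{s}_m(\nu+\delta)^{-1/2}$ is exactly the intended justification.
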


\begin{proof}
For more clarity in the following argument, denote $\nu_1=\nu$ and $\nu_2=\nu+\delta$. Recall that $\hat{\D}=\diag\hat{\S}$. Using the definition of $\hat{\C}$ from equation \eqref{equation:definition_coherency}, we write:
\begin{align*}
    \hat{\C}(\nu_2)-\hat{\C}(\nu_1) &= \hat{\D}^{-1/2}(\nu_2)\hat{\S}(\nu_2)\hat{\D}^{-1/2}(\nu_2)-\hat{\D}^{-1/2}(\nu_1)\hat{\S}(\nu_1)\hat{\D}^{-1/2}(\nu_1) \\
    &= (\hat{\D}^{-1/2}(\nu_2)-\hat{\D}^{-1/2}(\nu_1))\hat{\S}(\nu_2)\hat{\D}^{-1/2}(\nu_2) \\
    &\hskip2cm+ \hat{\D}^{-1/2}(\nu_1)(\hat{\S}(\nu_2)\hat{\D}^{-1/2}(\nu_2)-\hat{\S}(\nu_1)\hat{\D}^{-1/2}(\nu_1)) .
\end{align*}

Moreover, we write that 
\begin{align*}
    &\hat{\S}(\nu_2)\hat{\D}^{-1/2}(\nu_2)-\hat{\S}(\nu_1)\hat{\D}^{-1/2}(\nu_1) \\
    &\hskip2cm= (\hat{\S}(\nu_2)-\hat{\S}(\nu_1))\hat{\D}^{-1/2}(\nu_2) + \hat{\S}(\nu_1)(\hat{\D}^{-1/2}(\nu_2)-\hat{\D}^{-1/2}(\nu_1)).
\end{align*}

Therefore, applying the operator norm, we get by the triangle inequality:
\begin{align*}
    \|\hat{\C}(\nu_2)-\hat{\C}(\nu_1)\| &\le \|\hat{\D}^{-1/2}(\nu_2)-\hat{\D}^{-1/2}(\nu_1)\|\|\hat{\S}(\nu_2)\|\|\hat{\D}^{-1/2}(\nu_2)\|\\ &+\|\hat{\D}^{-1/2}(\nu_1)\|\|\hat{\S}(\nu_2)-\hat{\S}(\nu_1)\|\|\hat{\D}^{-1/2}(\nu_2)\| \\
    &+\|\hat{\D}^{-1/2}(\nu_1)\|\|\hat{\S}(\nu_1)\|\|\hat{\D}^{-1/2}(\nu_2)-\hat{\D}^{-1/2}(\nu_1)\|
\end{align*}

It is easy to check that 
$$ \sup_{\delta \neq 0}\sup_{|\nu_2-\nu_1|=\delta}\left\|\frac{\hat{\D}^{-1/2}(\nu_2)-\hat{\D}^{-1/2}(\nu_1)}{\delta}\right\| \prec N^{3/2}
$$
holds. Therefore, Proposition \ref{proposition:Shat_lipschitz} and Corollary \ref{coro:uniform-spectrum} immediately imply (\ref{equation:Chat_lipschitz}).
\end{proof}

Finally, we can write for the spectrum of $\hat{\C}$ the same kind of result as in Corollary \ref{corollary:spectre_S_D_uniforme}.
\begin{corollary}
\label{corollary:spectre_C_uniforme}
For each $\epsilon > 0$, we define $\Lambda^{\hat{\C}}_\epsilon$ as the event
\begin{equation*}
    \Lambda^{\hat{\C}}_\epsilon=\left\{\forall \nu\in[0,1]:\sigma(\hat{\C}(\nu))\subset\Supp\mu_{MP}^{(c)}+\epsilon\right\}.
\end{equation*}
Then, $\Lambda^{\hat{\C}}_\epsilon$ holds with exponentially 
high probability. 
\end{corollary}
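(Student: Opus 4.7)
The plan is to mimic the proof of Corollary \ref{corollary:spectre_S_D_uniforme}, combining the fixed-frequency localization result Corollary \ref{corollary:localization_C_hat} with the Lipschitz estimate for $\nu \mapsto \hat{\C}(\nu)$ from Corollary \ref{corollary:Chat_lipschitz}, via a discretization argument. Let $\epsilon > 0$ be fixed. For an integer $p \geq 1$, introduce the grid $\Vcal_N^p = \{k/N^p : 0 \leq k \leq N^p - 1\}$ of cardinality $N^p$, and define the discretized event
\[
\Lambda^{\hat{\C}}_{\epsilon,p} = \left\{ \forall \nu_p \in \Vcal_N^p : \sigma(\hat{\C}(\nu_p)) \subset \Supp \mu_{MP}^{(c)} + \epsilon \right\}.
\]
By Corollary \ref{corollary:localization_C_hat}, for each fixed $\nu_p$ the event $\Lambda^{\hat{\C}}_\epsilon(\nu_p)$ fails with exponentially small probability, where the exponential rate does not depend on $\nu_p$. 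A union bound over the $N^p$ points of $\Vcal_N^p$ (which only costs a polynomial factor) immediately shows that for each fixed $p$, the event $\Lambda^{\hat{\C}}_{\epsilon/2,p}$ holds with exponentially high probability.

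Next I would reduce the uniform statement to the discretized one by Lipschitz interpolation. For any $\nu \in [0,1]$, let $\nu_p^* \in \argmin_{\nu_p \in \Vcal_N^p} |\nu - \nu_p|$, so that $|\nu - \nu_p^*| \leq 1/N^p$. Arguing exactly as in the proof of Corollary \ref{corollary:spectre_S_D_uniforme} via the Weyl inequality, the event $(\Lambda^{\hat{\C}}_\epsilon)^c$ is contained in the union
\[
\left(\Lambda^{\hat{\C}}_{\epsilon/2,p}\right)^c \, \cup \, \left\{ \exists \nu \in [0,1] : \| \hat{\C}(\nu) - \hat{\C}(\nu_p^*)\| > \epsilon/2 \right\}.
\]
Indeed, if some $\nu^*$ violates the spectrum inclusion at level $\epsilon$, either the closest grid point $\nu_p^*$ already violates it at level $\epsilon/2$, or the two resolvents differ in norm by more than $\epsilon/2$.

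It remains to dominate the second event. By Corollary \ref{corollary:Chat_lipschitz}, for any $\kappa > 0$ the event $\{\sup_{\delta \neq 0}\sup_\nu \|\hat{\C}(\nu + \delta) - \hat{\C}(\nu)\|/|\delta| > N^\kappa M N^{3/2}\}$ has exponentially small probability. On its complement, for every $\nu$ we have
\[
\|\hat{\C}(\nu) - \hat{\C}(\nu_p^*)\| \leq N^\kappa M N^{3/2} \cdot \frac{1}{N^p}.
\]
Since $M = \Ocal(N^\alpha)$ with $\alpha < 1$, choosing $p$ large enough (e.g.\ $p \geq 3$ with $\kappa < p - \alpha - 3/2$) makes the right hand side eventually smaller than $\epsilon/2$, so the interpolation event has exponentially small probability for this choice of $p$. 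Combining the two bounds yields the desired exponentially small probability of $(\Lambda^{\hat{\C}}_\epsilon)^c$.

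No real obstacle is anticipated: the proof is a routine discretization argument of the kind already executed for $\hat{\S}$ and $\hat{\D}$ in Corollary \ref{corollary:spectre_S_D_uniforme}, and the only ingredients needed, namely the pointwise localization and the polynomial-in-$N$ Lipschitz constant, are both already established. The mild subtlety is simply to keep track of the interplay between the Lipschitz constant $M N^{3/2}$, the grid spacing $N^{-p}$, and the stochastic domination factor $N^\kappa$ when choosing $p$, but since $M$ grows only polynomially in $N$ this poses no difficulty.
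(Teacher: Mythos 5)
Your proof is correct and takes exactly the approach the paper intends: the paper's own proof of Corollary \ref{corollary:spectre_C_uniforme} is simply stated as "similar to the proof of Corollary \ref{corollary:spectre_S_D_uniforme} and is thus omitted," and you have faithfully executed that analogy, discretizing on $\Vcal_N^p$, applying the union bound together with Corollary \ref{corollary:localization_C_hat}, and controlling the interpolation error via the Lipschitz estimate of Corollary \ref{corollary:Chat_lipschitz}. (One cosmetic remark: in the interpolation step you speak of the two "resolvents" differing in norm, but the quantities being compared are the matrices $\hat{\C}(\nu)$ and $\hat{\C}(\nu_p^*)$ themselves, not their resolvents.)
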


\begin{proof}
The proof is similar to the proof of Corollary \ref{corollary:spectre_S_D_uniforme}
and is thus omitted.
\end{proof}

We finally use the above results to prove that $\nu \rightarrow \frac{1}{M} \Tr f(\hat{\C}(\nu)) - \int f \, d\mu_{MP}^{(c_N)}$ is $MN^{3/2}$-Lipschitz with overwhelming probability. For this, we establish the 
following Proposition. 
\begin{proposition}
\label{prop:lipschitz-constant-Trf(hatC)}
It holds that
\begin{equation}
 \label{eqlipschitz-constant-Trf(hatC)}
  \sup_{\delta \neq 0}\sup_{\nu\in[0, 1]}\frac{1}{|\delta|} \, \left| \frac{1}{M} \Tr f(\hat{\C}(\nu)) - \frac{1}{M} \Tr f(\hat{\C}(\nu+\delta)) \right| \prec MN^{3/2}.
\end{equation}
\end{proposition}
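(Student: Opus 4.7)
The plan is to reduce the problem to the operator-norm Lipschitz bound on $\hat{\C}(\nu)$ already established in Corollary \ref{corollary:Chat_lipschitz}, by controlling trace differences through the standard Weyl-type inequality. Since $f$ is only assumed smooth in a neighbourhood of $\Supp\mu_{MP}^{(c)}$, I first replace $f$ by a compactly supported regularization. Concretely, let $\kappa>0$ be such that $f$ is $\mathcal{C}^{\infty}$ on $\Supp\mu_{MP}^{(c)}+3\kappa$, and let $\chi$ be the cutoff defined in \eqref{eq:definition_chi}. Define $\bar{f}=f\chi$, which is $\mathcal{C}^{\infty}$ with compact support, hence globally Lipschitz with some constant $L=\|\bar{f}'\|_\infty$ that is a nice constant (independent of $N$, since $\bar{f}$ does not depend on $N$). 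For $N$ large enough, $\Supp\mu_{MP}^{(c_N)}\subset\Supp\mu_{MP}^{(c)}+\kappa$, so on the event $\Lambda^{\hat{\C}}_{\kappa}$ of Corollary \ref{corollary:spectre_C_uniforme}, all eigenvalues of $\hat{\C}(\nu)$ (for every $\nu\in[0,1]$) lie in the region where $\bar{f}=f$, and therefore $\Tr f(\hat{\C}(\nu))=\Tr \bar{f}(\hat{\C}(\nu))$ uniformly in $\nu$.

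Next I would apply the classical trace-difference bound. For any pair of Hermitian $M\times M$ matrices $\A,\B$, Weyl's inequality gives $|\lambda_i(\A)-\lambda_i(\B)|\le\|\A-\B\|$ for every $i$, so
\begin{equation*}
\bigl|\Tr \bar{f}(\A)-\Tr \bar{f}(\B)\bigr|\le \sum_{i=1}^M\bigl|\bar{f}(\lambda_i(\A))-\bar{f}(\lambda_i(\B))\bigr|\le L\,M\,\|\A-\B\|.
\end{equation*}
Applied with $\A=\hat{\C}(\nu+\delta)$ and $\B=\hat{\C}(\nu)$, and combined with the identification $\Tr f(\hat{\C})=\Tr \bar{f}(\hat{\C})$ on $\Lambda^{\hat{\C}}_{\kappa}$, this yields on that event
\begin{equation*}
\frac{1}{|\delta|}\left|\frac{1}{M}\Tr f(\hat{\C}(\nu+\delta))-\frac{1}{M}\Tr f(\hat{\C}(\nu))\right|\le L\,\frac{\|\hat{\C}(\nu+\delta)-\hat{\C}(\nu)\|}{|\delta|}
\end{equation*}
for all $\nu\in[0,1]$ and $\delta\ne 0$. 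Taking the double supremum and invoking Corollary \ref{corollary:Chat_lipschitz} shows that the right-hand side is $\prec MN^{3/2}$.

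To finish, I would handle the exceptional set $(\Lambda^{\hat{\C}}_{\kappa})^{c}$ in the standard stochastic-domination fashion. For any $\epsilon>0$,
\begin{equation*}
\Prob\!\left[\sup_{\delta\ne 0}\sup_{\nu\in[0,1]}\frac{1}{|\delta|}\left|\tfrac{1}{M}\Tr f(\hat{\C}(\nu))-\tfrac{1}{M}\Tr f(\hat{\C}(\nu+\delta))\right|>N^{\epsilon}MN^{3/2}\right]
\end{equation*}
is bounded above by the probability of the same event intersected with $\Lambda^{\hat{\C}}_{\kappa}$, plus $\Prob\bigl[(\Lambda^{\hat{\C}}_{\kappa})^{c}\bigr]$. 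On $\Lambda^{\hat{\C}}_{\kappa}$ the displayed supremum is bounded by $L$ times the $\nu$-uniform Lipschitz constant of $\hat{\C}(\cdot)$, whose probability of exceeding $N^{\epsilon}MN^{3/2}/L$ decays as $\exp(-N^{\gamma})$ by \eqref{equation:Chat_lipschitz}; and $\Prob\bigl[(\Lambda^{\hat{\C}}_{\kappa})^{c}\bigr]$ decays exponentially by Corollary \ref{corollary:spectre_C_uniforme}. This proves \eqref{eqlipschitz-constant-Trf(hatC)}. There is no real obstacle here: the whole argument is a routine combination of the already established uniform eigenvalue localization (Corollary \ref{corollary:spectre_C_uniforme}) and uniform operator-norm Lipschitz bound (Corollary \ref{corollary:Chat_lipschitz}); the only care needed is the cutoff to make $f$ globally Lipschitz, which is clean once $\Lambda^{\hat{\C}}_{\kappa}$ is known to hold with exponentially high probability.
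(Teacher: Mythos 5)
Your proposal is correct and follows essentially the same path as the paper's proof: reduce to the exponentially high-probability event $\Lambda^{\hat{\C}}_\kappa$ of Corollary~\ref{corollary:spectre_C_uniforme}, control the trace difference via Weyl's eigenvalue inequality together with a Lipschitz bound for $f$ on the relevant spectral window, and invoke Corollary~\ref{corollary:Chat_lipschitz}. The only cosmetic difference is that the paper applies the mean value theorem directly (bounding $|f'(\tilde\lambda_m)|$ on $\Lambda^{\hat{\C}}_\epsilon$ since all intermediate eigenvalues fall in the region where $f$ is smooth), whereas you introduce the compactly supported cutoff $\bar f = f\chi$ and use its global Lipschitz constant; both implement the same mechanism and are equally valid.
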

\begin{proof}
By Corollary \ref{corollary:spectre_C_uniforme}, the event $\Lambda^{\hat{\C}}_\epsilon$ holds with exponentially high probability. Therefore, it is sufficient to establish that 
$$
\mathbf{1}_{\Lambda^{\hat{\C}}_\epsilon} \; \sup_{\delta \neq 0}\sup_{\nu\in[0, 1]}\frac{1}{|\delta|} \, \left| \frac{1}{M} \Tr f(\hat{\C}(\nu+\delta)) - \frac{1}{M} \Tr f(\hat{\C}(\nu)) \right| \prec MN^{3/2}
$$
We express $ \frac{1}{M} \Tr f(\hat{\C}(\nu+\delta)) - \frac{1}{M} \Tr f(\hat{\C}(\nu))$ as 
$$
\frac{1}{M} \Tr f(\hat{\C}(\nu+\delta)) - \frac{1}{M} \Tr f(\hat{\C}(\nu)) = 
\frac{1}{M}\sum_{m=1}^M f(\lambda_m(\hat{\C}(\nu+\delta)))-f(\lambda_m(\hat{\C}(\nu)).
$$
As $f$ is $\mathcal{C}^{\infty}$ on a neighborhood of $\Supp_{MP}^{(c)}$, on the set $\Lambda^{\hat{\C}}_\epsilon$, there exists some random quantities $(\tilde{\lambda}_m)_{1\le m\le M}$ between $\lambda_m(\hat{\C}(\nu))$ and $\lambda_m(\hat{\C}(\nu+\delta))$ such that
\begin{multline*}
    \frac{1}{M}\sum_{m=1}^M f(\lambda_m(\hat{\C}(\nu+\delta)))-f(\lambda_m(\hat{\C}(\nu)) \\= \frac{1}{M}\sum_{m=1}^M \left(\lambda_m(\hat{\C}(\nu+\delta))- \lambda_m(\hat{\C}(\nu)) \right) \, f'(\tilde{\lambda}_m).
\end{multline*}
Using the following eigenvalue inequality for Hermitian matrices:
$$ \left|\lambda_m(\hat{\C}(\nu+\delta))-\lambda_m(\hat{\C}(\nu))\right| \le \|\hat{\C}(\nu+\delta)-\hat{\C}(\nu)\|$$
in conjunction with the fact that $\sup_{1\le m\le M}|f'(\tilde{\lambda}_m|$ is bounded by some nice constant $C$ on the event $\Lambda^{\hat{\C}}_\epsilon$, we obtain that 
\begin{multline*}
    \Prob\left[ \sup_{\delta \neq 0} \sup_{\nu \in [0,1]} \left| \frac{1}{M}\sum_{m=1}^M f'(\tilde{\lambda}_m) (\lambda_m(\hat{\C}(\nu+\delta))-\lambda_m(\hat{\C}(\nu))\right| \right. \\ \left. > |\delta| N^\kappa MN^{3/2},\ \Lambda^{\hat{\C}}_\epsilon\right] \\
    \le \Prob\left[ \sup_{\delta \neq 0} \sup_{\nu \in [0,1]}  C\|\hat{\C}(\nu+\delta)-\hat{\C}(\nu)\|>|\delta|N^\kappa MN^{3/2},\ \Lambda^{\hat{\C}}_\epsilon\right]
\end{multline*} 
(\ref{equation:Chat_lipschitz}) finally leads to (\ref{eqlipschitz-constant-Trf(hatC)}). 

\end{proof}

\subsubsection{Lipschitz constant of \texorpdfstring{$\nu \rightarrow \hat{r}_N(\nu)$}{}.}
The function $\nu \rightarrow \hat{r}_N(\nu)$ satisfies the following property:
\begin{proposition}
\label{prop:lipschitz-constant-hatrN}
\begin{equation}
\label{eq:lipschitz-constant-hatrN}
 \sup_{\delta \neq 0}\sup_{\nu\in[0, 1]}\frac{1}{|\delta|} \, \left| \hat{r}_N(\nu+\delta) - \hat{r}_N(\nu) \right| \prec N^{3/(2\gamma_0+1)}.
 \end{equation}
\end{proposition}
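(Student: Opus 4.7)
The plan is to exploit the trigonometric polynomial structure of $\hat{s}_{m,L}$ to bound $\hat{r}_N'$ uniformly in $\nu$, and then invoke the mean value theorem. Since $\hat{s}_{m,L}(\nu)=\sum_{|l|\le L}\hat{r}_{m,l}e^{-2i\pi l\nu}$ from \eqref{eq:def-hatsLm}, its $k$-th derivative is the trigonometric polynomial $\hat{s}_{m,L}^{(k)}(\nu)=\sum_{|l|\le L}(-2i\pi l)^{k}\hat{r}_{m,l}e^{-2i\pi l\nu}$, whence
$$
\sup_{\nu}|\hat{s}_{m,L}^{(k)}(\nu)|\le(2\pi)^{k}\sum_{|l|\le L}|l|^{k}|\hat{r}_{m,l}|.
$$
The first ingredient I need is the uniform stochastic domination $\sup_{m\le M,\,|l|\le L}|\hat{r}_{m,l}|\prec 1$, which follows from the pointwise majoration $|\hat{r}_{m,l}|\le\|\y_m\|^{2}/N=\z_m^{*}\R_m\z_m/N$, the Hanson--Wright inequality \eqref{equation:hanson_wright_stochastic_domination} applied to this Gaussian quadratic form (whose expectation $r_{m,0}$ is uniformly bounded thanks to Assumption \ref{assumption:regularity}), and a union bound over the polynomially many pairs $(m,l)$ with $m\le M$, $|l|\le L\le N$. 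Plugging this into the previous display yields the deterministic-looking, uniform estimate
$$
\sup_{m,\nu}|\hat{s}_{m,L}^{(k)}(\nu)|\prec L^{k+1},\qquad k=0,1,2.
$$

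For $k=0$ and $k=1$ these bounds are wasteful, and I would sharpen them using the pointwise estimates already established inside the proof of Proposition \ref{prop:hatrNL}, namely $|\hat{s}_{m,L}(\nu)-s_m(\nu)|\prec N^{-\gamma_0/(2\gamma_0+1)}$ and $|\hat{s}_{m,L}'(\nu)-s_m'(\nu)|\prec N^{-(\gamma_0-1)/(2\gamma_0+1)}$. Combined with \eqref{eq:s-bounded-away-from-zero}--\eqref{eq:derivative-spectral-densities} from Assumption \ref{assumption:regularity}, and upgraded from pointwise-in-$\nu$ to uniform-in-$\nu$ by the $\epsilon$-net / discretization scheme of Corollary \ref{corollary:spectre_S_D_uniforme} (the Lipschitz constants needed in that scheme are precisely the degree-$L$ trigonometric polynomial bounds just derived), this yields the existence of a nice constant $c>0$ and of an event of exponentially high probability on which, for $\gamma_0\ge 3$,
$$
\sup_{m,\nu}|\hat{s}_{m,L}(\nu)|\prec 1,\qquad \sup_{m,\nu}|\hat{s}_{m,L}'(\nu)|\prec 1,\qquad \inf_{m,\nu}|\hat{s}_{m,L}(\nu)|\ge c.
$$

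Setting $T(\nu):=\frac{1}{M}\sum_{m=1}^{M}\hat{s}_{m,L}'(\nu)/\hat{s}_{m,L}(\nu)$ so that $\hat{r}_N=T^{2}$, the chain rule gives $\hat{r}_N'(\nu)=2T(\nu)T'(\nu)$ with
$$
T'(\nu)=\frac{1}{M}\sum_{m=1}^{M}\left(\frac{\hat{s}_{m,L}''(\nu)}{\hat{s}_{m,L}(\nu)}-\Bigl(\frac{\hat{s}_{m,L}'(\nu)}{\hat{s}_{m,L}(\nu)}\Bigr)^{2}\right).
$$
On the aforementioned event the estimates of the previous two paragraphs give $\sup_{\nu}|T(\nu)|\prec 1$ and $\sup_{\nu}|T'(\nu)|\prec L^{3}$, hence $\sup_{\nu}|\hat{r}_N'(\nu)|\prec L^{3}$. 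The mean value theorem then yields
$$
\sup_{\delta\neq 0}\sup_{\nu\in[0,1]}\frac{|\hat{r}_N(\nu+\delta)-\hat{r}_N(\nu)|}{|\delta|}\le\sup_{\mu}|\hat{r}_N'(\mu)|\prec L^{3}=\Ocal\bigl(N^{3/(2\gamma_0+1)}\bigr),
$$
which is \eqref{eq:lipschitz-constant-hatrN}. The main delicate point is the promotion of the pointwise-in-$\nu$ bounds on $\hat{s}_{m,L}-s_m$ and $\hat{s}_{m,L}'-s_m'$ extracted from the proof of Proposition \ref{prop:hatrNL} into uniform-in-$\nu$ statements; this is handled by combining the degree-$L$ Lipschitz constants already derived for the trigonometric polynomials $\hat{s}_{m,L}^{(k)}$ with the $\epsilon$-net and union bound scheme already used in Corollary \ref{corollary:spectre_S_D_uniforme}, and causes no loss since $L$ is polynomially small in $N$.
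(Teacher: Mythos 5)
Your proof is correct and follows the same high-level plan that the paper's very terse sketch indicates: bound $\hat{s}_{m,L}$ uniformly away from zero, derive Lipschitz/derivative bounds for $\hat{s}_{m,L}$ and $\hat{s}'_{m,L}$ from the trigonometric-polynomial structure together with $\sup_{m,l}|\hat{r}_{m,l}|\prec 1$, then apply the quotient rule to $T=\frac{1}{M}\sum_m \hat{s}'_{m,L}/\hat{s}_{m,L}$ and write $\hat{r}_N=T^2$. The one place where you go beyond the paper's sketch, and rightly so, is the sharpened uniform bound $\sup_{m,\nu}|\hat{s}'_{m,L}(\nu)|\prec 1$. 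This step is in fact indispensable: the paper only mentions that $\hat{s}_{m,L}$ is $L^2$-Lipschitz, but plugging the crude bound $\sup_\nu|\hat{s}'_{m,L}|\prec L^2$ into the quotient rule gives $\sup|T|\prec L^2$ and $\mathrm{Lip}(T)\prec L^4$, hence $\mathrm{Lip}(\hat{r}_N)\prec L^6$, which is too weak. Your route — upgrading the pointwise estimate $|\hat{s}'_{m,L}(\nu)-s_m'(\nu)|\prec N^{-(\gamma_0-1)/(2\gamma_0+1)}$ from Proposition \ref{prop:hatrNL} to a $\nu$-uniform one via the discretization scheme of Corollary \ref{corollary:spectre_S_D_uniforme} (using the $L^3$-Lipschitz constant of $\hat{s}'_{m,L}$, which is available without circularity) and then invoking the uniform boundedness of $s_m'$ — supplies exactly the missing ingredient, and your computation of $T'$ and $\hat{r}_N'=2TT'\prec L^3$ then closes the argument.
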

We just provide the main steps the proof, and leave the details to the reader. We first prove that 
$\sup_{\nu \in [0,1]} \sum_{m=1}^{M} \frac{1}{\hat{s}_{m,L}(\nu)} \prec 1$ by verifying that 
the event $\{ \forall \nu \in [0,1], \forall m=1, \ldots, M, \hat{s}_{m,L}(\nu) \in [\underbar{s}, \bar{s}] + \epsilon \} $ holds with exponentially high probability. Then, we establish that $\nu \rightarrow \hat{s}_{m,L}(\nu)$ and 
$\nu \rightarrow \hat{s}'_{m,L}(\nu)$ are $N^{2/(2\gamma_0+1)}$ Lipschitz and $N^{3/(2\gamma_0+1)}$ Lipschitz with overwhelming probability. This leads immediately to (\ref{eq:lipschitz-constant-hatrN}). \\

As $v_N \, N^{3/(2\gamma_0+1)} \ll MN^{3/2}$, Propositions \ref{prop:lipschitz-constant-Trf(hatC)}  and \ref{prop:lipschitz-constant-hatrN} lead to (\ref{eq:lipschitz-constant-hatpsi}). This completes 
the proof of Proposition \ref{prop:psi-hatpsi-lipschitz}.
\subsection{Stochastic domination of \texorpdfstring{$\sup_{\nu \in [0,1]} |\psi_N(f,\nu)|$}{} and 
\texorpdfstring{$\sup_{\nu \in [0,1]} |\hat{\psi}_N(f,\nu)|$}{}}
We are now in a position to establish the main result of this paper. 
\begin{theorem}
\label{th:main-result}
$\sup_{\nu \in [0,1]} |\psi_N(f,\nu)|$ and 
$\sup_{\nu \in [0,1]} |\hat{\psi}_N(f,\nu)|$ satisfy the following stochastic domination 
property:
\begin{eqnarray}
    \label{eq:domination-sup-psi}
 \sup_{\nu \in [0,1]} |\psi_N(f,\nu)| & \prec & u_N  \\
  \label{eq:domination-sup-hatpsi}
  \sup_{\nu \in [0,1]} |\hat{\psi}_N(f,\nu)| & \prec & u_N .
\end{eqnarray}
\end{theorem}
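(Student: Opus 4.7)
The plan is to reduce the statement about the supremum over the continuous frequency interval $[0,1]$ to the maximum over a finite discretization, to which we apply the pointwise stochastic domination results (Theorem \ref{theo:domination-psi} and Proposition \ref{prop:hatrNL}) together with a union bound, and then to control the interpolation error via the Lipschitz bounds of Proposition \ref{prop:psi-hatpsi-lipschitz}. Since both $\nu \mapsto \psi_N(f,\nu)$ and $\nu \mapsto \hat{\psi}_N(f,\nu)$ enjoy analogous pointwise stochastic domination and Lipschitz control, the proofs of (\ref{eq:domination-sup-psi}) and (\ref{eq:domination-sup-hatpsi}) proceed along identical lines and we focus on the first.

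\textbf{Grid argument and union bound.} Fix an integer $p \geq 4$ and introduce the grid $\Vcal_N^p$ defined in (\ref{eq:def-V_N_p}), whose cardinality is the polynomial quantity $N^p$. By Theorem \ref{theo:domination-psi}, for every $\epsilon>0$ there exists $\gamma(\epsilon)>0$ \emph{independent of $\nu$} such that $\Pbb[\,|\psi_N(f,\nu)|>N^\epsilon u_N\,]\leq \exp(-N^{\gamma})$ for each $\nu\in[0,1]$ and each $N$ large enough. A union bound over $\Vcal_N^p$ then yields
\[
\Pbb\Bigl(\max_{\nu_p \in \Vcal_N^p} |\psi_N(f,\nu_p)| > N^\epsilon u_N\Bigr) \leq N^p \exp(-N^{\gamma}),
\]
which is still exponentially small in $N$, so $\max_{\nu_p \in \Vcal_N^p}|\psi_N(f,\nu_p)| \prec u_N$.

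\textbf{Lipschitz bridging.} For arbitrary $\nu \in [0,1]$, choose $\nu_p^{\star} \in \Vcal_N^p$ minimizing $|\nu - \nu_p|$, so that $|\nu - \nu_p^{\star}| \leq N^{-p}$. Then
\[
\sup_{\nu \in [0,1]} |\psi_N(f,\nu)| \leq \max_{\nu_p \in \Vcal_N^p} |\psi_N(f,\nu_p)| + N^{-p}\, L_N,
\]
where $L_N$ denotes the Lipschitz seminorm of $\nu \mapsto \psi_N(f,\nu)$ controlled by (\ref{eq:lipschitz-constant-psi}), namely $L_N \prec MN^{3/2}$. Under Assumption \ref{assumption:rate_NBM} we have $M = \Ocal(N^\alpha)$ with $\alpha < 1$, while $u_N \geq c_0/B \geq c_0 N^{-\alpha}$ for some nice constant $c_0>0$; consequently, for $p \geq 4$ (which satisfies $p > 2\alpha + 3/2$), we obtain $N^{-p} L_N \prec N^{-p+\alpha+3/2} = o(N^{-\alpha}) = o(u_N)$, hence $N^{-p} L_N \prec u_N$. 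Combining both stochastic-domination bounds via Lemma \ref{lemma:algebra_domination} establishes (\ref{eq:domination-sup-psi}). Substituting $\hat{\psi}_N$ for $\psi_N$, Proposition \ref{prop:hatrNL} for Theorem \ref{theo:domination-psi}, and (\ref{eq:lipschitz-constant-hatpsi}) for (\ref{eq:lipschitz-constant-psi}) yields (\ref{eq:domination-sup-hatpsi}) by the same scheme.

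\textbf{Anticipated obstacles.} The argument is a textbook $\epsilon$-net/union-bound scheme, and at this point no real difficulty remains: all the delicate work (pointwise concentration through the Helffer--Sj\"ostrand formula and Gaussian tools, the non-trivial identification of the $v_N$-correction, and the uniform Lipschitz control of $\nu \mapsto \hat{\C}(\nu)$ and $\nu \mapsto \hat{r}_N(\nu)$) has already been carried out in Sections \ref{sec:lss} and \ref{sec:lipschitz}. The only point requiring a modicum of care is that the super-polynomial decay $\exp(-N^{\gamma})$ of the per-point exceedance probability must beat the polynomial factor $N^p$ coming from the union bound; this is immediate since $\exp(-N^{\gamma})$ dominates any inverse power of $N$, so that $p$ may be chosen depending only on $\alpha$, uniformly in the parameter $\epsilon$ of the stochastic domination.
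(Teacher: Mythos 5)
Your proposal is correct and follows essentially the same strategy as the paper's own proof: discretize $[0,1]$ on the polynomial-cardinality grid $\Vcal_N^p$, apply the pointwise domination from Theorem \ref{theo:domination-psi} (resp.\ Proposition \ref{prop:hatrNL}) with a union bound, and absorb the interpolation error via the Lipschitz bound of Proposition \ref{prop:psi-hatpsi-lipschitz} with $p$ chosen large enough that $N^{-p} M N^{3/2} \prec u_N$. The only (immaterial) stylistic difference is that you make the union bound over the grid and the final combination via Lemma \ref{lemma:algebra_domination} explicit, whereas the paper works directly with the random maximizer $\nu^{*}$ and its closest grid point.
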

\begin{proof} 
We just establish (\ref{eq:domination-sup-hatpsi}). We consider $\epsilon>0$
and evaluate
$$ \Prob\left[\sup_{\nu\in[0, 1]}\left| \hat{\psi}_N(f,\nu)\right|>N^\epsilon u_N\right].$$
We denote by $\nu^*\in[0, 1]$ an element where the supremum is achieved, and consider $\nu_p^*$ the closest element of $\Vcal_N^p$ to $\nu^*$, where we recall that $\Vcal_N^p$ is defined by (\ref{eq:def-V_N_p}). Therefore, one can write:
\begin{align*}
\Prob\left[\sup_{\nu\in[0, 1]}\left|\hat{\psi}_N(f,\nu)\right|>N^\epsilon u_N \right] \leq &
\Prob\left[\left|\hat{\psi}_N(f,\nu^*)- \hat{\psi}_N(f,\nu_p^*))\right| > \frac{1}{2} N^\epsilon u_N \right] + \\
 &  \Prob\left[ \left|\hat{\psi}_N(f,\nu_p^*)\right|>\frac{1}{2} N^\epsilon u_N\right] .
\end{align*}
(\ref{eq:domination-hatpsi}) implies that $\Prob\left[ \left|\hat{\psi}_N(f,\nu_p^*)\right|>\frac{1}{2} N^\epsilon u_N\right] $ converges exponentially towards 0. It thus remains to study 
$\Prob\left[\left|\hat{\psi}_N(f,\nu^*)- \hat{\psi}_N(f,\nu_p^*))\right| > \frac{1}{2} N^\epsilon u_N \right]$. For this, we of course use (\ref{eq:lipschitz-constant-hatpsi}), Corollary \ref{corollary:Chat_lipschitz}, and write 
\begin{align*}
    &\Prob\left[\left| \hat{\psi}_N(f,\nu^*)- \hat{\psi}_N(f,\nu_p^*)) \right|>\frac{1}{2} N^\epsilon u_N \right] \\
    &\hskip3cm= \Prob\left[\left| \frac{ \hat{\psi}_N(f,\nu^*)- \hat{\psi}_N(f,\nu_p^*))}{\nu^* - \nu_p^*} \right|>\frac{1}{2|\nu^* - \nu_p^*|}N^\epsilon u_N \right] \\
     &\hskip3cm \leq \Prob\left[\left| \frac{ \hat{\psi}_N(f,\nu^*)- \hat{\psi}_N(f,\nu_p^*))}{\nu^* - \nu_p^*} \right|> 
     \frac{1}{2} N^{p} N^{\epsilon} u_N \right] .
\end{align*}
If we choose $p$ large enough, $MN^{3/2}$ satisfies $MN^{3/2} \ll N^{p} u_N $, and 
$\Prob\left[\left| \frac{ \hat{\psi}_N(f,\nu^*)- \hat{\psi}_N(f,\nu_p^*))}{\nu^* - \nu_p^*} \right|> 
     \frac{1}{2} N^{p} N^{\epsilon} u_N \right] $ 
converges towards $0$ exponentially as expected. 
This completes the proof of (\ref{eq:domination-sup-hatpsi}). 
\end{proof}

\section{Numerical simulations}
\label{sec:simulations}
In this section we examine the impact of the correction quantity $r_N(\nu)\phi_N(f)v_N$ when $\alpha > \frac{2}{3}$ and see how it improves the estimation of the LSS $\frac{1}{M}\Tr f(\hat{\C}(\nu))$. More precisely, we start by examining the behaviour of the LSS
\[
    \left|\frac{1}{M}\Tr f(\hat{\C}(\nu))-\int_\Rbb f\diff\mu_{MP}^{(c_N)}\right|
\]
and the impact of the correction term 
\begin{align*}
    &\left(\frac{1}{M}\sum_{m=1}^M\frac{s_m'(\nu)}{s_m(\nu)}\right)^2\phi_N(f)\left(\frac{1}{B+1}\sum_{b=-B/2}^{B/2}\left(\frac{b}{N}\right)\right)^2 = r_N(\nu)\phi_N(f)v_N \\
    &\left(\frac{1}{M}\sum_{m=1}^M\frac{\hat{s}_m'(\nu)}{\hat{s}_m(\nu)}\right)^2\phi_N(f)\left(\frac{1}{B+1}\sum_{b=-B/2}^{B/2}\left(\frac{b}{N}\right)\right)^2 = \hat{r}_N(\nu)\phi_N(f)v_N.
\end{align*}
under $\Hcal_0$. We recall that $\phi_N(f)$ is the deterministic term defined as the action of $f$ on the compactly supported distribution $D_N$, whose Stieltjes transform is:
\[
    p_N(z) = -\frac{c_N(zt_N(z)\tilde{t}_N(z))^3}{1-c_N(zt_N(z)\tilde{t}_N(z))^2}.
\]
Motivated by \cite{mestre2017correlation}, we consider $f(\lambda) = (\lambda-1)^{2}$ where it can be satisfied with a bit of algebra and residue calculus that
\begin{align*}
    \int_\Rbb f(\lambda)\diff\mu_{MP}^{(c_N)}(\lambda) = c_N.
\end{align*} and $\phi_N(f)=c_N$. Take $\y_n$ generated by the following simple model:
\begin{equation}
\label{equation:state_space_model_H0}
    \begin{array}{ll}
        \y_{n+1} = \A\y_n + \epsilonbs_n \\
    \end{array}
\end{equation}
where $(\epsilonbs_n)_{n \in \mathbb{Z}}$ is an independent sequence of $\Ncal_\Cbb(0,\I_M)$ distributed random vectors, and where $\A$ is the diagonal matrix defined by 
$\A = \theta \, \I_M$ for $\theta\in\Cbb$ such that $|\theta|<1$. Under \eqref{equation:state_space_model_H0}, each time series is independent AR(1) processes. In Figure \ref{figure:correction_all_nu} is represented on the left the values of the LSS associated to $f(\lambda)=(\lambda-1)^2$ for each $\nu\in(0,1)$ when $(N,B,M,L)=(10119,1600,800,21)$ (so $\alpha=0.8$ and $c=1/2$) and $\theta=0.4$, where we recall that $L$ represent the lag window size in the estimation of $\hat{r}_N(\nu)$.. We see that the correction term captures the majority of the deviation of the LSS from zero. Moreover, the correction where the spectral densities $s_m$ and $s_m'$ are estimated still provide a good approximation of the $\Ocal(\frac{B}{N})^2$ term. On the right side is represented the LSS against $\psi_N(f,\nu)$ and $\hat{\psi}_N(f,\nu)$. We again observe that the majority of the deviation from zero of the LSS is corrected by the $\Ocal(\frac{B}{N})^2$ terms. Around $\nu=\pm0.1$, the corrections precision seems to have degraded. This can be understood since $\nu=\pm0.1$ corresponds to peaks in $s_m'$, which leads to greater estimation errors for $\hat{s}_m'$ at this frequency than for the other ones.

\begin{figure}[!ht]
\centering
     \includegraphics[width=\textwidth]{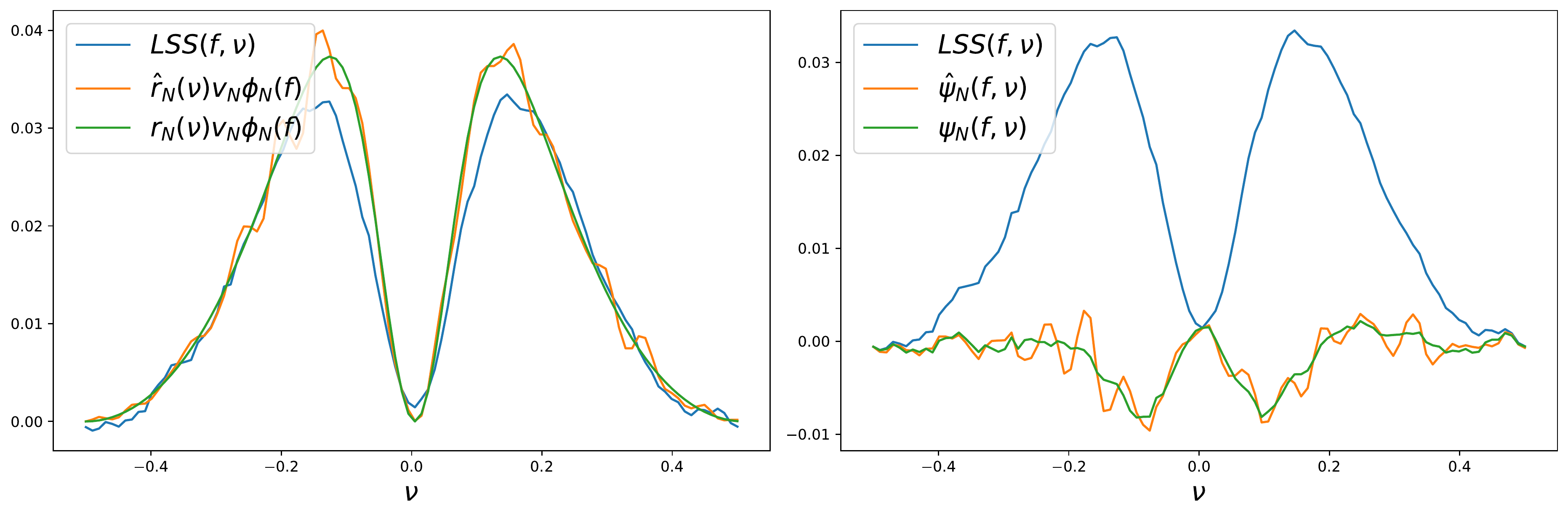}
      \caption{Linear Spectral Statistics vs the correction term. $f(\lambda)=(\lambda-1)^2$, $(N,B,M,L)=(10119,1600,800,21)$, and $\theta=0.4$.}
      \label{figure:correction_all_nu}
\end{figure}

We now check the derived speed of convergence towards zero in Theorem \ref{th:main-result}, and more precisely that the following estimations hold true:
\begin{align*}
    &\sup_{\nu\in[0,1]}\left|\frac{1}{M}\Tr f(\hat{\C}(\nu))-\int_\Rbb f\diff\mu_{MP}^{(c_N)}\right| = \Ocal\left(\frac{1}{B}\right)\mathds{1}_{1/2\le\alpha\le2/3}+\Ocal\left(\frac{B}{N}\right)^2\mathds{1}_{\alpha\ge2/3} \\
    &\sup_{\nu\in[0,1]}\left|\psi(f,\nu)\right| = \Ocal(u_N)
\end{align*}
and we abbreviate $\frac{1}{M}\Tr f(\hat{\C}(\nu))-\int_\Rbb f\diff\mu_{MP}^{(c_N)}$ by $LSS(f,\nu)$. In the following we take $c=\frac{M}{B+1}=\frac{1}{2}$ and $\alpha=4/5$. In this case we recall that $u_N=\Ocal(\frac{B}{N})^3$. On the left of Figure \ref{figure:correction_nu_max} is represented for $M\in\{20,30,\ldots,1500\}$ the value of $\sup_{\nu\in[0,1]} |LSS(f,\nu)|$ against $\sup_{\nu\in[0,1]}|\psi(f,\nu)|$ and $\sup_{\nu\in[0,1]}|\hat{\psi}(f,\nu)|$. On the right of Figure \ref{figure:correction_nu_max} we rescale all quantities by $(\frac{N}{B})^2$ and observe, in accordance with Theorem \ref{theo:domination-psi} that $LSS(f,\nu)$ remains $\Ocal(1)$ while the corrected quantities are $o(1)$. Finally, in Figure \ref{figure:correction_nu_max_cube} are represented $\sup_{\nu\in[0,1]}|\psi(f,\nu)|$ and $\sup_{\nu\in[0,1]}|\hat{\psi}(f,\nu)|$ rescaled by $(\frac{N}{B})^3$, and observe that these quantities are now $\Ocal(1)$, again in accordance with Theorem \ref{theo:domination-psi}.
\begin{figure}[!ht]
\centering
     \includegraphics[width=\textwidth]{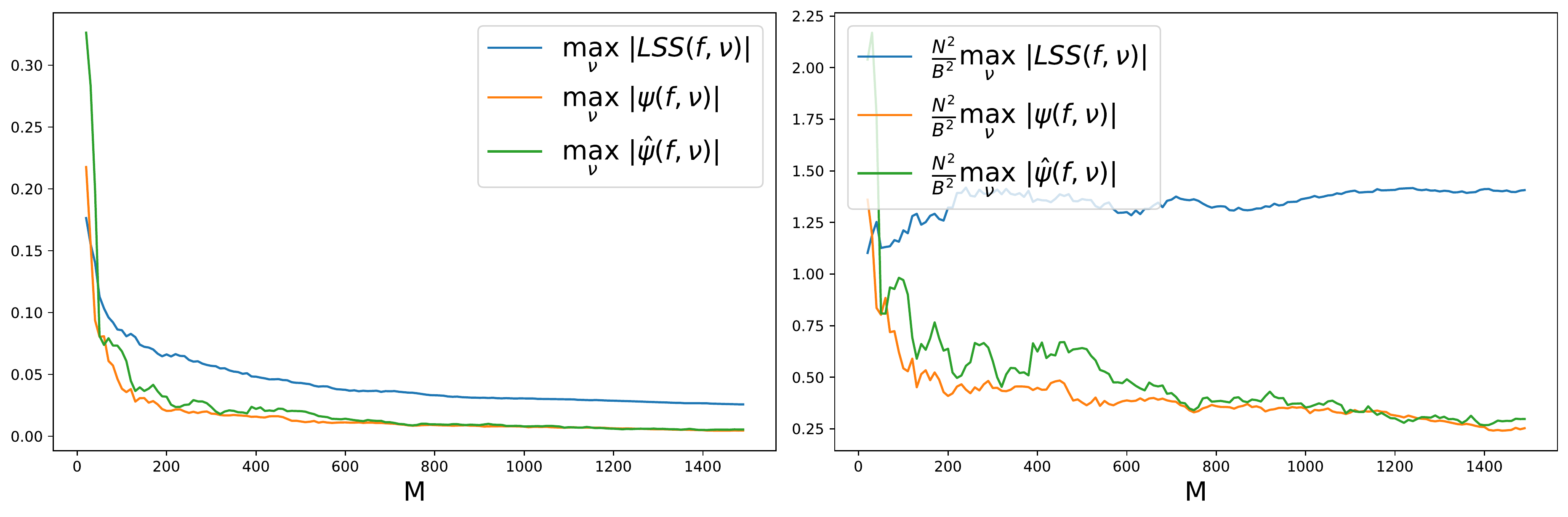}
      \caption{$\sup_{\nu\in\Fcal_N}\left|\frac{1}{M}\Tr f(\hat{\C}(\nu))-\int f\diff\mu_{MP}^{(c_N)}\right|$ against $\sup_{\nu\in\Fcal_N}\psi_N(f,\nu)$ and $\sup_{\nu\in\Fcal_N}\hat{\psi}_N(f,\nu)$ as functions of $M$. On the right the quantities are rescaled by $(\frac{N}{B})^2$. $\alpha=0.8$, $c=1/2$, $\theta=0.4$}
      \label{figure:correction_nu_max}
\end{figure}

\begin{figure}[!ht]
\centering
     \includegraphics[width=\textwidth]{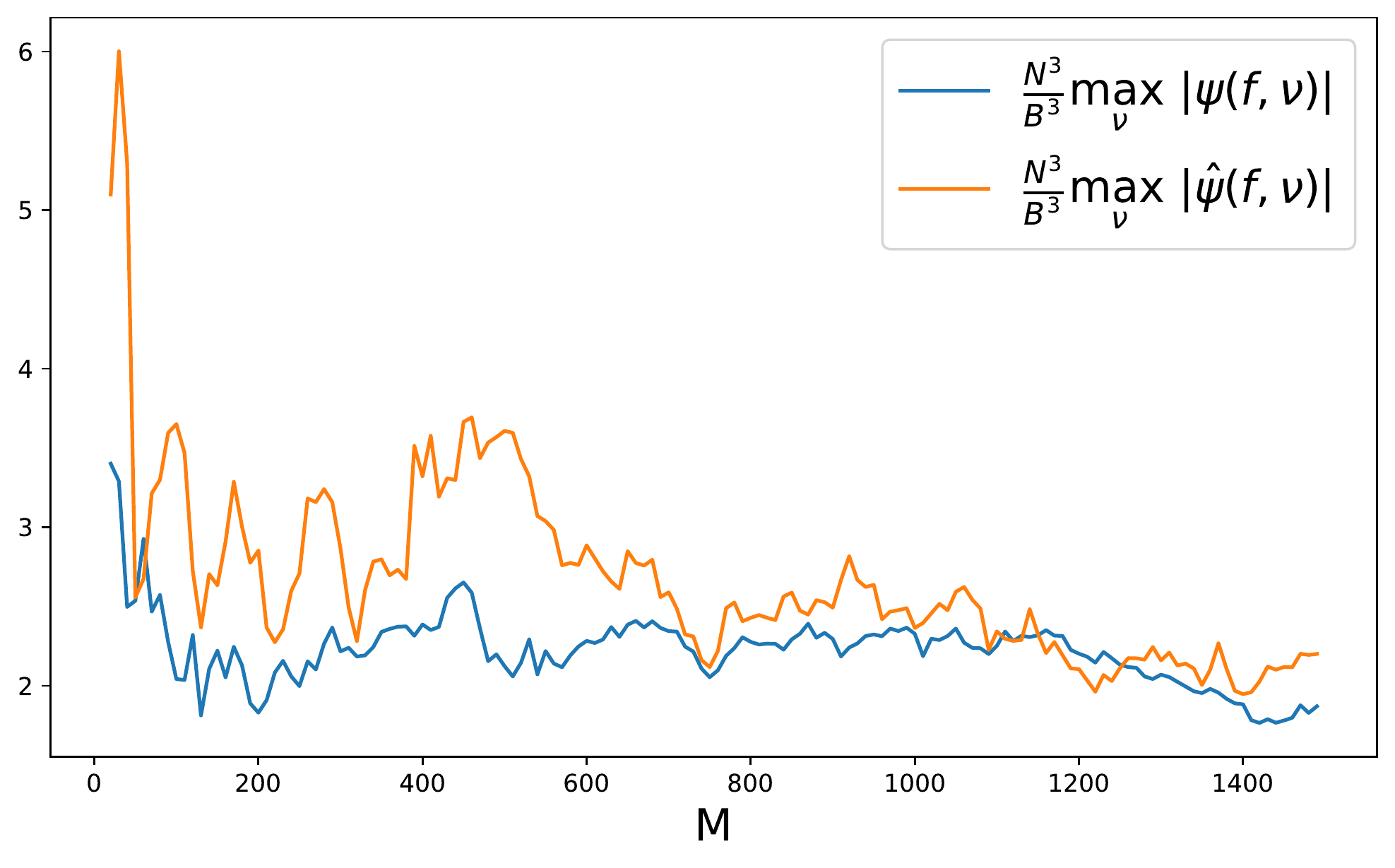}
      \caption{$\sup_{\nu\in\Fcal_N}\psi_N(f,\nu)$ and $\sup_{\nu\in\Fcal_N}\hat{\psi}_N(f,\nu)$ rescaled by $(\frac{N}{B})^3$ as functions of $M$. $\alpha=0.8$, $c=1/2$, $\theta=0.4$.}
      \label{figure:correction_nu_max_cube}
\end{figure}

Finally in Figure \ref{figure:correction_nu_max_hist} is represented 20000 realisations of the LSS $\sup_{\nu\in\Fcal_N}\left|\frac{1}{M}\Tr f(\hat{\C}(\nu))-\int f\diff\mu_{MP}^{(c_N)}\right|$ against its improved estimations $\sup_{\nu\in\Fcal_N}|\psi_N(f,\nu)|$ and $\sup_{\nu\in\Fcal_N}|\hat{\psi}_N(f,\nu)|$. We see that the oracle corrected statistics $\psi(f,\nu)$ is more concentrated around $0$, and that its estimated counterpart $\hat{\psi}(f,\nu)$ is close to $\psi(f,\nu)$ but exhibits more spread due to the additional estimation step of $\hat{s}_m(\nu)$.  

\begin{figure}[!ht]
\centering
     \includegraphics[width=0.8\textwidth]{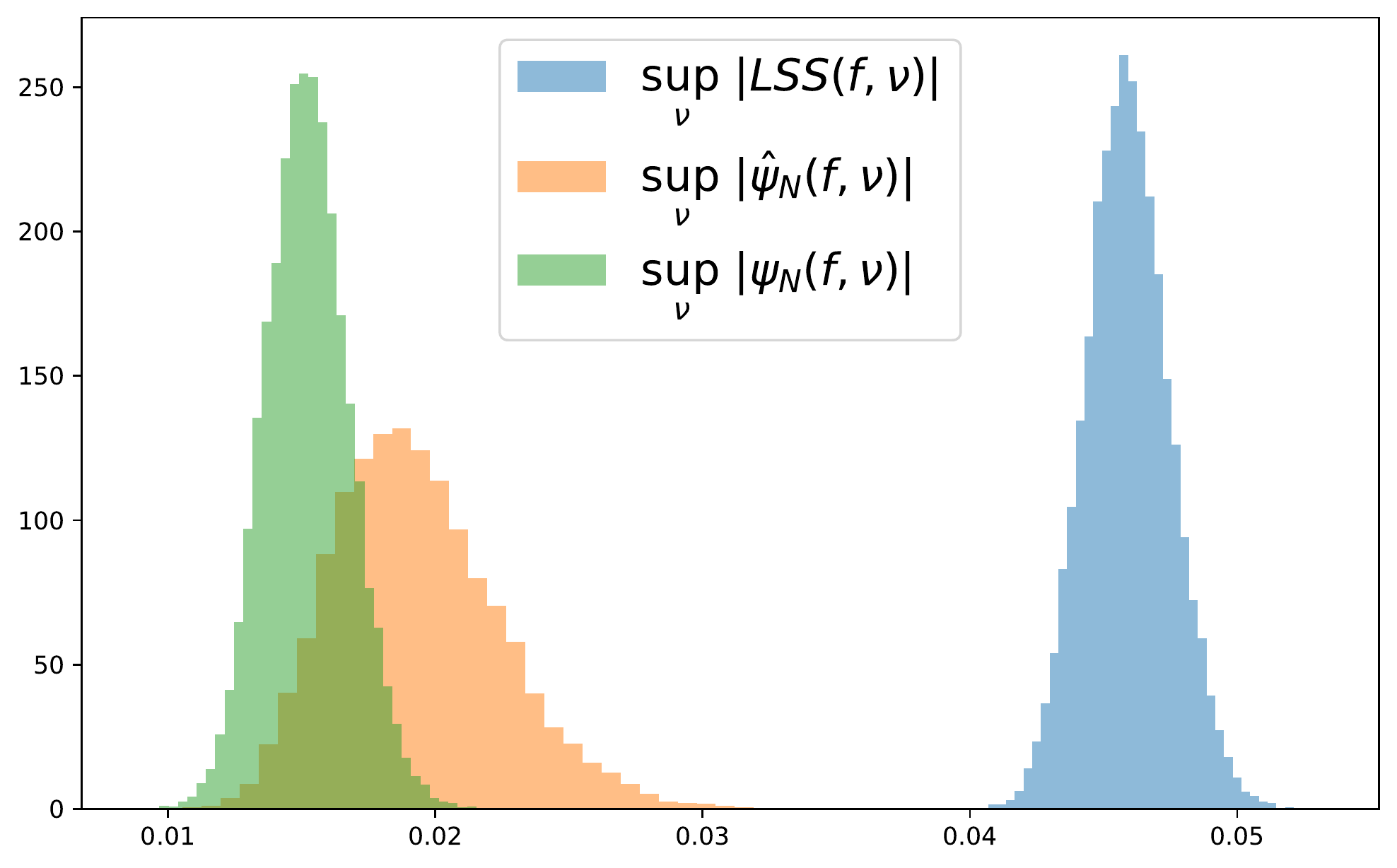}
      \caption{$\sup_{\nu\in\Fcal_N}|\frac{1}{M}\Tr f(\hat{\C}(\nu))-\int f\diff\mu_{MP}^{(c_N)}|$ against $\sup_{\nu\in\Fcal_N}|\psi_N(f,\nu)|$ and $\sup_{\nu\in\Fcal_N}|\hat{\psi}_N(f,\nu)|$. $(N,B,M,L)=(4254,800,400,16)$, $\theta=0.4$.}
      \label{figure:correction_nu_max_hist}
\end{figure}

\appendix

\section{Appendix}

\subsection{Proof of Lemma \ref{lemma:brillinger_uniformity_covariance}}
Lemma \ref{lemma:brillinger_uniformity_covariance} is a slight variation of Theorem 4.3.2 \cite{brillinger1981time}.

\begin{lemma}
\label{lemma:brillinger_uniformity_covariance}
For any $\nu_1$ and $\nu_2$ in $[0,1]$, such that there exists $k\in \{0, 1, \ldots, N-1 \}$ satisfying $\nu_2-\nu_1=k/N$, the following bound holds:
\begin{equation}
\label{eq:uniformite-brillinger-covariance}
    \sup_{m\ge1}\left|\Ebb\left[\xi_{y_m}(\nu_1)\xi_{y_m}(\nu_2)^*\right]-s_m(\nu_1)\delta_{\nu_1=\nu_2}\right|=\Ocal\left(\frac{1}{N}\right).
\end{equation}
\end{lemma}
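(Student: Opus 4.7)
The plan is to expand the expectation directly, reduce it to a sum over the autocovariance sequence, and exploit Assumption \ref{assumption:regularity} to obtain the uniformity in $m$. Writing
\[
\Ebb\left[\xi_{y_m}(\nu_1)\xi_{y_m}(\nu_2)^*\right] = \frac{1}{N}\sum_{n_1,n_2=1}^N r_{m,n_1-n_2}\, e^{-2i\pi(n_1-1)\nu_1} e^{2i\pi(n_2-1)\nu_2},
\]
and performing the change of variables $u = n_1 - n_2$, I would rewrite this as
\[
\Ebb\left[\xi_{y_m}(\nu_1)\xi_{y_m}(\nu_2)^*\right] = \frac{1}{N}\sum_{|u|\le N-1} r_{m,u}\, e^{-2i\pi u\nu_1} \sum_{n_2 \in I_u} e^{-2i\pi(n_2-1)(\nu_1-\nu_2)},
\]
where $I_u = \{\max(1,1-u),\ldots,\min(N,N-u)\}$ has cardinality $N-|u|$.

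In the diagonal case $\nu_1 = \nu_2$, the inner sum is simply $N-|u|$, so the expectation equals $\sum_{|u|\le N-1} r_{m,u} e^{-2i\pi u\nu_1} - \frac{1}{N}\sum_{|u|\le N-1}|u|\, r_{m,u}\, e^{-2i\pi u\nu_1}$. Subtracting $s_m(\nu_1) = \sum_u r_{m,u} e^{-2i\pi u \nu_1}$ leaves the tail $-\sum_{|u|\ge N} r_{m,u} e^{-2i\pi u\nu_1}$ and the $\frac{1}{N}$-rescaled sum above. Both are bounded by $\frac{C}{N}\sup_{m\ge 1}\sum_u (1+|u|)|r_{m,u}|$, which is finite and independent of $m$ by \eqref{eq:condition-R} (using $\gamma_0 \ge 3 \ge 1$).

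In the off-diagonal case $\nu_2-\nu_1 = k/N$ with $k\in\{1,\ldots,N-1\}$, the key observation is that the full sum $\sum_{n_2=1}^{N} e^{2i\pi(n_2-1)k/N}$ vanishes, so the inner sum over $I_u$ equals minus the sum over the complementary $|u|$ indices, and its modulus is therefore bounded by $|u|$. This yields
\[
\bigl|\Ebb[\xi_{y_m}(\nu_1)\xi_{y_m}(\nu_2)^*]\bigr| \le \frac{1}{N}\sum_{|u|\le N-1} |u|\,|r_{m,u}| \le \frac{C}{N}
\]
uniformly in $m$, again by \eqref{eq:condition-R}. The diagonal and off-diagonal bounds combine into \eqref{eq:uniformite-brillinger-covariance}.

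There is no real obstacle here; the only delicate point is keeping the constants uniform in $m$, which is handled once and for all by invoking Assumption \ref{assumption:regularity}, so the argument never touches individual spectral densities.
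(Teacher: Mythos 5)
Your proposal is correct and follows essentially the same route as the paper: the change of variables $u=n_1-n_2$, the observation that the full geometric sum over $n_2$ vanishes when $\nu_2-\nu_1=k/N$ with $k\neq 0$ (so the partial sum over $I_u$ equals minus the complementary sum of at most $|u|$ unimodular terms), and the uniform bound via $\sup_{m}\sum_u(1+|u|)|r_{m,u}|<\infty$ from Assumption \ref{assumption:regularity}. The paper's proof separates the $u=0$, $u>0$ and $u<0$ contributions explicitly whereas you treat all $u$ at once through the index set $I_u$; this is only a presentational difference and the mathematical content is identical.
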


\begin{proof}
\begin{align*}
    &\Ebb\left[\xi_{y_m}(\nu_1)\xi_{y_m}(\nu_2)^*\right] \\
    &\hskip2cm=\frac{1}{N}\sum_{n_1,n_2=1}^N\Ebb[y_{m,n_1}y_{m,n_2}^*]e^{-2i\pi(n_1-1)\nu_1}e^{2i\pi(n_2-1)\nu_2}\\
    &\hskip2cm=\frac{1}{N}\sum_{n_1,n_2=1}^N r_{m,n_1-n_2}e^{-2i\pi(n_1-1)\nu_1+2i\pi(n_2-1)\nu_2}\\
    &\hskip2cm=\frac{1}{N}\sum_{u=-(N-1), n_1,n_2\in{0,\ldots,N-1}}^{(N-1)} r_{m,u}\sum_{n_1-n_2=u}e^{-2i\pi n_1\nu_1+2i\pi n_2\nu_2}
\end{align*}

Splitting this expression for $u=0, u>0$ and $u<0$ provides

\begin{multline}
\label{eq:split-cov-u}
    \Ebb[\xi_{y_m}(\nu_1)\xi_{y_m}(\nu_2)^*]=\frac{1}{N} r_{m,0}\sum_{n_1=0}^{N-1}e^{-2i\pi n_1(\nu_2-\nu_1)} \\ +\frac{1}{N}\sum_{u=1}^{(N-1)} r_{m,u}\sum_{n_2=0}^{N-1-u}e^{-2i\pi(u+n_2)\nu_1}e^{2i\pi n_2\nu_2}\\
    +\frac{1}{N}\sum_{u=-(N-1)}^{-1} r_{m,u}\sum_{n_2=-u}^{N-1}e^{-2i\pi(u+n_2)\nu_1}e^{2i\pi n_2\nu_2}.
\end{multline}

The first term of the right hand side of \eqref{eq:split-cov-u} can be computed in the case $\nu_1=\nu_2$:
$$ \frac{1}{N} r_{m,0}\sum_{n_1=0}^{N-1}e^{-2i\pi n_1(\nu_2-\nu_1)} = r_{m,0} $$
and in the case $\nu_1\neq\nu_2$, 
$$ \frac{1}{N}r_{m,0}\sum_{n_1=0}^{N-1}e^{-2i\pi n_1\frac{k}{N}}=0. $$
Therefore, the first term of the right hand side of \eqref{eq:split-cov-u} is equal 
to $r_{m,0}\delta_{\nu_1=\nu_2}$.

Consider now the second term of \eqref{eq:split-cov-u} (where $u>0$):
\begin{align}
\label{equation:esperance_covariance_xi_u_negatif}
\notag
    &\frac{1}{N}\sum_{u=1}^{N-1}r_{m,u}\sum_{n_2=0}^{N-1-u}e^{-2i\pi(u+n_2)\nu_1}e^{2i\pi n_2\nu_2} \\
    &\hskip3cm= \frac{1}{N}\sum_{u=1}^{N-1}r_{m,u}e^{-2i\pi u\nu_1}\sum_{n_2=0}^{N-1-u}e^{-2i\pi n_2(\nu_2-\nu_1)}.
\end{align}

The right hand side of \eqref{equation:esperance_covariance_xi_u_negatif} can also be explicitly written in the case $\nu_1=\nu_2$ :
\begin{align*}
    &\frac{1}{N}\sum_{u=1}^{N-1}r_{m,u}e^{-2i\pi u\nu_1}\sum_{n_2=0}^{N-1-u}e^{-2i\pi n_2(\nu_2-\nu_1)} \\
    &\hskip5cm= \frac{1}{N}\sum_{u=1}^{N-1}r_{m,u}e^{-2i\pi u\nu_1}(N-u) \\
    &\hskip5cm= \sum_{u=1}^{N-1}r_{m,u}e^{-2i\pi u\nu_1}\frac{N-u}{N} \\ 
    &\hskip5cm= \sum_{u=1}^{N-1}r_{m,u}e^{-2i\pi u\nu_1} -\frac{1}{N} \sum_{u=1}^{N-1}u\, r_{m,u}e^{2i\pi u\nu_1}.
\end{align*}
By Assumption \ref{assumption:regularity}, $\sup_{m\ge1}\sum_{u\in\Zbb}|u||r_{m,u}|<+\infty$, so we have:
$$\sup_{m\ge1}\frac{1}{N} \left| \sum_{u=1}^{N-1}u\, r_{m,u}e^{2i\pi u\nu_1} \right|= \Ocal\left(\frac{1}{N}\right).$$
Therefore:
\begin{align}
    \label{eq:u_positif_1}
    \sup_{m\ge1}\left|\frac{1}{N}\sum_{u=1}^{N-1}r_{m,u}e^{-2i\pi u\nu_1}\sum_{n_2=0}^{N-1-u}e^{-2i\pi n_2 (\nu_2-\nu_1)}-\sum_{u=1}^{N-1}r_{m,u}e^{-2i\pi u\nu_1}\right|=\Ocal\left(\frac{1}{N}\right).
\end{align}

In the case where $\nu_1\neq\nu_2$, note that $\nu_1-\nu_2=k/N$ with $k\neq0$, therefore:
\begin{equation}
\label{eq:fourier_transform_0}
    \sum_{n_2=0}^{N-1}e^{-2i\pi n_2(\nu_2-\nu_1)} = \sum_{n_2=0}^{N-1}e^{-2i\pi n_2\frac{k}{N}} = 0.
\end{equation}
Using \eqref{eq:fourier_transform_0}, one can rewrite the right hand side of \eqref{equation:esperance_covariance_xi_u_negatif} as
\begin{align*}
    &\left|\frac{1}{N}\sum_{u=1}^{N-1}r_{m,u}e^{-2i\pi u\nu_1}\sum_{n_2=0}^{N-1-u}e^{-2i\pi n_2(\nu_2-\nu_1)}\right| \\
    &\hskip4cm= \left|-\frac{1}{N}\sum_{u=1}^{N-1}r_{m,u}e^{-2i\pi u\nu_1}\sum_{n_2=N-u}^{N}e^{-2i\pi n_2(\nu_2-\nu_1)}\right| \\
    &\hskip4cm\le \frac{1}{N}\sum_{u=1}^{N-1}|u||r_{m,u}|
\end{align*}
which, again by Assumption \ref{assumption:regularity}, provides the bound:
\begin{align}
\label{eq:u_positif_2}
    \sup_{m\ge1}\left|\frac{1}{N}\sum_{u=1}^{N-1}r_{m,u}e^{-2i\pi u\nu_1}\sum_{n_2=0}^{N-1-u}e^{-2i\pi n_2(\nu_2-\nu_1)}\right| = \Ocal\left(\frac{1}{N}\right).
\end{align}

Combining \eqref{eq:u_positif_1} and \eqref{eq:u_positif_2}, the second term of the right hand side of \eqref{eq:split-cov-u} can be estimated as follow: 
\begin{multline*}
    \sup_{m\ge1}\left|\frac{1}{N}\sum_{u=1}^{(N-1)}r_{m,u}e^{-2i\pi u\nu_1}\sum_{n_2=0}^{N-1-u}e^{-2i\pi n_2(\nu_2-\nu_1)}-\delta_{\nu_1=\nu_2}\sum_{u=1}^{N-1}r_{m,u}e^{-2i\pi u\nu_1}\right|\\
    =\Ocal\left(\frac{1}{N}\right).
\end{multline*}

The term for $u<0$ in equation \eqref{eq:split-cov-u} is similar. Gathering the three terms of equation \eqref{eq:split-cov-u} leads to 
\begin{equation}
    \label{eq:reste-brillinger}
    \sup_{m\ge1}\left|\Ebb[\xi_{y_m}(\nu_1)\xi_{y_m}(\nu_2)^*] - \delta_{\nu_1=\nu_2}\left(\sum_{u=-(N-1)}^{N-1}r_{m,u}e^{-2i\pi u\nu_1}\right)\right| = \Ocal\left(\frac{1}{N}\right).
\end{equation}

Finally, using again Assumption \ref{assumption:regularity} we have:
$$\left| \sum_{|u|>N}r_m(u)e^{-2i\pi u \nu_1} \right| \le \frac{1}{N}\sum_{|u|>N}|u||r_m(u)|=\Ocal\left(\frac{1}{N}\right).$$

Inserting this into equation \eqref{eq:reste-brillinger}, we obtain equation \eqref{eq:uniformite-brillinger-covariance}
\end{proof}

\subsection{Proof of Lemma \ref{lemma:localization_s_m}}
\label{appendix:concentration_sm}

\begin{proof}
Consider the complement of the event $\Lambda_\epsilon^{\hat{\D}}(\nu)$ and notice that:
\begin{equation}
\label{equation:decomposition_Lambda_D_hat}
    \Lambda_\epsilon^{\hat{\D}}(\nu)^c \subset \{\exists m\in\{1,\ldots,M\}: \hat{s}_m>\bar{s}+\epsilon\} \cup \{\exists m\in\{1,\ldots,M\}: \hat{s}_m<\barbelow{s}-\epsilon\}.
\end{equation}

We start by proving that the first set of the right hand side of  \eqref{equation:decomposition_Lambda_D_hat} holds with exponentially small probability, ie. for any $\epsilon>0$, there exists $\gamma>0$ such that:
$$\Prob\left[\exists m\in\{1,\ldots,M\}: \hat{s}_m>\bar{s}+\epsilon\right]\le\exp-N^{\gamma}.$$

By Lemma \ref{lemma:biais_uniformity} (see below), $|\Ebb\hat{s}_m-s_m|=\Ocal(B^2/N^2)$ so for $N$ large enough, this bias term will be smaller than $\epsilon/2$. Moreover, for any $m\in\{1,\ldots,M\}$, $s_m-\bar{s}\leq0$. Therefore, one can write for large enough $N$:
\begin{align*}
    &\Prob\left[\exists m\in\{1,\ldots,M\}: \hat{s}_m>\bar{s}+\epsilon\right] \\
    &\hskip3cm= \Prob\left[\sup_{m\in\{1,\ldots,M\}}(\hat{s}_m-\Ebb\hat{s}_m+\Ebb\hat{s}_m-s_m+s_m-\bar{s})>\epsilon\right] \\
    &\hskip3cm\le\Prob\left[\sup_{m\in\{1,\ldots,M\}}|\hat{s}_m-\Ebb\hat{s}_m|>\epsilon/2\right]
\end{align*}
which holds with exponentially high probability by Lemma \ref{lemma:concentration_shat_esperance} (see below). The proof for the lower bound is similar.
\end{proof}

It remains to prove Lemma \ref{lemma:biais_uniformity} and Lemma \ref{lemma:concentration_shat_esperance}. Concerning the proof of Lemma \ref{lemma:biais_uniformity}, we follow the same approach as the one used in Theorem 5.4.2 in \cite{brillinger1981time}.

\begin{lemma}
\label{lemma:biais_uniformity}
For any $\nu\in[0, 1]$, the following results hold:
\begin{equation}
    \label{eq:expre-biais-hatsm}
    \mathbb{E}(\hat{s}_m(\nu)) - s_m(\nu) = \frac{s_m''(\nu)}{2} \, v_N + \Ocal\left( \left( \frac{B}{N} \right)^3 + \frac{1}{N} \right)
\end{equation}
and 
\begin{equation}
\label{equation:biais_uniformity}
    \sup_{m=1, \ldots, M}|\Ebb\hat{s}_m(\nu) - s_m(\nu)| = \Ocal\left(\frac{B}{N}\right)^2.
\end{equation}
\end{lemma}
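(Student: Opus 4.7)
The plan is to start from the very definition $\hat{s}_m(\nu)=\hat{\S}_{m,m}(\nu)=\frac{1}{B+1}\sum_{b=-B/2}^{B/2}|\xi_{y_m}(\nu+b/N)|^{2}$, take expectations term by term, and combine two ingredients already available: the diagonal estimate from Lemma~\ref{lemma:brillinger_uniformity_covariance} and a Taylor expansion of the spectral density controlled uniformly in $m$ by Assumption~\ref{assumption:regularity}.

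First I would apply Lemma~\ref{lemma:brillinger_uniformity_covariance} with $\nu_1=\nu_2=\nu+b/N$ to write, uniformly in $m$ and $b$,
\[
\Ebb\bigl[|\xi_{y_m}(\nu+b/N)|^{2}\bigr] = s_m(\nu+b/N) + \Ocal(1/N),
\]
so that, averaging over $b$,
\[
\Ebb\hat{s}_m(\nu) = \frac{1}{B+1}\sum_{b=-B/2}^{B/2} s_m(\nu+b/N) + \Ocal(1/N),
\]
with a constant that does not depend on $m$ or $\nu$.

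Next I would Taylor expand $s_m$ to third order around $\nu$. Since (\ref{eq:condition-R}) with $\gamma_0\ge 3$ forces $\sup_{m\ge 1}\sup_\nu |s_m^{(3)}(\nu)|<+\infty$ via (\ref{eq:derivative-spectral-densities}), one obtains, uniformly in $m$ and $b\in\{-B/2,\dots,B/2\}$,
\[
s_m(\nu+b/N) = s_m(\nu) + s_m'(\nu)\frac{b}{N} + \frac{s_m''(\nu)}{2}\Bigl(\frac{b}{N}\Bigr)^{2} + \Ocal\bigl((B/N)^{3}\bigr).
\]
Summing over $b$, the linear contribution vanishes by symmetry ($\sum_{b=-B/2}^{B/2} b = 0$), the quadratic contribution is exactly $\frac{s_m''(\nu)}{2}v_N$ by the definition (\ref{eq:def-vN}) of $v_N$, and the cubic remainder remains $\Ocal((B/N)^{3})$ uniformly in $m$.

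Putting these two steps together yields (\ref{eq:expre-biais-hatsm}) with a uniform $\Ocal$ constant. The second claim (\ref{equation:biais_uniformity}) is then immediate: $v_N=\Ocal((B/N)^{2})$, $\sup_{m}|s_m''(\nu)|$ is a nice constant by Assumption~\ref{assumption:regularity}, and $\frac{1}{N}+(B/N)^{3}=\Ocal((B/N)^{2})$ under Assumption~\ref{assumption:rate_NBM} (which gives $\alpha>1/2$, hence $B/N\gg 1/N$ and $(B/N)^{3}\le(B/N)^{2}$). There is no real obstacle here; the only subtlety is to keep track of the fact that every constant produced along the way is a nice constant in the sense of Paragraph~\ref{sec:assumptions}, which follows directly from the uniform bounds (\ref{eq:condition-R})–(\ref{eq:derivative-spectral-densities}) and from the uniformity built into Lemma~\ref{lemma:brillinger_uniformity_covariance}.
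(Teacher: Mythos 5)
Your proof is correct and takes essentially the same route as the paper: both rest on the Brillinger-type diagonal covariance estimate (Lemma~\ref{lemma:brillinger_uniformity_covariance}) together with a uniform third-order Taylor expansion of $s_m$ around $\nu$, after which the linear term vanishes by symmetry of the grid $\{b/N\}$, the quadratic term produces $\frac{s_m''(\nu)}{2}v_N$, and the remainders are $\Ocal((B/N)^3 + 1/N)$. The paper merely packages these same two ingredients inside the $\Phibs_m$ matrix of Proposition~\ref{proposition:X_Gamma} and reads the answer off \eqref{eq:expre-Trace-Phim}, whereas you carry them out directly on $\Ebb\hat s_m$; the computation is the same.
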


\begin{proof}
It is clear that $\hat{s}_m(\nu) = \hat{\S}_{m,m}(\nu) = s_m(\nu) \tilde{\C}_{m,m}(\nu)$ can be written as 
\begin{equation}
    \label{eq:expre-hatsm}
  \hat{s}_m(\nu) = s_m(\nu) \frac{x_m (\I + \Phibs_m) x_m^*}{B+1}  .
\end{equation}
Therefore, $\mathbb{E}(\hat{s}_m(\nu)) = s_m(\nu) (1 + \frac{1}{B+1} \Tr \Phibs_m)$. (\ref{eq:expre-biais-hatsm}) thus follows immediately from (\ref{eq:expre-Trace-Phim}). (\ref{equation:biais_uniformity}) is an immediate consequence of (\ref{eq:expre-biais-hatsm}). 
\end{proof}

\begin{lemma}
\label{lemma:concentration_shat_esperance}
The family of random variables $\sup_{m=1, \ldots, M}|\hat{s}_m(\nu)-\Ebb[\hat{s}_m(\nu)]|, \nu \in [0,1]$ satisfies
\begin{equation}
\label{equation:concentration_shat_esperance}
    \sup_{m=1, \ldots, M}|\hat{s}_m-\Ebb[\hat{s}_m]| \prec \frac{1}{\sqrt{B}}.
\end{equation}
\end{lemma}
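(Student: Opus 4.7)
The plan is to express $\hat{s}_m(\nu)$ as a quadratic form in a standard complex Gaussian vector, control its fluctuations around its expectation via the Hanson-Wright inequality, and then promote the pointwise-in-$m$ bound to a supremum over $m$ via the union bound, exploiting the polynomial-in-$N$ size of $M$.

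First, I will invoke the representation \eqref{eq:expre-hatsm} from Lemma \ref{lemma:biais_uniformity}, namely
\[
\hat{s}_m(\nu) = s_m(\nu)\,\frac{\x_m(\I + \Phibs_m)\x_m^*}{B+1},
\]
where $\x_m \sim \mathcal{N}_\Cbb(0,\I_{B+1})$ and $\Phibs_m$ is the matrix introduced in the proof of Proposition \ref{proposition:X_Gamma}. Setting $\A_m(\nu) = \frac{s_m(\nu)}{B+1}(\I + \Phibs_m)$, one has $\hat{s}_m(\nu) - \Ebb\hat{s}_m(\nu) = \x_m\A_m(\nu)\x_m^* - \Ebb[\x_m\A_m(\nu)\x_m^*]$. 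By Assumption \ref{assumption:regularity} the factor $s_m(\nu)$ is uniformly bounded, and from \eqref{eq:controle-norm-Psi} together with $(\I + \Phibs_m)^{1/2} = \I + \Psibs_m$ one immediately obtains $\sup_m \|\I + \Phibs_m\| \le C$ for a nice constant $C$. Therefore the Frobenius norm satisfies $\|\A_m(\nu)\|_F \le \tfrac{C}{B+1}\sqrt{B+1} = O(1/\sqrt{B})$, uniformly in $m$ and $\nu$.

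The Hanson-Wright inequality in its stochastic-domination form \eqref{equation:hanson_wright_stochastic_domination}, applied to the family indexed by $u = (m,\nu)$, then yields
\[
|\hat{s}_m(\nu) - \Ebb\hat{s}_m(\nu)| \,\prec\, \|\A_m(\nu)\|_F \,\prec\, \frac{1}{\sqrt{B}}.
\]
Concretely, this means: for every $\epsilon > 0$ there exists $\gamma > 0$ such that, for every $m \in \{1,\ldots,M\}$ and every $\nu \in [0,1]$,
\[
\Prob\!\left[|\hat{s}_m(\nu) - \Ebb\hat{s}_m(\nu)| > N^{\epsilon/2}\,\frac{1}{\sqrt{B}}\right] \le \exp\bigl(-N^{\gamma}\bigr)
\]
for $N$ large enough, with constants independent of $(m,\nu)$.

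Finally, to pass from a single $m$ to the supremum, I apply the union bound over $m \in \{1,\ldots,M\}$. Since $M = O(N^\alpha)$ by Assumption \ref{assumption:rate_NBM},
\[
\Prob\!\left[\sup_{m=1,\ldots,M}|\hat{s}_m(\nu) - \Ebb\hat{s}_m(\nu)| > N^{\epsilon/2}\,\frac{1}{\sqrt{B}}\right] \le M\exp(-N^{\gamma}) \le C N^{\alpha}\exp(-N^{\gamma}),
\]
and the right-hand side is bounded by $\exp(-N^{\gamma'})$ for any $\gamma' < \gamma$ once $N$ is large enough, uniformly in $\nu \in [0,1]$. Since $\epsilon > 0$ was arbitrary, this is exactly \eqref{equation:concentration_shat_esperance}. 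There is no genuinely hard step here: the only point requiring care is the uniform control of $\|\A_m(\nu)\|_F$ in $(m,\nu)$, which however follows directly from the bounds on $s_m$ and $\Phibs_m$ already established in Section \ref{sec:representation-tilde-hat}.
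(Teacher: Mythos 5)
Your proposal is correct and follows essentially the same route as the paper: the representation $\hat{s}_m = s_m\,\x_m(\I+\Phibs_m)\x_m^*/(B+1)$, uniform control of the Frobenius norm of $(\I+\Phibs_m)/(B+1)$, the Hanson--Wright inequality in the form \eqref{equation:hanson_wright_stochastic_domination}, and a union bound over the polynomially many $m$. One small point in your favour: you compute the Frobenius norm bound correctly as $\Ocal(1/\sqrt{B})$ (driven by $\|\I_{B+1}\|_F/(B+1) = 1/\sqrt{B+1}$), whereas the paper's stated bound $C/B$ in its proof of this lemma appears to be a typo inconsistent with the claimed rate.
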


\begin{proof}
(\ref{eq:expre-hatsm}) implies that $\hat{s}_m-\Ebb[\hat{s}_m]$ can be written as
$\hat{s}_m-\Ebb[\hat{s}_m] = s_m \left( \frac{x_m(\I + \Phibs_m)x_m^*}{B+1} - \frac{1}{B+1} \mathrm{Tr}(\I + \Phibs_m)\right)$. It is clear that $\sup_{m} \| \frac{(\I + \Phibs_m)}{B+1} \|_F \leq \frac{C}{B}$ 
for some nice constant $C$. Therefore, (\ref{equation:concentration_shat_esperance}) leads 
immediately to (\ref{equation:hanson_wright_stochastic_domination}) . 
\end{proof}

\subsection{Proof of Lemma \ref{lemma:concentration_ratio_s_shat}}
\label{appendix:various}

\begin{proof}
These estimates can be proved in a compact way by using the calculus rules available in the stochastic domination framework introduced in Definition \ref{definition:stochastic_domination} and proved in Lemma \ref{lemma:algebra_domination}. Using Lemma \ref{lemma:concentration_sm_inf_sup} and Lemma \ref{lemma:hat_s_s_domination_stochastique} (see below): 
\begin{align*}
    \left| \frac{1}{\sqrt{\hat{s}_m}} - \frac{1}{\sqrt{s_m}} \right| &= \left| \frac{\sqrt{s_m}-\sqrt{\hat{s}_m}}{\sqrt{s_m}\sqrt{\hat{s}_m}}\right| \\
    &\le \underbrace{\left|\sqrt{s_m}-\sqrt{\hat{s}_m}\right|}_{\Ocal_\prec(\frac{1}{\sqrt{B}}+\frac{B^2}{N^2})} \times \underbrace{\left| \sqrt{\frac{1}{s_m}}\right|}_{\Ocal_\prec(1)} \times  \underbrace{\left| \sqrt{\frac{1}{\hat{s}_m}}\right|}_{\Ocal_\prec(1)} \\
    &\prec \frac{1}{\sqrt{B}} + \frac{B^2}{N^2}.
\end{align*}

The second inequality is similar to prove:
\begin{align*}
    \left|\sqrt{\frac{s_m}{\hat{s}_m}}-1\right| &= \left|\frac{\sqrt{s_m}-\sqrt{\hat{s}_m}}{\sqrt{\hat{s}_m}}\right| \\
    &\le  \underbrace{\left|s_m-\hat{s}_m\right|}_{\Ocal_\prec(\frac{1}{\sqrt{B}}+\frac{B^2}{N^2})} \times \underbrace{\left|\frac{1}{\hat{s}_m(\sqrt{s_m+\hat{s}_m})}\right|}_{\Ocal_\prec(1)} \\
    &\prec \frac{1}{\sqrt{B}} + \frac{B^2}{N^2}.
\end{align*}

\end{proof}

\begin{lemma}
\label{lemma:hat_s_s_domination_stochastique}
The family of random variables $(\sup_{m=1, \ldots, M} |\hat{s}_m(\nu)-s_m(\nu)|)$, $\nu\in[0, 1]$ satisfies
$$ \sup_{m=1, \ldots, M} |\hat{s}_m-s_m|\prec \frac{1}{\sqrt{B}}+\frac{B^2}{N^2}. $$
\end{lemma}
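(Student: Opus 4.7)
The plan is to split the deviation of $\hat{s}_m(\nu)$ from $s_m(\nu)$ into a fluctuation term plus a bias term:
$$\hat{s}_m(\nu) - s_m(\nu) = \bigl(\hat{s}_m(\nu) - \Ebb\hat{s}_m(\nu)\bigr) + \bigl(\Ebb\hat{s}_m(\nu) - s_m(\nu)\bigr),$$
and to control each piece uniformly in $(m,\nu)$ using the two immediately preceding lemmas. The bias is deterministic, and Lemma \ref{lemma:biais_uniformity} already delivers $\sup_{\nu}\sup_{m} |\Ebb\hat{s}_m(\nu) - s_m(\nu)| = \Ocal((B/N)^2)$, which accounts for the $B^2/N^2$ summand in the target bound.

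For the fluctuation, I would invoke Lemma \ref{lemma:concentration_shat_esperance}. Its proof rests on the representation $\hat{s}_m = s_m\, \x_m(\I + \Phibs_m)\x_m^*/(B+1)$ from \eqref{eq:expre-hatsm} together with the Hanson-Wright inequality \eqref{equation:hanson_wright_stochastic_domination}, the Frobenius bound $\sup_m \|(\I+\Phibs_m)/(B+1)\|_F \leq C/B$, and the fact that $\sup_m |s_m| \leq \bar{s}$. The output is the per-$(m,\nu)$ stochastic domination $|\hat{s}_m(\nu) - \Ebb\hat{s}_m(\nu)| \prec 1/\sqrt{B}$, with the exponential tail constants independent of $m$ and $\nu$ thanks to Assumption \ref{assumption:regularity} and to the uniform control of $\Phibs_m$ established in the proof of Proposition \ref{proposition:X_Gamma}.

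To promote the per-$m$ domination to a supremum over $m$, I would apply a union bound: for every $\epsilon > 0$ there exists $\gamma > 0$ such that, for all $N$ large enough and all $\nu \in [0,1]$,
$$\Pbb\!\left[\sup_{1 \leq m \leq M} |\hat{s}_m(\nu) - \Ebb\hat{s}_m(\nu)| > N^\epsilon/\sqrt{B}\right] \leq M\, \exp(-N^\gamma).$$
Under Assumption \ref{assumption:rate_NBM}, $M = \Ocal(N^\alpha)$ with $\alpha < 1$, so $M\exp(-N^\gamma) \leq \exp(-N^{\gamma'})$ for any $\gamma' \in (0,\gamma)$ once $N$ is large. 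This estimate is uniform in $\nu$ because the Hanson-Wright constants produced in the proof of Lemma \ref{lemma:concentration_shat_esperance} do not depend on $\nu$. Combining with the deterministic bias bound yields the desired stochastic domination by $\frac{1}{\sqrt{B}} + \frac{B^2}{N^2}$.

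There is no substantial obstacle here; the essential analytic work has already been carried out in Lemmas \ref{lemma:biais_uniformity} and \ref{lemma:concentration_shat_esperance}. The only care needed is to verify that the exponential tail constants are uniform in $(m,\nu)$, which is guaranteed by the nice-constant structure of the arguments used there, after which the union bound together with the polynomial growth $M = \Ocal(N^\alpha)$ immediately produces the uniform domination.
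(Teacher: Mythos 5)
Your proof is correct and follows essentially the same route as the paper: the bias-plus-fluctuation decomposition of $\hat{s}_m - s_m$, controlled via Lemma \ref{lemma:biais_uniformity} and Lemma \ref{lemma:concentration_shat_esperance} respectively. The only cosmetic difference is that you carry out the union bound over $m$ explicitly, whereas the paper has already absorbed the supremum over $m$ into the statement of Lemma \ref{lemma:concentration_shat_esperance}; either way the polynomial bound $M = \Ocal(N^\alpha)$ with $\alpha < 1$ makes this step immediate.
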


\begin{proof}
It is sufficient to check that the family of random variables $(|\hat{s}_m-s_m|)_{m=1, \ldots, M}, \nu \in [0,1]$ satisfies $|\hat{s}_m-s_m| \prec \frac{1}{\sqrt{B}}+\frac{B^2}{N^2}$. Using Lemma \ref{lemma:biais_uniformity} and Lemma \ref{lemma:concentration_shat_esperance}, 
we obtain as expected that
$$|\hat{s}_m-s_m|=|s_m-\Ebb\hat{s}_m+\Ebb\hat{s}_m-\hat{s}_m|  \le \underbrace{|s_m-\Ebb\hat{s}_m|}_{\Ocal(\frac{B^2}{N^2})} + \underbrace{|\Ebb\hat{s}_m-\hat{s}_m|}_{\Ocal_\prec(\frac{1}{\sqrt{B}})} \prec \frac{1}{\sqrt{B}}+\frac{B^2}{N^2}.$$
\end{proof}

\subsection{Proof of Lemma \ref{lemma:concentration_somme_carre_shat}}

\begin{lemma}
\label{lemma:concentration_somme_carre_shat}
The set of random variable $(\sum_{m=1}^M|\hat{s}_m(\nu)-s_m(\nu)|^2)$, $\nu\in[0, 1]$ satisfies
$$ \sum_{m=1}^M|\hat{s}_m-s_m|^2 \prec 1+\frac{B^5}{N^4} .$$
\end{lemma}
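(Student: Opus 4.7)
The plan is to split each term into a variance part and a bias part via
$\hat{s}_m - s_m = (\hat{s}_m - \mathbb{E}\hat{s}_m) + (\mathbb{E}\hat{s}_m - s_m)$, so that by the elementary inequality $|a+b|^{2}\le 2|a|^{2}+2|b|^{2}$ it suffices to control the two sums
\[
S_1 := \sum_{m=1}^{M} |\hat{s}_m - \mathbb{E}\hat{s}_m|^{2}, \qquad S_2 := \sum_{m=1}^{M} |\mathbb{E}\hat{s}_m - s_m|^{2}
\]
separately and show $S_1 \prec 1$ and $S_2 = \mathcal{O}(B^{5}/N^{4})$.

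For $S_2$, Lemma \ref{lemma:biais_uniformity} gives $\sup_{m}|\mathbb{E}\hat{s}_m(\nu)-s_m(\nu)| = \mathcal{O}((B/N)^{2})$ uniformly in $\nu$, hence $S_2 \le M \cdot C(B/N)^{4}$; since $M = \mathcal{O}(B)$ by Assumption \ref{assumption:rate_NBM}, this is $\mathcal{O}(B^{5}/N^{4})$, a nice-constant deterministic bound. This step is straightforward.

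For $S_1$, Lemma \ref{lemma:concentration_shat_esperance} asserts that the family $|\hat{s}_m(\nu)-\mathbb{E}\hat{s}_m(\nu)|$, indexed by $m\in\{1,\ldots,M\}$ and $\nu\in[0,1]$, satisfies $|\hat{s}_m - \mathbb{E}\hat{s}_m| \prec 1/\sqrt{B}$. The index set is polynomial in $N$ (of size $M = \mathcal{O}(N^{\alpha})$), so by the union bound embedded in Definition \ref{definition:stochastic_domination} we obtain the uniform statement $\max_{m=1,\ldots,M}|\hat{s}_m - \mathbb{E}\hat{s}_m| \prec 1/\sqrt{B}$. Consequently
\[
S_1 \le M \cdot \max_{m=1,\ldots,M} |\hat{s}_m - \mathbb{E}\hat{s}_m|^{2} \prec \frac{M}{B} = \mathcal{O}(1),
\]
where the last estimate uses Assumption \ref{assumption:rate_NBM}. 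Combining the bounds on $S_1$ and $S_2$ through Lemma \ref{lemma:algebra_domination} yields $\sum_{m=1}^{M}|\hat{s}_m - s_m|^{2} \prec 1 + B^{5}/N^{4}$, as required.

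The only subtle point is the passage from the pointwise-in-$m$ stochastic domination of Lemma \ref{lemma:concentration_shat_esperance} to a uniform-in-$m$ version, which I expect to be the main (though mild) obstacle. It is resolved exactly as in the remarks following Definition \ref{definition:stochastic_domination}: since Lemma \ref{lemma:concentration_shat_esperance} provides an exponential-type tail bound $\exp(-N^{\gamma})$ for each $m$ with $\gamma$ independent of $m$, a union bound over the at-most-polynomial family $m\in\{1,\ldots,M\}$ gives the same tail rate $\exp(-N^{\gamma'})$ for $\gamma' < \gamma$, uniformly over $\nu\in[0,1]$. The rest of the argument is a straightforward $\max\le M\cdot$ crude bound; no finer concentration (e.g.\ a direct Hanson--Wright application to the sum $S_1$) is needed to reach the stated rate.
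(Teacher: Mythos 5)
Your proof is correct and takes essentially the same approach as the paper: decompose $\hat{s}_m - s_m$ into variance and bias parts (exactly Lemma \ref{lemma:hat_s_s_domination_stochastique}), bound the bias sum deterministically via Lemma \ref{lemma:biais_uniformity} and the variance sum via Lemma \ref{lemma:concentration_shat_esperance} plus a crude $M\cdot\max_m$ bound. The paper packages this by first proving $\sup_m|\hat{s}_m-s_m|\prec \frac{1}{\sqrt{B}}+\frac{B^2}{N^2}$, then squaring and summing; your explicit use of $|a+b|^2\le 2|a|^2+2|b|^2$ and the spelled-out union bound over $m\in\{1,\dots,M\}$ are just slightly more verbose renderings of the same argument.
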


\begin{proof}
Using Lemma \ref{lemma:hat_s_s_domination_stochastique}, we have
$$ |\hat{s}_m-s_m|^2 \prec \frac{1}{B} + \frac{B^4}{N^4} $$
and summing over $m=1\ldots M$, one immediately get:
$$ \sum_{m=1}^M|\hat{s}_m-s_m|^2 \prec 1+\frac{B^5}{N^4}. $$
\

\end{proof}

\subsection{Proof of Lemma \ref{le:EnormtildeDelta}}
We express $\tilde{\Deltabs}$ as $\tilde{\Deltabs} = \frac{\X \Gammabs^*}{B+1}
 + \frac{\Gammabs \X^*}{B+1} + \frac{\Gammabs \Gammabs^*}{B+1}$. Therefore, we have
 $$
\| \tilde{\Deltabs} \|^{k} \leq C \left( \left\| \frac{\X}{\sqrt{B+1}} \right\|^{k} \left\|\frac{\Gammabs}{\sqrt{B+1}} \right\|^{k} + 
 \left\|\frac{\Gammabs \Gammabs^*}{B+1} \right\|^{k} \right).
$$
Using the Schwartz inequality, we obtain that 
$$
\mathbb{E}\| \tilde{\Deltabs} \|^{k} \leq  C \left( \left(\mathbb{E} \left\| \frac{\X \X^*}{B+1} \right\|^{k} \right)^{1/2} \left( \mathbb{E} \left\|\frac{\Gammabs \Gammabs^*}{B+1} \right\|^{k} \right)^{1/2} + \mathbb{E}  \left\|\frac{\Gammabs \Gammabs^*}{B+1} \right\|^{k} \right) .
$$
It is well-known that $\mathbb{E}\left( \left\| \frac{\X \X^*}{B+1} \right\|^{k} \right) \leq C$  for some nice constant depending on $k$.  Therefore, we establish that 
$$
\mathbb{E}  \left\|\frac{\Gammabs \Gammabs^*}{B+1} \right\|^{k}  \leq  C \, \left(\frac{B}{N}\right)^{2k}
$$
a property which will imply that 
$\mathbb{E}\| \tilde{\Deltabs} \|^{k} \leq  C \left( \frac{B}{N}\right)^{k}$. 
For this, we put $\Z =  \frac{\Gammabs \Gammabs^*}{B+1}$. As (\ref{eq:esperance-Z}) holds, it remains to verify that $\mathbb{E}( \| \Z - \mathbb{E}(\Z)\|^{k}) = \Ocal\left(\frac{B}{N}\right)^{2k}$. For this, we use the concentration inequality  (\ref{eq:concentration-normZ-EZ}). We choose $t_N = w^{1/k} (\frac{B}{N})^{2}$, and obtain that 
\begin{equation}
    \label{eq:newconcentration-Z-EZ}
\Prob \left[ \frac{\|\Z - \mathbb{E}(\Z) \|}{(\frac{B}{N})^{2}} >  w^{1/k} \right] \leq 
 2 \, C_0 \exp-CB(w^{1/k} - w_0^{1/k})
\end{equation}
for some $w_0 > 0$. If we denote by $z_N$ the random variable $z_N =  \left( \frac{\|\Z - \mathbb{E}(\Z) \|}{(\frac{B}{N})^{2}}\right)^{k}$, we have to establish that $\mathbb{E}(z_N) = \Ocal(1)$. For this, we express 
$\mathbb{E}(z_N)$ as
$$
\mathbb{E}(z_N) = \int_{0}^{+\infty} \Prob(z_N > w) \, dw = \int_{0}^{w_0} \Prob(z_N > w) dw 
+ \int_{w_0}^{+\infty}  \Prob(z_N > w) \, dw.
$$
As $\Prob(z_N > w) = \Prob(z_N^{1/k}> w^{1/k})$, (\ref{eq:newconcentration-Z-EZ}) immediately implies that $\mathbb{E}(z_N) = \Ocal(1)$. 
\subsection{Proof of Lemma \ref{le:EwAWQ}}
We denote by $\eta_m$ the term of interest, i.e. $\eta_m = \mathbb{E}(\w_m \A \W^* \Q \e_m)$. It can be written as 
$$
\eta_m = \sum_{n_1} \left( \sum_{n_2,m'} \mathbb{E}(\W_{m,n_2} \overline{\W}_{m',n_1} \Q_{m',m}) \A_{n_2,n_1} \right).
$$
The integration by parts formula (\ref{eq:integration-by-part}) leads to 
\begin{multline*}
    \mathbb{E}(\W_{m,n_2} \overline{\W}_{m',n_1} \Q_{m',m}) = \\ \delta_{m-m'} \delta_{n_1-n_2} \frac{1}{B+1} \mathbb{E}(\Q_{m,m}) - \frac{1}{B+1} \mathbb{E} \left[ \overline{\W}_{m',n_1} (\Q \W)_{m',n_2} \Q_{m,m} \right].
\end{multline*}
Therefore, we obtain that 
\begin{multline*}
    \sum_{n_2,m'} \mathbb{E}(\W_{m,n_2} \overline{\W}_{m',n_1} \Q_{m',m}) \A_{n_2,n_1} \\ = \frac{1}{B+1} \mathbb{E}(\Q_{m,m}) \A_{n_1,n_1} - \frac{1}{B+1} \mathbb{E} \left[ (\W^* \Q \W \A)_{n_1,n_1} \Q_{m,m} \right]
\end{multline*}
and that 
\begin{align}
\nonumber
\eta_m & =  \Ebb[\Q_{m,m}] \frac{1}{B+1} \Tr \A - \Ebb\left[ \left( \frac{1}{B+1} \Tr \W^* \Q \W \A \right) \Q_{m,m} \right] \\
\nonumber
    &  =  \beta \frac{1}{B+1} \Tr \A -  \beta \, \Ebb \left( \frac{1}{B+1} \Tr \W^* \Q \W \A \right) \\ 
    &\hskip3cm- \Ebb\left[ \left( \frac{1}{B+1} \Tr \W^* \Q \W \A \right)^{\circ} \Q^{\circ}_{m,m} \right].
\label{eq:expre-1-etam}
\end{align}
In order to evaluate  $\Ebb \left( \frac{1}{B+1} \Tr \W^* \Q \W \A \right)$, we note that 
\begin{multline*}
\Ebb \left( \frac{1}{M} \Tr \W^* \Q \W \A \right) = \frac{1}{M} \sum_{m=1}^{M} \eta_m \\ = \beta \frac{1}{B+1} \Tr \A -  \beta \, c \, \Ebb \left( \frac{1}{M} \Tr \W^* \Q \W \A \right) -  \\ \Ebb\left[ \left( \frac{1}{B+1} \Tr \W^* \Q \W \A \right)^{\circ} \frac{1}{M} \Tr\Q^{\circ} \right]
\end{multline*}
from which we deduce that 
\begin{multline*}
\Ebb \left( \frac{1}{M} \Tr \W^* \Q \W \A \right) = \frac{\beta}{1+\beta c} \, \frac{1}{B+1} \Tr \A \\ - \frac{1}{1+\beta c} \, \Ebb\left[ \left( \frac{1}{B+1} \Tr \W^* \Q \W \A \right)^{\circ} \frac{1}{M} \Tr\Q^{\circ} \right].
\end{multline*}
Plugging this relation into (\ref{eq:expre-1-etam}) leads immediately to (\ref{eq:expre-EwAW*Q}).

%% if your bibliography is in bibtex format, uncomment commands:
\bibliographystyle{imsart-nameyear.bst} % Style BST file (imsart-number.bst or imsart-nameyear.bst)
\bibliography{references.bib}       % Bibliography file (usually '*.bib')

%% or include bibliography directly:
% \begin{thebibliography}{}
% \bibitem{b1}
% \end{thebibliography}

%\bibliography{main.bbl}
%\bibliographystyle{siam}

%\input{response-reviewers}

\end{document}